\newcolumntype{L}{>{\RaggedRight}X} 
\newtheorem{proposition}{Proposition}
\theoremstyle{definition}
\newtheorem{condition}{Condition}
\newtheorem{lemma}{Lemma}
\newtheorem{theorem}{Theorem}
\theoremstyle{definition}
\newtheorem{remark}{Remark}
\newlist{subcondition}{enumerate}{1}
\setlist[subcondition,1]{
    label=\textup{\textbf{\thecondition\alph*}  },
    ref=\thecondition\alph*,
    leftmargin=1.5em,
    itemsep=0em,
    topsep=0.2em,
    parsep=0em,
    partopsep=0em
}
\newcommand*{\rom}[1]{\expandafter\@slowromancap\romannumeral #1@}
\begin{document}

\def\spacingset#1{\renewcommand{\baselinestretch}%
{#1}\small\normalsize} \spacingset{1}

\sectionfont{\bfseries\large\sffamily}%
%
\newcommand*\emptycirc[1][1ex]{\tikz\draw (0,0) circle (#1);} 
\newcommand*\halfcirc[1][1ex]{%
  \begin{tikzpicture}
  \draw[fill] (0,0)-- (90:#1) arc (90:270:#1) -- cycle ;
  \draw (0,0) circle (#1);
  \end{tikzpicture}}
\newcommand*\fullcirc[1][1ex]{\tikz\fill (0,0) circle (#1);} 

\subsectionfont{\bfseries\sffamily\normalsize}%
%


\def\spacingset#1{\renewcommand{\baselinestretch}%
{#1}\small\normalsize} \spacingset{1}

\newcommand\Zijian[1]{{\color{magenta}Zijian: ``#1''}}

\begin{center}
    \Large \bf Propensity Score Propagation: A General Framework for Design-Based Inference with Unknown Propensity Scores
\end{center}

\begin{center}
  \large  $\text{Siyu Heng}^{*, \dagger, 1}$, $\text{Yanxin Shen}^{*, 2}$ and $\text{Zijian Guo}^{\dagger, 3}$
\end{center}

\begin{center}
\large   \textit{$^{1}$Department of Biostatistics, New York University}
\end{center}
\begin{center}
\large  \textit{$^{2}$School of Economics, Nankai University}
\end{center}
\begin{center}
  \large \textit{$^{3}$Center for Data Science, Zhejiang University}
\end{center}

\let \thefootnote\relax\footnotetext{$^{*}$Siyu Heng and Yanxin Shen contributed equally to this work.}

 \let \thefootnote\relax\footnotetext{$^{\dagger}$Siyu Heng (siyuheng@nyu.edu) and Zijian Guo (zijguo@zju.edu.cn) are corresponding authors. }

\begin{abstract}

Design-based inference, also known as randomization-based or finite-population inference, provides a principled framework for trustworthy statistical inference. It attributes randomness solely to the design mechanism, such as treatment assignment, survey sampling, or missingness, without imposing super-population distributional or modeling assumptions on the outcome data. From the seminal work of Fisher and Neyman to its recent resurgence, design-based inference has played a central role in causal inference, survey sampling, and missing data analysis. However, its use in many modern applications has been limited by a fundamental obstacle: existing design-based inference theory typically assumes that propensity scores (i.e., design probabilities) are known, whereas they are usually unknown in observational studies, real-world surveys, and missing data problems. We propose \textit{propensity score propagation}, a general framework for valid design-based inference with unknown propensity scores. The framework uses a regeneration-and-union procedure to propagate uncertainty from propensity score estimation into downstream design-based inference, without introducing super-population assumptions about the outcomes. It accommodates both parametric and nonparametric propensity score settings, integrates seamlessly with existing design-based methods developed for known propensity scores, and applies broadly across design-based problems. Theoretical and simulation results show that the proposed framework achieves nominal coverage, even when existing approaches exhibit substantial under-coverage.
\end{abstract}

\noindent

{\it Keywords:} Causal inference; Design-based inference; Finite-population inference; Observational studies; Randomization-based inference.

\spacingset{1.75} 

\section{Introduction}

\subsection{Background and Motivation}\label{subsec: background}
Providing valid statistical inference is a central task in statistics. Broadly speaking, most inference problems fall into one of two paradigms: (i) design-based inference, also known as randomization-based or finite-population inference, which targets estimands defined with respect to a fixed set of units, such as all counties in the United States or all students in a school, and in which randomness arises solely from the design mechanism \citep{fisher1937design, hajek1960limiting, rosenbaum2002observational, imbens2015causal, athey2017econometrics, li2017general, abadie2020sampling, zhang2023randomization, ding2024first, dasgupta2026introduction}; and (ii) super-population inference, also known as model-based or sampling-based inference, which targets estimands defined with respect to a hypothetical super-population distribution and treats the observed units as random realizations from that distribution \citep{efron1994introduction, politis1999subsampling, chernozhukov2018double, little2019statistical, vansteelandt2022assumption, kennedy2024semiparametric}.

Design-based inference differs from super-population inference in two key ways. First, it attributes randomness solely to the design mechanism, such as treatment assignment in causal studies or sampling in survey sampling, and does not impose distributional or modeling assumptions on the study units' potential outcomes or responses. Second, it conditions explicitly on the actual study units, making the resulting conclusions directly relevant to the actual studied units rather than to a hypothetical super-population. This distinction is especially crucial when the study units constitute a fixed or nearly exhaustive set, such as all U.S. counties, where applying super-population inference methods can hurt the interpretability (e.g., ill-defined estimands) and validity (e.g., inflated Type-I error rates) of statistical inference \citep{manski2018right, young2019channeling, abadie2020sampling, rambachan2025design}. See \citet{imbens2015causal}, \citet{athey2017econometrics}, \citet{li2023randomization}, \citet{ding2024first}, and \citet{dasgupta2026introduction} for detailed discussions on the connections and differences between design-based and super-population inference. Figure~\ref{fig: super and finite population} illustrates these two fundamental differences using a causal treatment-control comparison as an example.
\begin{figure}
        \centering
        \caption{Illustration of super-population versus design-based (finite-population) inference, using a causal treatment-control comparison as an example. }
 \label{fig: super and finite population}
        \includegraphics[width=0.8\linewidth]{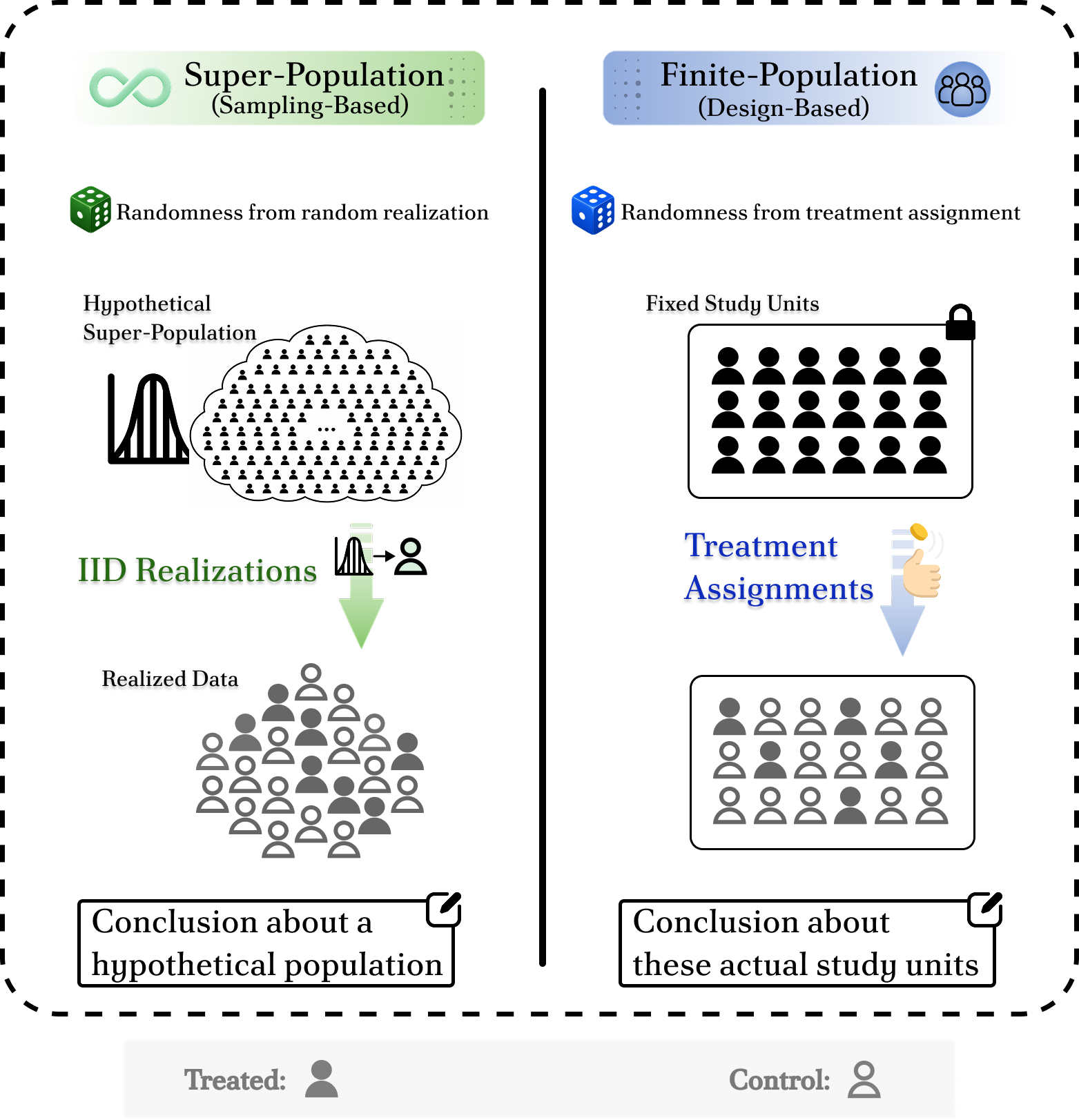}
       
    \end{figure}

The development of design-based inference can be viewed through two intertwined phases, each organized around the role of design probabilities, which are probabilities governing the realization of design variables, such as treatment indicators, sampling inclusion indicators, and missingness indicators \citep{rosenbaum1983central, rosenbaum2002observational, imbens2015causal, ding2024first}. We use \textit{propensity scores} to refer broadly to the design probabilities: treatment assignment probabilities in causal studies, sample-inclusion probabilities in survey sampling, and response probabilities in missing data problems. The first phase, beginning with the seminal work of Fisher and Neyman, asks how known propensity scores can be transformed into valid inference for fixed finite-population quantities without relying on super-population or outcome-modeling assumptions \citep{neyman1923application, fisher1937design, rosenbaum2002observational, imbens2015causal, ding2024first}. In settings where the propensity scores are known by design, uncertainty arises solely from the design mechanism itself, where this kind of uncertainty is referred to as \textit{design-induced uncertainty}. Examples of known propensity score regimes include randomized experiments with prespecified treatment assignment probabilities or probability surveys with prespecified inclusion probabilities. Over the past century, and especially amid the recent resurgence of design-based thinking, this known-design perspective has grown from the classical foundations of randomized experiments and survey sampling into a broad modern toolkit spanning causal inference, survey sampling, time series and panel data analysis, network analysis, and many other areas (e.g., \citealp{hajek1960limiting, hajek1964asymptotic, rosenbaum2002observational, hudgens2008toward, imbens2015causal, aronow2017estimating, athey2017econometrics, li2017general, bojinov2019time, abadie2020sampling, pashley2021conditional, athey2022design, li2023randomization, zhang2023randomization, ding2024first, dasgupta2026introduction}).

The second phase concerns settings in which the propensity scores are unknown, as is common in many modern studies. In observational studies, treatment assignment is not controlled by the investigator \citep{rubin1974estimating, rosenbaum1983central, rosenbaum2002observational, imbens2015causal, ding2024first}; in volunteer, web, and other non-probability surveys, inclusion probabilities are often uncertain \citep{bethlehem2010selection}; and in missing data problems, response probabilities are rarely available and must instead be learned from data \citep{little2019statistical, zhao2019sens_boot}. Figure~\ref{fig: design-based inference problems with known and unknown PS} provides illustrative examples of design-based inference problems with known and unknown propensity scores. The central question in this second phase is therefore whether one can preserve the defining appeal of design-based inference, namely valid uncertainty quantification without outcome modeling or super-population outcome assumptions, when the propensity scores are neither known nor controlled by investigators but must instead be estimated from data, potentially using flexible machine learning methods.

\begin{figure}
    \centering
     \caption{Design-based inference problems with known and unknown propensity scores. }
        \label{fig: design-based inference problems with known and unknown PS}
    \includegraphics[width=0.9\linewidth]{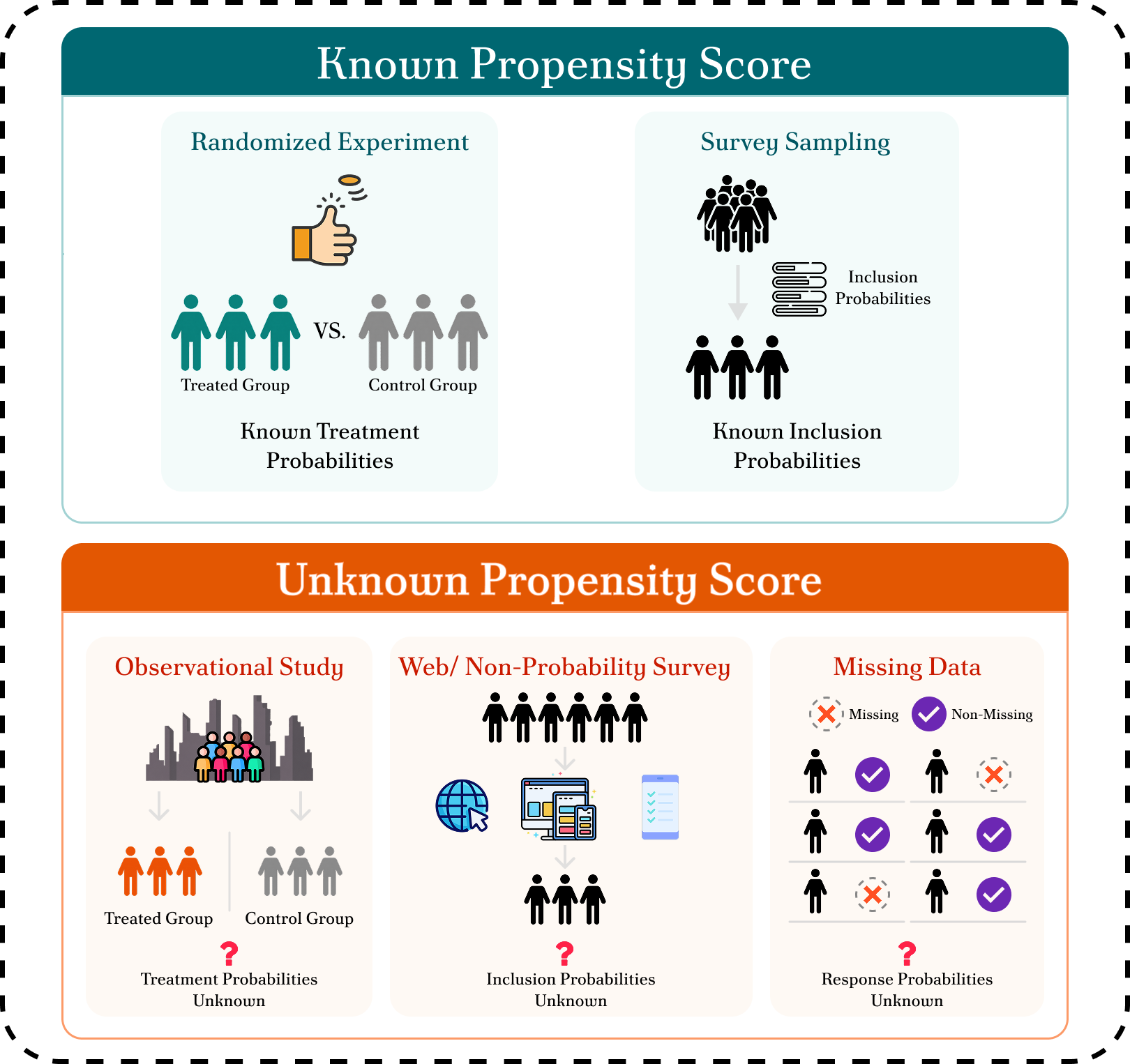}
  
\end{figure}

Existing approaches to this second-phase problem have followed three main paths: plug-in approaches, finite-population M-estimation approaches, and matching-based approaches, each of which has shaped modern practice. Plug-in approaches first estimate the propensity scores and then use them as though they were known. This strategy is simple and modular, but it can understate uncertainty because it treats estimated propensity scores as fixed rather than learned quantities, which may lead to substantial under-coverage \citep{pimentel2024covariate, zhu2025randomization}. 

Finite-population M-estimation approaches take a more integrated route by treating propensity score estimation and target estimand estimation as a joint estimating equation problem, thereby simultaneously accounting for both propensity score estimation uncertainty and design-induced uncertainty when the relevant estimating equations and variance formulas are available \citep{abadie2020sampling, xu2021potential, pimentel2024covariate}. However, existing versions are largely tied to parametric propensity score models, limiting their use with flexible machine learning propensity score estimators and complicating extensions such as sensitivity analysis for hidden bias \citep{zhao2019sens_boot}. 

Matching and stratification approaches, by contrast, seek to reconstruct an as-if randomized design by comparing units with similar covariates and have provided one of the most influential design-based frameworks for observational studies \citep{rosenbaum2002observational, rosenbaum2020design}. However, their validity relies on sufficiently close matching on the relevant covariates, a condition that is difficult to achieve when covariates are multivariate or continuous. Recent work has shown that residual imbalance from inexact matching can render downstream design-based inference asymptotically invalid \citep{pashley2021conditional, savje2022inconsistency, guo2023statistical, pimentel2024covariate, pimentel2024re, zhu2025randomization}.  Moreover, matching-based approaches are primarily tailored to causal inference and do not naturally extend to other design-based problems with unknown propensity scores, such as missing data analysis or non-probability survey sampling. Appendix~B.1 provides a more detailed discussion of the limitations of plug-in, finite-population M-estimation, and matching-based approaches.

In summary, despite the foundational importance, there remains no generally valid and broadly applicable framework for design-based inference with unknown propensity scores, which arise in many real-world studies.

\subsection{Key Challenges, Our Solution, and Its Contributions}\label{subsec: contributions}

The finite-population framework provides robust validity guarantees without requiring distributional or modeling assumptions on the outcome data, and it yields scientifically meaningful inferential conclusions without appealing to a hypothetical super-population. However, this distinctive strength also makes the problem substantially more challenging when propensity scores are unknown and must be estimated from the observed data. 

In super-population inference, bootstrap, resampling, and perturbation-based methods are often used to account for uncertainty from estimating nuisance parameters, such as propensity scores. These methods, however, are generally not directly suited to design-based inference with unknown propensity scores. In design-based causal inference, \citet{Imbens2004Nonparametric} showed that the variance of the sample average treatment effect estimator cannot generally be consistently estimated by the bootstrap without additional super-population assumptions, because ``the estimand itself changes across bootstrap samples.'' Similarly, \citet{ImbensMenzel2021CausalBootstrap} showed that even when the propensity score is a known function of covariates, bootstrap-based confidence sets for causal effects typically require positing a hypothetical super-population distribution for covariates and/or potential outcomes, which conflicts with the design-based principle that covariates and potential outcomes are fixed and randomness arises solely from the treatment assignment mechanism. These limitations underscore the central challenge of conducting valid design-based inference when propensity scores are unknown and must be estimated from the observed data.

In this work, we propose a novel and general framework, called \textit{propensity score propagation}, for valid design-based inference with unknown propensity scores. The central idea is a two-step regeneration-and-union procedure that explicitly propagates uncertainty from propensity score estimation into downstream design-based inference; see Figure~\ref{fig: workflow} for an illustration. 
\begin{figure}[ht]
    \centering
    \caption{\small Illustration of the regeneration-and-union procedure for constructing a design-based confidence interval (CI) under the propensity score (PS) propagation framework. $\mathcal{P}$ denotes oracle/known propensity scores, $\mathcal{C}_{1-\alpha}^{\text{oracle}}$ denotes a $100(1-\alpha)\%$ CI constructed under the oracle knowledge of $\mathcal{P}$, $\widehat{\mathcal{P}}^{(m)}$ denotes the regenerated propensity score estimate in the $m$th regeneration run, and each regenerated CI $\mathcal{C}_{1-\alpha}^{(m)}$ is obtained by replacing $\mathcal{P}$ in $\mathcal{C}_{1-\alpha}^{\text{oracle}}$ with $\widehat{\mathcal{P}}^{(m)}$.}
    \label{fig: workflow}
    \includegraphics[width=0.85\textwidth]{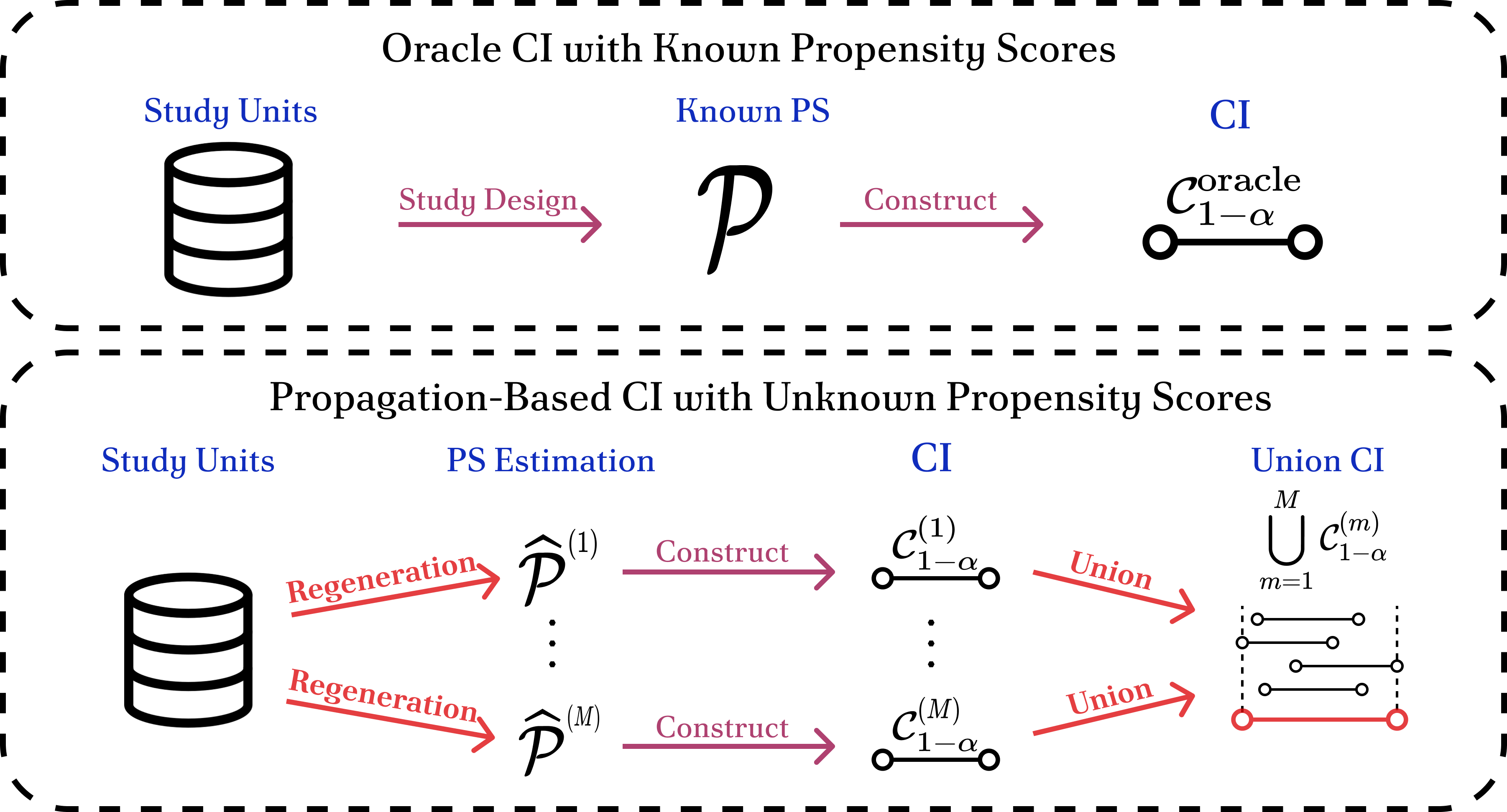} 
\end{figure}

In the regeneration step, we generate multiple plausible versions of propensity score estimates while keeping the finite population fixed. In parametric settings, this is done by regenerating the parameters of a propensity score model, as described in Algorithm~\ref{alg: parametric}. In nonparametric settings, it is implemented through random subsampling when estimating with a flexible propensity score learner, as described in Algorithm~\ref{alg: general}. Each regenerated propensity score estimate is then plugged into an oracle design-based confidence set mapping to produce a regenerated confidence set. In the union step, these regenerated confidence sets are combined by taking their union. Intuitively, the final confidence set attains valid coverage whenever at least one regenerated propensity score estimate, and hence its corresponding confidence set, is sufficiently close to the oracle counterpart. As shown in Section~\ref{sec: theory}, this event occurs with probability approaching one as the number of regeneration runs grows.

In this way, the propensity score propagation framework explicitly accounts for two sources of uncertainty: propensity score estimation uncertainty and design-induced uncertainty. The regeneration step captures propensity score estimation uncertainty by generating many design-compatible and plausible versions of the unknown propensity scores. For each regenerated propensity score estimate, the corresponding design-based confidence interval reflects design-induced uncertainty conditional on that regenerated propensity score estimate. The union step then combines these regenerated confidence intervals into a final confidence set that simultaneously accounts for both layers of uncertainty and restores nominal coverage. 

The contributions of propensity score propagation can be summarized from two complementary perspectives:
\begin{itemize}

\item First, to our knowledge, propensity score propagation provides the first general framework for valid design-based inference with unknown propensity scores. It accommodates both parametric and nonparametric propensity score estimators, including flexible machine-learning-based estimators, and offers a modular approach broadly applicable to causal inference, survey sampling, missing data analysis, and other related settings. In addition, it is readily implementable and integrates seamlessly with existing design-based inference methods developed for known propensity scores.

\item Second, it addresses a broader foundational challenge: valid inference when nuisance functions (e.g., propensity score functions) are learned using flexible machine learning methods. Double/debiased machine learning controls the first-order impact of nuisance estimation through orthogonal scores and cross-fitting, but its guarantees are generally formulated under super-population sampling and need not hold conditional on a fixed finite population \citep{chernozhukov2018double, vansteelandt2022assumption}. Propensity score propagation instead delivers validity for every fixed regular sequence of outcomes and covariates, extending flexible nuisance estimation to a genuinely finite-population, design-based regime; see Table~\ref{tab:contribution} for a comparison.

\end{itemize}

\begin{table}[H]
\centering
\caption{Comparison between double/debiased machine learning \citep{chernozhukov2018double} and propensity score propagation.}
\footnotesize
\begin{tabular}{c|c|c}
 \cline{2-3} 
&   Double/Debiased Machine Learning & Propensity Score Propagation \\
\hline
Objective & \multicolumn{2}{c}{Valid Inference with Nuisance Functions Estimated by Flexible Machine Learning}\\ 
\hline
Target Estimands &  Super-Population Functionals  & Finite-Population Estimands \\
Key Construction & Orthogonal Scores and Cross-Fitting  & Regeneration-and-Union (Our Proposal)  \\
Proof Technique & Semiparametric Efficiency Theory & Isoperimetric Arguments on the Hypercube \\
Validity Guarantee & Validity under Super-Population Sampling & Validity for Fixed Finite Populations  \\
\hline
\end{tabular}
\label{tab:contribution}
\end{table}

The remainder of the paper is organized as follows. Section~\ref{sec: review} reviews design-based inference with known propensity scores and illustrates the limitations of existing approaches for settings with unknown propensity scores. Section~\ref{sec: the general framework} introduces propensity score propagation in both parametric and nonparametric settings. Section~\ref{sec: theory} establishes theoretical guarantees. Section~\ref{sec: applications} presents selected applications to observational causal inference, missing data analysis, difference-in-differences analysis, and instrumental variable analysis, along with extensions to covariate adjustment and sensitivity analysis for hidden bias. Section~\ref{sec: simulation} reports simulation studies, and Section~\ref{sec: data} reanalyzes the data from \citet{heller2010using}. Section~\ref{sec: discussion} concludes. Technical details and additional theoretical and numerical results are provided in the supplementary materials.

\section{Review of Existing Design-Based Inference Methods}\label{sec: review}

\subsection{Design-Based Inference with Known Propensity Scores}\label{subsec: review known}

We review design-based inference when the design distribution is known. Suppose there are $N$ study units indexed by $[N]=\{1,\dots,N\}$, where $N>1$. For each unit $i$, let $Z_i\in\{0,1\}$ denote the design variable, such as a treatment indicator in causal inference or an inclusion indicator in survey sampling, and let $Y_i$ denote the observed outcome. We refer to the true design distribution $\mathcal P$ of $\mathbf Z=(Z_1,\dots,Z_N)$ as the \textit{propensity scores} \citep{rosenbaum1983central, rosenbaum2002observational}. When the design variables are independent, $\mathcal P$ is equivalently represented by the marginal propensity score vector $(p_1,\dots,p_N)$, where $p_i=P(Z_i=1)$. In this case, we slightly abuse notation and write $\mathcal P=(p_1,\dots,p_N)$, which we call the \emph{propensity score vector}.

Given the true propensity scores $\mathcal P$, the realized design variables $\mathbf Z$, and the observed outcomes $\mathbf Y=(Y_{1}, \dots, Y_{N})$, existing design-based inference procedures construct confidence sets for descriptive or causal finite-population quantities. We represent such a procedure abstractly by a mapping $\Lambda$ that outputs an oracle confidence set $\mathcal C_{1-\alpha}^{\text{oracle}}=\Lambda(\mathcal P,\mathbf Z,\mathbf Y)$ for a target estimand $\theta$ defined over the fixed $N$ study units. Under relevant regularity conditions, this oracle confidence set attains the desired $100(1-\alpha)\%$ design-based coverage guarantee under the true propensity scores $\mathcal P$. Here, the qualifier ``oracle'' denotes a confidence set, or later an estimator and variance estimator, constructed as if the true propensity scores $\mathcal P$ were available. Such oracle procedures are feasible when propensity scores are known by design; when propensity scores are unknown, they serve both as benchmarks and as the building blocks that propensity score propagation converts into feasible design-based inference procedures. 

Here, the general confidence set mapping $\Lambda(\mathcal P,\mathbf Z,\mathbf Y)$ is not restricted to Wald-type confidence intervals. It may represent any design-based inference procedure that is valid when $\mathcal P$ is known, such as a Wald-type interval, a confidence set obtained by randomization test inversion, or another non-Wald design-based confidence set.

We consider design-based causal inference in randomized experiments as a canonical example in which propensity scores are known by design and explicit oracle confidence set mappings $\Lambda$ are well established. This setting also provides a useful illustration of the fundamental distinction between design-based and super-population inference. For $i \in [N]$, we use $Z_i\in\{0,1\}$ to denote unit $i$'s treatment assignment and define $\mathbf Z=(Z_1,\ldots,Z_N)$. Under the potential outcomes framework \citep{neyman1923application, rubin1974estimating}, each unit $i$ has potential outcome under treatment (denoted as $Y_{i}(1)$) and that under control (denoted as $Y_{i}(0)$), and the observed outcome is $Y_i=Z_iY_i(1)+(1-Z_i)Y_i(0)$. Design-based causal inference treats $\{(Y_i(0), Y_i(1))\}_{i=1}^N$ as fixed and attributes all randomness to the known randomization distribution of $\mathbf Z$, enabling credible causal conclusions without super-population outcome assumptions \citep{rosenbaum2002observational, imbens2015causal, athey2017econometrics, li2017general, fogarty2018mitigating, ding2024first}. 

Consider the sample average treatment effect (SATE)
$\theta_{\tau}=N^{-1}\sum_{i=1}^N\{Y_i(1)-Y_i(0)\}$. Under a Bernoulli randomized experiment with independent assignments $Z_i\sim\text{Bernoulli}(p_i)$ for prespecified $p_i$, the inverse-probability-weighting (IPW) estimator under $\mathcal P=(p_1,\ldots,p_N)$ is
\begin{equation*}
    \widehat\theta_{\tau, \text{oracle}}
    =\frac{1}{N}\sum_{i=1}^N\widehat{\theta}_{\tau, \text{oracle}, i}
    =\frac{1}{N}\sum_{i=1}^N\left\{
    \frac{Z_iY_i}{p_i}
    -\frac{(1-Z_i)Y_i}{1-p_i}
    \right\},
\end{equation*}
which is unbiased for $\theta_{\tau}$ under the true propensity scores $\mathcal{P}$. A valid design-based variance estimator is
\begin{equation*}
    \widehat{\mathcal V}_{\tau, \text{oracle}}
    =\frac{1}{N(N-1)}\sum_{i=1}^N
    \left(\widehat\theta_{\tau, \text{oracle}, i}
    -\widehat\theta_{\tau, \text{oracle}}\right)^2,
\end{equation*}
which satisfies $E(\widehat{\mathcal V}_{\tau, \text{oracle}})\geq \text{Var}(\widehat\theta_{\tau, \text{oracle}})$ conditional on any fixed sequence of potential outcomes $\{Y_i(0),Y_i(1)\}_{i=1}^N$ \citep{fogarty2018mitigating, ding2024first}. Under suitable finite-population regularity conditions, a finite-population central limit theorem \citep{li2017general, fogarty2018mitigating} yields the design-based $100(1-\alpha)\%$ confidence interval
\begin{equation*}
\mathcal C_{1-\alpha}^{\text{oracle}}(\theta_{\tau})
 =\Lambda_{\tau}(\mathcal P,\mathbf Z,\mathbf Y)=\left[
 \widehat\theta_{\tau, \text{oracle}}
 -z_{1-\alpha/2}\cdot \widehat{\mathcal V}_{\tau, \text{oracle}}^{1/2},\,
 \widehat\theta_{\tau, \text{oracle}}
 +z_{1-\alpha/2}\cdot \widehat{\mathcal V}_{\tau, \text{oracle}}^{1/2}
 \right],
\end{equation*}
where the construction of $\widehat\theta_{\tau, \text{oracle}}$ and $\widehat{\mathcal V}_{\tau, \text{oracle}}$ relies on the knowledge of true propensity scores $\mathcal{P}$, the realized design variables $\mathbf Z$, and the observed outcomes $\mathbf{Y}$. Here, $\Lambda_{\tau}$ is a concrete example of the general oracle confidence-set mapping $\Lambda$. 

Under regularity conditions, it is shown that $\mathcal C_{1-\alpha}^{\text{oracle}}(\theta_{\tau})$ achieves design-based validity (i.e., finite-population validity), that is, for every regular sequence of fixed finite populations $(\mathbf Y(0),\mathbf Y(1))=\{Y_i(0),Y_i(1)\}_{i=1}^N$,
\begin{equation}\label{eqn: design-based validity}
   P_{\mathbf Z}\left(
   \theta_{\tau}\in
   \mathcal C_{1-\alpha}^{\text{oracle}}(\theta_{\tau})
   \mid \mathbf Y(0),\mathbf Y(1)
   \right)
   \geq 1-\alpha-o(1),
\end{equation}
where the probability is taken only over the design distribution of $\mathbf Z$.

Design-based validity in \eqref{eqn: design-based validity} is more demanding than super-population validity: it requires nominal coverage over the true design distribution of $\mathbf Z$ for every regular sequence of fixed finite populations, rather than only after averaging over an assumed distribution of finite populations. To illustrate, a super-population counterpart to the SATE $\theta_{\tau}$ is the population average treatment effect (PATE) $\theta_{\tau}^{\text{sp}}=E\{Y(1)-Y(0)\}$. In contrast to \eqref{eqn: design-based validity}, a super-population confidence set $\mathcal C_{1-\alpha}^{\text{sp}}$ is typically required to satisfy
\begin{equation}\label{eqn: super-population validity}
P_{(\mathbf Z,\mathbf Y(0),\mathbf Y(1))}
\left(
\theta_{\tau}^{\text{sp}}\in\mathcal C_{1-\alpha}^{\text{sp}}
\right)
\geq 1-\alpha-o(1),
\end{equation}
where the probability averages over the joint distribution of the treatment assignments and potential outcomes $(\mathbf Z,\mathbf Y(0),\mathbf Y(1))$. Therefore, super-population validity in \eqref{eqn: super-population validity} can hold on average even when design-based validity in \eqref{eqn: design-based validity} fails for some fixed finite populations $(\mathbf Y(0),\mathbf Y(1))=\{Y_{i}(0), Y_{i}(1)\}_{i=1}^{N}$.

This requirement for coverage conditional on the fixed finite population is a key advantage of design-based validity. Design-based inference directly targets the finite population without positing a larger super-population from which the study units were sampled. This perspective is especially natural when the finite population itself is the object of interest, such as all U.S. counties in a policy evaluation or all students in a particular school \citep{abadie2020sampling}. Accordingly, design-based and super-population inference differ not only in their validity requirements but also in their targets: design-based inference targets a finite-population estimand such as $\theta_{\tau}$, whereas super-population inference typically targets a super-population estimand such as $\theta_{\tau}^{\text{sp}}$.

In some settings, the distinction between the two paradigms lies in the justification rather than in the numerical form of the procedure. In the Bernoulli experiment above, the same IPW estimator and variance estimator can provide a design-based confidence interval for the SATE $\theta_{\tau}$ and, under additional super-population assumptions, a super-population confidence interval for the PATE $\theta_{\tau}^{\text{sp}}$. The super-population interpretation requires assumptions linking the observed units to the target population, such as i.i.d.\ sampling of the potential outcome pairs $(Y_{i}(0), Y_{i}(1))$ from an underlying distribution $(Y(0), Y(1))$ that satisfies some regularity condition. By contrast, the design-based interpretation treats the realized potential outcomes as fixed and relies on the known assignment mechanism and finite-population regularity conditions, without requiring a probability model for the outcomes. Therefore, the coincidence of the interval formulas does not make the validity guarantees equivalent: design-based validity is required to hold for every regular sequence of fixed finite populations, whereas super-population validity holds only after averaging over the assumed super-population outcome distribution.

Design-based causal inference for randomized experiments is one instance of the broader class of design-based inference settings. We briefly present several additional examples of design-based inference here and defer further details to Section~\ref{sec: applications} and Appendix~C. Although these settings differ in their estimands, design variables, and confidence set formulas, they share the same basic structure: once $\mathcal P$ is known, existing design-based inference theory supplies a valid oracle confidence set mapping $\Lambda(\mathcal P,\mathbf Z,\mathbf Y)$ that satisfies the required design-based coverage guarantee.
\begin{itemize}
\item In probability survey sampling, $Z_i$ indicates whether unit $i$ is sampled, and the inclusion probabilities $\mathcal P$ are known by design. Classical estimators, such as the Horvitz-Thompson estimator, together with their design-based variance estimators, provide valid inference for finite-population quantities under known inclusion probabilities \citep{horvitz1952generalization, hajek1960limiting, hajek1964asymptotic}.
\item In difference-in-differences (DID), recent work develops design-based inference by conditioning on an assignment mechanism for treatment adoption, thereby avoiding super-population parallel-trends assumptions when the assignment mechanism is taken as known \citep{athey2022design, rambachan2025design}.
\item In instrumental-variable (IV) analysis, design-based arguments similarly apply when instrument assignment probabilities are known or treated as known, enabling randomization-based inference for causal effects under standard IV conditions \citep{rosenbaum2002observational, imbens2005robust, kang2016full}.
\end{itemize}

\subsection{Illustration of Existing Approaches for Design-Based Inference with Unknown Propensity Scores}\label{subsec: review unkown}

Section~\ref{subsec: review known} reviewed design-based inference with known propensity scores $\mathcal{P}$. In many observational, quasi-experimental, and missing data settings, however, the true propensity scores $\mathcal P$ are unknown and must be estimated from the observed data. The central difficulty is therefore how to account for uncertainty from propensity score estimation when constructing the downstream design-based confidence set, while preserving the finite-population validity emphasized above.

To illustrate this difficulty, we conduct a motivating simulation study in nonparametric propensity score settings. We construct fixed finite populations with $N=1000$ units under two treatment effect settings. Conditional on each fixed finite population, we evaluate coverage and confidence interval length over repeated treatment assignments $\mathbf{Z}$ generated from two propensity score settings: a nonlinear selection model and a nonlinear logistic model. For each finite population, the target estimand is the SATE $\theta_{\tau}=N^{-1}\sum_{i=1}^{N}\{Y_i(1)-Y_i(0)\}$. Detailed data-generating mechanisms and implementation procedures are provided at the beginning of Section~\ref{sec: simulation}. We compare two representative classes of existing approaches: plug-in and matching-based methods. We do not directly compare with finite-population M-estimation approaches here because they require a parametric propensity score model, whereas this motivating study focuses on nonparametric propensity score settings. The plug-in approach estimates propensity scores using the widely used XGBoost \citep{chen2016xgboost} and then treats the estimated propensity scores as known. The matching-based approach uses the widely used optimal full matching \citep{hansen2004full,hansen2006optimal}; see Appendix~D.3 for implementation details. Table~\ref{tab: motivating-simulation} reports the coverage rate, mean estimation bias, and average confidence interval length of each approach, together with those of the oracle design-based confidence interval constructed using the true propensity scores.

\begin{table}[ht]
\centering
\scriptsize
\caption{Motivating simulation results for design-based inference with unknown propensity scores. For each effect setting, propensity score setting, and approach, we report the coverage rate (Coverage), mean bias (Bias), average confidence interval length (Length), coverage rate of the oracle bias-aware confidence interval (Coverage (OBA)), and the average length of the oracle bias-aware confidence interval (Length (OBA)). The row ``Propagation'' is included as a preview of the proposed propensity score propagation framework with $M=100$ regeneration runs; full simulation settings and results are reported in Section~\ref{sec: simulation}. A dash indicates that the quantity is not applicable.}
\resizebox{\textwidth}{!}{
\begin{tabular}{lcccccccccc}
\toprule
\multirow{3}{*}{} 
& \multicolumn{5}{c}{Propensity Score Setting 1}
& \multicolumn{5}{c}{Propensity Score Setting 2} \\
\cmidrule(lr){2-6} \cmidrule(lr){7-11}
& Coverage & Bias & Length & \begin{tabular}{@{}c@{}}Coverage\\(OBA)\end{tabular} & \begin{tabular}{@{}c@{}}Length\\(OBA)\end{tabular}
& Coverage & Bias & Length & \begin{tabular}{@{}c@{}}Coverage\\(OBA)\end{tabular} & \begin{tabular}{@{}c@{}}Length\\(OBA)\end{tabular} \\
\midrule
\multicolumn{11}{l}{\textbf{Effect Setting 1}} \\
\midrule
Plug-in                    
& 0.572 & 0.179 & 0.376 & 0.947 & 0.545
& 0.568 & 0.186 & 0.381 & 0.942 & 0.559 \\
Matching       
& 0.022 & 0.176 & 0.224 & 0.952 & 0.455
& 0.142 & 0.133 & 0.199 & 0.953 & 0.371 \\
Propagation (Our Proposal) & 0.998 & -- & 0.632 & -- & -- & 1.000 & -- & 0.631 & -- & -- \\
Oracle                     
& 0.941 & 0.027 & 0.486 & -- & --
& 0.941 & 0.016 & 0.442 & -- & -- \\
\midrule
\multicolumn{11}{l}{\textbf{Effect Setting 2}} \\
\midrule
Plug-in                    
& 0.739 & 0.174 & 0.423 & 0.948 & 0.562
& 0.670 & 0.192 & 0.427 & 0.943 & 0.599 \\
Matching        
& 0.100 & 0.178 & 0.276 & 0.951 & 0.460
& 0.298 & 0.134 & 0.235 & 0.953 & 0.377 \\
Propagation (Our Proposal) & 1.000 & -- & 0.716 & -- & -- & 1.000 & -- & 0.715 & -- & -- \\
Oracle                     
& 0.955 & 0.027 & 0.518 & -- & --
& 0.945 & 0.016 & 0.473 & -- & -- \\
\bottomrule
\end{tabular}
}
\label{tab: motivating-simulation}
\end{table}

Table~\ref{tab: motivating-simulation} shows that the plug-in approach can suffer from substantial finite-sample estimation bias and severe under-coverage. This behavior is expected: the plug-in approach first estimates $\mathcal P$ and then proceeds as if the estimate were the truth, thereby ignoring uncertainty from propensity score estimation. As a result, the confidence intervals fail to capture all relevant sources of uncertainty and fall well below the nominal coverage level of 0.95 in all four settings.

The matching-based approach also exhibits substantial under-coverage. Its finite-sample bias reflects how inexact matching can introduce non-negligible bias into downstream inference. This pattern is consistent with recent work showing that residual imbalance from inexact matching can induce substantial bias and invalidate design-based inference \citep{guo2023statistical, pimentel2024covariate, zhu2025randomization}.

For the plug-in and matching-based approaches, Table~\ref{tab: motivating-simulation} also reports the coverage rate and average length of the oracle bias-aware (OBA) confidence interval following \citet{armstrong2020bias}. These OBA intervals use oracle knowledge of the true estimation bias and standard error and are therefore unattainable in practice; they are included only as benchmark intervals calibrated to attain nominal coverage at level 0.95. The substantially larger OBA interval lengths relative to the original interval lengths indicate that restoring nominal coverage for plug-in and matching-based approaches would require additional bias and uncertainty corrections based on information unavailable in practice. Section~\ref{sec: simulation} provides a more complete and detailed discussion of these simulation patterns.

For orientation, Table~\ref{tab: motivating-simulation} also includes the proposed propensity score propagation framework with $M=100$ regeneration runs. Across all four settings, propensity score propagation attains nominal coverage while retaining practically sized intervals compared with both the oracle and OBA confidence intervals. The framework and its full theoretical and numerical evaluation are presented in Sections~\ref{sec: the general framework}--\ref{sec: simulation}.

\section{The Propensity Score Propagation Framework}\label{sec: the general framework}

\subsection{Overview and Guiding Principles}\label{subsec: overview of principle}
We now describe the central idea of propensity score propagation: a general \emph{regeneration-and-union} procedure for design-based inference with unknown propensity scores.
\begin{itemize}
\item[(i)]\textbf{Regeneration Step:} Repeatedly regenerate estimated propensity scores $\widehat{\mathcal{P}}^{(1)},\dots,\widehat{\mathcal{P}}^{(M)}$ using a carefully designed regeneration scheme (e.g., those described in the next two subsections). For each $\widehat{\mathcal{P}}^{(m)}$ ($m\in [M]$), construct a design-based confidence set
    \begin{equation*}
    \mathcal{C}_{1-\alpha}^{(m)}=\Lambda(\widehat{\mathcal{P}}^{(m)}, \mathbf{Z}, \mathbf{Y}),
    \end{equation*}
    that is, by plugging $\widehat{\mathcal{P}}^{(m)}$ into the oracle confidence set mapping $\Lambda$ as if $\widehat{\mathcal{P}}^{(m)}$ is the true propensity scores $\mathcal{P}$. 
    \item[(ii)]\textbf{Union Step:} Aggregate the regenerated confidence sets using the following union,
    \begin{equation*}
    \mathcal{C}_{1-\alpha}=\bigcup_{m=1}^{M}\mathcal{C}_{1-\alpha}^{(m)}.
    \end{equation*}
In practice, one may use a restricted union that excludes regeneration runs $m\in[M]$ with extreme propensity score vectors $\widehat{\mathcal P}^{(m)}$, as discussed later in Sections~\ref{subsec: Parametric Propensity Score Propagation} and \ref{subsec: Nonparametric Propensity Score Propagation}.
\end{itemize}

The rationale underlying propensity score propagation is that it handles two distinct sources of uncertainty through two complementary steps. The regeneration step addresses uncertainty about the unknown propensity scores by generating multiple plausible propensity score vectors $\widehat{\mathcal P}^{(1)},\ldots,\widehat{\mathcal P}^{(M)}$. Under ignorability and suitably constructed regeneration mechanisms, Theorems~\ref{thm: uniform approximation} and~\ref{thm: approximation nonparametric} show that, when $M$ is sufficiently large, with high probability, there exists at least one run $m^{*}\in[M]$ such that $\widehat{\mathcal P}^{(m^{*})}$ is sufficiently close to the oracle propensity score vector $\mathcal P$. Consequently, the associated confidence set $\mathcal C_{1-\alpha}^{(m^{*})}$ has the same asymptotic coverage behavior as $\mathcal C_{1-\alpha}^{\text{oracle}}$. Conditional on each $\widehat{\mathcal P}^{(m)}$, the oracle confidence set mapping $\mathcal C_{1-\alpha}^{(m)}=\Lambda(\widehat{\mathcal P}^{(m)},\mathbf Z,\mathbf Y)$ accounts for design-induced uncertainty under that fixed $\widehat{\mathcal P}^{(m)}$. The union step then recombines the two sources of uncertainty: each regenerated confidence set accounts for design-induced uncertainty conditional on one plausible propensity score vector, while the union over regenerated confidence sets incorporates uncertainty from propensity score estimation.

\begin{figure}[h]
    \centering
    \caption{\small Illustration of propensity score propagation ($M=100$) using one simulated dataset. The blue interval is the oracle confidence interval (CI) constructed using the true propensity scores, and the black intervals are regenerated CIs constructed across the $100$ regeneration runs. The arrow marks the regenerated CI closest to the oracle CI among the $100$ regeneration runs. The red interval is the propagation-based CI obtained by taking the union of all regenerated CIs, with the red dashed lines marking its endpoints. The gray dashed line denotes the true estimand value.}
    \label{fig: example}
    \includegraphics[width=1\textwidth]{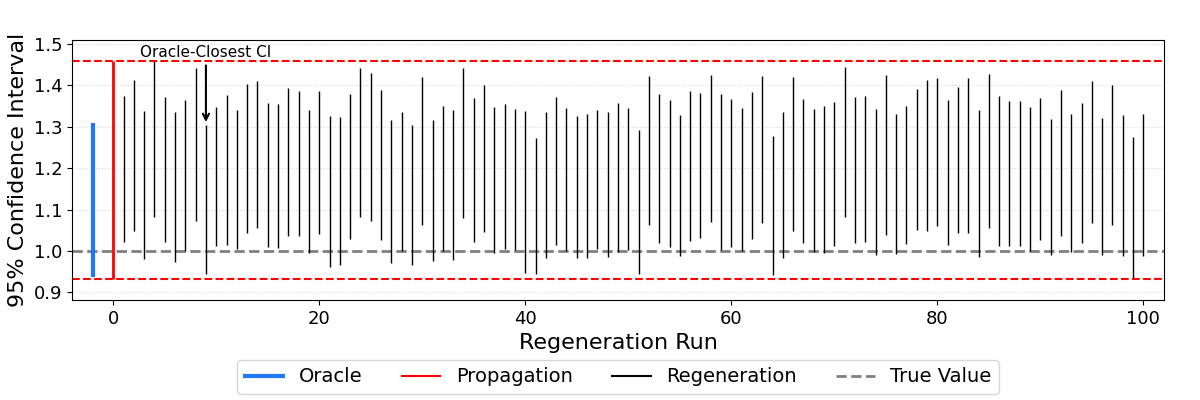}
\end{figure}

Figure~\ref{fig: example} illustrates the key mechanism underlying propensity score propagation. The variation among the regenerated CIs reflects how propensity score estimation uncertainty propagates into downstream design-based inference. With sufficiently many regeneration runs, the procedure is likely to produce at least one regenerated CI close to the oracle CI, as highlighted by the arrow. Because the propagation-based CI takes the union of all regenerated CIs, it contains this oracle-close regenerated CI, which is the key intuition behind its coverage guarantee. In this realization, the propagation-based CI contains the true estimand and is only moderately wider than the oracle CI, illustrating that propagation need not result in excessive conservativeness. This is because regeneration is calibrated to the data-compatible uncertainty in estimating the underlying propensity scores, rather than introducing unrestricted variation in the regenerated propensity scores. Under regularity conditions, the regenerated propensity scores concentrate around the true propensity scores as the sample size grows. If the oracle inference procedure is stable to such local variations in the propensity scores, the regenerated CIs overlap substantially, so taking their union adds only a moderate amount to the oracle CI's length.

In Sections~\ref{subsec: Parametric Propensity Score Propagation} and~\ref{subsec: Nonparametric Propensity Score Propagation}, we present example regeneration mechanisms for parametric and nonparametric propensity score settings, respectively, and establish their coverage and length convergence guarantees in Section~\ref{sec: theory}.

\subsection{Parametric Propensity Score Propagation} \label{subsec: Parametric Propensity Score Propagation}

To illustrate the main ideas, we begin with the parametric propensity score setting. Suppose the design variables $Z_1,\dots,Z_N$ are mutually independent, with propensity scores $p_i=P(Z_i=1\mid \mathbf x_i)$ following a generalized linear model (GLM):
\begin{equation}\label{eqn: GLM}
P(Z_i=1\mid \mathbf x_i)=\Psi(\boldsymbol{\beta}^{T}\mathbf{x}_{i}),
\end{equation}
where $\Psi$ is a known one-to-one inverse link function, $\boldsymbol{\beta}\in \mathbbm{R}^{d}$ is the true propensity score parameter vector, and $\mathbf{x}_{i}=(x_{i1},\dots,x_{id})^{T}$ is a $d$-dimensional covariate vector including an intercept term, with $d$ fixed. This formulation includes many commonly used propensity score models, such as the logistic model with $\Psi(t)=\exp(t)/\{1+\exp(t)\}$, the probit model with $\Psi=\Phi$, where $\Phi$ is the standard normal cumulative distribution function, and the linear probability model with $\Psi(t)=t$, among others.

Let $\widehat{\boldsymbol{\beta}}=(\widehat{\beta}_{1},\dots,\widehat{\beta}_{d})^{T}$ denote the maximum likelihood estimator (MLE) based on the covariate information $\mathbf{X}=(\mathbf{x}_{1}, \dots, \mathbf{x}_{N})$ and realized design variables $\mathbf{Z}=(Z_{1}, \dots, Z_{N})$. By classical MLE theory for GLMs with fixed design \citep{fahrmeir1985consistency}, under the mild regularity conditions specified in Appendix A.3, 
\begin{equation}\label{eqn: asymptotic normality}
  \sqrt{N}\cdot \widehat{\Omega}_{N}^{-\frac{1}{2}} \big(\widehat{\boldsymbol{\beta}}-\boldsymbol{\beta}\big)\xrightarrow{d} N\left(\mathbf{0}_{d\times 1}, \mathbf{I}_{d\times d}\right) \text{ as $N\rightarrow \infty$,}
\end{equation}
where $\widehat{\Omega}_{N}\xrightarrow{p}\Omega$ for some positive definite $d\times d$ matrix $\Omega$. The explicit form of $\widehat{\Omega}_{N}$ is presented in \citet{fahrmeir1985consistency}.

Motivated by (\ref{eqn: asymptotic normality}), we repeatedly regenerate the propensity score model parameters
\begin{equation}\label{eqn: parametric regeneration}
        \widehat{\boldsymbol{\beta}}^{(m)} \overset{\text{iid}}{\sim} N\!\left(\widehat{\boldsymbol{\beta}}, N^{-1}\cdot \widehat{\Omega}_{N} \right), 
    \quad m=1,\dots,M,
\end{equation}
and further construct the regenerated propensity score vectors $\widehat{\mathcal{P}}^{(m)}=(\widehat{p}_{1}^{(m)},\dots, \widehat{p}_{N}^{(m)})=(\Psi(\widehat{\boldsymbol{\beta}}^{(m)T}\mathbf{x}_{1}),\dots, \Psi(\widehat{\boldsymbol{\beta}}^{(m)T}\mathbf{x}_{N}))$ for $m=1,\dots, M$.  

Each draw $\widehat{\boldsymbol{\beta}}^{(m)}$ induces a candidate design distribution through the corresponding regenerated propensity score vector $\widehat{\mathcal P}^{(m)}$. Because these draws are centered on the fitted propensity score model and scaled to reflect the estimated uncertainty in $\widehat{\boldsymbol{\beta}}$, they represent data-compatible versions of the unknown design mechanism. For each $\widehat{\mathcal P}^{(m)}$, we then construct the corresponding design-based confidence set by applying the oracle confidence set mapping $\Lambda$ with $\widehat{\mathcal{P}}^{(m)}$ in place of the oracle propensity score vector $\mathcal{P}$, and finally take the union over all regeneration runs. The detailed procedure for parametric propensity score propagation is described in Algorithm~\ref{alg: parametric}.

\begin{algorithm}  
\caption{Parametric propensity score propagation.} \label{alg: parametric}

\textbf{Input:} The observed data $(\mathbf{Z}, \mathbf{X}, \mathbf{Y})$, a prespecified significance level $\alpha \in (0, 0.5)$, and a prespecified number of regeneration runs $M$. 

\smallskip

\textbf{The Regeneration Step:} In each regeneration run $m \in [M]$, we do the following:

\begin{enumerate}

\item Draw regenerated propensity score model parameters $\widehat{\boldsymbol{\beta}}^{(m)} \sim N(\widehat{\boldsymbol{\beta}}, N^{-1}\cdot \widehat{\Omega}_{N})$.

\item Regenerate the propensity score vector $\widehat{\mathcal{P}}^{(m)}=(\widehat{p}_{1}^{(m)},\dots, \widehat{p}_{N}^{(m)})$ with $\widehat{p}_{i}^{(m)}=\Psi(\widehat{\boldsymbol{\beta}}^{(m)T}\mathbf{x}_{i})$.

\item Calculate the design-based confidence set $\mathcal{C}_{1-\alpha}^{(m)}$ by plugging $\widehat{\mathcal{P}}^{(m)}$ into the confidence set mapping $\Lambda$ established under oracle propensity scores, i.e., setting $\mathcal{C}_{1-\alpha}^{(m)}=\Lambda(\widehat{\mathcal{P}}^{(m)}, \mathbf{Z}, \mathbf{Y})$.

\end{enumerate}

\vspace{-0.3cm}

\textbf{The Union Step:} Take the union of $\mathcal{C}_{1-\alpha}^{(m)}$, i.e., set $\mathcal{C}_{1-\alpha}=\bigcup_{m=1}^{M}\mathcal{C}_{1-\alpha}^{(m)}$. 

\smallskip
\textbf{Output:} A design-based confidence set $\mathcal{C}_{1-\alpha}$ for the target estimand. 
\end{algorithm}

As the number of regeneration runs $M$ increases, the regeneration scheme in (\ref{eqn: parametric regeneration}), as implemented in Algorithm~\ref{alg: parametric}, ensures that, with high probability, at least one regenerated parameter vector $\widehat{\boldsymbol{\beta}}^{(m)}$ lies sufficiently close to the oracle parameter vector $\boldsymbol{\beta}$, or equivalently, that at least one regenerated propensity score vector $\widehat{\mathcal{P}}^{(m)}$ lies sufficiently close to the oracle propensity score vector $\mathcal{P}$. This approximation property is formalized in Theorem~\ref{thm: uniform approximation} in Section~\ref{subsec: theory of parametric PSP}. Consequently, under mild regularity conditions, at least one regenerated confidence set $\mathcal{C}_{1-\alpha}^{(m)}$ achieves the same asymptotic coverage as the oracle confidence set $\mathcal{C}_{1-\alpha}^{\text{oracle}}$. This implies that the union confidence set $\mathcal{C}_{1-\alpha}=\bigcup_{m=1}^{M}\mathcal{C}_{1-\alpha}^{(m)}$ has asymptotic coverage no lower than that of $\mathcal{C}_{1-\alpha}^{\text{oracle}}$, while still maintaining an informative confidence set length, as established in Theorems~\ref{thm: coverage rate parametric} and~\ref{thm: CI length parametric} in Section~\ref{subsec: theory of parametric PSP}.

In practice, the final confidence set $\mathcal{C}_{1-\alpha}$ may be further shortened if the union excludes $\mathcal{C}_{1-\alpha}^{(m)}$ constructed from outlying propensity score estimates $\widehat{\mathcal{P}}^{(m)}$. Specifically, during the union step, we can adopt a restricted-union strategy: rather than taking the union over all $m\in[M]$, we only retain those indices $m$ for which $\widehat{\mathcal{P}}^{(m)}$ is not flagged as an outlier. Let $\widehat{\sigma}_{kk}^{2}$ denote the $k$th diagonal entry of $\widehat{\Omega}_{N}$ in (\ref{eqn: asymptotic normality}), and define
\begin{equation}\label{eqn: definition of the restrictive set}
    \mathcal{M}(\alpha^{\prime})=\Big \{m\in [M]: \max_{k\in [d]}\Big|\big(\widehat{\beta}_{k}^{(m)}-\widehat{\beta}_{k}\big)/\sqrt{\widehat{\sigma}_{kk}^{2}/N}\Big|\leq 1.01\cdot z_{1-\frac{\alpha^{\prime}}{2d}}\Big\},
\end{equation}
in which $\alpha^{\prime}\in (0, \alpha)$ (e.g., setting $\alpha^{\prime}=0.01$ if $\alpha=0.05$) is some prespecified significance level allocated for defining the restricted set $\mathcal{M}(\alpha^{\prime})\subseteq [M]$, and $z_{1-\frac{\alpha^{\prime}}{2d}}$ is the $(1-\frac{\alpha^{\prime}}{2d})$-quantile of standard normal distribution. In the union step of Algorithm~\ref{alg: parametric}, the final confidence set can then be constructed via the restricted union: 
\begin{equation*}
    \widetilde{\mathcal{C}}_{1-\alpha}=\bigcup_{m \in \mathcal{M}(\alpha^{\prime})}\mathcal{C}_{1-\alpha_0}^{(m)},
\end{equation*}
where $\alpha_0=\alpha-\alpha^{\prime}$ is the significance level used for each downstream design-based confidence set. This level adjustment can be viewed as a multiplicity correction that allocates error probability $\alpha^{\prime}$ to the construction of the retained regeneration set $\mathcal{M}(\alpha^{\prime})$ and $\alpha_0=\alpha-\alpha^{\prime}$ to downstream confidence sets.

\subsection{Nonparametric Propensity Score Propagation}\label{subsec: Nonparametric Propensity Score Propagation}

We next consider the nonparametric propensity score setting, in which the propensity score function 
$P(Z=1\mid \mathbf{x})$ is left unspecified and estimated using flexible, data-adaptive learners, such as nonparametric machine learning methods. The regeneration method developed for the parametric setting cannot be directly extended to this general nonparametric setting, since there is no finite-dimensional propensity score parameter to regenerate. Nevertheless, the same regeneration principle can be generalized through repeated subsampling. In this setting, subsampling provides a natural way to generate plausible, data-compatible versions of the learned design mechanism.

The concrete idea is to repeatedly select subsamples of units, fit the propensity score learner using their observed design variables $\mathbf Z$ and covariates $\mathbf X$, and then predict propensity scores for all units in the fixed finite population. Under suitable regularity conditions, propensity score estimates fitted on different subsamples provide plausible versions of the unknown true propensity score vector. Repeating this procedure produces a collection of regenerated propensity score vectors that reflects the uncertainty inherent in flexible propensity score learning, without resampling or perturbing the outcome data and without introducing a super-population outcome model.

We propose the nonparametric propensity score propagation in Algorithm~\ref{alg: general}.
In particular, we use a repeated cross-fitting version of subsampling to avoid in-sample prediction of propensity scores. In each regeneration run, the units are randomly split into two disjoint subsets. The propensity score learner is fitted on one subset and used to predict propensity scores for the other subset, and vice versa. Combining these two sets of out-of-sample predictions yields one regenerated propensity score vector $\widehat{\mathcal P}^{(m)}$ for all $N$ units. Repeating this cross-fitting procedure over independently generated random splits gives $\widehat{\mathcal P}^{(1)},\ldots,\widehat{\mathcal P}^{(M)}$, which are then propagated through the oracle confidence set mapping $\Lambda$ and aggregated by union. Section~\ref{subsec: theory of nonparametric PSP} shows that, under appropriate regularity conditions, the regeneration procedure in Algorithm~\ref{alg: general} ensures that, among sufficiently many regeneration runs, at least one regenerated propensity score vector $\widehat{\mathcal{P}}^{(m)}$ is sufficiently close to the oracle propensity score vector $\mathcal{P}$ with high probability. 

\begin{algorithm}[h]  
\caption{Nonparametric propensity score propagation.} 
\label{alg: general}

\textbf{Input:} The observed data $(\mathbf{Z}, \mathbf{X}, \mathbf{Y})$, a chosen propensity score learner $\mathcal{L}$, a prespecified significance level $\alpha \in (0, 0.5)$, and a prespecified number of regeneration runs $M$. 

\smallskip

\textbf{Regeneration Step:} For each regeneration run $m \in [M]$:
\begin{enumerate}
\item Randomly partition the $N$ units into two disjoint subsets $\mathcal{I}_{A}^{(m)}\subset [N]$ and $\mathcal{I}_{B}^{(m)}=[N]\setminus \mathcal{I}_{A}^{(m)}$. Let $\mathcal{D}_{A}^{(m)}=\{(Z_{i},\mathbf{x}_{i}): i\in \mathcal{I}_{A}^{(m)}\}$ and $\mathcal{D}_{B}^{(m)}=\{(Z_{i},\mathbf{x}_{i}): i\in \mathcal{I}_{B}^{(m)}\}$.

\item Fit the learner $\mathcal{L}$ using $\mathcal{D}_{A}^{(m)}$ and use the fitted model to predict propensity scores for units in $\mathcal{I}_{B}^{(m)}$. Similarly, fit $\mathcal{L}$ using $\mathcal{D}_{B}^{(m)}$ and use the fitted model to predict propensity scores for units in $\mathcal{I}_{A}^{(m)}$. Combining these out-of-sample predictions yields the regenerated propensity score vector $\widehat{\mathcal{P}}^{(m)}$ for all $N$ units.

\item Construct the design-based confidence set $\mathcal C_{1-\alpha}^{(m)}$ by plugging $\widehat{\mathcal P}^{(m)}$ into the oracle confidence set mapping $\Lambda$, i.e., $\mathcal C_{1-\alpha}^{(m)}=\Lambda(\widehat{\mathcal P}^{(m)},\mathbf Z,\mathbf Y)$.

\end{enumerate}
\vspace{-0.3cm}
\textbf{Union Step:} Form the final confidence set $\mathcal C_{1-\alpha}=\bigcup_{m=1}^{M}\mathcal C_{1-\alpha}^{(m)}$.

\textbf{Output:} Design-based confidence set $\mathcal C_{1-\alpha}$ for the target estimand.

\end{algorithm}

In implementation, for both parametric and nonparametric propensity score propagation, we adopt a simple positivity-based safeguard to avoid near-degenerate regenerated propensity score vectors. Under the strict positivity condition in Condition~\ref{cond: positivity assumption} in Section~\ref{subsec: common regularity conditions}, there exists $\delta>0$ such that $p_i\in[\delta,1-\delta]$ for all $i$. It is therefore natural to discard regeneration runs $m$ for which $\min_{i\in[N]}\widehat p_i^{(m)}<\delta'$ or $\max_{i\in[N]}\widehat p_i^{(m)}>1-\delta'$, where $\delta'>0$ is a prespecified small constant satisfying $\delta'<\delta$; for example, one may take $\delta'=10^{-3}$. Under Condition~\ref{cond: positivity assumption} and the theoretical results in Section~\ref{sec: theory}, this safeguard does not alter the asymptotic coverage guarantee of the proposed framework, while helping to avoid non-informative confidence sets arising from regeneration runs that produce nearly degenerate propensity scores.

\subsection{Example Application 1: Design-Based Causal Inference for Observational Studies}\label{subsec: example application in observational studies}

As a concrete illustration of Algorithms~\ref{alg: parametric} and \ref{alg: general}, consider design-based causal inference for the SATE
$\theta_{\tau}=N^{-1}\sum_{i=1}^{N}\{Y_i(1)-Y_i(0)\}$ in an observational study. For each regeneration run $m\in[M]$, based on the formula of the oracle IPW estimator reviewed in Section~\ref{subsec: review known}, we construct a regenerated propensity score vector $\widehat{\mathcal{P}}^{(m)}=(\widehat{p}_{1}^{(m)},\ldots,\widehat{p}_{N}^{(m)})$ and define the resulting regenerated estimator
\[
\widehat{\theta}_{\tau}^{(m)}
=
\frac{1}{N}\sum_{i=1}^{N}\widehat{\theta}_{\tau, i}^{(m)}=\frac{1}{N}\sum_{i=1}^{N}
\left\{\frac{Z_iY_i}{\widehat{p}_{i}^{(m)}}
-
\frac{(1-Z_i)Y_i}{1-\widehat{p}_{i}^{(m)}}\right\},
\]
with the corresponding regenerated design-based variance estimator
\[
\widehat{\mathcal{V}}_{\tau}^{(m)}
=
\frac{1}{N(N-1)}
\sum_{i=1}^{N}
\left\{\widehat{\theta}_{\tau, i}^{(m)}-\widehat{\theta}_{\tau}^{(m)}\right\}^{2}.
\]
The regenerated confidence interval is
\[
\mathcal{C}_{1-\alpha}^{(m)}(\theta_{\tau})
=
\left[
\widehat{\theta}_{\tau}^{(m)}
-
z_{1-\alpha/2}\cdot \{\widehat{\mathcal{V}}_{\tau}^{(m)}\}^{1/2},
\;
\widehat{\theta}_{\tau}^{(m)}
+
z_{1-\alpha/2}\cdot \{\widehat{\mathcal{V}}_{\tau}^{(m)}\}^{1/2}
\right],
\]
and the final propagation-based confidence set for $\theta_{\tau}$ is
\[
\mathcal{C}_{1-\alpha}(\theta_{\tau})
=
\bigcup_{m=1}^{M}\mathcal{C}_{1-\alpha}^{(m)}(\theta_{\tau}),
\]
or its restricted-union analogue. This construction applies whether the regenerated propensity score vectors $\widehat{\mathcal{P}}^{(m)}$ are obtained parametrically through Algorithm~\ref{alg: parametric} or nonparametrically through Algorithm~\ref{alg: general}.

\subsection{Fundamental Distinction from Resampling- and Perturbation-Based Inference}\label{subsec: Fundamental Distinction from Resampling- and Perturbation-Based Inference}

The regeneration-and-union construction may, at a very high level, appear related to bootstrap, subsampling, and perturbation-based procedures, since these methods all introduce auxiliary randomness across repeated runs \citep[e.g.,][]{efron1994introduction, politis1994large, politis1999subsampling, chernozhukov2018double, zhao2019sens_boot, bodory2020finite, ImbensMenzel2021CausalBootstrap, guo2023causal, lin2024consistency, zheng2025perturbed}. The resemblance ends there. Classical bootstrap and subsampling methods typically resample units or observations, and hence the associated outcomes, to approximate a sampling distribution under repeated sampling, often from an implicit or explicit super-population. Perturbation-based methods similarly introduce auxiliary randomness by perturbing the outcome-level estimating equations within super-population inference frameworks, rather than by regenerating the unknown propensity scores for the same fixed finite population. These operations serve a different inferential purpose from the one considered here. Because the finite population is fixed, resampling or perturbing the observed outcomes would either alter the finite-population target or implicitly introduce a super-population sampling interpretation.

The subsampling step in nonparametric propensity score propagation has a distinct role. It is not used to resample the finite population or to approximate the sampling distribution of an outcome-based estimator. Instead, each regeneration run produces a data-compatible propensity score vector for the same fixed finite population. The subsequent inference neither resamples nor perturbs the realized design variables $\mathbf Z$ or the observed outcomes $\mathbf Y$. Therefore, the procedure preserves the finite-population target and avoids imposing a super-population outcome model for $\mathbf Y$. Further discussion of the distinction between propagation and resampling- or perturbation-based inference methods is given in Appendix~B.2.

\subsection{Some Guidance on Choosing the Number of Regeneration Runs $M$}\label{subsec: choose M}

How should one choose the number of propensity score regeneration runs $M$? The theory in Section~\ref{sec: theory} and Appendix~A gives sufficient scaling requirements on $M$ to ensure valid coverage. In the parametric propensity score case, $M \gg \log N$ suffices. In the nonparametric case, the corresponding requirement depends on stability and stochasticity measures of the propensity score learner; see Section~\ref{subsec: theory of nonparametric PSP} and Theorem~S.1 in Appendix~A.4. These theoretical requirements are sufficient, not necessary, and our numerical results suggest that much smaller values of $M$ can be adequate in practice. In particular, in the nonparametric propensity score settings considered in Section~\ref{sec: simulation}, feasible choices such as $M=100$ yield nominal coverage with reasonable confidence interval lengths, even in settings where plug-in and matching-based methods exhibit severe under-coverage. Figure~\ref{fig:simulation_m} further illustrates how the confidence interval length of nonparametric propensity score propagation changes with the number of regeneration runs $M$. As $M$ increases from $100$ to $10{,}000$, the mean confidence interval length remains practically reasonable: its ratio to the oracle confidence interval length increases from 1.30 to 1.69 and then stabilizes. Thus, researchers who prefer a more conservative implementation may choose a larger value of $M$, such as $M=1000$ or $10{,}000$, without substantial concern that the resulting confidence sets or intervals will become prohibitively wide.

\begin{figure}[h]
\centering
\caption{\small Mean confidence interval length across different numbers of regeneration runs $M$, illustrated using Propensity Score Setting 1 and Effect Setting 1 described in Section~\ref{sec: simulation}.}
\label{fig:simulation_m}
\includegraphics[width=0.5\linewidth]{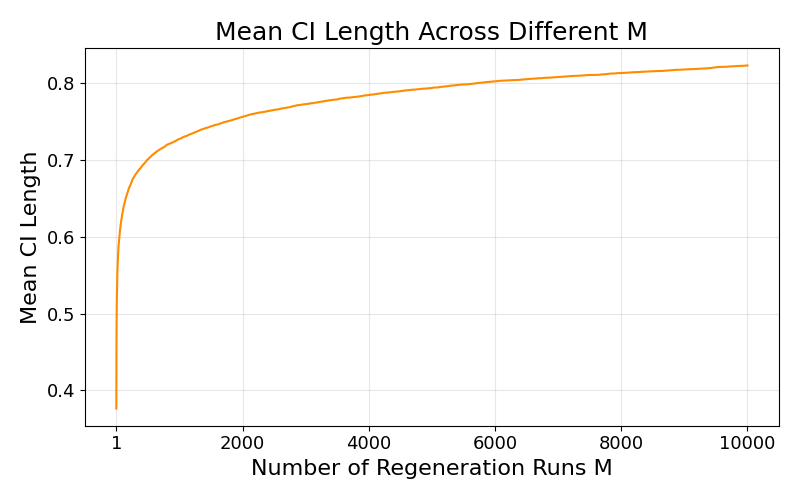}
\end{figure}

In practical applications, we recommend a design-based and outcome-blinded diagnostic for assessing whether the chosen $M$ is sufficiently large. Specifically, one can construct propagation-based confidence sets for covariate contrasts between comparison groups, such as treated versus control units, and check whether these confidence sets include zero. Systematic exclusion of zero suggests that the chosen $M$ may be too small and should be increased. An illustrative data example is provided in Section~\ref{sec: data}.

\section{Theoretical Guarantees of the Propensity Score Propagation Framework}\label{sec: theory}

\subsection{The Setup and Common Regularity Conditions}
\label{subsec: common regularity conditions}

We now study the theoretical properties of propensity score propagation in both parametric and nonparametric propensity score settings. Let $\theta$ denote a generic finite-population estimand for which an oracle design-based inference procedure is available when the true propensity score vector $\mathcal P$ is known. Let $\widehat{\theta}_{\text{oracle}}$ and $\widehat{\mathcal V}_{\text{oracle}}$ denote the corresponding oracle estimator and design-based variance estimator established under $\mathcal{P}$. In this section, we focus on the widely used Wald-type confidence interval
\begin{equation*}\label{eqn: oracle CI under CLT}
\mathcal{C}^{\text{oracle}}_{1-\alpha}(\theta)
=
\left[
\widehat{\theta}_{\text{oracle}}
-
z_{1-\alpha/2}\cdot
\widehat{\mathcal{V}}_{\text{oracle}}^{1/2},
\,
\widehat{\theta}_{\text{oracle}}
+
z_{1-\alpha/2}\cdot
\widehat{\mathcal{V}}_{\text{oracle}}^{1/2}
\right].
\end{equation*}
The interval above already covers a wide range of finite-population problems, including those discussed in Sections~\ref{subsec: review known} and \ref{sec: applications}. We focus on Wald-type intervals to state explicit approximation, coverage, and length guarantees; the general propensity score propagation algorithms themselves, such as Algorithms~\ref{alg: parametric} and \ref{alg: general}, continue to apply to any oracle mapping $\Lambda$ for which analogous validity and continuity conditions can be established.

Let $\widehat{\theta}^{(m)}$ and $\widehat{\mathcal{V}}^{(m)}$ denote the $m$th regenerated estimator and its corresponding design-based variance estimator, respectively, obtained by replacing the oracle propensity scores $\mathcal{P}$ in $\widehat{\theta}_{\text{oracle}}$ and $\widehat{\mathcal{V}}_{\text{oracle}}$ with the $m$th regenerated propensity score vector $\widehat{\mathcal{P}}^{(m)}$. Define the regenerated Wald-type interval
\begin{equation*}\label{eqn: regenerated CI}
\mathcal{C}^{(m)}_{1-\alpha}(\theta)=\left[\widehat{\theta}^{(m)}-z_{1-\alpha/2}\cdot \{\widehat{\mathcal{V}}^{(m)}\}^{1/2},\, \widehat{\theta}^{(m)}+z_{1-\alpha/2}\cdot \{\widehat{\mathcal{V}}^{(m)}\}^{1/2}\right].
\end{equation*}
The propagation-based confidence set is then
\begin{equation}\label{eqn: union CI for theta}
    \mathcal C_{1-\alpha}(\theta)=\bigcup_{m=1}^{M}\mathcal C^{(m)}_{1-\alpha}(\theta),
\end{equation}
with the restricted-union version defined analogously when applicable; for example, Section~\ref{subsec: Parametric Propensity Score Propagation} considers 
\begin{equation*}
    \widetilde{\mathcal{C}}_{1-\alpha}(\theta)=\bigcup_{m \in \mathcal{M}(\alpha^{\prime})}\mathcal{C}_{1-\alpha_0}^{(m)}(\theta),
\end{equation*}
where $\mathcal{M}(\alpha^{\prime})$ is defined in (\ref{eqn: definition of the restrictive set}) and $\alpha_0=\alpha-\alpha^{\prime}$ is the adjusted significance level. 

Before presenting the main results, we state several regularity conditions used throughout Section~\ref{sec: theory}. These conditions are standard in, or directly implied by, existing design-based inference theory with known propensity scores.

\begin{condition}[Finite-Population Regularity Conditions]
\label{condition: finite-population regularity}

\mbox{}\par
\begin{subcondition}
\item\label{cond: positivity assumption} (Strict Positivity of Propensity Scores). There exists a constant $\delta>0$ such that $p_i \in [\delta, 1-\delta]$ for all $i\in[N]$.

\item\label{cond: bounded outcome} (Bounded Outcomes). There exists a constant $C_y$ such that $|Y_i|\leq C_y$ for all $i$.

\item\label{cond: bounded covariates} (Bounded Covariates). There exists a constant $C_x$ such that $\|\mathbf{x}_i\|_{\infty}\leq C_x$ for all $i$.

\end{subcondition}
\end{condition}

\begin{condition}[Validity and Continuity Conditions for Oracle Estimators]
\label{cond: oracle validity conditions}
The oracle estimator $\widehat{\theta}_{\text{oracle}}$ and corresponding variance estimator $\widehat{\mathcal{V}}_{\text{oracle}}$, constructed using the true propensity scores $\mathcal{P}$, satisfy the following conditions:  
\begin{subcondition}
\item\label{cond: normality of oracle estimator} (Established Asymptotic Normality of $\widehat{\theta}_{\text{oracle}}$). 
$\frac{\widehat{\theta}_{\text{oracle}}-\theta}{\sqrt{\text{Var}(\widehat{\theta}_{\text{oracle}})}}\xrightarrow{d}N(0,1)$ as $N\rightarrow\infty$.
\item\label{cond: valid variance} (Established Validity of $\widehat{\mathcal{V}}_{\text{oracle}}$). 
As $N\rightarrow \infty$, $\frac{\widehat{\mathcal{V}}_{\text{oracle}}}{\text{Var}(\widehat{\theta}_{\text{oracle}})}\xrightarrow{p}c_v$ for some constant $c_v\geq 1$, and $\widehat{\mathcal{V}}_{\text{oracle}}^{1/2}\asymp_p N^{-1/2}$. 

\item\label{cond: Lipschitz} \textit{(Lipschitz Continuity Under Strict Positivity and Boundedness).} Let $\widehat{\theta}_{\text{oracle}}$ and $\widehat{\mathcal{V}}_{\text{oracle}}$ denote the oracle estimator and the corresponding design-based variance estimator when the oracle propensity score vector is set to $\mathcal{P}=(p_{1},\dots, p_{N})$, and let $\widehat{\theta}_{\text{oracle}}^{\prime}$ and $\widehat{\mathcal{V}}_{\text{oracle}}^{\prime}$ denote the corresponding quantities when the oracle propensity score vector is set to $\mathcal{P}^{\prime}=(p_{1}^{\prime}, \dots, p_{N}^{\prime})$. Recall that $N^{-1/2}\|\mathcal{P}-\mathcal{P}^{\prime}\|_{2}=\sqrt{N^{-1}\sum_{i=1}^{N}(p_{i}-p_{i}^{\prime})^{2}}$. Then, under bounded outcomes as in Condition~\ref{cond: bounded outcome}, for any fixed $\delta\in(0,0.5)$, there exist finite constants $L_{\theta}^{\prime}(\delta),L_{v}^{\prime}(\delta)>0$ such that $|\widehat{\theta}_{\text{oracle}}-\widehat{\theta}_{\text{oracle}}^{\prime}|\leq L_{\theta}^{\prime}(\delta)N^{-1/2}\|\mathcal{P}-\mathcal{P}^{\prime}\|_{2}$ and $N|\widehat{\mathcal{V}}_{\text{oracle}}-\widehat{\mathcal{V}}_{\text{oracle}}^{\prime}|\leq L_{v}^{\prime}(\delta)N^{-1/2}\|\mathcal{P}-\mathcal{P}^{\prime}\|_{2}$ for any $\mathcal{P},\mathcal{P}^{\prime}\in[\delta,1-\delta]^N$.

\end{subcondition}
\end{condition}
Conditions~\ref{cond: normality of oracle estimator} and~\ref{cond: valid variance} are standard oracle validity conditions in design-based inference and have been established for a broad class of problems (\citealp{imbens2015causal, li2017general, ding2024first}), including those reviewed in Sections~\ref{subsec: background}, \ref{subsec: review known}, and \ref{sec: applications}. Together, they imply that the oracle confidence set $\mathcal{C}^{\text{oracle}}_{1-\alpha}(\theta)$ has asymptotic coverage at least $100(1-\alpha)\%$ for $\theta$. In particular, by invoking finite-population central limit theorems such as \citet{li2017general}, Condition~\ref{cond: normality of oracle estimator} typically follows under mild regularity conditions, while explicit forms of valid design-based variance estimators required in Condition~\ref{cond: valid variance} have been constructed in many settings; see Sections~\ref{subsec: review known} and \ref{sec: applications} and Appendix~A.1 for examples. Condition~\ref{cond: Lipschitz} requires that, when propensity scores are bounded away from zero and one, and outcomes are bounded, the oracle estimator and its variance estimator vary smoothly with the propensity score vector under the normalized $L_2$ norm. We show in Appendix~A.2 that Condition~\ref{cond: Lipschitz} holds for a wide range of commonly used design-based estimators and variance estimators, including those reviewed in Sections~\ref{subsec: review known} and \ref{sec: applications}, via an argument based on the Cauchy-Schwarz inequality.

In Sections~\ref{subsec: theory of parametric PSP} and \ref{subsec: theory of nonparametric PSP}, we establish theoretical guarantees for parametric and nonparametric propensity score propagation, respectively. These results are derived under the maintained ignorability (i.e., no unobserved covariates) assumption \citep{rosenbaum1983central, rosenbaum2002observational}; see Remark~S.1 in Appendix~B.3. In Section~5.3, we discuss an extension that incorporates sensitivity analysis for hidden bias arising from unobserved covariates (i.e., unmeasured confounding).

\subsection{Theoretical Guarantees of Parametric Propensity Score Propagation}\label{subsec: theory of parametric PSP}

For parametric propensity score propagation, we first collect the regularity conditions used below for the GLM in \eqref{eqn: GLM}. These conditions mainly concern the smoothness of the GLM link function and the asymptotic normality of the GLM maximum likelihood estimator (MLE). The latter property holds under mild regularity conditions in classical fixed-design GLM theory \citep{fahrmeir1985consistency}; see Appendix A.3 for details.

\begin{condition}[Parametric Propensity Score Regularity Conditions]
\label{condition: Parametric Propensity Score Regularity Conditions}
We consider the following regularity conditions in the parametric propensity score setting:
\begin{subcondition}
\item\label{cond: validity of GLM} (Correct Specification and Smoothness of the GLM for Propensity Scores). We assume that $\{Z_{i}\}_{i=1}^{N}$ are independent and that, for each unit $i$, the propensity score satisfies $P(Z_{i}=1\mid \mathbf{x}_{i})=\Psi(\boldsymbol{\beta}^{T}\mathbf{x}_{i})$, where $\boldsymbol{\beta}\in\mathbbm{R}^{d}$ is the true parameter vector, and the inverse link function $\Psi: \mathbbm{R} \rightarrow (0,1)$ is one-to-one and twice continuously differentiable (i.e., $\Psi \in C^{2}$), with $|\Psi^{\prime}| \leq C_{\Psi^{\prime}}$ for some positive constant $C_{\Psi^{\prime}}>0$. 

\item\label{cond: MLE for GLM} (Asymptotic Normality of the MLE for the GLM). The MLE $\widehat{\boldsymbol{\beta}}$ exists with probability tending to one and satisfies $\sqrt{N}\cdot \widehat{\Omega}_{N}^{-\frac{1}{2}} \big(\widehat{\boldsymbol{\beta}}-\boldsymbol{\beta}\big)\xrightarrow{d} N\left(\mathbf{0}_{d\times 1}, \mathbf{I}_{d\times d}\right)$, where $\widehat{\Omega}_{N}$ is the covariance estimator used in the regeneration step and $\widehat\Omega_N\xrightarrow{p}\Omega$ for some fixed positive definite matrix $\Omega$. 
\end{subcondition}
\end{condition}

For parametric propensity score propagation, we first present Theorem~\ref{thm: uniform approximation}, which shows that, with sufficiently many regeneration runs, at least one regenerated propensity score vector is uniformly close to the oracle propensity score vector. Recall that $\widehat{\mathcal{P}}^{(m)}=(\widehat{p}_{1}^{(m)},\dots,\widehat{p}_{N}^{(m)})$ denotes the $m$th regenerated propensity score vector, $\mathcal{P}=(p_{1},\dots,p_{N})$ denotes the oracle propensity score vector, and $\|\widehat{\mathcal{P}}^{(m)}-\mathcal{P}\|_{\infty}=\max_{i\in[N]}|\widehat{p}_{i}^{(m)}-p_i|$.

\begin{theorem}\label{thm: uniform approximation}
Let $\alpha\in (0,0.5)$ be a prespecified significance level. Under Conditions~\ref{cond: bounded covariates} and \ref{condition: Parametric Propensity Score Regularity Conditions}, for any $\alpha^{\prime}\in (0, \alpha)$, there exists a constant $C_{\alpha^{\prime}}>0$ such that
\begin{equation*}
 \liminf_{N\rightarrow \infty }\lim_{M\rightarrow \infty} P\left(\min_{m\in \mathcal{M}(\alpha^{\prime})}\left\|  \widehat{\mathcal{P}}^{(m)}-\mathcal{P}\right\|_{\infty}  \leq  \frac{C_{\alpha^{\prime}}}{\sqrt{N}}\left(\frac{\log N}{M}\right)^{\frac{1}{d}}   \right)\geq 1-\alpha^{\prime},
\end{equation*} 
which also implies that 
\begin{equation*}
    \liminf_{N\rightarrow \infty }\lim_{M\rightarrow \infty} P\left(\min_{m\in [M]}\left\| \widehat{\mathcal{P}}^{(m)}-\mathcal{P}\right\|_{\infty}  \leq  \frac{C_{\alpha^{\prime}}}{\sqrt{N}} \left(\frac{\log N}{M}\right)^{\frac{1}{d}}  \right)\geq 1-\alpha^{\prime}.
\end{equation*}
\end{theorem}
Theorem~\ref{thm: uniform approximation} shows that, with sufficiently many regeneration runs, at least one regenerated propensity score vector is uniformly close to the oracle propensity score vector. Combining this result with Conditions~\ref{cond: positivity assumption}, \ref{cond: bounded outcome}, and \ref{cond: Lipschitz}, we obtain Theorem~\ref{thm: approximation theorem of parametric propensity score propagation estimator}.

\begin{theorem}\label{thm: approximation theorem of parametric propensity score propagation estimator}
Let $\alpha\in (0, 0.5)$ be a prespecified significance level. Under Conditions~\ref{condition: finite-population regularity}, \ref{cond: Lipschitz}, and \ref{condition: Parametric Propensity Score Regularity Conditions}, for any $\alpha^{\prime}\in (0,\alpha)$, there exists a constant $\widetilde{C}_{\alpha^{\prime}}>0$ such that
\begin{equation*}
   \liminf_{N\rightarrow \infty }\lim_{M\rightarrow \infty} P\left(\min_{m\in \mathcal{M}(\alpha^{\prime})}\left|\widehat{\theta}^{(m)}-\widehat{\theta}_{\text{oracle}}\right|  \leq  \frac{\widetilde{C}_{\alpha^{\prime}}}{\sqrt{N}} \left(\frac{\log N}{M}\right)^{\frac{1}{d}}  \right)\geq 1-\alpha^{\prime},
\end{equation*}
which also implies that 
\begin{equation*}
    \liminf_{N\rightarrow \infty }\lim_{M\rightarrow \infty} P\left(\min_{m\in [M]}\left|\widehat{\theta}^{(m)}-\widehat{\theta}_{\text{oracle}}\right|  \leq  \frac{\widetilde{C}_{\alpha^{\prime}}}{\sqrt{N}} \left(\frac{\log N}{M}\right)^{\frac{1}{d}}  \right)\geq 1-\alpha^{\prime}.
\end{equation*}
\end{theorem}

Theorem~\ref{thm: approximation theorem of parametric propensity score propagation estimator} shows that, when $M\gg \log N$, the oracle-closest regeneration run $m^{*}$ satisfies $|\widehat{\theta}^{(m^{*})}-\widehat{\theta}_{\text{oracle}}|=\min_{m\in[M]}|\widehat{\theta}^{(m)}-\widehat{\theta}_{\text{oracle}}|=o_p(N^{-1/2})$. Thus, at least one regenerated estimator $\widehat{\theta}^{(m^{*})}$ is asymptotically indistinguishable from the oracle estimator $\widehat{\theta}_{\text{oracle}}$ and the confidence interval $\mathcal{C}_{1-\alpha}^{(m^{*})}(\theta)$ constructed from this oracle-closest run has the same asymptotic coverage rate as $\mathcal{C}_{1-\alpha}^{\text{oracle}}(\theta)$. Since $\mathcal{C}_{1-\alpha}^{(m^{*})}(\theta)\subseteq \mathcal{C}_{1-\alpha}(\theta)=\bigcup_{m=1}^{M}\mathcal{C}_{1-\alpha}^{(m)}(\theta)$, the coverage of $\mathcal{C}_{1-\alpha}(\theta)$ follows from the validity of $\mathcal{C}_{1-\alpha}^{\text{oracle}}(\theta)$. The same reasoning applies to the restricted-union confidence set $\widetilde{\mathcal{C}}_{1-\alpha}(\theta)$ by replacing the minimum over $[M]$ with the minimum over $\mathcal{M}(\alpha^{\prime})$, since Theorem~\ref{thm: approximation theorem of parametric propensity score propagation estimator} establishes the corresponding approximation for both cases. Theorem~\ref{thm: coverage rate parametric} formalizes this intuition and provides rigorous coverage guarantees for both the unrestricted- and restricted-union confidence sets.

\begin{theorem}\label{thm: coverage rate parametric}
Let $\{M_N\}_{N\geq 1}$ be a fixed positive sequence satisfying $\lim_{N\to\infty}(\log N)/M_N=0$. Under Conditions~\ref{condition: finite-population regularity}--\ref{condition: Parametric Propensity Score Regularity Conditions}, if the number of regeneration runs satisfies $M\geq M_N$, then $\liminf_{N\to\infty} P\big(\theta \in \mathcal{C}_{1-\alpha}(\theta)\big)\ge 1-\alpha$ and $\liminf_{N\to\infty} P\big(\theta \in \widetilde{\mathcal{C}}_{1-\alpha}(\theta)\big)\ge 1-\alpha$.
\end{theorem}

Given the coverage guarantee, the next question is whether, and at what rate, the length of the confidence set $\mathcal{C}_{1-\alpha}(\theta)$, as well as the length of its restricted-union variant $\widetilde{\mathcal{C}}_{1-\alpha}(\theta)$, converges to zero as $N\rightarrow\infty$. The following Theorem~\ref{thm: CI length parametric} shows that the lengths of $\mathcal{C}_{1-\alpha}(\theta)$ and $\widetilde{\mathcal{C}}_{1-\alpha}(\theta)$ are of orders $\sqrt{\frac{\log M}{N}}$ and $\frac{1}{\sqrt{N}}$, respectively.

\begin{theorem}\label{thm: CI length parametric} 
Let $\mu_{L}$ denote Lebesgue measure on $\mathbbm{R}$. Under Conditions~\ref{condition: finite-population regularity}--\ref{condition: Parametric Propensity Score Regularity Conditions}, we have $\mu_{L}\{\mathcal{C}_{1-\alpha}(\theta)\}=O_{p}\Big(\sqrt{\frac{\log M}{N}} \Big)$ and $\mu_{L}\{\widetilde{\mathcal{C}}_{1-\alpha}(\theta)\}=O_{p}\Big(\frac{1}{\sqrt{N}} \Big)$. That is, for any $\epsilon>0$, there exists a constant $C_{\epsilon}>0$, such that
\begin{equation*}
  \liminf_{N\rightarrow \infty}\lim_{M\rightarrow \infty}  P\left(\mu_{L}\big\{\mathcal{C}_{1-\alpha}(\theta)\big\}\leq C_{\epsilon}\cdot \sqrt{\frac{\log M}{N}} \right)\geq 1-\epsilon 
\end{equation*}
and
\begin{equation*}
   \liminf_{N\rightarrow \infty}\lim_{M\rightarrow \infty}  P\left(\mu_{L}\big\{\widetilde{\mathcal{C}}_{1-\alpha}(\theta)\big\}\leq C_{\epsilon}\cdot \frac{1}{\sqrt{N}} \right)\geq 1-\epsilon.
\end{equation*}
\end{theorem}

In other words, Theorem~\ref{thm: CI length parametric} shows that the restricted-union confidence set $\widetilde{\mathcal{C}}_{1-\alpha}(\theta)$ has the same $N^{-1/2}$ length rate as the oracle confidence set $\mathcal{C}^{\text{oracle}}_{1-\alpha}(\theta)$, while the unrestricted-union confidence set $\mathcal{C}_{1-\alpha}(\theta)$ has the nearly oracle rate $\sqrt{(\log M)/N}$. Since Theorem~\ref{thm: coverage rate parametric} only requires $M$ to grow faster than $\log N$ for coverage, $M$ can grow slowly enough for the unrestricted-union confidence set $\mathcal{C}_{1-\alpha}(\theta)$ to remain close to the oracle $N^{-1/2}$ rate, while the restricted-union confidence set $\widetilde{\mathcal{C}}_{1-\alpha}(\theta)$ attains that rate directly. This theoretical message is consistent with the parametric simulation studies: $M=100$ already attains nominal coverage with moderate interval lengths, about 1.14--1.67 times the oracle length across the considered settings; see Appendix~D.1 for details.

\begin{remark}[Distinction Between Finite-Population and Finite-Sample Validity]
Throughout the paper, \emph{finite-population inference} refers to the inferential perspective under which the potential outcomes and covariates are treated as fixed, and randomness arises solely from the design variables \citep{li2017general, xu2021potential, cohen2022gaussian}, such as treatment assignment indicators, sampling inclusion indicators, and missingness indicators. This is distinct from \emph{finite-sample validity}, which typically refers to exact or non-asymptotic guarantees at a fixed sample size. Our methodology and theory are finite-population in the sense that they do not impose super-population or outcome-modeling assumptions, although their formal theoretical guarantees are asymptotic, in the sense that they apply to sequences of finite populations whose sizes grow. We use this finite-population asymptotic formulation primarily to simplify the theoretical statements, without carrying problem-specific finite-sample remainder terms.
\end{remark}

\subsection{Theoretical Guarantees of Nonparametric Propensity Score Propagation}\label{subsec: theory of nonparametric PSP}

We next provide a theoretical justification for nonparametric propensity score propagation in Algorithm~\ref{alg: general}. For each regeneration run $m\in[M]$, let $\mathbf{S}^{(m)}=(S^{(m)}_1,\ldots,S^{(m)}_N)\in \{0,1\}^{N}$ denote the subsample partition: $S_i^{(m)}=1$ indicates that unit $i$ is assigned to the first subsample $\mathcal{I}^{(m)}_A$, whereas $S_i^{(m)}=0$ indicates assignment to the second subsample $\mathcal{I}^{(m)}_B$, with $P(S_i^{(m)}=1)=1/2$. Equivalently, $\mathbf{S}^{(m)}$ is uniformly distributed over the partition hypercube $\{0,1\}^{N}$. Given $\mathbf{S}^{(m)}$, let $\widehat{\mathcal{P}}^{(m)}=(\widehat p^{(m)}_1,\ldots,\widehat p^{(m)}_N)$ and $\widehat{\theta}^{(m)}$ denote the regenerated propensity score vector and the resulting regenerated estimator of the target estimand $\theta$, respectively.
\begin{condition}[Stability with Respect to Random Partitions]
\label{cond: Lipschitz condition of propensity score learner}
Let $d_H(\mathbf{S}_1,\mathbf{S}_2)=\sum_{i=1}^N\mathbbm{1}\{S_{1i}\neq S_{2i}\}$ be the Hamming distance between two partition indicators. Let $\widehat{\mathcal{P}}_1$ and $\widehat{\mathcal{P}}_2$ be the estimated propensity score vectors produced by Algorithm~\ref{alg: general} under $\mathbf{S}_1$ and $\mathbf{S}_2$, respectively. Suppose there exist sequences $L_{s,N}>0$ and $c_N\to0$ such that $\sup_N L_{s,N}<\infty$ and $P\left(\max_{\mathbf{S}_1\neq\mathbf{S}_2}\frac{N^{-1/2}\|\widehat{\mathcal{P}}_1-\widehat{\mathcal{P}}_2\|_2}{N^{-1}d_H(\mathbf{S}_1,\mathbf{S}_2)}\leq L_{s,N}\right)\geq1-c_N$.
\end{condition}

Condition~\ref{cond: Lipschitz condition of propensity score learner} requires the propensity score learner to be stable with respect to the random partition \citep{bousquet2002stability}: the root-mean-square difference between two estimated propensity score vectors (i.e., $N^{-1/2}\|\widehat{\mathcal{P}}_1-\widehat{\mathcal{P}}_2\|_2$) is controlled by the fraction of partition indicators on which the corresponding partitions differ (i.e., $N^{-1}d_H(\mathbf{S}_1,\mathbf{S}_2)$). For the coverage guarantee, it suffices that \(L_{s,N}\) remain bounded, which ensures that small perturbations of the partition have only proportionally small effects on the estimates. For the confidence set length convergence in Theorem~\ref{thm: CI length nonparametric}, we impose the stronger requirement \(L_{s,N}\to0\), so that the learner's sensitivity to the partition vanishes asymptotically. This strengthening remains reasonable since estimates obtained under different partitions should converge toward the same true propensity score vector, although consistency alone does not imply the stated uniform Lipschitz bound.

To state the next condition, we introduce some relevant definitions. Let $\alpha\in (0,0.5)$ be a prespecified significance level. Given any $\alpha_{*}\in (0, \alpha)$, let $\mathcal{I}_{1-\alpha_{*}}(\widehat{\theta}_{\text{oracle}})=\big[L_{1-\alpha_{*}}, \, U_{1-\alpha_{*}}\big]=\big[\theta-z_{1-\alpha_{*}/2}\cdot \widehat{\mathcal{V}}_{\text{oracle}}^{1/2},\, \theta+z_{1-\alpha_{*}/2}\cdot \widehat{\mathcal{V}}_{\text{oracle}}^{1/2}\big]$ represent the interval for capturing the variation of the oracle estimator $\widehat{\theta}_{\text{oracle}}$ with confidence $100(1-\alpha_{*})\%$.

\begin{condition}[Nondegenerate Support of Regenerated Estimator]
\label{cond: sufficient support of regenerated estimators} For any fixed $\alpha_{*}\in (0,\alpha)$, there exists some positive sequence $\gamma_{\alpha_{*}, N}>0$ such that, conditional on the realized $\mathbf{Z}$, with probability tending to one, we have $P(\widehat{\theta}^{(m)}< L_{1-\alpha_{*}} \mid \mathbf{Z})\geq \gamma_{\alpha_{*}, N}$ and $P(\widehat{\theta}^{(m)}> U_{1-\alpha_{*}} \mid \mathbf{Z})\geq \gamma_{\alpha_{*}, N}$. In addition, assume that the tail-mass lower bound $\gamma_{\alpha_{*},N}$ and the middle mass $\rho_{\alpha_{*},N}=1-2\gamma_{\alpha_{*},N}$ are non-negligible on the partition hypercube $\{0,1\}^{N}$, in the sense that $\gamma_{\alpha_{*},N}\gg N^{1/4}2^{-N/2}$ and $\rho_{\alpha_{*},N}\gg 2^{-N}$, where $a_{N}\gg b_{N}$ means $a_{N}/b_{N}\rightarrow\infty$ as $N\rightarrow \infty$.
\end{condition}

Because $\mathbf S^{(m)}$ is uniformly distributed over the $2^N$ partition indicators in $\{0,1\}^N$, conditional on the realized $\mathbf Z$, the probabilities $P(\widehat{\theta}^{(m)}< L_{1-\alpha_*}\mid\mathbf Z)$ and $P(\widehat{\theta}^{(m)}> U_{1-\alpha_*}\mid\mathbf Z)$ equal the respective fractions of partition indicators producing estimates below $L_{1-\alpha_*}$ and above $U_{1-\alpha_*}$. Therefore, Condition~\ref{cond: sufficient support of regenerated estimators} imposes only a mild non-degeneracy requirement on the regeneration distribution of $\widehat{\theta}^{(m)}$. In particular, it does not require either tail to contain a non-vanishing fraction of all $2^{N}$ partitions, since $\gamma_{\alpha_{*},N}$ is allowed to converge to zero exponentially fast. For example, $\gamma_{\alpha_{*}, N}=a^{-N}$ for some $1<a<\sqrt{2}$ satisfies this condition. Similarly, $\rho_{\alpha_{*},N}\gg2^{-N}$ only requires that, among the $2^{N}$ partition indicators, the number of partition indicators for which the resulting estimate lies within $\mathcal{I}_{1-\alpha_{*}}(\widehat{\theta}_{\text{oracle}})=[L_{1-\alpha_{*}},\, U_{1-\alpha_{*}}]$, namely $\rho_{\alpha_*,N}\cdot 2^N$, diverges as $N\to\infty$.

Recall that $m^{*}\in [M]$ denotes the oracle-closest regeneration run satisfying $|\widehat{\theta}^{(m^*)}-\widehat{\theta}_{\text{oracle}}|=\min_{m\in[M]}|\widehat{\theta}^{(m)}-\widehat{\theta}_{\text{oracle}}|$. Let $\widehat{\mathcal{P}}^{(m^{*})}=(\widehat{p}_{1}^{(m^{*})}, \dots, \widehat{p}_{N}^{(m^{*})})$ denote the regenerated propensity score vector from this oracle-closest run $m^{*}$. 

\begin{condition}[Consistency of the Propensity Score Learner]
\label{cond: consistency of propensity score learner}
The regenerated propensity score vector $\widehat{\mathcal{P}}^{(m^{*})}$ is mean-square consistent for the true propensity score vector $\mathcal{P}$. That is, $N^{-1}\sum_{i=1}^{N}\big(\widehat{p}_{i}^{(m^{*})}-p_{i}\big)^{2}=o_{p}(1)$, or equivalently, $N^{-1/2}\|\widehat{\mathcal{P}}^{(m^{*})}-\mathcal{P}\|_{2}=o_{p}(1)$.
\end{condition}

Condition~\ref{cond: consistency of propensity score learner} is a standard consistency condition for the propensity score learner, such as a nonparametric machine learning method, imposed only on the oracle-closest regeneration run. It is therefore weaker than requiring uniform consistency of the propensity score learner over all random partitions or over all regeneration runs. Overall, Conditions~\ref{cond: Lipschitz condition of propensity score learner}--\ref{cond: consistency of propensity score learner} serve as convenient sufficient conditions for the generic theoretical results and may be replaced by learner-specific regularity conditions in specific settings.

\begin{theorem}
\label{thm: approximation nonparametric}
Let $\alpha\in(0,0.5)$ and suppose that Conditions~\ref{condition: finite-population regularity}, \ref{cond: oracle validity conditions}, \ref{cond: Lipschitz condition of propensity score learner}, and \ref{cond: sufficient support of regenerated estimators} hold. Then, for any $\alpha_*\in(0,\alpha)$, there exists a deterministic sequence $\varepsilon_{N,M}$ satisfying $\lim_{N\rightarrow \infty} \lim_{M\to\infty}
\frac{\varepsilon_{N,M}}{N^{-1/2}}=0$ and
\begin{equation*}
    \liminf_{N\to\infty}\lim_{M\to\infty}P\left(\min_{m\in[M]}\left|\widehat{\theta}^{(m)}-\widehat{\theta}_{\text{oracle}}\right|\leq \varepsilon_{N,M}\right)\geq1-\alpha_{*}.
\end{equation*}
\end{theorem}

Theorem~\ref{thm: approximation nonparametric} plays the role of a nonparametric counterpart to Theorem~\ref{thm: approximation theorem of parametric propensity score propagation estimator}. It shows that, among sufficiently many regeneration runs $M$, at least one regenerated estimator is close to the oracle estimator with high probability. The proof in Appendix~A.4 gives an example choice of $\varepsilon_{N,M}$ in terms of $L_{s,N}$, $\gamma_{\alpha_*,N}$, $\rho_{\alpha_*,N}$, $N$, and $M$, using an isoperimetric argument on the partition hypercube $\{0,1\}^N$. Building on Theorem~\ref{thm: approximation nonparametric}, Theorem~\ref{thm: coverage rate nonparametric} provides a coverage guarantee for design-based confidence sets. 

\begin{theorem}\label{thm: coverage rate nonparametric}
Under Conditions 1, 2, and 4--6, we have $\liminf_{N\to\infty}\lim_{M\to\infty}P\bigl(\theta \in \mathcal{C}_{1-\alpha}(\theta)\bigr)\ge 1-\alpha$, where $\mathcal{C}_{1-\alpha}(\theta)$ is the propagation-based confidence set defined in (\ref{eqn: union CI for theta}).
\end{theorem}

The above theorem establishes coverage as $M\to\infty$, while the numerical evaluation of finite-$M$ behavior is examined in Section~\ref{sec: simulation}; there, moderate values such as $M=100$ already achieve at least nominal coverage, even when plug-in and matching-based approaches exhibit substantial under-coverage. We next provide a sufficient condition under which the length of the resulting confidence set converges to zero.

\begin{theorem}\label{thm: CI length nonparametric}
Consider the setup of Theorem~\ref{thm: approximation nonparametric}. Suppose further that: (i) the stability constants in Condition~\ref{cond: Lipschitz condition of propensity score learner} satisfy $L_{s,N}\to0$; and (ii) for the sequence of regeneration numbers $M=M_N$, the regenerated variance estimators satisfy $\max_{m\in[M_N]}\widehat{\mathcal{V}}^{(m)}=o_p(1)$. Then $\mu_L\{\mathcal{C}_{1-\alpha}(\theta)\}\xrightarrow{p}0$ as $N\to\infty$.
\end{theorem}

As will be illustrated in the simulation studies in Section~\ref{sec: simulation}, the confidence set $\mathcal{C}_{1-\alpha}(\theta)$ produced by nonparametric propensity score propagation (Algorithm~\ref{alg: general}) has a practically reasonable length. With a moderate number of regeneration runs (e.g., $M=100$), its length is typically only 1.3 to 1.6 times that of the oracle confidence set $\mathcal{C}_{1-\alpha}^{\text{oracle}}(\theta)$ across the simulation settings in Section~\ref{sec: simulation}. The simulations also show that, even when $M=10{,}000$, the length stabilizes at a practically reasonable level, around 1.69 times the oracle confidence interval length; see Figure~\ref{fig:simulation_m}. In practice, the length of $\mathcal{C}_{1-\alpha}(\theta)$ can be further reduced using a restricted-union strategy or covariate adjustment; see Remark~S.4 in Appendix~B.3 and Section~\ref{subsec: covariate adjustment} for details.

\section{Some Additional Examples and Extensions of Propensity Score Propagation}\label{sec: applications}

\subsection{Additional Example Applications of Propensity Score Propagation}\label{subsec: additional applications}

In the previous sections, we used design-based causal inference for observational studies to illustrate the main idea of propensity score propagation. We now demonstrate that our framework applies broadly to a wide range of problems. 

\textit{Example Application 2 (Design-Based Inference for Missing Data):} Design-based inference for missing data is closely connected to classical design-based inference for probability survey sampling, but the key difference is that propensity scores are known in the latter and typically unknown in the former. Consider $N$ fixed units. Let $y_i$ denote the outcome of unit $i$, let $Z_i\in\{0,1\}$ indicate whether $y_i$ is observed, and let $Y_i$ be the observed outcome, where $Y_i=y_i$ if $Z_i=1$ and $Y_i=\text{``NA''}$ otherwise; write $\mathbf Z=(Z_1,\ldots,Z_N)$, $\mathbf y=(y_1,\ldots,y_N)$, and $\mathbf Y=(Y_1,\ldots,Y_N)$. The target estimand is the finite-population mean $\theta_{\bar y}=N^{-1}\sum_{i=1}^{N}y_i$. In classical probability survey sampling, $Z_i$ represents the sampling inclusion indicator, and the inclusion probabilities $p_{i}=P(Z_i=1)$ are typically known by design. Under independent sampling, such as Poisson sampling, the Horvitz-Thompson estimator $\widehat{\theta}_{\bar{y}, \text{oracle}}=N^{-1}\sum_{i=1}^{N}\frac{Z_iY_i}{p_i}$ is unbiased for $\theta_{\bar y}$ (where we define $0\times\text{``NA''}=0$), and a corresponding design-based variance estimator is $\widehat{\mathcal V}_{\bar{y},\text{oracle}}=N^{-2}\sum_{i=1}^{N}\frac{1-p_i}{p_i^{2}}Z_iY_i^{2}$. Standard finite-population central limit theorems then yield the oracle design-based confidence interval $\mathcal{C}_{1-\alpha}^{\text{oracle}}=\Lambda_{\bar{y}}(\mathcal{P}, \mathbf{Z}, \mathbf{Y})=\big[\widehat{\theta}_{\bar{y}, \text{oracle}}-z_{1-\alpha/2}\cdot\widehat{\mathcal V}_{\bar{y},\text{oracle}}^{1/2},\, \widehat{\theta}_{\bar{y}, \text{oracle}}+z_{1-\alpha/2}\cdot\widehat{\mathcal V}_{\bar{y},\text{oracle}}^{1/2}\big]$ when the probabilities $\mathcal{P}=(p_1,\dots, p_{N})$ are known \citep{horvitz1952generalization, hajek1960limiting, hajek1964asymptotic, li2017general}. In missing data problems, the same oracle construction would apply if the non-missingness probabilities were known; however, these probabilities are usually unknown in practice. Under the ignorability assumption and independence of missingness across units, let $p_i=P(Z_i=1\mid \mathbf{x}_{i})$ denote the non-missingness probability (propensity score) of unit $i$, which is an unknown function of observed covariates $\mathbf{x}_i$. Propensity score propagation can then be applied by treating $\mathbf Z$ as the design variables, regenerating plausible non-missingness probability vectors $\widehat{\mathcal P}^{(1)},\ldots,\widehat{\mathcal P}^{(M)}$, constructing the Horvitz-Thompson-based design-based confidence interval under each regenerated vector, and reporting the final confidence set as the union of these regenerated intervals.

\textit{Example Application 3 (Design-Based Difference-in-Differences Analysis):} As mentioned in Section~\ref{subsec: review known}, existing design-based inference frameworks for DID studies, such as \citet{athey2022design}, rely on known propensity scores, for example, those arising from complete randomization of treatment assignments or from conditioning on an exactly matched design. This known-propensity-score assumption is conceptually clean but often strong in practice, and the proposed propensity score propagation framework allows us to relax this assumption by allowing the propensity scores to be unknown. Consider the canonical DID setup in which all units are untreated at baseline $(t=0)$, and at $t=1$, some units receive treatment while others remain in control. Let $Z_i\in\{0,1\}$ denote the treatment indicator at $t=1$, let $Y_{it}$ be the observed outcome for unit $i\in[N]$ at time $t\in \{0,1\}$, and let $Y_{it}(1)$ and $Y_{it}(0)$ be the corresponding potential outcomes under treatment and control. Under the assumption of no time-varying unmeasured confounding, let $p_{i}=P(Z_{i}=1\mid \mathbf{x}_{i})$ denote the propensity score for each unit $i$ (assuming independence of treatment uptake across units). If $p_{i}$ were known, it is easy to show that the weighted DID estimator $\widehat{\theta}_{\text{DID}, \text{oracle}}=N^{-1}\sum_{i=1}^{N}\widehat{\theta}_{\text{DID}, \text{oracle}, i}=N^{-1}\sum_{i=1}^{N}\big\{\frac{Z_{i}}{p_{i}}(Y_{i1}-Y_{i0})-\frac{1-Z_{i}}{1-p_{i}}(Y_{i1}-Y_{i0})\big\}$ is an unbiased estimator for $\theta_{t=1}=N^{-1}\sum_{i=1}^{N}\{Y_{i1}(1)-Y_{i1}(0)\}$ (see Remark S.7 in Appendix B.3), and $\widehat{\mathcal{V}}_{\text{DID}, \text{oracle}}=\frac{1}{N(N-1)}\sum_{i=1}^{N}(\widehat{\theta}_{\text{DID}, \text{oracle}, i}-\widehat{\theta}_{\text{DID}, \text{oracle}})^{2}$ is a valid, design-based variance estimator for $\widehat{\theta}_{\text{DID}, \text{oracle}}$. Invoking the finite-population central limit theorem (\citealp{li2017general}), we can obtain an oracle design-based confidence interval $\mathcal{C}^{\text{oracle}}_{1-\alpha}=\Lambda_{\text{DID}}(\mathcal{P}, \mathbf{Z}, \mathbf{Y})=\big[\widehat{\theta}_{\text{DID}, \text{oracle}}-z_{1-\alpha/2}\cdot \widehat{\mathcal{V}}_{\text{DID}, \text{oracle}}^{1/2}, \, \widehat{\theta}_{\text{DID}, \text{oracle}}+z_{1-\alpha/2}\cdot \widehat{\mathcal{V}}^{1/2}_{\text{DID}, \text{oracle}}\big]$. We then apply propensity score propagation to extend the oracle DID procedure to unknown propensity scores. For each regeneration run $m\in[M]$, we obtain $\widehat{\mathcal P}^{(m)}=(\widehat p_1^{(m)},\dots,\widehat p_N^{(m)})$, construct $\mathcal C^{(m)}_{1-\alpha}$ by substituting $\widehat p_i^{(m)}$ for $p_i$ in $\mathcal C^{\text{oracle}}_{1-\alpha}$, and form the final confidence set $\mathcal C_{1-\alpha}=\bigcup_{m=1}^{M}\mathcal C^{(m)}_{1-\alpha}$.

\textit{Example Application 4 (Design-Based Instrumental Variable Analysis in Observational Studies):} The main arguments in this paper are developed under the ignorability assumption, whereas instrumental variable (IV) methods provide an important alternative approach for addressing unmeasured confounding in observational studies \citep{angrist1996identification, rosenbaum2002observational, kang2016full, rosenbaum2020design}. Consider a finite population of $N$ units. For each unit $i$, let $\mathbf{x}_{i}$ denote observed covariates, $Z_i\in\{0,1\}$ a binary IV, $D_i\in\{0,1\}$ the observed treatment, and $Y_i$ the observed outcome. Under the potential outcomes framework and the exclusion restriction, let $D_i(1)$ and $D_i(0)$ denote the potential treatment variables under $Z_i=1$ and $Z_i=0$, and let $Y_i(1)$ and $Y_i(0)$ denote the potential outcomes under $Z_i=1$ and $Z_i=0$. Under IV relevance and monotonicity, i.e., $N^{-1}\sum_{i=1}^{N}\{D_i(1)-D_i(0)\}\neq 0$ and $D_i(1)\geq D_i(0)$ for all $i$, the sample complier average treatment effect is $\theta_{\text{IV}}=\sum_{i=1}^{N}\{Y_i(1)-Y_i(0)\}/\sum_{i=1}^{N}\{D_i(1)-D_i(0)\}$ \citep{angrist1996identification}. Let $\mathcal{F}_{N}=\big\{(D_i(1),D_i(0),Y_i(1),Y_i(0)):i=1,\dots,N\big\}$. Under IV validity and independent IV assignment across units, $P(Z_i=1\mid \mathbf{x}_i,\mathcal{F}_{N})=P(Z_i=1\mid \mathbf{x}_i)=p_i$. For any candidate value $\theta_0\in\mathbbm{R}$, define $Y_i^{\theta_0}=Y_i-\theta_0D_i$ and $Y_i^{\theta_0}(z)=Y_i(z)-\theta_0D_i(z)$ for $z=0,1$. Then testing $H_{\theta_0}:\theta_{\text{IV}}=\theta_0$ is equivalent to testing $N^{-1}\sum_{i=1}^{N}\{Y_i^{\theta_0}(1)-Y_i^{\theta_0}(0)\}=0$. This motivates the oracle IV statistic $T_{\text{IV}}^{\theta_0}=N^{-1}\sum_{i=1}^{N}T_{\text{IV},i}^{\theta_0}=N^{-1}\sum_{i=1}^{N}\big\{\frac{Z_i}{p_{i}}Y_i^{\theta_0}-\frac{1-Z_i}{1-p_{i}}Y_i^{\theta_0}\big\}$, with variance estimator $\widehat{\mathcal{V}}_{\text{IV}}^{\theta_0}=\frac{1}{N(N-1)}\sum_{i=1}^{N}\big(T_{\text{IV},i}^{\theta_0}-T_{\text{IV}}^{\theta_0}\big)^2$. Following arguments similar to those in \citet{kang2016full}, under $H_{\theta_0}$, $T_{\text{IV}}^{\theta_0}$ is unbiased for zero and $\widehat{\mathcal{V}}_{\text{IV}}^{\theta_0}$ is conservative for its design-based variance. Therefore, an oracle two-sided $p$-value is $p_{\text{IV}}^{\text{oracle}}(\theta_0)=2\big\{1-\Phi\big(|T_{\text{IV}}^{\theta_0}|/\sqrt{\widehat{\mathcal{V}}_{\text{IV}}^{\theta_0}}\big)\big\}$, and the oracle confidence set is $\mathcal{C}_{1-\alpha}^{\text{oracle}}=\Lambda_{\text{IV}}(\mathcal{P},\mathbf{Z},\mathbf{D},\mathbf{Y})=\big\{\theta_0\in\mathbbm{R}:p_{\text{IV}}^{\text{oracle}}(\theta_0)\geq\alpha\big\}$, obtained by inverting the oracle tests over candidate values of $\theta_0$. When the IV assignment probabilities $\mathcal{P}=(p_1,\dots,p_N)$ are unknown, as in observational IV studies, propensity score propagation can be applied by replacing $\mathcal{P}$ with each regenerated probability vector $\widehat{\mathcal{P}}^{(m)}$ and computing the corresponding regenerated $p$-value $p_{\text{IV}}^{(m)}(\theta_0)$. Following the same principle as the union step in propensity score propagation, we define the propagation-based $p$-value as $p_{\text{IV}}(\theta_0)=\max_{m\in[M]}p_{\text{IV}}^{(m)}(\theta_0)$. The corresponding propagation-based confidence set is obtained by test inversion:
$\mathcal{C}_{1-\alpha}=\{\theta_0\in\mathbbm{R}:p_{\text{IV}}(\theta_0)\geq\alpha\}$.

\subsection{Extension to Design-Based Inference with Covariate Adjustment}\label{subsec: covariate adjustment}

Covariates can often be used to improve the efficiency of design-based inference. In randomized experiments, covariate adjustment uses pre-treatment covariates to reduce residual variation while preserving design-based validity; see, for example, \citet{rosenbaum2002covariance}, \citet{lin2013agnostic}, \citet{fogarty2018mitigating}, \citet{abadie2020sampling}, \citet{li2020rerandomization}, \citet{liu2020regression}, \citet{zhao2021covariate}, \citet{ding2024first}, and \citet{heng2025design}. Such procedures can be accommodated in our framework by allowing the oracle confidence-set mapping $\Lambda$ to depend on the observed covariates $\mathbf X$. Since propensity score propagation is not tied to any particular form of $\Lambda$, any covariate-adjusted design-based confidence set satisfying the relevant validity and regularity conditions, such as those discussed in Section~\ref{subsec: common regularity conditions}, can be used within the propensity score propagation procedure.

For example, consider inference for the SATE $\theta_{\tau}$ in the Bernoulli randomized experiments reviewed in Section~\ref{subsec: review known}. The usual design-based variance estimator $\widehat{\mathcal V}_{\tau,\text{oracle}}$ can be conservative when the individual treatment effects $Y_i(1)-Y_i(0)$ are heterogeneous \citep{fogarty2018mitigating, ding2024first}. Covariate adjustment seeks to reduce this conservativeness by using observed covariates $\mathbf{X}$ to explain part of the variation in the unit-level treatment effect contributions, so that the variance estimate is constructed from the remaining unexplained variation. Specifically, following \citet{fogarty2018mitigating}, let $Q$ be a fixed, full-column-rank $N\times L$ matrix, with $N>L$, whose columns contain covariate information, including an intercept term. For example, if the covariate matrix $\mathbf X$ is taken to include an intercept column, one may set $Q=\mathbf{X}$. More generally, $Q$ may also include nonlinear functions of the covariates, such as interaction or quadratic terms. Let $H_Q=Q(Q^TQ)^{-1}Q^T$ be the corresponding projection matrix, with $h_{Qii}$ denoting its $i$th diagonal element. Given the oracle unit-level treatment effect contribution $\widehat{\theta}_{\tau,\text{oracle},i}$, define $\widehat{\theta}_{\tau, *,i}=\widehat{\theta}_{\tau,\text{oracle},i}/\sqrt{1-h_{Qii}}$ and $\widehat{\boldsymbol{\theta}}_{\tau, *}=(\widehat{\theta}_{\tau, *,1},\ldots,\widehat{\theta}_{\tau, *, N})$. Under the known or oracle propensity scores $\mathcal P$, the covariate-adjusted design-based variance estimator is 
\begin{equation*}
    \widehat{\mathcal V}_{\tau,\text{oracle}}(Q)=\frac{1}{N^{2}}\widehat{\boldsymbol{\theta}}_{\tau, *}(\mathbf I_{N\times N}-H_Q)\widehat{\boldsymbol{\theta}}_{\tau, *}^T.
\end{equation*}
Intuitively, $H_Q$ captures the part of the adjusted unit-level contributions that is linearly predictable from the covariates in $Q$, while $\mathbf I_{N\times N}-H_Q$ retains the residual variation. The estimator therefore removes covariate-explained treatment effect heterogeneity from the conservative variance bound. Importantly, this projection is only a working adjustment device rather than a modeling assumption: it does not require the true relationship between the covariates and treatment effects to be linear. Consequently, when $Q$ captures meaningful treatment effect heterogeneity, the adjustment can reduce excess conservativeness and yield shorter confidence intervals without compromising design-based validity, even if the true heterogeneity is not linear in the covariates \citep{fogarty2018mitigating,ding2024first}. The resulting oracle covariate-adjusted confidence interval is
\begin{align*}
    \mathcal C_{1-\alpha}^{\text{oracle}}(Q)&=\Lambda_{\tau}(\mathcal P,\mathbf Z,\mathbf Y,\mathbf X)\\
    &=\left[\widehat{\theta}_{\tau,\text{oracle}}-z_{1-\alpha/2}
    \cdot \{\widehat{\mathcal V}_{\tau,\text{oracle}}(Q)\}^{\frac{1}{2}},\,\widehat{\theta}_{\tau,\text{oracle}}+z_{1-\alpha/2}\cdot \{\widehat{\mathcal V}_{\tau,\text{oracle}}(Q)\}^{\frac{1}{2}}\right].
\end{align*}

Propensity score propagation can directly use this oracle mapping $\Lambda_{\tau}(\mathcal P,\mathbf Z,\mathbf Y,\mathbf X)$ to improve the efficiency of design-based causal inference in observational studies: for each regenerated propensity score vector $\widehat{\mathcal{P}}^{(m)}$, construct the corresponding regenerated covariate-adjusted confidence interval $\mathcal{C}_{1-\alpha}^{(m)}(Q)=\Lambda_{\tau}(\widehat{\mathcal{P}}^{(m)}, \mathbf{Z}, \mathbf{Y}, \mathbf{X})$, and then the propagation-based covariate-adjusted confidence set is $\mathcal{C}_{1-\alpha}(Q)=\bigcup_{m=1}^{M}\mathcal{C}_{1-\alpha}^{(m)}(Q)$. Therefore, existing oracle validity results for covariate-adjusted design-based inference, such as those in \citet{fogarty2018mitigating} and \citet{zhu2025randomization}, can be combined with the propagation-based arguments in Section~\ref{sec: theory} to account for propensity score estimation uncertainty while retaining the efficiency gains from covariate adjustment.

\subsection{Extension to Sensitivity Analysis for Hidden Bias} \label{subsec: extensions to sensitivity analysis}

In many settings, hidden bias due to unobserved covariates (unmeasured confounders) may distort statistical inference. Sensitivity analysis is therefore often used to assess the robustness of inference results to possible unmeasured confounding \citep{rosenbaum2002observational, imbens2015causal, zhao2019sens_boot}. Researchers typically report a \textit{sensitivity set}, defined as a confidence set that remains valid uniformly over all confounding mechanisms allowed by a prespecified sensitivity model and a prespecified magnitude of hidden bias. Our framework naturally accommodates such sensitivity analyses. For example, consider the widely used marginal sensitivity model for causal inference and missing data problems \citep{tan2006distributional, zhao2019sens_boot}, which assumes that, for some $\Gamma \geq 1$,
\begin{equation}\label{eqn: sensitivity model}
\Gamma^{-1}\leq \frac{e^{*}(\mathbf{x},\mathbf{u})/\{1-e^{*}(\mathbf{x},\mathbf{u})\}}{e(\mathbf{x})/\{1-e(\mathbf{x})\}}\leq\Gamma
\quad \text{for all }(\mathbf{x},\mathbf{u})
\end{equation}
where $\mathbf{x}$ and $\mathbf{u}$ denote the observed and unobserved covariates, respectively, $e(\mathbf{x})=P(Z=1\mid\mathbf{x})$ is the marginal probability of treatment assignment (or non-missingness) given the observed covariates, henceforth referred to as the \textit{marginal propensity score}, and $e^{*}(\mathbf{x},\mathbf{u})=P(Z=1\mid\mathbf{x},\mathbf{u})$ is the corresponding probability given both observed and unobserved covariates, henceforth referred to as the \textit{true propensity score}. The prespecified sensitivity parameter $\Gamma$ controls the magnitude of hidden bias: $\Gamma=1$ corresponds to no unmeasured confounding, whereas larger $\Gamma$ values allow greater departures from this assumption.

The key to constructing a sensitivity set is that, when the marginal propensity scores $e(\mathbf{x}_{i})$ are known, the sensitivity model~\eqref{eqn: sensitivity model} determines the admissible range of each true propensity score $e^{*}(\mathbf{x}_{i},\mathbf{u}_{i})$. Specifically, model~\eqref{eqn: sensitivity model} implies the following sensitivity bounds for each unit $i$:
\begin{equation}\label{eqn: sensitivity bounds}
\frac{e(\mathbf{x}_i)}
{e(\mathbf{x}_i)+\Gamma\{1-e(\mathbf{x}_i)\}}
\leq
e^{*}(\mathbf{x}_i,\mathbf{u}_i)
\leq
\frac{\Gamma e(\mathbf{x}_i)}
{1-e(\mathbf{x}_i)+\Gamma e(\mathbf{x}_i)}.
\end{equation}
Therefore, the oracle marginal propensity score vector $\mathcal{E}=(e(\mathbf{x}_1),\ldots,e(\mathbf{x}_N))$ determines a collection of true propensity score vectors $\mathcal{P}=(e^{*}(\mathbf{x}_{1},\mathbf{u}_{1}),\ldots,e^{*}(\mathbf{x}_{N},\mathbf{u}_{N}))$ compatible with the sensitivity model~\eqref{eqn: sensitivity model}, or equivalently, the sensitivity bounds~\eqref{eqn: sensitivity bounds}. For many design-based problems, the existing sensitivity analysis literature provides oracle sensitivity sets given $\mathcal{E}$, defined as confidence sets for the target estimand that are uniformly valid over all true propensity score vectors $\mathcal{P}$ satisfying the sensitivity bounds~\eqref{eqn: sensitivity bounds}. We denote such a construction by $\mathcal{S}^{\text{oracle}}_{1-\alpha,\Gamma}=\psi(\mathcal{E},\mathbf{Z},\mathbf{Y},\Gamma)$; see Appendix~E for detailed examples of such constructions for representative problems.

In practice, however, the marginal propensity score vector $\mathcal{E}$ is usually unknown and must be estimated from the observed data. Researchers can apply the propensity score propagation framework to an established oracle sensitivity set mapping $\psi(\mathcal{E},\mathbf{Z},\mathbf{Y},\Gamma)$ to conduct a valid sensitivity analysis. Specifically, in regeneration run $m$, we replace $\mathcal{E}$ with its regenerated estimate $\widehat{\mathcal{E}}^{(m)}$ and compute $\mathcal{S}^{(m)}_{1-\alpha,\Gamma}=\psi(\widehat{\mathcal{E}}^{(m)},\mathbf{Z},\mathbf{Y},\Gamma)$. The final propagation-based sensitivity set is $\mathcal{S}_{1-\alpha,\Gamma}=\bigcup_{m=1}^{M}\mathcal{S}^{(m)}_{1-\alpha,\Gamma}$.

\section{Simulation Studies}\label{sec: simulation}

Using design-based causal inference for the SATE $\theta_{\tau}$ in observational studies as an example (i.e., the example application in Section~\ref{subsec: example application in observational studies}), we conduct simulation studies to compare the performance of our proposed framework with the existing nonparametric approaches for design-based inference with unknown propensity scores: the plug-in approach and a matching-based approach built on the widely used optimal full matching design (\citealp{rosenbaum1991characterization, hansen2004full, hansen2006optimal}; see Remark~S.3 in Appendix B.3 and Remark S.10 in Appendix D.3). We consider a finite-population dataset with $N=1000$ units, of which the covariates $\mathbf{x}_{i}=(x_{i1},\dots, x_{i5})^{T}$ are generated as follows: $(x_{i1}, x_{i2}, x_{i3})^{T} \overset{\text{iid}}{\sim} \mathcal{N}((0,0,0),\mathbf{I}_{3 \times 3})$, $x_{i4}\overset{\text{iid}}{\sim} \text{Laplace}(0, \sqrt{2}/2)$, $x_{i5}\overset{\text{iid}}{\sim} \text{Laplace}(0, \sqrt{2}/2)$, and the potential outcomes $(Y_{i}(0), Y_{i}(1))$ are generated from the following process: we first generate $Y_{i}(0)=0.15 x_{i1}^3 + 0.15 |x_{i2}| + 0.1 x_{i3}^3 + 0.3 |x_{i4}| + 0.2 x_{i5} + 0.1\epsilon_{i}^{y}$, where $\epsilon_{i}^{y} \overset{\text{iid}}{\sim} N(0,1)$, and $Y_{i}(1)$ is then generated based on the following two treatment effect mechanisms: $Y_{i}(1)=Y_{i}(0)+1 + 0.3 \text{sin}(x_{i2}) + 0.2 x_{i4} + 0.1 x_{i5}$ (Effect Setting 1) and $Y_{i}(1)=Y_{i}(0)+1 + 0.3|x_{i1}| + 0.1\text{tanh}(x_{i5})$ (Effect Setting 2). This data-generating process is used solely to automatically and transparently construct a fixed finite population of covariates $\mathbf{X}=(\mathbf{x}_{1},\dots, \mathbf{x}_{N})$ and potential outcomes $(\mathbf{Y}(0), \mathbf{Y}(1))=\{(Y_{i}(0), Y_{i}(1)): i=1,\dots, N\}$. Importantly, our framework is design-based and does not rely on any distributional assumptions on $(\mathbf{X}, \mathbf{Y}(0), \mathbf{Y}(1))$; once generated, these values are treated as fixed characteristics of the study population. Given this fixed finite population, we generate simulated treatment assignment vectors $\mathbf{Z}=(Z_{1},\dots, Z_{N})$. In each run, the treatment indicator of unit $i$ is drawn according to one of the following two propensity score settings: (i) Setting 1 (Nonlinear Selection Model): $Z_{i}=\mathbbm{1}\{\phi(\mathbf{x}_{i})-0.5>\epsilon_{i}^{z}\}$, where $\phi(\mathbf{x}_{i})=0.1 x_{i1}^3 + 0.3x_{i2} + 0.2 \text{log}(x_{i3}^2) + 0.1 x_{i4} + 0.2 x_{i5} + 0.1 |x_{i1} x_{i2}| + 0.3 (x_{i2} x_{i4})^2$ and $\epsilon_{i}^{z} \overset{\text{iid}}{\sim} N(0,1)$; and (ii) Setting 2 (Nonlinear Logistic Model): $\text{logit} \ P(Z_{i}=1\mid \mathbf{x}_{i})=0.1 x_{i1}^3 + 0.3x_{i2} + 0.2 \text{log}(x_{i3}^2) + 0.1 x_{i4} + 0.2 x_{i5} + 0.2 |x_{i1} x_{i2}| + 0.4 (x_{i3} x_{i4})^2 + 0.1 (x_{i2} x_{i4})^2 - 1$. For each fixed finite population of $(\mathbf{X}, \mathbf{Y}(0), \mathbf{Y}(1))$, the coverage rate is defined over the distribution of treatment assignments. All reported simulation results are then averaged over 1000 realizations of the treatment assignment.

For a fair comparison across methods, all estimated propensity scores used in the IPW estimator and its design-based variance estimator, as described in Sections~\ref{subsec: review known} and \ref{subsec: Nonparametric Propensity Score Propagation}, are obtained using XGBoost \citep{chen2016xgboost}, a widely used gradient boosting algorithm, under both the plug-in and propensity score propagation approaches. For the matching-based approach, we use the commonly adopted post-matching Neyman estimator (see Remarks S.8 in Appendix B.3 for details), together with the corresponding design-based variance estimator \citep{imbens2015causal, fogarty2018mitigating}; this represents the prevailing strategy for conducting design-based causal inference for $\theta_{\tau}$ after matching. When implementing the propensity score propagation framework, we use the nonparametric version described in Algorithm~\ref{alg: general}, setting the number of regeneration runs $M=100$. Table~\ref{tab: simulation} reports, for each approach, the empirical coverage rate of the resulting design-based confidence interval for the SATE $\theta_{\tau}$ (Coverage), the mean finite-sample bias of the target estimator (Bias), the mean confidence interval length (Length), and the ratio of this length to that of the oracle confidence interval constructed with the oracle propensity scores ($\frac{\text{Length}}{\text{Length (Oracle)}}$). In addition, for plug-in and matching-based approaches, we also consider an oracle bias-aware (OBA) confidence interval. Specifically, for a given estimator $T$ (e.g., plug-in or matching estimators) for $\theta_{\tau}$, we assume that $(T-\theta_{\tau})/\text{SE}(T)\xrightarrow{d} N(b,1)$ as $N\rightarrow \infty$, where $\text{SE}(T)$ and $b$ are the standard error and asymptotic bias of $T$, respectively. Following \citet{armstrong2020bias}, we leverage the oracle knowledge of $|E(T)-\theta_{\tau}|$ and $\widehat{\text{SE}}(T)$ (i.e., the sample standard error of $T$ across the 1000 treatment assignment realizations) to construct an OBA confidence interval based on $T$ as $\mathcal{C}^{\text{OBA}}_{1-\alpha}=(T-\chi, T+\chi)$, with $\chi=\widehat{\text{SE}}(T)\cdot \sqrt{Q_{1-\alpha}\big(|E(T)-\theta_{\tau}|^{2}/ \widehat{\text{SE}}^{2}(T)\big)}$, where $Q_{1-\alpha}(\cdot)$ denotes the $100(1-\alpha)\%$-quantile of the chi-squared distribution with one degree of freedom. Then, for the OBA confidence intervals based on plug-in and matching estimators, we report the coverage rate (Coverage (OBA)), the mean confidence interval length (Length (OBA)), and the ratio of the mean confidence interval length of propensity score propagation to that of the OBA confidence interval ($\frac{\text{Length (Propagation)}}{\text{Length (OBA)}}$). 

\begin{table}[ht]
\scriptsize
\caption{For each setting and each approach, we report the coverage rate (Coverage), the mean bias of the target estimator (Bias), the average confidence interval length (Length), and the ratio of this length to that of the oracle confidence interval ($\frac{\text{Length}}{\text{Length (Oracle)}}$). For the oracle bias-aware (OBA) confidence interval for plug-in and matching estimators, we report the corresponding coverage rate (Coverage (OBA)), the average confidence interval length (Length (OBA)), and the ratio of the mean confidence interval length of propensity score propagation to that of the oracle bias-aware (OBA) confidence interval ($\frac{\text{Length (Propagation)}}{\text{Length (OBA)}}$). }
\resizebox{\textwidth}{!}{
\begin{tabular}{cccccccc}
\toprule

\multirow{2}{*}{Effect Setting 1}&\multicolumn{7}{c}{Propensity Score Setting 1}  \\ 
\cmidrule(rl){2-8} 
&  Coverage & Bias  & Length & $\frac{\text{Length}}{\text{Length (Oracle)}}$ & Coverage (OBA) & Length (OBA) & $\frac{\text{Length (Propagation)}}{\text{Length (OBA)}}$ \\
\midrule
Plug-in                    & 0.572 & 0.179  & 0.376 & 0.774 & 0.947 & 0.545 & 1.160\\
Matching      & 0.022 & 0.176  & 0.224 & 0.461 & 0.952 & 0.455 & 1.389 \\
Propagation (Our Proposal)  & 0.998 & - & 0.632 & 1.300 & - &   -   &    -  \\
Oracle                     & 0.941 & 0.027 & 0.486 & 1.000 & - & - & - \\
\midrule
\midrule

\multirow{2}{*}{Effect Setting 1} &\multicolumn{7}{c}{Propensity Score Setting 2}  \\ 
\cmidrule(rl){2-8}
&  Coverage & Bias  & Length & $\frac{\text{Length}}{\text{Length (Oracle)}}$ & Coverage (OBA) & Length (OBA) & $\frac{\text{Length (Propagation)}}{\text{Length (OBA)}}$ \\
\midrule
Plug-in                    & 0.568 & 0.186  & 0.381 & 0.862 & 0.942 & 0.559 & 1.129 \\
Matching     & 0.142 & 0.133  & 0.199 & 0.450 & 0.953 & 0.371 & 1.701 \\
Propagation (Our Proposal)  & 1.000 & - & 0.631 & 1.428 & - &    -   &   -   \\
Oracle                     & 0.941 & 0.016  & 0.442 & 1.000 & - & - & - \\
\midrule
\midrule

\multirow{2}{*}{Effect Setting 2}&\multicolumn{7}{c}{Propensity Score Setting 1}  \\ 
\cmidrule(rl){2-8} 
&  Coverage & Bias  & Length & $\frac{\text{Length}}{\text{Length (Oracle)}}$ & Coverage (OBA) & Length (OBA) & $\frac{\text{Length (Propagation)}}{\text{Length (OBA)}}$ \\
\midrule
Plug-in                    & 0.739 & 0.174  & 0.423 & 0.817 & 0.948 & 0.562 & 1.274 \\
Matching    & 0.100 & 0.178  & 0.276 & 0.533 & 0.951 & 0.460 & 1.557 \\
Propagation (Our Proposal)  & 1.000 & - & 0.716 & 1.382 & - & - & - \\
Oracle                     & 0.955 & 0.027 & 0.518 & 1.000 & - & - & - \\
\midrule
\midrule

\multirow{2}{*}{Effect Setting 2}&\multicolumn{7}{c}{Propensity Score Setting 2}  \\ 
\cmidrule(rl){2-8} 
&  Coverage & Bias  & Length & $\frac{\text{Length}}{\text{Length (Oracle)}}$ & Coverage (OBA) & Length (OBA) & $\frac{\text{Length (Propagation)}}{\text{Length (OBA)}}$ \\
\midrule
Plug-in                    & 0.670 & 0.192  & 0.427 & 0.903 & 0.943 & 0.599 & 1.194 \\
Matching      & 0.298 & 0.134  & 0.235 & 0.497 & 0.953 & 0.377 & 1.897 \\
Propagation (Our Proposal) & 1.000 & - & 0.715 & 1.512 & - & - & - \\
Oracle                     & 0.945 & 0.016 & 0.473 & 1.000 & - & - & - \\
\bottomrule
\end{tabular}
}
\label{tab: simulation}
\end{table}

Table~\ref{tab: simulation} reports the simulation results across the propensity score and treatment effect settings. Across the simulation scenarios, the coverage rate of the proposed propensity score propagation framework attains at least the nominal 95\% coverage rate, whereas all conventional approaches exhibit substantial under-coverage. The plug-in approach systematically falls well below the nominal 95\% coverage rate (achieving only 57\%--74\% coverage), due to treating estimated propensity scores as fixed and known, which induces finite-sample bias and leads to a substantial underestimation of uncertainty. The matching-based approaches also suffer from severe under-coverage (achieving only 2\%--30\% coverage), largely due to residual bias arising from inexact matching on covariates; see Remark~\ref{rem: matching} and Appendix B.1 for details. In contrast, our framework maintains valid coverage across all simulation scenarios, demonstrating that the regeneration-and-union procedure effectively propagates uncertainty from the propensity score estimation step and restores inference validity. Although the resulting confidence intervals obtained via the propensity score propagation framework are necessarily wider (since they account for both layers of uncertainty without requiring any distributional or modeling assumptions on outcome data), their lengths remain practical, with confidence interval length ratios typically between 1.3 and 1.6 relative to the oracle confidence interval $\mathcal{C}_{1-\alpha}^{\text{oracle}}$ (i.e., the unachievable benchmark confidence interval based on true propensity scores).

In addition to the oracle confidence interval $\mathcal{C}_{1-\alpha}^{\text{oracle}}$, the OBA confidence interval $\mathcal{C}^{\text{OBA}}_{1-\alpha}$ provides another useful benchmark for interpreting the length of the confidence intervals produced by propensity score propagation. By construction, $\mathcal{C}^{\text{OBA}}_{1-\alpha}$ leverages oracle knowledge of the finite-sample bias of the plug-in or matching estimators, together with infeasible information about their finite-sample dispersion, and thus represents an idealized and unattainable form of bias correction. Importantly, the confidence intervals obtained via propensity score propagation are practically sized compared to these OBA intervals, with the ratio $\frac{\text{Length (Propagation)}}{\text{Length (OBA)}}$ ranging from 1.1 to 1.9 across the simulation settings. This comparison highlights a key strength of the proposed propensity score propagation framework: without relying on oracle bias information or other infeasible quantities, it delivers valid design-based inference with confidence intervals whose lengths remain practically comparable to those obtained from idealized bias-aware procedures.

Appendix~D provides additional simulation results for a different sample size and for parametric propensity score settings; the main insights regarding propensity score propagation in Section~\ref{sec: simulation} remain unchanged.

\begin{remark}[Some Remarks on Matching]\label{rem: matching}
The matching-based approaches in our simulation studies and data application use optimal full matching with a propensity score caliper \citep{hansen2004full, hansen2006optimal, rosenbaum2020design}, a widely used matching design that retains all study units. This ensures that all approaches considered in this paper target the same finite-population estimand. In the simulation studies, the resulting matched datasets pass conventional balance diagnostics: the average post-matching absolute standardized mean difference is below 0.035 for each of the five covariates, well below the commonly used threshold of 0.10 for assessing adequate covariate balance after matching \citep{small2024protocols}; see Appendix~D.3 for details. Therefore, the under-coverage of the matching-based approaches in our simulations cannot be attributed to poor practical implementation of matching or to a clear failure of conventional post-matching balance diagnostics. Rather, it reflects limitations of design-based inference after inexact matching, consistent with recent work showing that such procedures can suffer from substantial bias and under-coverage even when matched datasets pass standard balance checks \citep{pimentel2024covariate, pimentel2024re, zhu2025randomization}. See Appendices~B.1 and D.3 for further discussion.
\end{remark}

\section{Data Application}\label{sec: data}

We illustrate the practical performance of propensity score propagation by reanalyzing the observational study of \citet{heller2010using}, which examines the causal effect of beginning college at a two-year community college, rather than a four-year college, on total educational attainment, measured by total years of education. Following \citet{heller2010using}, we restrict attention to the 1,819 students whose baseline test scores are at least 55, the median score among students who began at a four-year college. For these students, attending a two-year community college need not reflect a lack of access to a four-year college, making it meaningful to ask what their educational attainment would have been had they instead begun at a four-year college \citep{heller2010using}. Because the scientific question concerns this particular group of students, rather than a broader population for which the two-year versus four-year college contrast may be less interpretable, design-based inference targeting a finite-population causal estimand is especially natural.

In the original dataset, however, students who began at a two-year community college and those who began at a four-year college exhibit substantial imbalance in observed covariates. Design-based approaches are appealing in this setting not only because they yield more interpretable finite-population estimands, but also because they avoid additional distributional or modeling assumptions on the outcome data \citep{rosenbaum2020design, small2024protocols}. They also support outcome-blinded validity checks, such as covariate balance diagnostics formulated as falsification tests; see Figure~\ref{fig:balance_real}.

We consider three nonparametric design-based inference approaches for this observational study: the matching-based approach, the plug-in approach, and nonparametric propensity score propagation. Their implementation follows the same procedures as those described in the simulation studies in Section~\ref{sec: simulation}. Figure~\ref{fig:balance_real} presents the covariate balance diagnostics for these three approaches in the real-data analysis. Specifically, the confidence intervals reported in Figure~\ref{fig:balance_real} are obtained by replacing the outcome variable in the confidence interval formula with each of the 20 covariates considered in \citet{heller2010using}, including region, parents' education level, race, baseline test scores, and family income, among others; see Table~S.5 in Appendix D.3 for the complete list. Because these pre-treatment covariates cannot be causally affected by treatment assignment, the SATE on each of them is exactly zero. Therefore, each resulting confidence interval provides a rigorous falsification test for the corresponding design-based inference approach: if a confidence interval fails to cover zero, this indicates inadequate balance for that covariate and raises concerns about the validity of the resulting inference for the outcome variable. 

\begin{figure}[h]
    \caption{\small Covariate balance diagnostics formulated as falsification tests for the three approaches considered in the real-data analysis: the matching-based approach, the plug-in approach, and propensity score propagation with $M=100$ regeneration runs. Each covariate is normalized to $[0,1]$ to place all covariates on a common scale.}
    \hspace*{-1.0cm}
    \centering
    \includegraphics[width=1.15\linewidth]{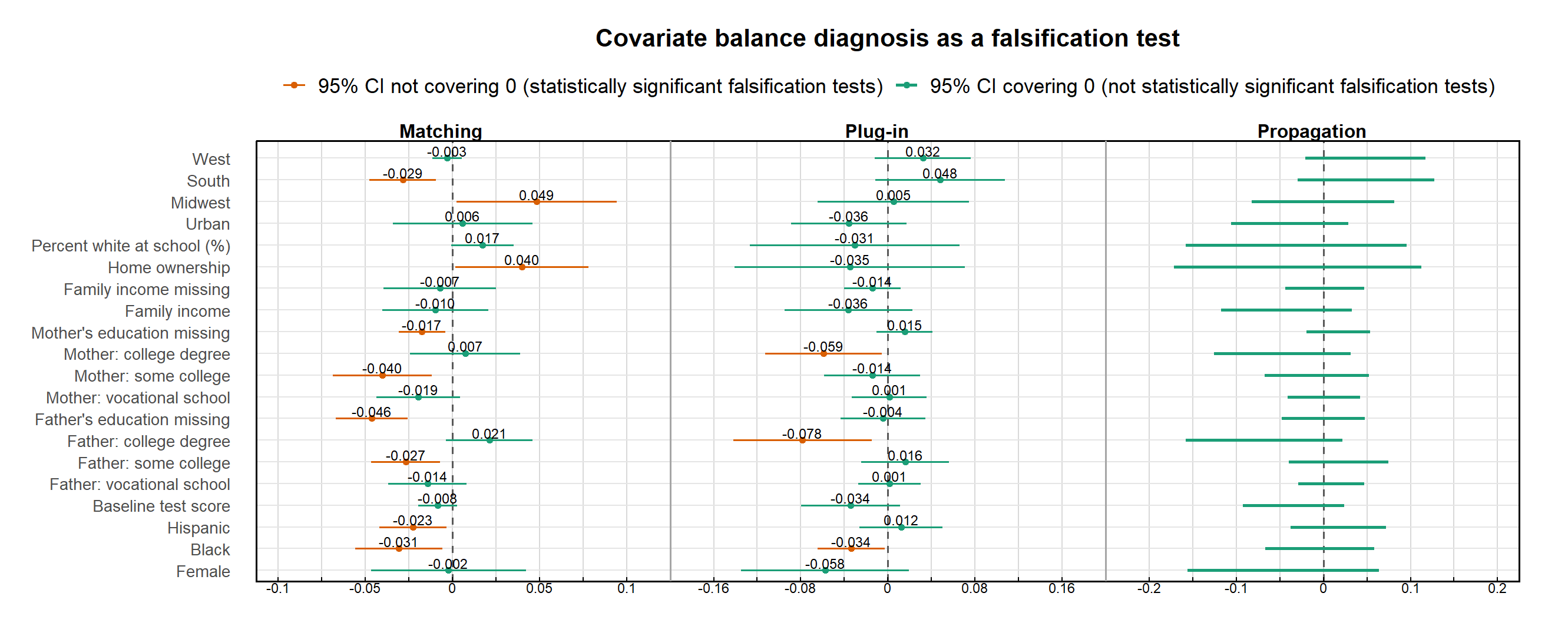}
    \label{fig:balance_real}
\end{figure}

As shown in Figure~\ref{fig:balance_real}, both the matching-based and plug-in approaches leave noticeable residual imbalance for several important covariates, which may introduce substantial bias into inference for the outcome. After observing evident covariate imbalance, \citet{heller2010using} considered improving post-matching balance by either discarding control units that are difficult to match well with treated units or matching on only a subset of covariates at a certain stage of the matching procedure. However, neither strategy is ideal: the former changes the study population and hence alters the causal estimand, thereby complicating the causal interpretation, while the latter may induce confounding bias by excluding relevant covariates during matching. In contrast, under propensity score propagation (e.g., with $M=100$ regeneration runs), all confidence intervals for covariate differences cover zero without discarding any units or covariates, although the intervals are somewhat wider because they appropriately reflect propensity score estimation uncertainty. Moreover, if a chosen value such as $M=100$ failed this outcome-blinded falsification test, one could increase $M$ and reassess covariate balance before conducting the outcome analysis. This provides outcome-blinded guidance for choosing $M$, which is an additional practical advantage of propensity score propagation.

\begin{table}[h]
    \centering
    \caption{The 95\% confidence intervals produced by propensity score propagation, and their corresponding lengths, for different numbers of regeneration runs $M$. In particular, when $M=1$, propensity score propagation reduces to the plug-in approach.}
    \small
    \begin{tabular}{lcc}
    \toprule
         & 95\% CI & Length\\
         \midrule
       Plug-in (i.e., Propagation with $M=1$) & $[-3.324, +0.018]$ & $3.342$ \\
       Propagation ($M=100$) & $[-3.807, +0.408]$ & $4.215$\\
       Propagation ($M=250$) & $[-3.860, +0.535]$ & $4.395$\\
       Propagation ($M=500$) & $[-3.954, +0.793]$ & $4.747$\\
       Propagation ($M=1000$) & $[-3.954, +0.876]$ & $4.830$\\
        \bottomrule
    \end{tabular}
    
    \label{tab: data results}
\end{table}

Table~\ref{tab: data results} reports the 95\% confidence intervals for the SATE produced by nonparametric propensity score propagation (Algorithm~\ref{alg: general}), together with their corresponding lengths, under different numbers of regeneration runs $M$. In particular, when $M=1$, propensity score propagation reduces to the plug-in approach based on a single cross-fitted propensity score estimate. Table~\ref{tab: data results} shows that the confidence intervals produced by propensity score propagation remain reasonably sized relative to the plug-in approach, even when $M$ is large, with interval lengths increasing only moderately from $3.342$ for the plug-in approach to $4.215$, $4.395$, $4.747$, and $4.830$ for $M=100$, $250$, $500$, and $1000$, respectively.

\section{Discussion}\label{sec: discussion}

Providing valid inference when nuisance functions are unknown and must be estimated from the observed data, possibly using flexible machine learning methods, is a central challenge in modern statistics. In super-population settings, semiparametric methods such as double/debiased machine learning address this challenge through orthogonal scores, thereby limiting the first-order impact of nuisance estimation error on inference for low-dimensional targets \citep{chernozhukov2018double, vansteelandt2022assumption}. In contrast, in design-based or finite-population inference, one does not impose distributional assumptions on the outcomes; validity must hold for fixed potential outcome and covariate realizations, with randomness arising only from the treatment assignment, sampling, or missingness mechanism. When the propensity scores are unknown, valid inference must therefore account for uncertainty from propensity score estimation without introducing a super-population outcome model. To the best of our knowledge, propensity score propagation provides the first general framework for doing so.

The present paper studies propagation for one particular nuisance object: the propensity score vector. When this vector is known, valid design-based inference is available through an oracle procedure. When it is unknown, the proposed framework regenerates data-compatible versions of the propensity score vector, applies the oracle procedure to each regenerated vector, and combines the resulting confidence sets. This perspective raises a broader question: for which other nuisance objects can uncertainty be regenerated separately from the outcome data and then propagated through an oracle procedure, without strengthening assumptions on the outcomes?
\begin{equation*} \large \begin{gathered} \substack{\text{Valid inference under}\\ \text{known nuisance objects}} \quad + \quad \substack{\text{Propagation}} \quad \Rightarrow \quad \substack{\text{Valid inference under}\\ \text{unknown nuisance objects}}  \end{gathered} \end{equation*}
Answering this question would extend propagation beyond propensity scores and could establish it as a general reduction principle: valid oracle inference under known nuisance objects can be converted into valid feasible inference when those nuisance objects are unknown and must be estimated from data, potentially using flexible nonparametric or machine-learning methods.

Another important direction is to extend the framework to more complex design-based estimands and assignment structures, including clustered designs, longitudinal treatment regimes, multi-valued or factorial treatments, and settings with network interference (e.g., \citealp{aronow2017estimating, bojinov2019time, athey2022design, li2023randomization, zhang2023randomization, zubizarreta2023handbook, dasgupta2026introduction}). These problems involve richer propensity score objects and may therefore motivate new implementations of the basic principle of propensity score propagation.

\section*{Acknowledgement}

The authors thank Rui Gao for helpful discussions and contributions during the early stages of the simulation studies. AI tools were used to polish the language, assist with the preparation of selected figures, and help produce basic functions in \textsf{R} code. The authors take full responsibility for the study design, methodology, analysis, interpretation, and presentation of all content.

\section*{Supplementary Materials}

The supplementary materials include technical proofs, additional theoretical and numerical results, more data details, and further discussions.

\bibliographystyle{apalike}
\bibliography{references}

\newpage

\begin{center}
    \Large \bf Supplementary Materials for ``Propensity Score Propagation: A General Framework for Design-Based Inference with Unknown Propensity Scores'' 
\end{center}

\renewcommand{\thetheorem}{S.\arabic{theorem}}

\renewcommand{\theassumption}{S.\arabic{assumption}}

\renewcommand{\theproposition}{S.\arabic{proposition}}

\renewcommand{\thelemma}{S.\arabic{lemma}}

\renewcommand{\thecorollary}{S.\arabic{corollary}}

\renewcommand{\thecondition}{S.\arabic{condition}}

\renewcommand{\theremark}{S.\arabic{remark}}

\renewcommand{\thetable}{S.\arabic{table}}

\renewcommand{\thefigure}{S.\arabic{figure}}

\setcounter{theorem}{0}
\setcounter{condition}{0}
\setcounter{assumption}{0}
\setcounter{proposition}{0}
\setcounter{corollary}{0}
\setcounter{lemma}{0}
\setcounter{remark}{0}
\setcounter{table}{0}
\setcounter{figure}{0}

\begingroup
\allowdisplaybreaks

\section*{Appendix A: Proofs and Additional Theoretical Results}

\subsection*{A.1: Examples of Established Validity Results for Oracle Confidence Sets}

In the main text, a key ingredient for establishing the validity (i.e., coverage guarantee) of propensity score propagation is the established validity of the Wald-type oracle confidence interval $\mathcal{C}^{\text{oracle}}_{1-\alpha}(\theta)=\big[\widehat{\theta}_{\text{oracle}}-z_{1-\alpha/2}\cdot \widehat{\mathcal{V}}_{\text{oracle}}^{1/2},\, \widehat{\theta}_{\text{oracle}}+z_{1-\alpha/2}\cdot \widehat{\mathcal{V}}_{\text{oracle}}^{1/2}\big]$ for the target estimand $\theta$. There is an extensive literature on constructing such oracle confidence intervals $\mathcal{C}^{\text{oracle}}_{1-\alpha}(\theta)$ for a wide range of target estimands in settings such as design-based causal inference (e.g., \citealp{neyman1923application, rosenbaum2002observational, rosenbaum2020design, lin2013agnostic, dasgupta2015causal, imbens2015causal, athey2017econometrics, li2017general, zhao2018randomization, fogarty2018mitigating, abadie2020sampling, li2023randomization, basse2024randomization, ding2024first, dasgupta2026introduction,  shi2026berry}; among many others) and design-based inference for survey sampling (e.g., \citealp{hajek1960limiting, hajek1964asymptotic, cassel1977foundations, sarndal1978design, little2004model, mukerjee2018using}, among many others). A standard route for establishing the validity of $\mathcal{C}^{\text{oracle}}_{1-\alpha}(\theta)$ is to show: (i) the asymptotic normality of $\widehat{\theta}_{\text{oracle}}$ (as stated in Condition~2a in the main text), which can often be established using finite-population central limit theorems (\citealp{li2017general, shi2024some}) under mild regularity conditions; and (ii) the validity, or asymptotic conservativeness, of the design-based variance estimator $\widehat{\mathcal{V}}_{\text{oracle}}$ (as stated in Condition~2b in the main text).

For example, following the arguments in \citet{zhu2025randomization}, which may be viewed as extending the arguments in \citet{fogarty2018mitigating} from perfect randomization to biased randomization, Condition 1 in the main text, together with Condition~\ref{cond: convergence of finite-population means} below, provide a set of sufficient conditions under which Condition~2a in the main text holds for the oracle estimator $\widehat{\theta}_{\tau, \text{oracle}}$ of the sample average treatment effect $\theta_{\tau}$, and Condition~2b in the main text holds for the corresponding oracle design-based variance estimator $\widehat{\mathcal{V}}_{\tau, \text{oracle}}$.

\begin{condition}[Convergence of Finite-Population Means] \label{cond: convergence of finite-population means}
Let $\mu_{i}=E(\widehat{\theta}_{\tau, \text{oracle}, i})$ and $\nu_{i}=\text{Var}(\widehat{\theta}_{\tau, \text{oracle}, i})$. Then, as $N\rightarrow \infty$, we have: (i) $N^{-1}\sum_{i=1}^{N}\mu_{i}$ and $N^{-1}\sum_{i=1}^{N}\mu_{i}^{2}$ converge to finite limits; and (ii) $N^{-1}\sum_{i=1}^{N}\nu_{i}$ converges to a finite positive limit.
\end{condition}

Therefore, Condition 1 in the main text and Condition~\ref{cond: convergence of finite-population means} are sufficient for establishing the validity of the oracle confidence interval $\big[\widehat{\theta}_{\tau, \text{oracle}}-z_{1-\alpha/2}\cdot \widehat{\mathcal{V}}_{\tau, \text{oracle}}^{1/2}, \, \widehat{\theta}_{\tau, \text{oracle}}+z_{1-\alpha/2}\cdot \widehat{\mathcal{V}}_{\tau, \text{oracle}}^{1/2} \big]$ for $\theta_{\tau}$. For additional examples of established results on the construction and validity of oracle confidence intervals or confidence sets, that is, intervals or sets based on oracle propensity scores, see the aforementioned works, as well as textbooks and survey articles on design-based inference under known or oracle propensity scores (e.g., \citealp{imbens2015causal, athey2017econometrics, li2017general, ding2024first, shi2024some}).

\subsection*{A.2: Some Example Results on the Lipschitz Continuity of Oracle Estimators}

A key ingredient in the proofs of Theorems~2--7 is the Lipschitz continuity of $\widehat{\theta}_{\text{oracle}}$ and $N\cdot \widehat{\mathcal{V}}_{\text{oracle}}$ with respect to $\mathcal{P}$ under the normalized $L_{2}$ norm, that is, $N^{-1/2}\|\cdot\|_{2}$, as stated in Condition~2c in the main text. As will be shown in this section, this condition holds naturally for a wide range of commonly used choices of $\widehat{\theta}_{\text{oracle}}$ and $N\cdot \widehat{\mathcal{V}}_{\text{oracle}}$, including those reviewed in Sections~2.1 and 5.1 of the main text. For proving Theorems~2--4, however, it suffices to impose the following weaker condition, which requires Lipschitz continuity of $\widehat{\theta}_{\text{oracle}}$ and $N\cdot \widehat{\mathcal{V}}_{\text{oracle}}$ with respect to $\mathcal{P}$ under the $L_{\infty}$ norm.

\begin{condition}[Lipschitz Continuity of $\widehat{\theta}_{\text{oracle}}$ and $N\cdot \widehat{\mathcal{V}}_{\text{oracle}}$ with Respect to $\mathcal{P}$ in $\|\cdot\|_{\infty}$] \label{cond: Lipschitz infinity} Let $\widehat{\theta}_{\text{oracle}}$ and $\widehat{\theta}_{\text{oracle}}^{\prime}$ denote the values of the oracle estimator when the oracle propensity score vector is set to $\mathcal{P}$ and $\mathcal{P}^{\prime}$, respectively, and define $\widehat{\mathcal{V}}_{\text{oracle}}$ and $\widehat{\mathcal{V}}_{\text{oracle}}^{\prime}$ analogously. Then, under bounded outcomes, namely Condition~1b, and for any fixed $\delta\in(0,0.5)$, there exist bounded constants $L_{\theta}(\delta),L_{v}(\delta)>0$ such that, for any $\mathcal{P},\mathcal{P}^{\prime}\in[\delta,1-\delta]^N$, we have $|\widehat{\theta}_{\text{oracle}}-\widehat{\theta}_{\text{oracle}}^{\prime}|\leq L_{\theta}(\delta)\|\mathcal{P}-\mathcal{P}^{\prime}\|_{\infty}$ and $N|\widehat{\mathcal{V}}_{\text{oracle}}-\widehat{\mathcal{V}}_{\text{oracle}}^{\prime}|\leq L_{v}(\delta)\|\mathcal{P}-\mathcal{P}^{\prime}\|_{\infty}$. 
\end{condition} 

We next show that the oracle IPW estimator $\widehat{\theta}_{\tau, \text{oracle}}$ and its corresponding oracle design-based variance estimator $\widehat{\mathcal{V}}_{\tau, \text{oracle}}$ satisfy Condition~\ref{cond: Lipschitz infinity}. Similar arguments apply to many other commonly used design-based estimators and their corresponding variance estimators, including those reviewed in Sections~2.1 and 5.1.

\begin{proposition}\label{prop: lipschitz in L infinity}
Condition~\ref{cond: Lipschitz infinity} holds for the oracle IPW estimator $\widehat{\theta}_{\tau, \text{oracle}}$ and the corresponding oracle design-based variance estimator $\widehat{\mathcal{V}}_{\tau, \text{oracle}}$.  
\end{proposition}

\vspace{-0.5cm}

\begin{proof}: Let $\widehat{\theta}_{\tau, \text{oracle}}$ (or $\widehat{\theta}_{\tau, \text{oracle}}^{\prime}$) and $\widehat{\mathcal{V}}_{\tau, \text{oracle}}$ (or $\widehat{\mathcal{V}}_{\tau, \text{oracle}}^{\prime}$) denote the values of $\widehat{\theta}_{\tau, \text{oracle}}$ and $\widehat{\mathcal{V}}_{\tau, \text{oracle}}$ when the oracle propensity score vector is set to $\mathcal{P}=(p_{1}, \dots, p_{N})$ (or $\mathcal{P}^{\prime}=(p_{1}^{\prime}, \dots, p_{N}^{\prime})$), respectively. Under bounded outcomes, there exists a finite constant $C_{y}>0$ such that $|Y_{i}|\leq C_{y}$ for all $i$. Then, for any $\mathcal{P}, \mathcal{P}^{\prime}\in [\delta, 1-\delta]^{N}$, we have
\begin{align*}
\left|\widehat{\theta}_{\tau, \text{oracle}, i}-\widehat{\theta}_{\tau, \text{oracle}, i}^{\prime}\right|&\leq \left|\frac{Z_{i}Y_{i}}{p_{i}}-\frac{Z_{i}Y_{i}}{p^{\prime}_{i}}\right|+\left|\frac{(1-Z_{i})Y_{i}}{1-p_{i}}-\frac{(1-Z_{i})Y_{i}}{1-p^{\prime}_{i}}\right|\\
&\leq \frac{Z_{i}|Y_{i}|}{p_{i}p^{\prime}_{i}} \max_{1\leq i\leq N}\left|p_i-p^{\prime}_{i}\right|+\frac{(1-Z_{i})|Y_{i}|}{(1-p_{i})(1-p^{\prime}_{i})} \max_{1\leq i\leq N}\left|p_i-p^{\prime}_{i}\right|\\
&\leq \frac{C_{y}}{\delta^{2}} \left\|\mathcal{P}-\mathcal{P}^{\prime}  \right\|_{\infty}.
\end{align*}
Therefore, we have
\begin{equation*}
    \left|\widehat{\theta}_{\tau, \text{oracle}}-\widehat{\theta}_{\tau, \text{oracle}}^{\prime}\right|\leq \frac{1}{N}\sum_{i=1}^{N}\left|\widehat{\theta}_{\tau, \text{oracle}, i}-\widehat{\theta}_{\tau, \text{oracle}, i}^{\prime}\right|\leq \frac{C_{y}}{\delta^{2}} \left\|\mathcal{P}-\mathcal{P}^{\prime}  \right\|_{\infty}.
\end{equation*}
Also, note that
\begin{align*}
   &\quad \left|\widehat{\mathcal{V}}_{\tau, \text{oracle}}-\widehat{\mathcal{V}}_{\tau, \text{oracle}}^{\prime}\right|\\
   &=\left | \frac{1}{N(N-1)}\sum_{i=1}^{N}\left (\widehat{\theta}_{\tau, \text{oracle}, i}-\widehat{\theta}_{\tau, \text{oracle}}\right)^{2}- \frac{1}{N(N-1)}\sum_{i=1}^{N}\left (\widehat{\theta}_{\tau, \text{oracle}, i}^{\prime}-\widehat{\theta}_{\tau, \text{oracle}}^{\prime}\right)^{2}  \right | \\
    &\leq \frac{1}{N(N-1)}\sum_{i=1}^{N}\left | \left (\widehat{\theta}_{\tau, \text{oracle}, i}-\widehat{\theta}_{\tau, \text{oracle}}\right)^{2}- \left (\widehat{\theta}_{\tau, \text{oracle}, i}^{\prime}-\widehat{\theta}_{\tau, \text{oracle}}^{\prime}\right)^{2}  \right |\\
    &=\frac{1}{N(N-1)}\sum_{i=1}^{N}\Big\{\left | \left (\widehat{\theta}_{\tau, \text{oracle}, i}-\widehat{\theta}_{\tau, \text{oracle}}\right) -\left (\widehat{\theta}_{\tau, \text{oracle}, i}^{\prime}-\widehat{\theta}_{\tau, \text{oracle}}^{\prime}\right) \right |\\
    &\quad \quad \quad  \quad \quad \quad \quad \quad \quad \times \left | \left (\widehat{\theta}_{\tau, \text{oracle}, i}-\widehat{\theta}_{\tau, \text{oracle}}\right) +\left (\widehat{\theta}_{\tau, \text{oracle}, i}^{\prime}-\widehat{\theta}_{\tau, \text{oracle}}^{\prime}\right) \right | \Big\},
\end{align*}
where we have
\begin{align*}
    \left | \left (\widehat{\theta}_{\tau, \text{oracle}, i}-\widehat{\theta}_{\tau, \text{oracle}}\right) -\left (\widehat{\theta}_{\tau, \text{oracle}, i}^{\prime}-\widehat{\theta}_{\tau, \text{oracle}}^{\prime}\right) \right |
   &\leq \left | \widehat{\theta}_{\tau, \text{oracle}, i}-\widehat{\theta}_{\tau, \text{oracle}, i}^{\prime}\right| + \left |\widehat{\theta}_{\tau, \text{oracle}}-\widehat{\theta}_{\tau, \text{oracle}}^{\prime} \right |\\
   &\leq \frac{2C_{y}}{\delta^{2}} \left\|\mathcal{P}-\mathcal{P}^{\prime}  \right\|_{\infty},
\end{align*}
and 
\begin{align*}
     \left | \left (\widehat{\theta}_{\tau, \text{oracle}, i}-\widehat{\theta}_{\tau, \text{oracle}}\right) +\left (\widehat{\theta}_{\tau, \text{oracle}, i}^{\prime}-\widehat{\theta}_{\tau, \text{oracle}}^{\prime}\right) \right |&\leq \left | \widehat{\theta}_{\tau, \text{oracle}, i} \right|+\left|\widehat{\theta}_{\tau, \text{oracle}}\right| +\left |\widehat{\theta}_{\tau, \text{oracle}, i}^{\prime}\right|+\left|\widehat{\theta}_{\tau, \text{oracle}}^{\prime} \right | \\
    &\leq \frac{4C_{y}}{\delta}.
\end{align*}
Therefore, when $N\geq 2$, we have 
\begin{equation*}
   \left|\widehat{\mathcal{V}}_{\tau, \text{oracle}}-\widehat{\mathcal{V}}_{\tau, \text{oracle}}^{\prime}\right|\leq \frac{8C_{y}^{2}}{\delta^{3}}\cdot \frac{1}{N-1} \left\|\mathcal{P}-\mathcal{P}^{\prime}  \right\|_{\infty}\leq \frac{16C_{y}^{2}}{\delta^{3}}\cdot \frac{1}{N} \left\|\mathcal{P}-\mathcal{P}^{\prime}  \right\|_{\infty}.
\end{equation*}
That is, when $N\geq 2$, we have
\begin{equation*}
   N\left|\widehat{\mathcal{V}}_{\tau, \text{oracle}}-\widehat{\mathcal{V}}_{\tau, \text{oracle}}^{\prime}\right|\leq \frac{16C_{y}^{2}}{\delta^{3}}\cdot \left\|\mathcal{P}-\mathcal{P}^{\prime}  \right\|_{\infty}.
\end{equation*}
Therefore, the desired conclusion follows from setting $L_{\theta}(\delta)=\frac{C_{y}}{\delta^{2}}$ and $L_{v}(\delta)=\frac{16C_{y}^{2}}{\delta^{3}}$.
\end{proof}

A key condition used in the proofs of Theorems~5--7 is the Lipschitz continuity of $\widehat{\theta}_{\text{oracle}}$ and $N\cdot\widehat{\mathcal{V}}_{\text{oracle}}$ with respect to $\mathcal{P}$ under the normalized $\|\cdot\|_{2}$ norm, as stated in Condition~2c in the main text and formalized in Condition~\ref{cond: Lipschitz in L2 norm}. Since $N^{-1/2}\|\mathcal{P}-\mathcal{P}^{\prime}\|_{2}\leq \|\mathcal{P}-\mathcal{P}^{\prime}\|_{\infty}$, Condition~\ref{cond: Lipschitz in L2 norm} (i.e., Condition~2c in the main text) is stronger than Condition~\ref{cond: Lipschitz infinity}. Nevertheless, it holds naturally for a wide range of commonly used choices of $\widehat{\theta}_{\text{oracle}}$ and $N\cdot\widehat{\mathcal{V}}_{\text{oracle}}$, including those reviewed in Sections~2.1 and 5.1. We next show that the oracle IPW estimator $\widehat{\theta}_{\tau, \text{oracle}}$ and its corresponding oracle design-based variance estimator $\widehat{\mathcal{V}}_{\tau, \text{oracle}}$ satisfy Condition~\ref{cond: Lipschitz in L2 norm}. Similar arguments apply to many other design-based estimators and their corresponding variance estimators, including those reviewed in Sections~2.1 and 5.1.

\begin{condition}[The Detailed Version of Condition~2c in the Main Text]\label{cond: Lipschitz in L2 norm} 
 Under the setup and notations in Condition~\ref{cond: Lipschitz infinity}, we further assume that for bounded outcomes (i.e., Condition~1b) and for any fixed $\delta\in (0,0.5)$, there exist bounded constants $L^{\prime}_{\theta}(\delta), L^{\prime}_{v}(\delta)>0$ such that $|\widehat{\theta}_{\text{oracle}}-\widehat{\theta}_{\text{oracle}}^{\prime}|\leq L^{\prime}_{\theta}(\delta) \cdot N^{-1/2} \|\mathcal{P}-\mathcal{P}^{\prime}\|_{2}$ and $N|\widehat{\mathcal{V}}_{\text{oracle}}-\widehat{\mathcal{V}}^{\prime}_{\text{oracle}}|\leq L^{\prime}_{v}(\delta) \cdot N^{-1/2} \| \mathcal{P}-\mathcal{P}^{\prime}\|_{2}$ for any $\mathcal{P}, \mathcal{P}^{\prime}\in [\delta, 1-\delta]^{N}$.
\end{condition}

\vspace{-0.5cm}

\begin{proposition}\label{prop: lipschitz in L2 norm}
Condition~\ref{cond: Lipschitz in L2 norm} holds for the oracle IPW estimator $\widehat{\theta}_{\tau, \text{oracle}}$ and corresponding oracle design-based variance estimator $\widehat{\mathcal{V}}_{\tau, \text{oracle}}$.  
\end{proposition}

\vspace{-0.5cm}
\begin{proof}: Under the notations used in the proof of Proposition~\ref{prop: lipschitz in L infinity}, we have
\begin{align*}
  &\quad \  \left|\widehat{\theta}_{\tau, \text{oracle}}-\widehat{\theta}_{\tau, \text{oracle}}^{\prime}\right|\\
     &\leq \frac{1}{N}\sum_{i=1}^{N}\left | \widehat{\theta}_{\tau, \text{oracle}, i}-\widehat{\theta}_{\tau, \text{oracle}, i}^{\prime}\right|\\
&\leq\frac{1}{N}\sum_{i=1}^{N}\left|\frac{Z_{i}Y_{i}}{p_{i}}-\frac{Z_{i}Y_{i}}{p^{\prime}_{i}}\right|+\frac{1}{N}\sum_{i=1}^{N}\left|\frac{(1-Z_{i})Y_{i}}{1-p_{i}}-\frac{(1-Z_{i})Y_{i}}{1-p^{\prime}_{i}}\right|\\
&= \frac{1}{N}\sum_{i=1}^{N}Z_{i}|Y_{i}|\left|\frac{p_{i}-p^{\prime}_{i}}{{p_{i}p^{\prime}_{i}}}\right|+\frac{1}{N}\sum_{i=1}^{N}(1-Z_{i})|Y_{i}| \left|\frac{p_{i}-p^{\prime}_{i}}{{\left(1-p_{i}\right)\left(1-p^{\prime}_{i}\right)}}\right|\\
&\leq \sqrt{\frac{1}{N}\sum_{i=1}^{N}Z^{2}_{i}Y^{2}_{i} }\sqrt{\frac{1}{N}\sum_{i=1}^{N}\left|\frac{p_{i}-p^{\prime}_{i}}{{p_{i}p^{\prime}_{i}}}\right|^{2} }\\
&\quad \quad +\sqrt{\frac{1}{N}\sum_{i=1}^{N}\left(1-Z_{i}\right)^{2}Y^{2}_{i}}\sqrt{\frac{1}{N}\sum_{i=1}^{N}\left|\frac{p_{i}-p^{\prime}_{i}}{{\left(1-p_{i}\right)\left(1-p^{\prime}_{i}\right)}}\right|^{2}} \\
&\leq \frac{2C_{y}}{\delta^{2}\sqrt{N}}\left\|\mathcal{P}- \mathcal{P}^{\prime} \right\|_{2}.
\end{align*}
Also, note that
\begin{align*}
   &\quad \left|\widehat{\mathcal{V}}_{\tau, \text{oracle}}-\widehat{\mathcal{V}}_{\tau, \text{oracle}}^{\prime}\right|\\
   &=\left | \frac{1}{N(N-1)}\sum_{i=1}^{N}\left (\widehat{\theta}_{\tau, \text{oracle}, i}-\widehat{\theta}_{\tau, \text{oracle}}\right)^{2}- \frac{1}{N(N-1)}\sum_{i=1}^{N}\left (\widehat{\theta}_{\tau, \text{oracle}, i}^{\prime}-\widehat{\theta}_{\tau, \text{oracle}}^{\prime}\right)^{2}  \right | \\
    &\leq \frac{1}{N(N-1)}\sum_{i=1}^{N}\left | \left (\widehat{\theta}_{\tau, \text{oracle}, i}-\widehat{\theta}_{\tau, \text{oracle}}\right)^{2}- \left (\widehat{\theta}_{\tau, \text{oracle}, i}^{\prime}-\widehat{\theta}_{\tau, \text{oracle}}^{\prime}\right)^{2}  \right |\\
    &=\frac{1}{N(N-1)}\sum_{i=1}^{N}\Big\{\left | \left (\widehat{\theta}_{\tau, \text{oracle}, i}-\widehat{\theta}_{\tau, \text{oracle}}\right) -\left (\widehat{\theta}_{\tau, \text{oracle}, i}^{\prime}-\widehat{\theta}_{\tau, \text{oracle}}^{\prime}\right) \right |\\
    &\quad \quad \quad  \quad \quad \quad \quad \quad \quad \times \left | \left (\widehat{\theta}_{\tau, \text{oracle}, i}-\widehat{\theta}_{\tau, \text{oracle}}\right) +\left (\widehat{\theta}_{\tau, \text{oracle}, i}^{\prime}-\widehat{\theta}_{\tau, \text{oracle}}^{\prime}\right) \right | \Big\}\\
       &\leq \frac{1}{N(N-1)}\sum_{i=1}^{N}\Big\{\left | \left (\widehat{\theta}_{\tau, \text{oracle}, i}-\widehat{\theta}_{\tau, \text{oracle}}\right) -\left (\widehat{\theta}_{\tau, \text{oracle}, i}^{\prime}-\widehat{\theta}_{\tau, \text{oracle}}^{\prime}\right) \right |\\
    &\quad \quad \quad  \quad \quad \quad \quad \quad \quad \times \left ( \left |\widehat{\theta}_{\tau, \text{oracle}, i}\right|+\left|\widehat{\theta}_{\tau, \text{oracle}}\right| +\left |\widehat{\theta}_{\tau, \text{oracle}, i}^{\prime}\right|+\left|\widehat{\theta}_{\tau, \text{oracle}}^{\prime}\right| \right ) \Big\}\\
      &\leq \frac{4C_{y}}{\delta N(N-1)}\sum_{i=1}^{N}\left | \left (\widehat{\theta}_{\tau, \text{oracle}, i}-\widehat{\theta}_{\tau, \text{oracle}}\right) -\left (\widehat{\theta}_{\tau, \text{oracle}, i}^{\prime}-\widehat{\theta}_{\tau, \text{oracle}}^{\prime}\right) \right |\\
    &\leq \frac{4C_{y}}{\delta(N-1)}\left\{ \left |\widehat{\theta}_{\tau, \text{oracle}}-\widehat{\theta}_{\tau, \text{oracle}}^{\prime} \right |+\frac{1}{N}\sum_{i=1}^{N}\left | \widehat{\theta}_{\tau, \text{oracle}, i}-\widehat{\theta}_{\tau, \text{oracle}, i}^{\prime}\right| \right\}\\
    &\leq \frac{16C^{2}_{y}}{\delta^{3}(N-1)}\frac{1}{\sqrt{N}}\left\|\mathcal{P}- \mathcal{P}^{\prime} \right\|_{2}.
\end{align*}
Therefore, when $N\geq 2$, we have 
\begin{equation*}
   \left|\widehat{\mathcal{V}}_{\tau, \text{oracle}}-\widehat{\mathcal{V}}_{\tau, \text{oracle}}^{\prime}\right|\leq \frac{16C^{2}_{y}}{\delta^{3}(N-1)}\frac{1}{\sqrt{N}}\left\|\mathcal{P}- \mathcal{P}^{\prime} \right\|_{2}\leq  \frac{32C^{2}_{y}}{\delta^{3}N}\frac{1}{\sqrt{N}}\left\|\mathcal{P}- \mathcal{P}^{\prime} \right\|_{2}.
\end{equation*}
That is, when $N\geq 2$, we have
\begin{equation*}
  N \left|\widehat{\mathcal{V}}_{\tau, \text{oracle}}-\widehat{\mathcal{V}}_{\tau, \text{oracle}}^{\prime}\right|\leq  \frac{32C^{2}_{y}}{\delta^{3}}\frac{1}{\sqrt{N}}\left\|\mathcal{P}- \mathcal{P}^{\prime} \right\|_{2}.
\end{equation*}
Therefore, the desired conclusion follows from setting $L^{\prime}_{\theta}(\delta)=\frac{2C_{y}}{\delta^{2}}$ and $L^{\prime}_{v}(\delta)=\frac{32C^{2}_{y}}{\delta^{3}}$.
\end{proof}

\subsection*{A.3: Proofs for Parametric Propensity Score Propagation}

To prove the theoretical results related to parametric propensity score propagation, we require the asymptotic normality of MLE for GLM (i.e., Condition 3b in the main text). Such asymptotic normality has been shown to hold under mild regularity conditions by the classic MLE theory for GLMs under fixed design \citep{haberman1977maximum, fahrmeir1985consistency}. For example, following \citet{fahrmeir1985consistency}, Condition 3 in the main text can be implied by Conditions~\ref{cond: Regularity of Covariates}--\ref{cond: Convergence of the Sample Version of the Fisher Information Matrix} listed below.

\begin{condition}[Regularity of Covariates]\label{cond: Regularity of Covariates} The support $B_{x}$ of covariates $\mathbf{x}_{i}$ is bounded (i.e., $\|\mathbf{x}_{i}\|_{\infty}\leq C_{x}$ for some constant $C_{x}>0$), connected, and convex. Also, the sample covariance matrix of the $d$-dimensional covariates $\mathbf{x}_{i}$ has full rank for any sufficiently large $N$.
\end{condition}

\begin{condition}[Smoothness of the Link Function]\label{cond: Smoothness of the Link Function} The link function $\Psi: \mathbbm{R} \rightarrow (0,1)$ is one-to-one and twice continuously differentiable (i.e., $\Psi \in C^{2}$), and $|\Psi^{\prime}| \leq C_{\Psi^{\prime}}$ for some positive constant $C_{\Psi^{\prime}}>0$. 
\end{condition}

\begin{condition}[Regularity of the Fisher Information Matrix]\label{cond: Regularity of the Fisher Information Matrix} The scaled Fisher information matrix $\mathcal{I}_{N}=\mathcal{I}_{N}(\boldsymbol{\beta})$ is positive definite for any $\boldsymbol{\beta}\in B_{\boldsymbol{\beta}}$ and sufficiently large $N$, where $B_{\boldsymbol{\beta}}\subset \mathbbm{R}^{d}$ is an open, convex admissible set containing the true parameters $\boldsymbol{\beta}$ as an interior point. Also, as $N\rightarrow \infty$, we have $N\cdot \lambda_{\min}(\mathcal{I}_{N})\rightarrow \infty$ and $N^{-1}\cdot \text{tr}(\mathbf{X}^{T}\mathcal{I}_{N}\mathbf{X})\rightarrow 0$, where $\lambda_{\min}(\cdot)$ and $\text{tr}(\cdot)$ represent the smallest eigenvalue and the trace of some matrix, respectively. In addition, $\mathcal{I}_{N}=\mathcal{I}_{N}(\boldsymbol{\beta})$ is locally stable around the true parameters: for every fixed \(\delta>0\),
\[
\max_{\boldsymbol{\beta}'\in \mathcal N_N(\delta)}
\left\|
\mathcal{I}_N^{-1/2} \mathcal{I}_N(\boldsymbol{\beta}') \mathcal{I}_N^{-1/2} -\mathbf{I}_{d \times d}
\right\|\to 0,
\]
where
\[
\mathcal N_N(\delta)
=
\left\{
\boldsymbol{\beta}'\in B_{\boldsymbol{\beta}}:
\left\| \sqrt{N}\cdot \mathcal{I}_N^{1/2}(\boldsymbol{\beta}'-\boldsymbol{\beta})\right\|\leq \delta
\right\}.
\]
This local stability condition is the analogue, under the present notation, of Condition (N) in Section 3 of \citet{fahrmeir1985consistency}.
    
\end{condition}

\begin{condition}[Convergence of the Fisher Information Matrix]\label{cond: Convergence of the Sample Version of the Fisher Information Matrix} Let $\mathcal{I}_{N}=\mathcal{I}_{N}(\boldsymbol{\beta})$ denote the (scaled) Fisher information matrix given the $N$ study units, and let $\widehat{\mathcal{I}}_{N}=\mathcal{I}_{N}(\widehat{\boldsymbol{\beta}})$ denote the sample version of the (scaled) Fisher information matrix based on $(\mathbf{Z}, \mathbf{X})$. Then, as $N\rightarrow \infty$, $\mathcal{I}_{N}$ converges to some positive definite matrix and $\widehat{\mathcal{I}}_{N}$ converges in probability to some positive definite $d\times d$ matrix. Consequently, the covariance estimator used in the parametric regeneration step satisfies: $\widehat\Omega_N=\widehat{\mathcal{I}}_{N}^{-1}\xrightarrow{p}\Omega$, where $\Omega$ is a positive definite matrix.

\end{condition}

We first state and prove the following key lemma (Lemma~\ref{lem: rate of beta parametric}), which establishes a design-based upper bound for $\min_{m\in [M]} \|\widehat{\boldsymbol{\beta}}^{(m)}-\boldsymbol{\beta}\|_{\infty}$ (for the unrestricted-union case) and $\min_{m\in \mathcal{M}(\alpha^{\prime})} \|\widehat{\boldsymbol{\beta}}^{(m)}-\boldsymbol{\beta}\|_{\infty}$ (for the restricted-union case), in terms of the sample size $N$, the number of regeneration runs $M$, and the covariate dimension $d$ (including the intercept term). The proof of Lemma~\ref{lem: rate of beta parametric} builds on an extension of a sampling argument in \citet{guo2023causal}--originally developed for instrumental variable regressions in super-population settings--to the setting of parametric propensity score models under fixed design, with classic fixed-design MLE theory for GLMs \citep{fahrmeir1985consistency} providing the necessary asymptotic control.

\begin{lemma}\label{lem: rate of beta parametric}
For an arbitrary but fixed $\alpha^{\prime}\in (0, \alpha)$, we define a positive constant 
\begin{equation*}
    \widetilde{C}_{\alpha^{\prime}}=2^{\frac{1-d}{d}} \left(2\pi \right)^{\frac{1}{2}}\left\{\lambda_{\max}\left(\Omega \right)+\frac{1}{2}\lambda_{\min}(\Omega) \right\}^{\frac{1}{2}} \exp\left \{\frac{3\lambda_{\max}(\Omega)}{2\lambda_{\min}(\Omega)}\cdot z^{2}_{1-\frac{\alpha^{\prime}}{2d}} \right\},
\end{equation*}
where $\lambda_{\max}\left(\Omega \right)$ and $\lambda_{\min}\left(\Omega \right)$ denote the largest and smallest eigenvalues of $\Omega$, respectively. Under the ignorability assumption (i.e., no unobserved covariates) and Condition~3, we have
\begin{equation*}
    \liminf_{N\rightarrow \infty }\lim_{M\rightarrow \infty} P\left(\min_{m\in [M]}\left\| \widehat{\boldsymbol{\beta}}^{(m)}-\boldsymbol{\beta}\right\|_{\infty}  \leq \frac{\widetilde{C}_{\alpha^{\prime}}}{\sqrt{N}} \cdot \left(\frac{\log N}{M}\right)^{\frac{1}{d}}  \right)\geq 1-\alpha^{\prime}
\end{equation*}
and
\begin{equation*}
    \liminf_{N\rightarrow \infty }\lim_{M\rightarrow \infty} P\left(\min_{m\in \mathcal{M}(\alpha^{\prime})}\left\| \widehat{\boldsymbol{\beta}}^{(m)}-\boldsymbol{\beta}\right\|_{\infty}  \leq \frac{\widetilde{C}_{\alpha^{\prime}}}{\sqrt{N}} \cdot \left(\frac{\log N}{M}\right)^{\frac{1}{d}}   \right)\geq 1-\alpha^{\prime}.
\end{equation*}
\end{lemma}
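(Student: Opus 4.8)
The plan is to prove the $\ell_\infty$ approximation bound by pairing a conditional Gaussian anti-concentration (small-ball) estimate with the elementary inequality $(1-q)^M\le e^{-qM}$ over the $M$ independent regeneration draws, while using the asymptotic normality in~\eqref{eqn: asymptotic normality} (itself a consequence of the fixed-design GLM theory) together with $\widehat{\Omega}_N\xrightarrow{p}\Omega\succ 0$ to control every data-dependent quantity. The starting point is the decomposition
\[
\sqrt{N}\bigl(\widehat{\boldsymbol\beta}^{(m)}-\boldsymbol\beta\bigr)=\underbrace{\sqrt{N}\bigl(\widehat{\boldsymbol\beta}^{(m)}-\widehat{\boldsymbol\beta}\bigr)}_{=:\,\mathbf V^{(m)}}+\underbrace{\sqrt{N}\bigl(\widehat{\boldsymbol\beta}-\boldsymbol\beta\bigr)}_{=:\,\mathbf a},
\]
where, conditionally on the observed data $\mathcal F=(\mathbf Z,\mathbf X,\mathbf Y)$, the vectors $\mathbf V^{(1)},\dots,\mathbf V^{(M)}$ are i.i.d.\ $N(\mathbf 0,\widehat{\Omega}_N)$ while $\mathbf a$ is a fixed vector.

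First I would isolate a ``good'' data event $\mathcal A_N$ on which the relevant quantities are well behaved: by~\eqref{eqn: asymptotic normality} and a union bound over the $p$ coordinates, $\max_{k\in[p]}|\sqrt{N}(\widehat\beta_k-\beta_k)|/\widehat\sigma_{kk}\le z_{1-\alpha'/(2p)}$ with probability tending to at least $1-\alpha'$, and since $\widehat{\Omega}_N\xrightarrow{p}\Omega$ the eigenvalue sandwich $\tfrac{1}{2}\lambda_{\min}(\Omega)\le\lambda_{\min}(\widehat{\Omega}_N)\le\lambda_{\max}(\widehat{\Omega}_N)\le\lambda_{\max}(\Omega)+\tfrac{1}{2}\lambda_{\min}(\Omega)$ holds with probability tending to one. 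Taking $\mathcal A_N$ to be their intersection gives $\liminf_{N\to\infty}P(\mathcal A_N)\ge 1-\alpha'$, and on $\mathcal A_N$ the shift $\mathbf a=\widehat{\Omega}_N^{1/2}\bigl(\sqrt{N}\widehat{\Omega}_N^{-1/2}(\widehat{\boldsymbol\beta}-\boldsymbol\beta)\bigr)$ has $\ell_2$-norm bounded by a fixed multiple of $z_{1-\alpha'/(2p)}$ depending only on $p$ and the eigenvalues of $\Omega$, and $\widehat\sigma_{kk}^2\ge\tfrac12\lambda_{\min}(\Omega)>0$ for every $k$.

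Next, on $\mathcal A_N$ I would lower bound the conditional single-draw probability $q_{N,M}:=P\bigl(\sqrt{N}\|\widehat{\boldsymbol\beta}^{(m)}-\boldsymbol\beta\|_\infty\le\epsilon_N\mid\mathcal F\bigr)$ with $\epsilon_N:=\widetilde C_{\alpha'}(\log N/M)^{1/p}$. Since $\{\|\mathbf V^{(m)}+\mathbf a\|_2\le\epsilon_N\}\subseteq\{\|\mathbf V^{(m)}+\mathbf a\|_\infty\le\epsilon_N\}$, it suffices to bound below the $N(\mathbf 0,\widehat{\Omega}_N)$-mass of the Euclidean ball of radius $\epsilon_N$ centered at $-\mathbf a$; bounding the Gaussian density from below by its value at the farthest point of that ball (controlled through $\lambda_{\min}(\widehat{\Omega}_N)$ and the bound on $\|\mathbf a\|_2$) and multiplying by the volume of the ball yields $q_{N,M}\ge c\,\epsilon_N^{p}$ for a constant $c>0$ depending only on $p$, $\lambda_{\min}(\Omega)$, $\lambda_{\max}(\Omega)$ and $z_{1-\alpha'/(2p)}$; tracking these constants is precisely what produces the displayed form of $\widetilde C_{\alpha'}$. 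By independence of the $M$ draws, $P(\text{no }m\text{ is }\epsilon_N\text{-close}\mid\mathcal F)=(1-q_{N,M})^M\le e^{-q_{N,M}M}$, and the key point is that $\epsilon_N^{p}M=\widetilde C_{\alpha'}^{p}\log N$ does not depend on $M$; hence $q_{N,M}M$ is bounded below by a positive constant times $\log N$, and letting $M\to\infty$ (so $\epsilon_N\to 0$ and the lower-order terms in the exponent vanish) gives $\lim_{M\to\infty}(1-q_{N,M})^M\le N^{-c_0}$ for a fixed $c_0>0$ on $\mathcal A_N$. Combining with $\liminf_N P(\mathcal A_N)\ge 1-\alpha'$ and letting $N\to\infty$ yields the first display. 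For the restricted-union display, I would observe that on $\mathcal A_N$ the event $\{\sqrt{N}\|\widehat{\boldsymbol\beta}^{(m)}-\boldsymbol\beta\|_\infty\le\epsilon_N\}$ already forces $m\in\mathcal M(\alpha')$, since $|\widehat\beta_k^{(m)}-\widehat\beta_k|/\sqrt{\widehat\sigma_{kk}^2/N}\le\epsilon_N/\sqrt{\widehat\sigma_{kk}^2}+z_{1-\alpha'/(2p)}\le 1.01\,z_{1-\alpha'/(2p)}$ for $N$ large (because $\epsilon_N\to 0$ and $\widehat\sigma_{kk}^2\ge\tfrac12\lambda_{\min}(\Omega)$ on $\mathcal A_N$); thus the random index set $\mathcal M(\alpha')$ contains every $\epsilon_N$-close index, so the bound transfers verbatim to $\min_{m\in\mathcal M(\alpha')}$.

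The main obstacle I anticipate is handling the order of limits in the third step correctly: because the target radius $\epsilon_N$ itself shrinks as $M\to\infty$, the ``no $m$ is close'' probability does not tend to zero for fixed $N$ but only to $N^{-c_0}$, so one must simultaneously verify that $\epsilon_N\to0$ and that the anti-concentration constant is large enough to keep $c_0$ strictly positive, and then apply $\liminf_{N\to\infty}$ to $1-N^{-c_0}$. The remaining ingredients---the union/eigenvalue argument defining $\mathcal A_N$, the density-times-volume small-ball estimate, and the inclusion $\{\epsilon_N\text{-close}\}\subseteq\mathcal M(\alpha')$---are routine once this balance among $\epsilon_N$, $M$ and $N$ is set up, and they follow the sampling-argument template of \citet{guo2023causal} adapted to the fixed-design GLM setting via~\eqref{eqn: asymptotic normality}.
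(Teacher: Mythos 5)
Your proposal is correct and follows essentially the same route as the paper's proof: the same decomposition into a conditionally Gaussian perturbation $\sqrt{N}(\widehat{\boldsymbol{\beta}}^{(m)}-\widehat{\boldsymbol{\beta}})\sim N(\mathbf{0},\widehat{\Omega}_{N})$ plus the fixed shift $\sqrt{N}(\widehat{\boldsymbol{\beta}}-\boldsymbol{\beta})$, the same good event built from the fixed-design asymptotic normality (union bound over the $p$ coordinates, giving probability at least $1-\alpha^{\prime}$) and consistency of $\widehat{\Omega}_{N}$, the same density-times-volume small-ball bound combined with $(1-q)^{M}\leq e^{-qM}$ and the calibration $\epsilon_{N}\asymp(\log N/M)^{1/p}$, and the same observation that every $\epsilon_{N}$-close index must lie in $\mathcal{M}(\alpha^{\prime})$ (the paper states this as its contrapositive). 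The only cosmetic difference is that your Euclidean-ball/farthest-point density bound would yield a somewhat larger constant than the displayed $\widetilde{C}_{\alpha^{\prime}}$, which the paper obtains via the $\ell_{\infty}$-cube volume $(2\epsilon)^{p}$ and a mean-value-theorem control of the density; since only the rate and the existence of some constant matter downstream, this does not affect the argument.
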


\begin{proof}[Proof of Lemma~\ref{lem: rate of beta parametric}:]

Recall that in the finite-population (design-based) inference framework, $\mathbf{X}$ are fixed covariates and $\mathbf{Z}$ are random treatment indicators. Therefore, throughout the proof (and the proofs of other results), all the arguments are conditional on fixed $\mathbf{X}$ (i.e., adopt the fixed design regime). For notational simplicity, we omit the notation of ``conditional on $\mathbf{X}$'' in most places. Still, readers should be aware that all the distributions, probabilities, and events are conditional on the given $\mathbf{X}$. 

Define the event $A_{1}=\{\mathbf{Z}: \|\widehat{\Omega}_{N}-\Omega\|_{2}<c_{1}\}$, where $\|\widehat{\Omega}_{N}-\Omega\|_{2}$ is the spectral norm of $\widehat{\Omega}_{N}-\Omega$ and $c_{1}\in (0, \frac{\lambda_{\min}(\Omega)}{2})$ is some small positive constant. Then, for $\mathbf{Z}\in A_{1}$, we have 
\begin{equation*}
  2\lambda_{\max}(\Omega)\cdot \mathbf{I}_{d \times d} \succ  \Omega + c_{1}\cdot \mathbf{I}_{d \times d}  \succ \widehat{\Omega}_{N} \succ \Omega - c_{1}\cdot \mathbf{I}_{d \times d}  \succ \frac{\lambda_{\min}(\Omega)}{2}\cdot \mathbf{I}_{d \times d}.
\end{equation*} 
 Define $\widehat{\boldsymbol{\epsilon}}=\sqrt{N}(\widehat{\boldsymbol{\beta}}-\boldsymbol{\beta})$ and $\widehat{\boldsymbol{\epsilon}}^{(m)}= \sqrt{N}(\widehat{\boldsymbol{\beta}}-\widehat{\boldsymbol{\beta}}^{(m)})$ for $m \in [M]$. Recall that $\widehat{\boldsymbol{\epsilon}}^{(m)}\mid \mathbf{Z} \overset{\text{iid}}{\sim} N(\mathbf{0}, \widehat{\Omega}_{N})$. Therefore, if we let $f\left(\widehat{\boldsymbol{\epsilon}}^{(m)}=\mathbf{t}\mid \mathbf{Z} \right)$ denote the conditional density of $\widehat{\boldsymbol{\epsilon}}^{(m)}$ given $\mathbf{Z}$, we have: for all $\mathbf{Z}\in A_{1}$, 
\begin{align}
     f\left(\widehat{\boldsymbol{\epsilon}}^{(m)}=\mathbf{t}\mid \mathbf{Z} \right)&=\left(2\pi \right)^{-\frac{d}{2}}|\widehat{\Omega}_{N}|^{-\frac{1}{2}}\exp\left (-\frac{1}{2}\mathbf{t}^{T}\widehat{\Omega}_{N}^{-1} \mathbf{t}  \right ) \nonumber \\
     &\geq \left(2\pi \right)^{-\frac{d}{2}}|\Omega + c_{1}\cdot\mathbf{I}_{d \times d}|^{-\frac{1}{2}}\exp\left \{-\frac{1}{2}\mathbf{t}^{T}\left(\Omega - c_{1}\cdot\mathbf{I}_{d \times d} \right)^{-1} \mathbf{t} \right \}:=\widetilde{f}(\mathbf{t}).
\end{align}
Also, for any fixed $\alpha^{\prime}\in (0,\alpha)$, we define the event 
\begin{equation*}
  A_{2} = \left \{\mathbf{Z}: \max_{k\in [d]} \left|\frac{\widehat{\beta}_{k}-\beta_{k}}{\sqrt{\widehat{\sigma}_{kk}^{2}/N}}\right|\leq z_{1-\frac{\alpha^{\prime}}{2d}} \right\},
\end{equation*}
where $\widehat{\sigma}_{kk}^{2}$ is the $k$th diagonal entry of $\widehat{\Omega}_{N}$. Let $\sigma_{kk}^{2}$ denote the $k$th diagonal entry of $\Omega$. Under Condition~3, the asymptotic normality (4) holds, so we have $\liminf_{N\rightarrow \infty}P(A_{2})\geq 1-\alpha^{\prime}$ by the union bound. Also, because Condition~3b states that $\widehat{\Omega}_{N}\xrightarrow{p} \Omega$ as $N\rightarrow \infty$, we have $\liminf_{N\rightarrow \infty}P(A_{1})=1$. Therefore, we have
\begin{align*}
    \liminf_{N\rightarrow\infty} P(A_{1}\cap A_{2})&\geq  \liminf_{N\rightarrow\infty}\left\{P(A_{1})+P(A_{2})-1\right\}\geq 1-\alpha^{\prime}.
\end{align*}
Also, for any $\mathbf{Z}\in A_{1}\cap A_{2}$, we have
\begin{align*}
    \widehat{\boldsymbol{\epsilon}}^{T}\left(\Omega - c_{1}\cdot\mathbf{I}_{d \times d} \right)^{-1} \widehat{\boldsymbol{\epsilon}}&\leq \frac{2}{\lambda_{\min}(\Omega)}\widehat{\boldsymbol{\epsilon}}^{T}\mathbf{I}_{d \times d}\widehat{\boldsymbol{\epsilon}}\\
    &=\frac{2}{\lambda_{\min}(\Omega)}\sum_{k=1}^{d}N\left(\widehat{\beta}_{k}-\beta_{k}\right)^{2}\\
    &\leq \frac{2}{\lambda_{\min}(\Omega)}\sum_{k=1}^{d}\widehat{\sigma}_{kk}^{2} \cdot z^{2}_{1-\frac{\alpha^{\prime}}{2d}}\\
    &\leq \frac{2}{\lambda_{\min}(\Omega)}\left(\sum_{k=1}^{d}\sigma_{kk}^{2}+ d\cdot \frac{\lambda_{\min}(\Omega)}{2}\right) \cdot z^{2}_{1-\frac{\alpha^{\prime}}{2d}}\\
    &\leq  \frac{3d\lambda_{\max}(\Omega)}{\lambda_{\min}(\Omega)}\cdot z^{2}_{1-\frac{\alpha^{\prime}}{2d}}.
\end{align*}
Therefore, for any $\mathbf{Z}\in A_{1}\cap A_{2}$, we have 
\begin{align*}
\widetilde{f}\left(\widehat{\boldsymbol{\epsilon}}\right)&=\left(2\pi \right)^{-\frac{d}{2}}|\Omega + c_{1}\cdot\mathbf{I}_{d \times d}|^{-\frac{1}{2}}\exp\left \{-\frac{1}{2}\widehat{\boldsymbol{\epsilon}}^{T}\left(\Omega - c_{1}\cdot\mathbf{I}_{d \times d} \right)^{-1} \widehat{\boldsymbol{\epsilon}} \right \}\\
    &\geq \left(2\pi \right)^{-\frac{d}{2}}\left\{\lambda_{\max}\left(\Omega \right)+\frac{1}{2}\lambda_{\min}(\Omega) \right\}^{-\frac{d}{2}} \exp\left \{-\frac{3d\lambda_{\max}(\Omega)}{2\lambda_{\min}(\Omega)}\cdot z^{2}_{1-\frac{\alpha^{\prime}}{2d}} \right\}:=\widetilde{c}(\alpha^{\prime}).
\end{align*}
If $m\notin \mathcal{M}(\alpha^{\prime})$, there exists some $k_{0}\in [d]$ such that
\begin{equation*}
    \left|\frac{\widehat{\beta}_{k_{0}}^{(m)}-\widehat{\beta}_{k_{0}}}{\sqrt{\widehat{\sigma}_{k_{0}k_{0}}^{2}/N}}\right|> 1.01\cdot z_{1-\frac{\alpha^{\prime}}{2d}}.
\end{equation*}
Meanwhile, for any $\mathbf{Z}\in A_{2}$, we have
\begin{equation*}
    \left|\frac{\widehat{\beta}_{k_{0}}^{(m)}-\beta_{k_{0}}}{\sqrt{\widehat{\sigma}_{k_{0}k_{0}}^{2}/N}}\right|>\left |\left|\frac{\widehat{\beta}_{k_{0}}^{(m)}-\widehat{\beta}_{k_{0}}}{\sqrt{\widehat{\sigma}_{k_{0}k_{0}}^{2}/N}}\right|- \left|\frac{\widehat{\beta}_{k_{0}}-\beta_{k_{0}}}{\sqrt{\widehat{\sigma}_{k_{0}k_{0}}^{2}/N}}\right| \right | > 0.01\cdot z_{1-\frac{\alpha^{\prime}}{2d}}.
\end{equation*}
Therefore, for any $\mathbf{Z}\in A_{1}\cap A_{2}$, we have 
\begin{equation*}
    \min_{m\notin \mathcal{M}(\alpha^{\prime}) } \left \|\widehat{\boldsymbol{\epsilon}}^{(m)}-\widehat{\boldsymbol{\epsilon}}   \right \|_{\infty}\geq 0.01 \cdot \widehat{\sigma}_{k_{0}k_{0}}\cdot z_{1-\frac{\alpha^{\prime}}{2d}}.
\end{equation*}
We define 
\begin{equation*}
    \widetilde{\epsilon}(N, M, \alpha^{\prime})=\frac{1}{2}\left\{\frac{2\log N}{\widetilde{c}(\alpha^{\prime})M}\right\}^{\frac{1}{d}}, 
\end{equation*}
where 
\begin{equation*}
   \widetilde{c}(\alpha^{\prime})=  \left(2\pi \right)^{-\frac{d}{2}}\left\{\lambda_{\max}\left(\Omega \right)+\frac{1}{2}\lambda_{\min}(\Omega) \right\}^{-\frac{d}{2}} \exp\left \{-\frac{3d\lambda_{\max}(\Omega)}{2\lambda_{\min}(\Omega)}\cdot z^{2}_{1-\frac{\alpha^{\prime}}{2d}} \right\}.
\end{equation*}
Therefore, for sufficiently large $M$ (under each fixed and sufficiently large $N$), we have $\widetilde{\epsilon}(N, M, \alpha^{\prime})<0.01 \cdot \widehat{\sigma}_{k_{0}k_{0}}\cdot z_{1-\frac{\alpha^{\prime}}{2d}}$. Hence, for sufficiently large $N$ and $M$, for any $\mathbf{Z}\in A_{1}\cap A_{2}$, we have 
\begin{equation*}
    \min_{m\notin \mathcal{M}(\alpha^{\prime}) } \left \|\widehat{\boldsymbol{\epsilon}}^{(m)}-\widehat{\boldsymbol{\epsilon}}   \right \|_{\infty}> \widetilde{\epsilon}(N, M, \alpha^{\prime}).
\end{equation*}
In other words, for sufficiently large $N$ and $M$, for any $\mathbf{Z}\in A_{1}\cap A_{2}$, we have
\begin{align}\label{eqn: equivalence between restricted and unrestricted}
P\left(\min_{m\in [M]}\left\|\widehat{\boldsymbol{\epsilon}}^{(m)}-\widehat{\boldsymbol{\epsilon}}\right\|_{\infty}\leq \widetilde{\epsilon}(N, M, \alpha^{\prime})\mid \mathbf{Z} \right) = P\left(\min_{m\in \mathcal{M}(\alpha^{\prime}) }\left\|\widehat{\boldsymbol{\epsilon}}^{(m)}-\widehat{\boldsymbol{\epsilon}}\right\|_{\infty}\leq \widetilde{\epsilon}(N, M, \alpha^{\prime})\mid \mathbf{Z} \right).
\end{align}

Note that
\begin{align}\label{eqn: independence inequality}
    &\quad    P\left(\min_{m\in [M]}\left\|\widehat{\boldsymbol{\epsilon}}^{(m)}-\widehat{\boldsymbol{\epsilon}}\right\|_{\infty}\leq \widetilde{\epsilon}(N, M, \alpha^{\prime})\mid \mathbf{Z} \right)\nonumber \\
    &= 1-P\left(\min_{m\in [M]}\left\|\widehat{\boldsymbol{\epsilon}}^{(m)}-\widehat{\boldsymbol{\epsilon}}\right\|_{\infty}> \widetilde{\epsilon}(N, M, \alpha^{\prime})\mid \mathbf{Z} \right)\nonumber \\
        &=1-\prod_{m=1}^{M}\left\{1- P\left(\left\|\widehat{\boldsymbol{\epsilon}}^{(m)}-\widehat{\boldsymbol{\epsilon}}\right\|_{\infty}\leq \widetilde{\epsilon}(N, M, \alpha^{\prime})\mid \mathbf{Z} \right)\right\}\nonumber \\
        &\geq 1-\exp\left\{ -\sum_{m=1}^{M}P\left(\left\|\widehat{\boldsymbol{\epsilon}}^{(m)}-\widehat{\boldsymbol{\epsilon}}\right\|_{\infty}\leq \widetilde{\epsilon}(N, M, \alpha^{\prime})\mid \mathbf{Z} \right)  \right\}.
\end{align}
Meanwhile, we have
\begin{align*}
   &\quad P\left(\left\|\widehat{\boldsymbol{\epsilon}}^{(m)}-\widehat{\boldsymbol{\epsilon}}\right\|_{\infty}\leq \widetilde{\epsilon}(N, M, \alpha^{\prime})\mid \mathbf{Z} \right)\cdot \mathbbm{1}\left\{\mathbf{Z}\in A_{1}\cap A_{2} \right\} \\
   & = \int f\left(\widehat{\boldsymbol{\epsilon}}^{(m)}=\mathbf{t}\mid \mathbf{Z} \right)\cdot \mathbbm{1}\left\{ \left\| \mathbf{t}-\widehat{\boldsymbol{\epsilon}} \right\|_{\infty}\leq \widetilde{\epsilon}(N, M, \alpha^{\prime}) \right\} d\mathbf{t}\cdot \mathbbm{1}\left\{\mathbf{Z}\in A_{1}\cap A_{2} \right\} \\
    &\geq \int \widetilde{f}\left(\mathbf{t}\mid \mathbf{Z} \right)\cdot \mathbbm{1}\left\{ \left\| \mathbf{t}-\widehat{\boldsymbol{\epsilon}} \right\|_{\infty}\leq \widetilde{\epsilon}(N, M, \alpha^{\prime}) \right\} d\mathbf{t}\cdot \mathbbm{1}\left\{\mathbf{Z}\in A_{1}\cap A_{2} \right\}\\
    &=\int \widetilde{f}\left(\widehat{\boldsymbol{\epsilon}}\mid \mathbf{Z} \right)\cdot \mathbbm{1}\left\{ \left\| \mathbf{t}-\widehat{\boldsymbol{\epsilon}}\right\|_{\infty}\leq \widetilde{\epsilon}(N, M, \alpha^{\prime}) \right\} d\mathbf{t}\cdot \mathbbm{1}\left\{\mathbf{Z}\in A_{1}\cap A_{2} \right\}\\
    &\quad \quad + \int \left\{\widetilde{f}\left(\mathbf{t}\mid \mathbf{Z} \right)-\widetilde{f}\left(\widehat{\boldsymbol{\epsilon}}\mid \mathbf{Z} \right)\right\}\cdot \mathbbm{1}\left\{ \left\| \mathbf{t}-\widehat{\boldsymbol{\epsilon}} \right\|_{\infty}\leq \widetilde{\epsilon}(N, M, \alpha^{\prime}) \right\} d\mathbf{t}\cdot \mathbbm{1}\left\{\mathbf{Z}\in A_{1}\cap A_{2} \right\},
\end{align*}
where 
\begin{align}\label{eqn: bound the first term}
 &\quad \int \widetilde{f}\left(\widehat{\boldsymbol{\epsilon}}\mid \mathbf{Z} \right)\cdot \mathbbm{1}\left\{ \left\| \mathbf{t}-\widehat{\boldsymbol{\epsilon}}\right\|_{\infty}\leq \widetilde{\epsilon}(N, M, \alpha^{\prime}) \right\} d\mathbf{t}\cdot \mathbbm{1}\left\{\mathbf{Z}\in A_{1}\cap A_{2} \right\}\nonumber\\
 &\geq \widetilde{c}(\alpha^{\prime})\int  \mathbbm{1}\left\{ \left\| \mathbf{t}-\widehat{\boldsymbol{\epsilon}}\right\|_{\infty}\leq \widetilde{\epsilon}(N, M, \alpha^{\prime}) \right\} d\mathbf{t}\cdot \mathbbm{1}\left\{\mathbf{Z}\in A_{1}\cap A_{2} \right\}\nonumber \\
 &\geq \widetilde{c}(\alpha^{\prime})\cdot \left\{2\widetilde{\epsilon}(N, M, \alpha^{\prime})\right\}^{d}\cdot \mathbbm{1}\left\{\mathbf{Z}\in A_{1}\cap A_{2} \right\}.
\end{align}
Also, by the mean value theorem, there exists some $\xi\in (0,1)$ such that 
$\widetilde{f}\left(\mathbf{t}\mid \mathbf{Z} \right)-\widetilde{f}\left(\widehat{\boldsymbol{\epsilon}}\mid \mathbf{Z} \right)=\{\nabla \widetilde{f}(\widehat{\boldsymbol{\epsilon}}+\xi(\mathbf{t}-\widehat{\boldsymbol{\epsilon}}))\}^{T}(\mathbf{t}-\widehat{\boldsymbol{\epsilon}})$, where we have 
\begin{equation*}
    \nabla \widetilde{f}(\mathbf{v})=-\left(2\pi \right)^{-\frac{d}{2}}|\Omega + c_{1}\cdot\mathbf{I}_{d \times d}|^{-\frac{1}{2}}\exp\left \{-\frac{1}{2}\mathbf{v}^{T}\left(\Omega - c_{1}\cdot\mathbf{I}_{d \times d} \right)^{-1} \mathbf{v} \right \}\left(\Omega - c_{1}\cdot\mathbf{I}_{d \times d} \right)^{-1} \mathbf{v}.
\end{equation*}
Because $\|\nabla \widetilde{f}\|_{2}$ is upper bounded by some constant $C_{\nabla}>0$, we have 
\begin{equation*}
    \left | \widetilde{f}\left(\mathbf{t}\mid \mathbf{Z} \right)-\widetilde{f}\left(\widehat{\boldsymbol{\epsilon}}\mid \mathbf{Z} \right) \right |\leq \left\| \nabla \widetilde{f}(\widehat{\boldsymbol{\epsilon}}+\xi(\mathbf{t}-\widehat{\boldsymbol{\epsilon}})) \right \|_{2}\left\|\mathbf{t}-\widehat{\boldsymbol{\epsilon}} \right\|_{2}\leq C_{\nabla}\sqrt{d} \left\|\mathbf{t}-\widehat{\boldsymbol{\epsilon}} \right\|_{\infty}.
\end{equation*}
Then, we have
\begin{align*}
     &\quad \left | \int \left\{\widetilde{f}\left(\mathbf{t}\mid \mathbf{Z} \right)-\widetilde{f}\left(\widehat{\boldsymbol{\epsilon}}\mid \mathbf{Z} \right)\right\}\cdot \mathbbm{1}\left\{ \left\| \mathbf{t}-\widehat{\boldsymbol{\epsilon}} \right\|_{\infty}\leq \widetilde{\epsilon}(N, M, \alpha^{\prime}) \right\} d\mathbf{t}\cdot \mathbbm{1}\left\{\mathbf{Z}\in A_{1}\cap A_{2} \right\}\right| \\
     &\leq  \int \left |\widetilde{f}\left(\mathbf{t}\mid \mathbf{Z} \right)-\widetilde{f}\left(\widehat{\boldsymbol{\epsilon}}\mid \mathbf{Z} \right)\right|\cdot \mathbbm{1}\left\{ \left\| \mathbf{t}-\widehat{\boldsymbol{\epsilon}} \right\|_{\infty}\leq \widetilde{\epsilon}(N, M, \alpha^{\prime}) \right\} d\mathbf{t}\cdot \mathbbm{1}\left\{\mathbf{Z}\in A_{1}\cap A_{2} \right\} \\
     &\leq  \int C_{\nabla}\sqrt{d} \left\|\mathbf{t}-\widehat{\boldsymbol{\epsilon}} \right\|_{\infty}\cdot \mathbbm{1}\left\{ \left\| \mathbf{t}-\widehat{\boldsymbol{\epsilon}} \right\|_{\infty}\leq \widetilde{\epsilon}(N, M, \alpha^{\prime}) \right\} d\mathbf{t}\cdot \mathbbm{1}\left\{\mathbf{Z}\in A_{1}\cap A_{2} \right\}  \\
     &\leq C_{\nabla} \sqrt{d}\ \widetilde{\epsilon}(N, M, \alpha^{\prime}) \int  \mathbbm{1}\left\{ \left\| \mathbf{t}-\widehat{\boldsymbol{\epsilon}} \right\|_{\infty}\leq \widetilde{\epsilon}(N, M, \alpha^{\prime}) \right\} d\mathbf{t}\cdot \mathbbm{1}\left\{\mathbf{Z}\in A_{1}\cap A_{2} \right\}  \\
   &=C_{\nabla} \sqrt{d}\ \widetilde{\epsilon}(N, M, \alpha^{\prime}) \cdot \left\{2\widetilde{\epsilon}(N, M, \alpha^{\prime})\right\}^{d}\cdot \mathbbm{1}\left\{\mathbf{Z}\in A_{1}\cap A_{2} \right\}.
\end{align*}
Because $\widetilde{c}(\alpha^{\prime})>0$ is a constant and $\lim_{M\rightarrow \infty}\widetilde{\epsilon}(N, M, \alpha^{\prime})=0$ for sufficiently large $N$, there exists some $M_{0}\in \mathbbm{Z}^{+}$ such that for any $M>M_{0}$, we have $C_{\nabla}\sqrt{d}\ \widetilde{\epsilon}(N, M, \alpha^{\prime})<\frac{1}{2}\widetilde{c}(\alpha^{\prime})$. That is, for sufficiently large $N$ and $M$, we have
\begin{align}\label{eqn: bound the second term}
     &\quad \left | \int \left\{\widetilde{f}\left(\mathbf{t}\mid \mathbf{Z} \right)-\widetilde{f}\left(\widehat{\boldsymbol{\epsilon}}\mid \mathbf{Z} \right)\right\}\cdot \mathbbm{1}\left\{ \left\| \mathbf{t}-\widehat{\boldsymbol{\epsilon}} \right\|_{\infty}\leq \widetilde{\epsilon}(N, M, \alpha^{\prime}) \right\} d\mathbf{t}\cdot \mathbbm{1}\left\{\mathbf{Z}\in A_{1}\cap A_{2} \right\}\right|\nonumber \\
   &\leq \frac{1}{2}\widetilde{c}(\alpha^{\prime}) \cdot \left\{2\widetilde{\epsilon}(N, M, \alpha^{\prime})\right\}^{d}\cdot \mathbbm{1}\left\{\mathbf{Z}\in A_{1}\cap A_{2} \right\}.
\end{align}
Combining (\ref{eqn: bound the first term}) and (\ref{eqn: bound the second term}), for sufficiently large $N$ and $M$, we have
\begin{align}\label{eqn: bounding integral dif}
   &\quad P\left(\left\|\widehat{\boldsymbol{\epsilon}}^{(m)}-\widehat{\boldsymbol{\epsilon}}\right\|_{\infty}\leq \widetilde{\epsilon}(N, M, \alpha^{\prime})\mid \mathbf{Z} \right)\cdot \mathbbm{1}\left\{\mathbf{Z}\in A_{1}\cap A_{2} \right\}\nonumber \\
   &\geq \int \widetilde{f}\left(\widehat{\boldsymbol{\epsilon}}\mid \mathbf{Z} \right)\cdot \mathbbm{1}\left\{ \left\| \mathbf{t}-\widehat{\boldsymbol{\epsilon}}\right\|_{\infty}\leq \widetilde{\epsilon}(N, M, \alpha^{\prime}) \right\} d\mathbf{t}\cdot \mathbbm{1}\left\{\mathbf{Z}\in A_{1}\cap A_{2} \right\}\nonumber \\
    &\quad \quad + \int \left\{\widetilde{f}\left(\mathbf{t}\mid \mathbf{Z} \right)-\widetilde{f}\left(\widehat{\boldsymbol{\epsilon}}\mid \mathbf{Z} \right)\right\}\cdot \mathbbm{1}\left\{ \left\| \mathbf{t}-\widehat{\boldsymbol{\epsilon}} \right\|_{\infty}\leq \widetilde{\epsilon}(N, M, \alpha^{\prime}) \right\} d\mathbf{t}\cdot \mathbbm{1}\left\{\mathbf{Z}\in A_{1}\cap A_{2} \right\}\nonumber\\
    &\geq \frac{1}{2}\widetilde{c}(\alpha^{\prime}) \cdot \left\{2\widetilde{\epsilon}(N, M, \alpha^{\prime})\right\}^{d}\cdot \mathbbm{1}\left\{\mathbf{Z}\in A_{1}\cap A_{2} \right\}.
\end{align}
Therefore, for any $\alpha^{\prime}\in (0, \alpha)$, for sufficiently large $N$ and $M$, we have
\begin{align}\label{eqn: bounding the min probability by exp}
      &\quad\  P\left(\min_{m\in \mathcal{M}(\alpha^{\prime}) }\left\|\sqrt{N}\left(\widehat{\boldsymbol{\beta}}^{(m)}-\boldsymbol{\beta}\right)\right\|_{\infty}\leq \widetilde{\epsilon}(N, M, \alpha^{\prime})\mid \mathbf{Z} \right)\cdot \mathbbm{1}\left\{\mathbf{Z}\in A_{1}\cap A_{2} \right\}\nonumber \\
      &=P\left(\min_{m\in [M]}\left\|\sqrt{N}\left(\widehat{\boldsymbol{\beta}}^{(m)}-\boldsymbol{\beta}\right)\right\|_{\infty}\leq \widetilde{\epsilon}(N, M, \alpha^{\prime})\mid \mathbf{Z} \right)\cdot \mathbbm{1}\left\{\mathbf{Z}\in A_{1}\cap A_{2} \right\} \text{ (by (\ref{eqn: equivalence between restricted and unrestricted}))}\nonumber \\
      &=P\left(\min_{m\in [M]}\left\|\widehat{\boldsymbol{\epsilon}}^{(m)}-\widehat{\boldsymbol{\epsilon}}\right\|_{\infty}\leq \widetilde{\epsilon}(N, M, \alpha^{\prime})\mid \mathbf{Z} \right)\cdot \mathbbm{1}\left\{\mathbf{Z}\in A_{1}\cap A_{2} \right\}\nonumber\\
        &\geq 1-\exp\left\{ -\sum_{m=1}^{M}P\left(\left\|\widehat{\boldsymbol{\epsilon}}^{(m)}-\widehat{\boldsymbol{\epsilon}}\right\|_{\infty}\leq \widetilde{\epsilon}(N, M, \alpha^{\prime})\mid \mathbf{Z} \right) \cdot \mathbbm{1}\left\{\mathbf{Z}\in A_{1}\cap A_{2} \right\}  \right\} \text{ (by (\ref{eqn: independence inequality}))}\nonumber \\
        &\geq 1-\exp\left\{ -\frac{M}{2}\widetilde{c}(\alpha^{\prime})\cdot \left\{2\widetilde{\epsilon}(N, M, \alpha^{\prime})\right\}^{d}\cdot \mathbbm{1}\left\{\mathbf{Z}\in A_{1}\cap A_{2} \right\} \right\}\text{ (by (\ref{eqn: bounding integral dif}))}\nonumber \\
        &=\left [ 1-\exp\left\{ -\frac{M}{2}\widetilde{c}(\alpha^{\prime})\cdot \left\{2\widetilde{\epsilon}(N, M, \alpha^{\prime})\right\}^{d} \right\} \right ] \cdot \mathbbm{1}\left\{\mathbf{Z}\in A_{1}\cap A_{2} \right\}.
\end{align}
Then, for any $\alpha^{\prime}\in (0, \alpha)$, for sufficiently large $N$ and $M$, we obtain
\begin{align*}
      &\quad  P\left(\min_{m\in \mathcal{M}(\alpha^{\prime}) }\left\|\sqrt{N}\left(\widehat{\boldsymbol{\beta}}^{(m)}-\boldsymbol{\beta}\right)\right\|_{\infty}\leq \widetilde{\epsilon}(N, M, \alpha^{\prime}) \right)\nonumber\\
      &=E_{\mathbf{Z}}\left \{ P\left(\min_{m\in \mathcal{M}(\alpha^{\prime})}\left\|\widehat{\boldsymbol{\epsilon}}^{(m)}-\widehat{\boldsymbol{\epsilon}}\right\|_{\infty}\leq \widetilde{\epsilon}(N, M, \alpha^{\prime})\mid \mathbf{Z} \right)\right\}\nonumber \\
      &\geq E_{\mathbf{Z}}\left \{ P\left(\min_{m\in \mathcal{M}(\alpha^{\prime})}\left\|\widehat{\boldsymbol{\epsilon}}^{(m)}-\widehat{\boldsymbol{\epsilon}}\right\|_{\infty}\leq \widetilde{\epsilon}(N, M, \alpha^{\prime})\mid \mathbf{Z} \right)\cdot \mathbbm{1}\left\{\mathbf{Z}\in A_{1}\cap A_{2} \right\} \right\}\nonumber \\
        &\geq E_{\mathbf{Z}}\left\{ \left [ 1-\exp\left\{ -\frac{M}{2}\widetilde{c}(\alpha^{\prime})\cdot \left\{2\widetilde{\epsilon}(N, M, \alpha^{\prime})\right\}^{d} \right\} \right ] \cdot \mathbbm{1}\left\{\mathbf{Z}\in A_{1}\cap A_{2} \right\}\right\} \text{ (by (\ref{eqn: bounding the min probability by exp}))}\nonumber\\
        &=\left [ 1-\exp\left\{ -\frac{M}{2}\widetilde{c}(\alpha^{\prime})\cdot \left\{2\widetilde{\epsilon}(N, M, \alpha^{\prime})\right\}^{d} \right\} \right ]\cdot P\left(A_{1}\cap A_{2}\right) \nonumber.
\end{align*}
By plugging $\widetilde{\epsilon}(N, M, \alpha^{\prime})=\frac{1}{2}\left\{\frac{2\log N}{\widetilde{c}(\alpha^{\prime})M}\right\}^{\frac{1}{d}}=\widetilde{C}_{\alpha^{\prime}}\cdot (\frac{\log N}{M})^{\frac{1}{d}} $ into (\ref{eqn: bounding the min probability by exp}), for sufficiently large $N$, we have 
\begin{equation*}
    \lim_{M\rightarrow \infty} P\left(\min_{m\in \mathcal{M}(\alpha^{\prime}) }\sqrt{N}\left\| \widehat{\boldsymbol{\beta}}^{(m)}-\boldsymbol{\beta}\right\|_{\infty}  \leq \widetilde{C}_{\alpha^{\prime}}\cdot \left(\frac{\log N}{M}\right)^{\frac{1}{d}}   \right)\geq \left(1-\frac{1}{N}\right) \cdot P\left(A_{1}\cap A_{2}\right),
\end{equation*}
which implies that
\begin{align*}
  \liminf_{N\rightarrow \infty }  \lim_{M\rightarrow \infty}P\left(\min_{m\in \mathcal{M}(\alpha^{\prime}) }\sqrt{N}\left\| \widehat{\boldsymbol{\beta}}^{(m)}-\boldsymbol{\beta}\right\|_{\infty}  \leq \widetilde{C}_{\alpha^{\prime}} \cdot \left(\frac{\log N}{M}\right)^{\frac{1}{d}}   \right)&\geq \liminf_{N\rightarrow \infty} P\left(A_{1}\cap A_{2}\right)\\
  &\geq 1-\alpha^{\prime},
\end{align*}
and 
\begin{align*}
 &\quad \  \liminf_{N\rightarrow \infty }  \lim_{M\rightarrow \infty}P\left(\min_{m\in [M]}\sqrt{N}\left\| \widehat{\boldsymbol{\beta}}^{(m)}-\boldsymbol{\beta}\right\|_{\infty}  \leq \widetilde{C}_{\alpha^{\prime}} \cdot \left(\frac{\log N}{M}\right)^{\frac{1}{d}}   \right)\\
 &\geq \liminf_{N\rightarrow \infty }  \lim_{M\rightarrow \infty}P\left(\min_{m\in \mathcal{M}(\alpha^{\prime}) }\sqrt{N}\left\| \widehat{\boldsymbol{\beta}}^{(m)}-\boldsymbol{\beta}\right\|_{\infty}  \leq \widetilde{C}_{\alpha^{\prime}} \cdot \left(\frac{\log N}{M}\right)^{\frac{1}{d}}   \right)\\
   &\geq 1-\alpha^{\prime}.
\end{align*}
Therefore, the desired results have been established. 
\end{proof}

\vspace{-0.5cm}

\begin{proof}[Proof of Theorem~1:]
We set 
\begin{equation*}
   C_{\alpha^{\prime}}=d\cdot C_{\Psi^{\prime}}C_{x}2^{\frac{1-d}{d}} \left(2\pi \right)^{\frac{1}{2}}\left\{\lambda_{\max}\left(\Omega \right)+\frac{1}{2}\lambda_{\min}(\Omega) \right\}^{\frac{1}{2}} \exp\left \{\frac{3\lambda_{\max}(\Omega)}{2\lambda_{\min}(\Omega)}\cdot z^{2}_{1-\frac{\alpha^{\prime}}{2d}} \right\}.
\end{equation*}
The desired conclusion immediately follows from Lemma~\ref{lem: rate of beta parametric} and the following argument: 
    \begin{align}\label{eqn: bound p dif by beta dif}
    \left \|\widehat{\mathcal{P}}^{(m)}-\mathcal{P}   \right \|_{\infty}
        &=\max_{i \in [N]} \left|\widehat{p}^{(m)}_{i}-p_{i} \right | \nonumber \\
        &=\max_{i \in [N]}\left | \Psi\left(\widehat{\boldsymbol{\beta}}^{(m)T}\mathbf{x}_{i}\right) -\Psi\left(\boldsymbol{\beta}^{T}\mathbf{x}_{i}
        \right) \right | \nonumber \\
        &\leq C_{\Psi^{\prime}}\max_{i \in [N]}\left|\widehat{\boldsymbol{\beta}}^{(m)T}\mathbf{x}_{i} -\boldsymbol{\beta}^{T}\mathbf{x}_{i}\right| \nonumber \\
        &\leq C_{\Psi^{\prime}} C_{x} \left \|\widehat{\boldsymbol{\beta}}^{(m)}-\boldsymbol{\beta}\right \|_{1} \nonumber\\
        &\leq d \cdot C_{\Psi^{\prime}} C_{x} \left \|\widehat{\boldsymbol{\beta}}^{(m)}-\boldsymbol{\beta}\right \|_{\infty}.
    \end{align}
\end{proof}

\vspace{-0.5cm}

\begin{proof}[Proof of Theorem~2:]
The result follows immediately from Theorem~1 and Conditions~1a, 1b, and 2c in the main text, noting that Condition~2c is stronger than Condition~\ref{cond: Lipschitz infinity}.
\end{proof}

\vspace{-0.5cm}

\begin{proof}[Proof of Theorem~3:] Set $M\geq M_{N}$, where $\lim_{N\to\infty}(\log N)/M_N=0$. Let $m^{*}\in \mathcal{M}(\alpha^{\prime})\subseteq [M]$ such that $\min_{m\in \mathcal{M}(\alpha^{\prime})} \|\widehat{\boldsymbol{\beta}}^{(m)}-\boldsymbol{\beta}\|_{\infty}=\|\widehat{\boldsymbol{\beta}}^{(m^{*})}-\boldsymbol{\beta}\|_{\infty}$. Note that $\|\widehat{\mathcal{P}}^{(m^{*})}-\mathcal{P} \|_{\infty}=\max_{i\in [N]}|\Psi(\widehat{\boldsymbol{\beta}}^{(m^{*})T}\mathbf{x}_{i})- \Psi(\boldsymbol{\beta}^{T}\mathbf{x}_{i})|\leq C_{\Psi^{\prime}}\max_{i\in [N]} |\widehat{\boldsymbol{\beta}}^{(m^{*})T}\mathbf{x}_{i}-  \boldsymbol{\beta}^{T}\mathbf{x}_{i}|\leq C_{\Psi^{\prime}}C_{x}\|\widehat{\boldsymbol{\beta}}^{(m^{*})}-\boldsymbol{\beta}\|_{1}\leq d \cdot C_{\Psi^{\prime}}C_{x}\|\widehat{\boldsymbol{\beta}}^{(m^{*})}-\boldsymbol{\beta}\|_{\infty}$, and each $p_{i}=\Psi(\boldsymbol{\beta}^{T}\mathbf{x}_{i})\in [\delta, 1-\delta]$ for some small constant $\delta>0$. For notational simplicity, we define the following events:
\begin{equation*}
    A_{3}=\left\{\min_{m\in \mathcal{M}(\alpha^{\prime}) }\left\|  \widehat{\mathcal{P}}^{(m)}-\mathcal{P}\right\|_{\infty}  \leq \frac{C_{\alpha^{\prime}}}{\sqrt{N}} \cdot \left(\frac{\log N}{M}\right)^{\frac{1}{d}}\right\}
\end{equation*}
and 
\begin{equation*}
    A_{4}=\left\{\widehat{p}^{(m^{*})}_{i}\in \left[\frac{\delta}{2}, 1-\frac{\delta}{2}\right] \text{ for all $i\in [N]$}\right\}.
\end{equation*}
It is clear that, as $N\rightarrow \infty$, we have $P(A_{3}\cap A_{4}^{c})\rightarrow 0$. Therefore, by Theorem~1, we have 
\begin{equation*}
    \liminf_{N\rightarrow \infty } P\left(A_{4}\right)\geq \liminf_{N\rightarrow \infty } P\left(A_{3}\right) \geq 1-\alpha^{\prime}.
\end{equation*}
Meanwhile, because $\widehat{\mathcal{V}}_{\text{oracle}}^{1/2}\asymp_{p}N^{-1/2}$, there exists some constant $\nu_{1}>0$ such that $\lim_{N\rightarrow \infty}P\big(\widehat{\mathcal{V}}_{\text{oracle}}^{1/2}\geq \nu_{1}/\sqrt{N}\big)=1$. Therefore, if we define event $A_{5}=\big\{\widehat{\mathcal{V}}_{\text{oracle}}^{1/2}\geq \nu_{1}/\sqrt{N}\big\}$, we have $\lim_{N\rightarrow \infty}P(A_{5})=1$. Meanwhile, we define $A_{6}=\big\{|\widehat{\theta}^{(m^{*})}-\widehat{\theta}_{\text{oracle}}|\leq L_{\theta}(\delta/2)\cdot \|\mathcal{P}^{(m^{*})}-\mathcal{P}\|_{\infty}\big\}$ and $A_{7}=\big\{N|\widehat{\mathcal{V}}^{(m^{*})}-\widehat{\mathcal{V}}_{\text{oracle}}|\leq L_{v}(\delta/2)\cdot \|\mathcal{P}^{(m^{*})}-\mathcal{P}\|_{\infty}\big\}$. Under Conditions 1a, 1b, and \ref{cond: Lipschitz infinity} (i.e., a weaker condition than Condition 2c), we have $\lim_{N\rightarrow \infty}P(A_{6})=1$ and $\lim_{N\rightarrow \infty}P(A_{7})=1$. Then, under events $A_{5}$ and $A_{7}$, we have
\begin{align*}
    \left|\{\widehat{\mathcal{V}}^{(m^{*})}\}^{1/2}-\{\widehat{\mathcal{V}}_{\text{oracle}}\}^{1/2}\right|&=\frac{\left|\widehat{\mathcal{V}}^{(m^{*})}-\widehat{\mathcal{V}}_{\text{oracle}}\right|}{\{\widehat{\mathcal{V}}^{(m^{*})}\}^{1/2}+\{\widehat{\mathcal{V}}_{\text{oracle}}\}^{1/2}}\\
    &\leq \frac{\sqrt{N}}{\nu_{1}}\left|\widehat{\mathcal{V}}^{(m^{*})}-\widehat{\mathcal{V}}_{\text{oracle}}\right|\\
    &\leq \frac{L_{v}(\delta/2)}{\nu_{1}}\frac{1}{\sqrt{N}}\left\|\widehat{\mathcal{P}}^{(m^{*})}-\mathcal{P}  \right\|_{\infty}.
\end{align*}
Let $L_{1-\alpha}^{(m^{*})}=\widehat{\theta}^{(m^{*})}-z_{1-\alpha/2}\cdot \{\widehat{\mathcal{V}}^{(m^{*})}\}^{1/2}$ and $U_{1-\alpha}^{(m^{*})}=\widehat{\theta}^{(m^{*})}+z_{1-\alpha/2}\cdot \{\widehat{\mathcal{V}}^{(m^{*})}\}^{1/2}$ represent the left and right end points of $\mathcal{C}_{1-\alpha}^{(m^{*})}(\theta)$, respectively. Let $L_{1-\alpha}^{\text{oracle}}=\widehat{\theta}_{\text{oracle}}-z_{1-\alpha/2}\cdot \{\widehat{\mathcal{V}}_{\text{oracle}}\}^{1/2}$ and $U_{1-\alpha}^{\text{oracle}}=\widehat{\theta}_{\text{oracle}}+z_{1-\alpha/2}\cdot \{\widehat{\mathcal{V}}_{\text{oracle}}\}^{1/2}$ represent the left and right end points of $\mathcal{C}_{1-\alpha}^{\text{oracle}}(\theta)$, respectively. Let $\Delta_{L, N}=|L_{1-\alpha}^{(m^{*})}-L_{1-\alpha}^{\text{oracle}} |$ and $\Delta_{U, N}=|U_{1-\alpha}^{(m^{*})}-U_{1-\alpha}^{\text{oracle}}|$. Under events $A_{5}$--$A_{7}$, we have
\begin{align*}
   \frac{\Delta_{L, N}}{\{\widehat{\mathcal{V}}_{\text{oracle}}\}^{1/2}}&=\frac{\left|L_{1-\alpha}^{(m^{*})}-L_{1-\alpha}^{\text{oracle}} \right|}{\{\widehat{\mathcal{V}}_{\text{oracle}}\}^{1/2}}\\
   &\leq \frac{\sqrt{N}}{\nu_{1}} \left | \left(\widehat{\theta}^{(m^{*})}-z_{1-\alpha/2}\cdot \{\widehat{\mathcal{V}}^{(m^{*})}\}^{1/2} \right)-\left( \widehat{\theta}_{\text{oracle}}-z_{1-\alpha/2}\cdot \{\widehat{\mathcal{V}}_{\text{oracle}}\}^{1/2}\right)\right|\\
    &\leq \frac{\sqrt{N}}{\nu_{1}}\left | \widehat{\theta}^{(m^{*})}-\widehat{\theta}_{\text{oracle}}\right|+z_{1-\alpha/2}\cdot \frac{\sqrt{N}}{\nu_{1}} \left | \{\widehat{\mathcal{V}}^{(m^{*})}\}^{1/2}- \{\widehat{\mathcal{V}}_{\text{oracle}}\}^{1/2}\right|\\
    &\leq \frac{L_{\theta}(\delta/2)}{\nu_{1}}\cdot \sqrt{N} \left\|\widehat{\mathcal{P}}^{(m^{*})}-\mathcal{P}  \right\|_{\infty}+z_{1-\alpha/2}\cdot \frac{L_{v}(\delta/2)}{\nu_{1}^{2}}\cdot \left\|\widehat{\mathcal{P}}^{(m^{*})}-\mathcal{P}  \right\|_{\infty},
\end{align*}
and 
\begin{align*}
   \frac{\Delta_{U, N}}{\{\widehat{\mathcal{V}}_{\text{oracle}}\}^{1/2}}&=\frac{\left|U_{1-\alpha}^{(m^{*})}-U_{1-\alpha}^{\text{oracle}} \right|}{\{\widehat{\mathcal{V}}_{\text{oracle}}\}^{1/2}}\\
   &\leq \frac{\sqrt{N}}{\nu_{1}} \left | \left(\widehat{\theta}^{(m^{*})}+z_{1-\alpha/2}\cdot \{\widehat{\mathcal{V}}^{(m^{*})}\}^{1/2} \right)-\left( \widehat{\theta}_{\text{oracle}}+z_{1-\alpha/2}\cdot \{\widehat{\mathcal{V}}_{\text{oracle}}\}^{1/2}\right)\right|\\
    &\leq \frac{\sqrt{N}}{\nu_{1}}\left | \widehat{\theta}^{(m^{*})}-\widehat{\theta}_{\text{oracle}}\right|+z_{1-\alpha/2}\cdot \frac{\sqrt{N}}{\nu_{1}} \left | \{\widehat{\mathcal{V}}^{(m^{*})}\}^{1/2}- \{\widehat{\mathcal{V}}_{\text{oracle}}\}^{1/2}\right|\\
    &\leq \frac{L_{\theta}(\delta/2)}{\nu_{1}}\cdot \sqrt{N} \left\|\widehat{\mathcal{P}}^{(m^{*})}-\mathcal{P}  \right\|_{\infty}+z_{1-\alpha/2}\cdot \frac{L_{v}(\delta/2)}{\nu_{1}^{2}}\cdot \left\|\widehat{\mathcal{P}}^{(m^{*})}-\mathcal{P}  \right\|_{\infty}.
\end{align*}
Therefore, combining the above results with event $A_{3}$, we have $\max\left\{\frac{\Delta_{L, N}}{\{\widehat{\mathcal{V}}_{\text{oracle}}\}^{1/2}}, \frac{\Delta_{U, N}}{\{\widehat{\mathcal{V}}_{\text{oracle}}\}^{1/2}}\right\}\rightarrow 0$ as $N\rightarrow \infty$ (recall that $M\geq M_{N}$ satisfies $\lim_{N\rightarrow \infty}\frac{\log N}{M_{N}}=0$). That is, for any $\epsilon>0$, we have 
\begin{align*}
&\quad \ \liminf_{N\rightarrow\infty} P\left(\max\left\{\frac{\Delta_{L, N}}{\{\widehat{\mathcal{V}}_{\text{oracle}}\}^{1/2}}, \frac{\Delta_{U, N}}{\{\widehat{\mathcal{V}}_{\text{oracle}}\}^{1/2}}\right\}\leq \epsilon \right)\\
&\geq \liminf_{N\rightarrow \infty}P\left(A_{3}\cap A_{4} \cap A_{5} \cap A_{6}\cap A_{7} \right)\\
&= \liminf_{N\rightarrow \infty}P\left(A_{3}\right) \ \text{(because $\lim_{N\rightarrow \infty}P(A_{3}\cap A^{c}_{4})= 0$ and $ \liminf_{N\rightarrow \infty}P\left(A_{5} \cap A_{6} \cap A_{7}\right)=1$)}\\
&\geq 1-\alpha^{\prime} \ \text{(by Theorem~1)}.
\end{align*}
Note that 
\begin{equation*}
    \left[L_{1-\alpha}^{\text{oracle}}+\Delta_{L, N}, U_{1-\alpha}^{\text{oracle}}-\Delta_{U, N}\right]\subseteq \mathcal{C}_{1-\alpha}^{(m^{*})}(\theta)\subseteq \mathcal{C}_{1-\alpha}(\theta),
\end{equation*}
and $\theta\in \left[L_{1-\alpha}^{\text{oracle}}+\Delta_{L, N}, U_{1-\alpha}^{\text{oracle}}-\Delta_{U, N}\right]$ if and only if 
\begin{equation*}
    \frac{\widehat{\theta}_{\text{oracle}}-\theta}{\{\widehat{\mathcal{V}}_{\text{oracle}}\}^{1/2}}\leq z_{1-\alpha/2}-\frac{\Delta_{L, N}}{\{\widehat{\mathcal{V}}_{\text{oracle}}\}^{1/2}} \quad \text{ and } \quad  \frac{\widehat{\theta}_{\text{oracle}}-\theta}{\{\widehat{\mathcal{V}}_{\text{oracle}}\}^{1/2}}\geq -z_{1-\alpha/2}+\frac{\Delta_{U, N}}{\{\widehat{\mathcal{V}}_{\text{oracle}}\}^{1/2}}.
\end{equation*}
Therefore, for any $\epsilon>0$, we have
\begin{align*}
&\quad \ P\left(\theta\in \mathcal{C}_{1-\alpha}(\theta) \right)\\
&\geq P\left(\theta\in \mathcal{C}_{1-\alpha}^{(m^{*})}(\theta) \right)\\
&\geq P\left(\theta\in \left[L_{1-\alpha}^{\text{oracle}}+\Delta_{L, N}, U_{1-\alpha}^{\text{oracle}}-\Delta_{U, N}\right] \right)\\
 &=P\left(-z_{1-\alpha/2}+\frac{\Delta_{U, N}}{\{\widehat{\mathcal{V}}_{\text{oracle}}\}^{1/2}}\leq \frac{\widehat{\theta}_{\text{oracle}}-\theta}{\{\widehat{\mathcal{V}}_{\text{oracle}}\}^{1/2}}\leq z_{1-\alpha/2}-\frac{\Delta_{L, N}}{\{\widehat{\mathcal{V}}_{\text{oracle}}\}^{1/2}} \right)\\
 &= P\left(-z_{1-\alpha/2}+\frac{\Delta_{U, N}}{\{\widehat{\mathcal{V}}_{\text{oracle}}\}^{1/2}}\leq \frac{\widehat{\theta}_{\text{oracle}}-\theta}{\{\widehat{\mathcal{V}}_{\text{oracle}}\}^{1/2}}\leq z_{1-\alpha/2}-\frac{\Delta_{L, N}}{\{\widehat{\mathcal{V}}_{\text{oracle}}\}^{1/2}}, \right. \\
&\quad \quad \quad \quad \quad \quad \quad \quad \quad \quad \quad \quad \quad \quad \quad \quad \quad \quad \left. \max\left\{\frac{\Delta_{L, N}}{\{\widehat{\mathcal{V}}_{\text{oracle}}\}^{1/2}}, \frac{\Delta_{U, N}}{\{\widehat{\mathcal{V}}_{\text{oracle}}\}^{1/2}}\right\}\leq \epsilon \right)\\
 &\quad \quad +P\left(-z_{1-\alpha/2}+\frac{\Delta_{U, N}}{\{\widehat{\mathcal{V}}_{\text{oracle}}\}^{1/2}}\leq \frac{\widehat{\theta}_{\text{oracle}}-\theta}{\{\widehat{\mathcal{V}}_{\text{oracle}}\}^{1/2}}\leq z_{1-\alpha/2}-\frac{\Delta_{L, N}}{\{\widehat{\mathcal{V}}_{\text{oracle}}\}^{1/2}}, \right. \\
&\quad \quad \quad \quad \quad \quad \quad \quad \quad \quad \quad \quad \quad \quad \quad \quad \quad \quad \left. \max\left\{\frac{\Delta_{L, N}}{\{\widehat{\mathcal{V}}_{\text{oracle}}\}^{1/2}}, \frac{\Delta_{U, N}}{\{\widehat{\mathcal{V}}_{\text{oracle}}\}^{1/2}}\right\}>\epsilon  \right)\\
 &\geq P\left(-z_{1-\alpha/2}+\epsilon \leq \frac{\widehat{\theta}_{\text{oracle}}-\theta}{\{\widehat{\mathcal{V}}_{\text{oracle}}\}^{1/2}}\leq z_{1-\alpha/2}-\epsilon, \max\left\{\frac{\Delta_{L, N}}{\{\widehat{\mathcal{V}}_{\text{oracle}}\}^{1/2}}, \frac{\Delta_{U, N}}{\{\widehat{\mathcal{V}}_{\text{oracle}}\}^{1/2}}\right\}\leq \epsilon \right)\\
 &\geq  P\left(-z_{1-\alpha/2}+\epsilon \leq \frac{\widehat{\theta}_{\text{oracle}}-\theta}{\{\widehat{\mathcal{V}}_{\text{oracle}}\}^{1/2}}\leq z_{1-\alpha/2}-\epsilon\right)-P\left(\max\left\{\frac{\Delta_{L, N}}{\{\widehat{\mathcal{V}}_{\text{oracle}}\}^{1/2}}, \frac{\Delta_{U, N}}{\{\widehat{\mathcal{V}}_{\text{oracle}}\}^{1/2}}\right\}> \epsilon \right).
\end{align*}
In addition, under Conditions 2a and 2b, for any $\alpha^{\prime}\in (0, \alpha)$ and sufficiently small $\epsilon>0$, we have
\begin{align*}
&\quad \ \liminf_{N\rightarrow \infty}  P\left(\theta\in \mathcal{C}_{1-\alpha}(\theta) \right)\\
  &\geq \liminf_{N\rightarrow \infty} P\left(-z_{1-\alpha/2}+\epsilon \leq \frac{\widehat{\theta}_{\text{oracle}}-\theta}{\{\widehat{\mathcal{V}}_{\text{oracle}}\}^{1/2}}\leq z_{1-\alpha/2}-\epsilon\right)\\
  &\quad \quad \quad \quad - \limsup_{N\rightarrow \infty} P\left(\max\left\{\frac{\Delta_{L, N}}{\{\widehat{\mathcal{V}}_{\text{oracle}}\}^{1/2}}, \frac{\Delta_{U, N}}{\{\widehat{\mathcal{V}}_{\text{oracle}}\}^{1/2}}\right\}> \epsilon \right)\\
  &\geq \liminf_{N\rightarrow \infty} P\left(\frac{\widehat{\theta}_{\text{oracle}}-\theta}{\{\widehat{\mathcal{V}}_{\text{oracle}}\}^{1/2}}\leq z_{1-\alpha/2}-\epsilon \right)-\limsup_{N\rightarrow \infty}P\left(\frac{\widehat{\theta}_{\text{oracle}}-\theta}{\{\widehat{\mathcal{V}}_{\text{oracle}}\}^{1/2}}< -z_{1-\alpha/2}+\epsilon \right)\\
 &\quad \quad + \liminf_{N\rightarrow \infty} P\left(\max\left\{\frac{\Delta_{L, N}}{\{\widehat{\mathcal{V}}_{\text{oracle}}\}^{1/2}}, \frac{\Delta_{U, N}}{\{\widehat{\mathcal{V}}_{\text{oracle}}\}^{1/2}}\right\}\leq \epsilon \right)-1\\
 &= \liminf_{N\rightarrow \infty} P\left(\frac{\widehat{\theta}_{\text{oracle}}-\theta}{\{\text{Var}(\widehat{\theta}_{\text{oracle}})\}^{1/2}}\frac{\{\text{Var}(\widehat{\theta}_{\text{oracle}})\}^{1/2}}{\{\widehat{\mathcal{V}}_{\text{oracle}}\}^{1/2}}\leq z_{1-\alpha/2}-\epsilon \right)\\
 &\quad \quad -\limsup_{N\rightarrow \infty}P\left(\frac{\widehat{\theta}_{\text{oracle}}-\theta}{\{\text{Var}(\widehat{\theta}_{\text{oracle}})\}^{1/2}}\frac{\{\text{Var}(\widehat{\theta}_{\text{oracle}})\}^{1/2}}{\{\widehat{\mathcal{V}}_{\text{oracle}}\}^{1/2}}< -z_{1-\alpha/2}+\epsilon \right)\\
 &\quad \quad + \liminf_{N\rightarrow \infty} P\left(\max\left\{\frac{\Delta_{L, N}}{\{\widehat{\mathcal{V}}_{\text{oracle}}\}^{1/2}}, \frac{\Delta_{U, N}}{\{\widehat{\mathcal{V}}_{\text{oracle}}\}^{1/2}}\right\}\leq \epsilon \right)-1\\
 &\geq \Phi\left( z_{1-\alpha/2}-\epsilon\right)-\Phi\left(-z_{1-\alpha/2}+\epsilon\right)-\alpha^{\prime}.
\end{align*}
Setting $\epsilon\rightarrow 0$ and $\alpha^{\prime}\rightarrow 0$, by the continuity of $\Phi$, we have 
\begin{align*}
  \liminf_{N\rightarrow \infty} P\left(\theta\in \mathcal{C}_{1-\alpha}(\theta) \right)\geq \Phi\left(z_{1-\alpha/2}\right)-\Phi\left(-z_{1-\alpha/2}\right)=1-\alpha. 
\end{align*}
Also, by setting $\alpha^{\prime}\in (0, \alpha)$ as some fixed value and $\epsilon\rightarrow 0$, as well as replacing $\alpha$ in each $\mathcal{C}^{(m)}_{1-\alpha}(\theta)$ with $\alpha_{0}=\alpha-\alpha^{\prime}$, we have 
\begin{align*}
  \liminf_{N\rightarrow \infty} P\left(\theta\in \widetilde{\mathcal{C}}_{1-\alpha}(\theta) \right)\geq \Phi\left(z_{1-\alpha_{0}/2}\right)-\Phi\left(-z_{1-\alpha_{0}/2}\right)-\alpha^{\prime}=1-\alpha_{0}-\alpha^{\prime}=1-\alpha. 
\end{align*}\end{proof}

Let $\widehat{\mathcal{P}}=(\widehat{p}_{1},\dots,\widehat{p}_{N})=(\Psi(\widehat{\boldsymbol{\beta}}^{T}\mathbf{x}_{1}),\dots,\Psi(\widehat{\boldsymbol{\beta}}^{T}\mathbf{x}_{N}))$, and let $\widehat{\theta}$ and $\widehat{\mathcal{V}}$ denote the plug-in estimator and its corresponding plug-in variance estimator, obtained by replacing $\mathcal{P}$ in $\widehat{\theta}_{\text{oracle}}$ and $\widehat{\mathcal{V}}_{\text{oracle}}$ with $\widehat{\mathcal{P}}$, respectively.

\begin{lemma}\label{lemma: max beta dif parametric}
   Under Conditions 1c and 3, we have $\max_{m\in [M]}  \|\widehat{\mathcal{P}}^{(m)}-\widehat{\mathcal{P}} \|_{\infty}=O_{p}\Big(\sqrt{\frac{\log M}{N}}\Big)$ and $\max_{m\in \mathcal{M}(\alpha^{\prime})}   \|\widehat{\mathcal{P}}^{(m)}-\widehat{\mathcal{P}} \|_{\infty}=O_{p}\Big(\frac{1}{\sqrt{N}}\Big)$.
\end{lemma}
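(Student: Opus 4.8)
The plan is to reduce the claim about the regenerated propensity-score vectors to the corresponding claim about the regenerated coefficient vectors, and then handle the two maxima separately. Repeating the computation in (\ref{eqn: bound p dif by beta dif}) verbatim with $\widehat{\boldsymbol{\beta}}$ in place of the oracle $\boldsymbol{\beta}$ (this uses only the boundedness of $\Psi'$ by $C_{\Psi'}$ and of the covariate support by $C_{x}$ in Assumption~\ref{assum: regualrity of MLE}) gives, for every $m$,
\[
\left\|\widehat{\mathcal{P}}^{(m)}-\widehat{\mathcal{P}}\right\|_{\infty}\;\le\; p\,C_{\Psi'}\,C_{x}\,\left\|\widehat{\boldsymbol{\beta}}^{(m)}-\widehat{\boldsymbol{\beta}}\right\|_{\infty}.
\]
Hence it suffices to show $\max_{m\in[M]}\|\widehat{\boldsymbol{\beta}}^{(m)}-\widehat{\boldsymbol{\beta}}\|_{\infty}=O_{p}(\sqrt{\log M/N})$ and $\max_{m\in\mathcal{M}(\alpha')}\|\widehat{\boldsymbol{\beta}}^{(m)}-\widehat{\boldsymbol{\beta}}\|_{\infty}=O_{p}(1/\sqrt{N})$.

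The restricted bound is immediate from the definition of $\mathcal{M}(\alpha')$ in (\ref{eqn: definition of the restrictive set}): for every $m\in\mathcal{M}(\alpha')$ and every $k\in[p]$ one has $|\widehat{\beta}^{(m)}_{k}-\widehat{\beta}_{k}|\le 1.01\,z_{1-\alpha'/(2p)}\sqrt{\widehat{\sigma}^{2}_{kk}/N}$, so $\max_{m\in\mathcal{M}(\alpha')}\|\widehat{\boldsymbol{\beta}}^{(m)}-\widehat{\boldsymbol{\beta}}\|_{\infty}\le (1.01\,z_{1-\alpha'/(2p)}/\sqrt{N})\max_{k\in[p]}\sqrt{\widehat{\sigma}^{2}_{kk}}$. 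Since Assumption~\ref{assum: regualrity of MLE} yields $\widehat{\Omega}_{N}\xrightarrow{p}\Omega\succ 0$ and $p$ is fixed, each $\widehat{\sigma}^{2}_{kk}=O_{p}(1)$, which delivers the $O_{p}(1/\sqrt{N})$ rate and, through the displayed reduction, the corresponding claim for $\widehat{\mathcal{P}}$.

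For the unrestricted bound I would condition on $\mathbf{Z}$ (equivalently, on $\widehat{\boldsymbol{\beta}}$ and $\widehat{\Omega}_{N}$). Given $\mathbf{Z}$, the vectors $\widehat{\boldsymbol{\beta}}^{(m)}-\widehat{\boldsymbol{\beta}}$, $m\in[M]$, are i.i.d.\ $N(\mathbf 0,\widehat{\Omega}_{N}/N)$, so each coordinate $\widehat{\beta}^{(m)}_{k}-\widehat{\beta}_{k}$ is $N(0,\widehat{\sigma}^{2}_{kk}/N)$; a Gaussian tail bound and a union bound over the $M$ runs and the $p$ fixed coordinates give, for any $t>0$,
\[
P\Big(\max_{m\in[M]}\big\|\widehat{\boldsymbol{\beta}}^{(m)}-\widehat{\boldsymbol{\beta}}\big\|_{\infty}>t \,\Big|\, \mathbf{Z}\Big)\;\le\; 2pM\exp\!\Big(-\frac{Nt^{2}}{2\max_{k\in[p]}\widehat{\sigma}^{2}_{kk}}\Big).
\]
On the event $\{\|\widehat{\Omega}_{N}-\Omega\|_{2}<\lambda_{\min}(\Omega)/2\}$, which has probability tending to $1$ by Assumption~\ref{assum: regualrity of MLE}, one has $\max_{k}\widehat{\sigma}^{2}_{kk}\le 2\lambda_{\max}(\Omega)$; taking $t=C\sqrt{\log M/N}$ with $C$ large (and $M\ge 2$, the only regime in which the stated rate is non-vacuous) makes the right-hand side $\le 2p\,M^{1-c}\le 2p\,2^{1-c}$ with $c=C^{2}/(4\lambda_{\max}(\Omega))$, hence arbitrarily small uniformly in $M$. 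Integrating over $\mathbf{Z}$ after splitting on that high-probability event yields $\max_{m\in[M]}\|\widehat{\boldsymbol{\beta}}^{(m)}-\widehat{\boldsymbol{\beta}}\|_{\infty}=O_{p}(\sqrt{\log M/N})$, and the displayed Lipschitz reduction finishes the proof.

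The only mild subtlety—rather than a genuine obstacle—is that in the unrestricted case the conditional variance $\widehat{\sigma}^{2}_{kk}/N$ is itself random through $\widehat{\Omega}_{N}$, so one must first restrict to the high-probability event on which it is uniformly bounded before invoking the Gaussian maximal inequality; after that, everything is a routine union bound over the $M$ regeneration runs.
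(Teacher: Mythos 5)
Your proposal is correct and follows essentially the same route as the paper's proof: reduce to the coefficient vectors via the Lipschitz bound $\|\widehat{\mathcal{P}}^{(m)}-\widehat{\mathcal{P}}\|_{\infty}\le p\,C_{\Psi'}C_{x}\|\widehat{\boldsymbol{\beta}}^{(m)}-\widehat{\boldsymbol{\beta}}\|_{\infty}$, handle the unrestricted maximum by a Gaussian tail plus union bound over the $M$ runs and $p$ coordinates after conditioning on a high-probability event that uniformly bounds $\widehat{\sigma}^{2}_{kk}$ (the paper uses $\{\widehat{\sigma}^{2}_{kk}\le 2\max_k\sigma^{2}_{kk}\}$ rather than your spectral-norm event, an immaterial difference), and read the restricted $O_{p}(1/\sqrt{N})$ bound directly off the definition of $\mathcal{M}(\alpha')$ together with $\widehat{\Omega}_{N}\xrightarrow{p}\Omega$.
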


\vspace{-1cm}

\begin{proof}: Recall that $\widehat{\beta}_{k}^{(m)}\overset{\text{iid}}{\sim} N(\widehat{\beta}_{k}, \widehat{\sigma}^{2}_{kk}/N)$, where $\widehat{\sigma}^{2}_{kk}\xrightarrow{p} \sigma^{2}_{kk}>0$ for all $k\in [d]$. Let $\sigma^{2}_{*}=2\max_{k \in [d]}\sigma^{2}_{kk}>0$. Define the event $A_{8}=\{\max_{k \in [d]}\widehat{\sigma}^{2}_{kk}\leq \sigma^{2}_{*}\}$, where we have $\lim_{N\rightarrow \infty}P(A_{8})=1$. Then, for any $t>0$, we have 
\begin{align*}
    P\left(\max_{m\in [M]} \left \|\widehat{\boldsymbol{\beta}}^{(m)}-\widehat{\boldsymbol{\beta}}\right \|_{\infty}\geq t \mid A_{8} \right)&=P\left(\max_{m\in [M], k\in [d]} \left |\widehat{\beta}^{(m)}_{k}-\widehat{\beta}_{k}\right |\geq t \mid A_{8} \right)\\
    &\leq \sum_{m=1}^{M}\sum_{k=1}^{d}P\left(\left |\widehat{\beta}^{(m)}_{k}-\widehat{\beta}_{k}\right |\geq t \mid A_{8} \right)\\
    &\leq  2Md\exp\left(-\frac{N t^{2}}{2\sigma^{2}_{*}}   \right) \quad \text{(by the Gaussian tail bound)}.
\end{align*}
Therefore, for any $t>0$, we have 
\begin{align*}
   &\quad \ P\left(\max_{m\in [M]} \left \|\widehat{\boldsymbol{\beta}}^{(m)}-\widehat{\boldsymbol{\beta}}\right \|_{\infty}\geq t \right)\\
   &= P\left(A_{8}\right) P\left(\max_{m\in [M]} \left \|\widehat{\boldsymbol{\beta}}^{(m)}-\widehat{\boldsymbol{\beta}}\right \|_{\infty}\geq t \mid A_{8} \right)+P\left(A_{8}^{c}\right) P\left(\max_{m\in [M]} \left \|\widehat{\boldsymbol{\beta}}^{(m)}-\widehat{\boldsymbol{\beta}}\right \|_{\infty}\geq t \mid A_{8}^{c} \right)\\
   &\leq P\left(\max_{m\in [M]} \left \|\widehat{\boldsymbol{\beta}}^{(m)}-\widehat{\boldsymbol{\beta}}\right \|_{\infty}\geq t \mid A_{8} \right)+P\left(A_{8}^{c}\right). 
\end{align*}
Setting $t=C\sqrt{\frac{\log (2Md)}{N}}$, we have 
\begin{equation*}
    P\left(\max_{m\in [M]} \left \|\widehat{\boldsymbol{\beta}}^{(m)}-\widehat{\boldsymbol{\beta}}\right \|_{\infty}\geq C\sqrt{\frac{\log (2Md)}{N}} \right) \leq \left(2Md \right)^{1-\frac{C^{2}}{2\sigma_{*}^{2}}}+P\left(A_{8}^{c}\right). 
\end{equation*}
For any $\epsilon>0$, by choosing sufficiently large $C>0$, we have $\left(2Md \right)^{1-\frac{C^{2}}{2\sigma_{*}^{2}}}\leq \epsilon/2$ for any $M$ and $P\left(A_{8}^{c}\right)\leq \epsilon/2$ for any $M$ and for any sufficiently large $N$. That is, for any $\epsilon$, there exists $C_{\epsilon}>0$, such that 
    \begin{equation*}
    P\left(\max_{m\in [M]} \left \|\widehat{\boldsymbol{\beta}}^{(m)}-\widehat{\boldsymbol{\beta}}\right \|_{\infty}\geq C_{\epsilon}\sqrt{\frac{\log (2Md)}{N}} \right) \leq \epsilon \text{ for any sufficiently large $N$ and $M$}. 
\end{equation*}
That is, we have 
\begin{equation*}
    \max_{m\in [M]} \left \|\widehat{\boldsymbol{\beta}}^{(m)}-\widehat{\boldsymbol{\beta}}\right \|_{\infty}=O_{p}\left(\sqrt{\frac{\log (2Md)}{N}}\right)=O_{p}\left(\sqrt{\frac{\log M}{N}}\right).
\end{equation*}
Therefore, we have 
\begin{equation*}
   \max_{m\in [M]} \left \|\widehat{\mathcal{P}}^{(m)}-\widehat{\mathcal{P}}\right \|_{\infty}\leq C_{\Psi^{\prime} }C_{x} \max_{m\in [M]} \left \|\widehat{\boldsymbol{\beta}}^{(m)}-\widehat{\boldsymbol{\beta}}\right \|_{\infty}=O_{p}\left(\sqrt{\frac{\log M}{N}}\right).
\end{equation*}
Also, when $A_{8}$ happens, for any $M$ and for any $m\in \mathcal{M}(\alpha^{\prime})$, we have 
\begin{equation*}
    \left \|\widehat{\boldsymbol{\beta}}^{(m)}-\widehat{\boldsymbol{\beta}}\right \|_{\infty}\leq \left \|\widehat{\boldsymbol{\beta}}^{(m)}-\widehat{\boldsymbol{\beta}}\right \|_{1}\leq 1.01\cdot z_{1-\frac{\alpha^{\prime}}{2d}}\cdot \frac{\sum_{k=1}^{d}\widehat{\sigma}_{kk}}{\sqrt{N}}\leq 1.01\cdot z_{1-\frac{\alpha^{\prime}}{2d}}\cdot \frac{d \cdot\sigma_{*}}{\sqrt{N}},
\end{equation*}
which implies that 
\begin{equation*}
    \max_{m\in \mathcal{M}(\alpha^{\prime})}\left \|\widehat{\boldsymbol{\beta}}^{(m)}-\widehat{\boldsymbol{\beta}}\right \|_{\infty}\leq 1.01\cdot z_{1-\frac{\alpha^{\prime}}{2d}}\cdot \frac{d \cdot \sigma_{*}}{\sqrt{N}} \text{ for any $M$}.
\end{equation*}
Since $\lim_{N\rightarrow \infty}P(A_{8})=1$, we have 
\begin{equation*}
    \max_{m\in \mathcal{M}(\alpha^{\prime})}\left \|\widehat{\boldsymbol{\beta}}^{(m)}-\widehat{\boldsymbol{\beta}}\right \|_{\infty}=O_{p}\left(\frac{1}{\sqrt{N}}\right).
\end{equation*}
Therefore, we have 
\begin{equation*}
   \max_{m\in \mathcal{M}(\alpha^{\prime}) } \left \|\widehat{\mathcal{P}}^{(m)}-\widehat{\mathcal{P}}\right \|_{\infty}\leq C_{\Psi^{\prime} }C_{x} \max_{m\in \mathcal{M}(\alpha^{\prime})} \left \|\widehat{\boldsymbol{\beta}}^{(m)}-\widehat{\boldsymbol{\beta}}\right \|_{\infty}=O_{p}\left(\frac{1}{\sqrt{N}}\right).
\end{equation*}\end{proof}

\vspace{-0.5cm}

\begin{proof}[Proof of Theorem~4:] Following arguments similar to those in (\ref{eqn: bound p dif by beta dif}), we have $\|\widehat{\mathcal{P}}-\mathcal{P} \|_{\infty}\leq d \cdot C_{\Psi^{\prime}} C_{x} \|\widehat{\boldsymbol{\beta}}-\boldsymbol{\beta} \|_{\infty}$. By the consistency of $\widehat{\boldsymbol{\beta}}$, as well as the strict positivity assumption stated in Condition 1a, we have $\lim_{N\rightarrow\infty}P\big(\widehat{p}_{i}\in[\delta/2,1-\delta/2]\big)=1$. Then, under Conditions 1a, 1b, and \ref{cond: Lipschitz infinity} (a weaker condition than Condition~2c), we have $\lim_{N\rightarrow \infty} P\big(|\widehat{\theta }-\widehat{\theta}_{\text{oracle}}|\leq L_{\theta}(\delta/2)\cdot \|\widehat{\mathcal{P}}-\mathcal{P}\|_{\infty}\big)=1$ and $\lim_{N\rightarrow \infty} P\big(N|\widehat{\mathcal{V}}-\widehat{\mathcal{V}}_{\text{oracle}}|\leq L_{v}(\delta/2)\cdot \|\widehat{\mathcal{P}}-\mathcal{P}\|_{\infty}\big)=1$, which implies that $\lim_{N\rightarrow \infty} P\big(|\widehat{\theta}-\widehat{\theta}_{\text{oracle}}|\leq L_{\theta}(\delta/2)\cdot d \cdot C_{\Psi^{\prime}} C_{x} \|\widehat{\boldsymbol{\beta}}-\boldsymbol{\beta} \|_{\infty}\big)=1$ and $\lim_{N\rightarrow \infty} P\big(N|\widehat{\mathcal{V}}-\widehat{\mathcal{V}}_{\text{oracle}}|\leq L_{v}(\delta/2)\cdot d \cdot C_{\Psi^{\prime}} C_{x} \|\widehat{\boldsymbol{\beta}}-\boldsymbol{\beta} \|_{\infty}\big)=1$. Together with the facts that $\|\widehat{\boldsymbol{\beta}}-\boldsymbol{\beta} \|_{\infty}=O_{p}\left(\frac{1}{\sqrt{N}}\right)$ and $\lim_{N\rightarrow \infty}P(A_{5})=1$, we have (i) $|\widehat{\theta}-\widehat{\theta}_{\text{oracle}}|=O_{p}\left(\frac{1}{\sqrt{N}}\right)$ and (ii) there exists $\widetilde{\nu}_{1}, \widetilde{\nu}_{2}>0$ such that $\liminf_{N\rightarrow\infty}P( \widetilde{\nu}_{1}^{2}/N\leq \widehat{\mathcal{V}}\leq  \widetilde{\nu}_{2}^{2}/N)=\liminf_{N\rightarrow\infty}P(\widetilde{\nu}_{1}/\sqrt{N}\leq \widehat{\mathcal{V}}^{1/2}\leq \widetilde{\nu}_{2}/\sqrt{N})=1$ (i.e., $\widehat{\mathcal{V}}$ is nondegenerate).

Let $\mu_{L}$ denote the Lebesgue measure on $\mathbbm{R}$. Since
\begin{align*}
  &\quad \  \left |\widehat{\theta}^{(m)}-z_{1-\alpha/2}\cdot \{\widehat{\mathcal{V}}^{(m)}\}^{1/2} -\theta  \right |\\
  &\leq \left |\widehat{\theta}^{(m)}-\widehat{\theta}     \right|+\left |\widehat{\theta} -\theta \right|+ z_{1-\alpha/2}\cdot \left | \{\widehat{\mathcal{V}}^{(m)}\}^{1/2}-\{\widehat{\mathcal{V}}\}^{1/2}  \right |+z_{1-\alpha/2}\cdot \{\widehat{\mathcal{V}}\}^{1/2} 
\end{align*}
and
\begin{align*}
  &\quad \  \left |\widehat{\theta}^{(m)}+z_{1-\alpha/2}\cdot \{\widehat{\mathcal{V}}^{(m)}\}^{1/2} -\theta  \right |\\
  &\leq \left |\widehat{\theta}^{(m)}-\widehat{\theta}     \right|+\left |\widehat{\theta} -\theta \right|+ z_{1-\alpha/2}\cdot \left | \{\widehat{\mathcal{V}}^{(m)}\}^{1/2}-\{\widehat{\mathcal{V}}\}^{1/2}  \right |+z_{1-\alpha/2}\cdot \{\widehat{\mathcal{V}}\}^{1/2},
\end{align*}
we have
\begin{align}\label{eqn: bound the length parametric unrestricted}
  &\quad \ \mu_{L}\big\{\mathcal{C}_{1-\alpha}(\theta)\big\}\nonumber\\
  &\leq \max_{m\in [M]}\left |\widehat{\theta}^{(m)}-z_{1-\alpha/2}\cdot \{\widehat{\mathcal{V}}^{(m)}\}^{1/2} -\theta   \right |+\max_{m\in [M]}\left |\widehat{\theta}^{(m)}+z_{1-\alpha/2}\cdot \{\widehat{\mathcal{V}}^{(m)}\}^{1/2} -\theta   \right |\nonumber\\
   &\leq 2\max_{m\in [M]} \left |\widehat{\theta}^{(m)}-\widehat{\theta}     \right|+2\left |\widehat{\theta} -\theta     \right|+ 2 z_{1-\alpha/2}\cdot \max_{m\in [M]} \left | \{\widehat{\mathcal{V}}^{(m)}\}^{1/2}-\{\widehat{\mathcal{V}}\}^{1/2} \right |+2z_{1-\alpha/2}\cdot \{\widehat{\mathcal{V}}\}^{1/2},
\end{align}
and, similarly, 
\begin{align}\label{eqn: bound the length parametric restricted}
 &\quad \ \mu_{L}\big\{\widetilde{\mathcal{C}}_{1-\alpha}(\theta)\big\} \nonumber\\
  &\leq \max_{m\in \mathcal{M}(\alpha^{\prime})}\left |\widehat{\theta}^{(m)}-z_{1-\alpha/2}\cdot \{\widehat{\mathcal{V}}^{(m)}\}^{1/2} -\theta   \right |+\max_{m\in \mathcal{M}(\alpha^{\prime})}\left |\widehat{\theta}^{(m)}+z_{1-\alpha/2}\cdot \{\widehat{\mathcal{V}}^{(m)}\}^{1/2} -\theta   \right |\nonumber\\
   &\leq 2\max_{m\in \mathcal{M}(\alpha^{\prime})} \left |\widehat{\theta}^{(m)}-\widehat{\theta}     \right|+2\left |\widehat{\theta} -\theta     \right|+ 2 z_{1-\alpha/2}\cdot \max_{m\in \mathcal{M}(\alpha^{\prime})} \left | \{\widehat{\mathcal{V}}^{(m)}\}^{1/2}-\{\widehat{\mathcal{V}}\}^{1/2} \right |\nonumber \\
   &\quad \quad \quad \quad \quad \quad \quad \quad \quad \quad \quad \quad \quad \quad \quad \quad \quad \quad +2z_{1-\alpha/2}\cdot \{\widehat{\mathcal{V}}\}^{1/2}.
\end{align}
For the term $\max_{m\in [M]} \left |\widehat{\theta}^{(m)}-\widehat{\theta} \right|$ and $\max_{m\in \mathcal{M}(\alpha^{\prime})} \left |\widehat{\theta}^{(m)}-\widehat{\theta} \right|$, define the events
\begin{align*}
      & A_{9}=\left\{ \max_{m\in [M]}\left |\widehat{\theta}^{(m)}-\widehat{\theta}     \right|
    \leq L_{\theta}(\delta^{\prime})\times  \max_{m\in [M]} \left \| \widehat{\mathcal{P}}^{(m)}-\widehat{\mathcal{P}} \right \|_{\infty}\right\}, \\
 &A_{10}= \left\{\max_{m\in \mathcal{M}(\alpha^{\prime})}\left |\widehat{\theta}^{(m)}-\widehat{\theta}     \right|
    \leq L_{\theta}(\delta^{\prime})\times \max_{m\in \mathcal{M}(\alpha^{\prime})} \left \| \widehat{\mathcal{P}}^{(m)}-\widehat{\mathcal{P}} \right \|_{\infty}\right\},
\end{align*}
where the regularization constant $\delta^{\prime}$ mentioned in Section 3.3 is chosen sufficiently small, for example, such that $\delta^{\prime}<\delta/2$. Then, we have $P\big(\widehat{p}_{i}^{(m)}\in[\delta^{\prime},1-\delta^{\prime}]\big)=1$ and $\lim_{N\rightarrow\infty}P\big(\widehat{p}_{i}\in[\delta^{\prime},1-\delta^{\prime}]\big)\geq \lim_{N\rightarrow\infty}P\big(\widehat{p}_{i}\in[\delta/2,1-\delta/2]\big)=1$. Therefore, under Conditions 1a, 1b, and 2c, we have $\lim_{N\rightarrow\infty}P(A_{9})=1$ and $\lim_{N\rightarrow\infty}P(A_{10})=1$. Combining these results with Lemma~\ref{lemma: max beta dif parametric}, we have $\max_{m\in [M]}\left |\widehat{\theta}^{(m)}-\widehat{\theta}\right|=O_{p}\Big(\sqrt{\frac{\log M}{N}}\Big)$ and $\max_{m\in \mathcal{M}(\alpha^{\prime})}\left |\widehat{\theta}^{(m)}-\widehat{\theta}\right|=O_{p}\left(\frac{1}{\sqrt{N}}\right)$. 

Then, we define the event
\begin{equation*}
    A_{11}=\big\{\widetilde{\nu}_{1}/\sqrt{N}\leq \widehat{\mathcal{V}}^{1/2}\leq \widetilde{\nu}_{2}/\sqrt{N}\big\},
\end{equation*}
for which we have $\lim_{N\rightarrow \infty}P(A_{11})=1$. Under Condition~\ref{cond: Lipschitz}, when the event $A_{11}$ happens, for any $m$, we have 
\begin{align*}
    \left|\{\widehat{\mathcal{V}}^{(m)}\}^{1/2}-\widehat{\mathcal{V}}^{1/2}\right|&=\frac{\left|\widehat{\mathcal{V}}^{(m)}-\widehat{\mathcal{V}}\right|}{\{\widehat{\mathcal{V}}^{(m)}\}^{1/2}+\widehat{\mathcal{V}}^{1/2}}\\
    &\leq \frac{\sqrt{N}}{\widetilde{\nu}_{1}}\left|\widehat{\mathcal{V}}^{(m)}-\widehat{\mathcal{V}}\right|\\
    &\leq \frac{L_{v}(\delta^{\prime})}{\widetilde{\nu}_{1}}\frac{1}{\sqrt{N}}\left \| \widehat{\mathcal{P}}^{(m)}-\widehat{\mathcal{P}} \right \|_{\infty},
\end{align*}
which implies that
\begin{align*}
  \max_{m\in [M]}  \left|\{\widehat{\mathcal{V}}^{(m)}\}^{1/2}-\widehat{\mathcal{V}}^{1/2}\right|&\leq \frac{L_{v}(\delta^{\prime})}{\widetilde{\nu}_{1}}\frac{1}{\sqrt{N}} \max_{m\in [M]}\left \| \widehat{\mathcal{P}}^{(m)}-\widehat{\mathcal{P}} \right \|_{\infty},\\
   \max_{m\in \mathcal{M}(\alpha^{\prime})}  \left|\{\widehat{\mathcal{V}}^{(m)}\}^{1/2}-\widehat{\mathcal{V}}^{1/2}\right|&\leq \frac{L_{v}(\delta^{\prime})}{\widetilde{\nu}_{1}}
   \frac{1}{\sqrt{N}} \max_{m\in \mathcal{M}(\alpha^{\prime})}\left \| \widehat{\mathcal{P}}^{(m)}-\widehat{\mathcal{P}} \right \|_{\infty}.
\end{align*}
Therefore, by Lemma~\ref{lemma: max beta dif parametric}, we have $\max_{m\in [M]}  \left|\{\widehat{\mathcal{V}}^{(m)}\}^{1/2}-\widehat{\mathcal{V}}^{1/2}\right|=O_{p}\Big(\frac{\sqrt{\log M}}{N}\Big)$ and $\max_{m\in \mathcal{M}(\alpha^{\prime})}  \left|\{\widehat{\mathcal{V}}^{(m)}\}^{1/2}-\widehat{\mathcal{V}}^{1/2}\right|=O_{p}\Big(\frac{1}{N}\Big)$. 

Meanwhile, because $|\widehat{\theta}-\widehat{\theta}_{\text{oracle}}|=O_{p}\Big(\frac{1}{\sqrt{N}}\Big)$, $|\widehat{\theta}_{\text{oracle}}-\theta|=O_{p}\Big(\frac{1}{\sqrt{N}}\Big)$, and $|\widehat{\theta}-\theta|\leq |\widehat{\theta}_{\text{oracle}}-\theta|+|\widehat{\theta}-\widehat{\theta}_{\text{oracle}}|$, we have $|\widehat{\theta}-\theta|=O_{p}\Big(\frac{1}{\sqrt{N}}\Big)$. Also, in the earlier part of the proof, we have shown that $\widehat{\mathcal{V}}^{1/2}=O_{p}\Big(\frac{1}{\sqrt{N}}\Big)$. Combining these facts with $\max_{m\in [M]}  \left|\widehat{\theta}^{(m)}-\widehat{\theta}\right|=O_{p}\Big(\sqrt{\frac{\log M}{N}}\Big)$, $\max_{m\in [M]}  \left|\{\widehat{\mathcal{V}}^{(m)}\}^{1/2}-\widehat{\mathcal{V}}^{1/2}\right|=O_{p}\Big(\frac{\sqrt{\log M}}{N}\Big)$, $\max_{m\in \mathcal{M}(\alpha^{\prime})}  \left|\widehat{\theta}^{(m)}-\widehat{\theta}\right|=O_{p}\Big(\frac{1}{\sqrt{N}}\Big)$, and $\max_{m\in \mathcal{M}(\alpha^{\prime})}  \left|\{\widehat{\mathcal{V}}^{(m)}\}^{1/2}-\widehat{\mathcal{V}}^{1/2}\right|=O_{p}\Big(\frac{1}{N}\Big)$, as well as (\ref{eqn: bound the length parametric unrestricted}) and (\ref{eqn: bound the length parametric restricted}), we have shown that $\mu_{L}\big\{\mathcal{C}_{1-\alpha}(\theta)\big\}=O_{p}\left(\sqrt{\frac{\log M}{N}}\right)$ and $\mu_{L}\big\{\widetilde{\mathcal{C}}_{1-\alpha}(\theta)\big\}=O_{p}\Big(\frac{1}{\sqrt{N}}\Big)$. That is, for any $\epsilon>0$, there exists some $C_{\epsilon}>0$, such that
\begin{equation*}
  \liminf_{N\rightarrow \infty} \lim_{M\rightarrow \infty} P\left(\mu_{L}\big\{\mathcal{C}_{1-\alpha}(\theta)\big\}\leq C_{\epsilon}\cdot \sqrt{\frac{\log M}{N}} \right)\geq 1-\epsilon 
\end{equation*}
and
\begin{equation*}
   \liminf_{N\rightarrow \infty} \lim_{M\rightarrow \infty}  P\left(\mu_{L}\big\{\widetilde{\mathcal{C}}_{1-\alpha}(\theta)\big\}\leq C_{\epsilon}\cdot \frac{1}{\sqrt{N}} \right)\geq 1-\epsilon.
\end{equation*}\end{proof}

\subsection*{A.4: Proofs for Nonparametric Propensity Score Propagation}

Theorem~\ref{thm: approximation nonparametric full} below is the detailed version of Theorem~5 in the main text. 

\begin{theorem}[The Detailed Version of Theorem 5]\label{thm: approximation nonparametric full}
Let $\alpha\in (0,0.5)$ be a prespecified significance level. Consider the ignorability assumption and Conditions 1, 2, 4, and 5 in the main text. Define $\varphi(t)=t\exp(t)$ for $t\geq 0$, and let $\varphi^{-1}$ denote its inverse function. Let $\lfloor x \rfloor$ denote the largest integer no greater than $x$. For any $\alpha_{*}\in (0,\alpha)$ and sufficiently large $N$ and $M$, define $K_{N,M}=\left\lfloor\min\left\{\frac{\rho_{\alpha_{*}, N}M}{2\varphi^{-1}(\rho_{\alpha_{*}, N}M\sqrt N/2)},\, \rho_{\alpha_{*}, N}\cdot 2^{N-1}\right\}\right\rfloor\geq 1$ and $\varepsilon_{N,M}=L^{\prime}_{\theta}(\delta^{\prime}) L_{s,N}\left\{\frac{\rho_{\alpha_{*}, N}/K_{N,M}+3\cdot 2^{-N}}{(\gamma_{\alpha_{*}, N}-2^{-N})^2}+\frac{1}{N}\right\}$. Then $\lim_{N\rightarrow \infty} \lim_{M\to\infty}\frac{\varepsilon_{N,M}}{N^{-1/2}}=0$, and 
\begin{equation*} \liminf_{N\rightarrow \infty}\lim_{M\rightarrow \infty} P\left(\min_{m \in [M]} \left|\widehat{\theta}^{(m)}-\widehat{\theta}_{\text{oracle}}\right|\leq  \varepsilon_{N, M}\right)\geq 1-\alpha_{*}. \end{equation*}
\end{theorem}

As a preliminary step toward proving Theorem 5 (i.e., Theorem~\ref{thm: approximation nonparametric full}), we recall Lemma~\ref{lem: edge lower bound}, which follows directly from the Poincaré inequality on the hypercube; see Chapter~2.3 of \citet{o2014analysis} or \citet{Eldan2025Isoperimetric}.

\begin{lemma}[Edge-Isoperimetric Inequality for the Hypercube]\label{lem: edge lower bound}
   Let $\mu_{s}$ be the uniform measure over $\{0,1\}^{N}$. That is, we have $\mu_{s}(\mathbf{S})=2^{-N}$ for any $\mathbf{S}\in \{0,1\}^{N}$. For any given $\mathcal{S} \subset \{0,1\}^{N}$, let $\partial_e \mathcal{S}$ be its edge boundary:
   \begin{equation*}
     \partial_e \mathcal{S} :=\left \{\left\{\mathbf{S}_{1},\mathbf{S}_{2}\right \}: \mathbf{S}_{1},\mathbf{S}_{2}\in \{0,1\}^{N}, d_{H}(\mathbf{S}_{1}, \mathbf{S}_{2})=1, \mathbf{S}_{1}\in \mathcal{S}, \mathbf{S}_{2}\notin \mathcal{S} \right \}.
   \end{equation*}
Then, we have 
\begin{equation*}
\frac{|\partial_e \mathcal{S}|}{2^N} \geq \mu_{s}(\mathcal{S})\left\{1-\mu_{s}(\mathcal{S})\right\}.
\end{equation*}
\end{lemma}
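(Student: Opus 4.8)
The plan is to reduce Lemma~\ref{lem: edge lower bound} to the Fourier–analytic Poincar\'e inequality on the Boolean cube (equivalently, to tensorization of variance), exactly along the lines of Chapter~2.3 of \citet{o2014analysis}. First I would set $f=\mathbbm{1}_{\mathcal{S}}\colon\{0,1\}^{N}\to\{0,1\}$ and write $p:=\mu_{s}(\mathcal{S})=\mathbb{E}_{\mu_{s}}[f]$, so that $\mathrm{Var}_{\mu_{s}}(f)=p(1-p)$. For each coordinate $j\in[N]$, let $\mathbf{S}^{\oplus j}$ denote the point of $\{0,1\}^{N}$ that agrees with $\mathbf{S}$ in every coordinate except the $j$-th (so $d_{H}(\mathbf{S},\mathbf{S}^{\oplus j})=1$), and let $I_{j}(f):=\Pr_{\mathbf{S}\sim\mu_{s}}\big(f(\mathbf{S})\neq f(\mathbf{S}^{\oplus j})\big)$ be the (combinatorial) influence of coordinate $j$. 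Because $f$ is $\{0,1\}$-valued, $\big(f(\mathbf{S})-f(\mathbf{S}^{\oplus j})\big)^{2}=\mathbbm{1}\{f(\mathbf{S})\neq f(\mathbf{S}^{\oplus j})\}$, hence $\mathbb{E}_{\mu_{s}}\big[(f(\mathbf{S})-f(\mathbf{S}^{\oplus j}))^{2}\big]=I_{j}(f)$.

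The only genuinely combinatorial step is the identity $|\partial_{e}\mathcal{S}|=2^{\,N-1}\sum_{j=1}^{N}I_{j}(f)$, which I would prove by double counting. Fix a direction $j$. The number of points $\mathbf{S}\in\{0,1\}^{N}$ with $f(\mathbf{S})\neq f(\mathbf{S}^{\oplus j})$ equals $2^{N}I_{j}(f)$; on the other hand, each direction-$j$ edge $\{\mathbf{S},\mathbf{S}^{\oplus j}\}$ that is cut by $\mathcal{S}$ (i.e.\ has exactly one endpoint in $\mathcal{S}$) contributes precisely $2$ to this count, once for each of its two endpoints, while uncut edges contribute $0$. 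Hence the number of cut direction-$j$ edges is $2^{\,N-1}I_{j}(f)$; summing over $j$, and using that the orientation convention in the statement (requiring $\mathbf{S}_{1}\in\mathcal{S}$, $\mathbf{S}_{2}\notin\mathcal{S}$) makes $\partial_{e}\mathcal{S}$ list each cut edge exactly once, gives the claimed identity.

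Finally I would invoke the Poincar\'e inequality for the uniform measure on $\{0,1\}^{N}$: for every $g\colon\{0,1\}^{N}\to\mathbb{R}$, $\sum_{j=1}^{N}\mathbb{E}_{\mu_{s}}\big[(g(\mathbf{S})-g(\mathbf{S}^{\oplus j}))^{2}\big]\geq 4\,\mathrm{Var}_{\mu_{s}}(g)$. This is immediate from the Walsh–Fourier expansion $g=\sum_{T\subseteq[N]}\widehat{g}(T)\chi_{T}$, since the left-hand side equals $4\sum_{T}|T|\,\widehat{g}(T)^{2}\geq 4\sum_{T\neq\emptyset}\widehat{g}(T)^{2}=4\,\mathrm{Var}_{\mu_{s}}(g)$ (one could instead cite Efron--Stein tensorization of variance, which is the route in \citet{o2014analysis} and \citet{eldan2022isoperimetric}). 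Applying this with $g=f$ gives $\sum_{j=1}^{N}I_{j}(f)\geq 4p(1-p)$, and combining with the identity above yields $|\partial_{e}\mathcal{S}|/2^{N}=\tfrac12\sum_{j}I_{j}(f)\geq 2p(1-p)\geq p(1-p)=\mu_{s}(\mathcal{S})\{1-\mu_{s}(\mathcal{S})\}$, which is in fact a factor of two stronger than claimed. There is no real obstacle in this argument; the only points requiring care are the constant bookkeeping (the factor $2^{N-1}$ between cut edges and disagreeing endpoints, and the sharp constant $4$ in the Poincar\'e inequality, which together yield the stated bound with room to spare), and being explicit that the $\{0,1\}$-valuedness of $f$ is exactly what converts the squared discrete derivative into the disagreement indicator $\mathbbm{1}\{f(\mathbf{S})\neq f(\mathbf{S}^{\oplus j})\}$.
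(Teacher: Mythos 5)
Your proof is correct and follows exactly the route the paper indicates: the paper does not prove this lemma itself but cites it as a straightforward consequence of the hypercube Poincar\'e inequality (Chapter 2.3 of O'Donnell), and your write-up simply carries out that reduction in full, via the influence/edge double-counting identity and the Fourier form of the Poincar\'e inequality. Your constant bookkeeping is also right, and indeed yields the stronger bound $|\partial_e\mathcal{S}|/2^{N}\geq 2\,\mu_{s}(\mathcal{S})\{1-\mu_{s}(\mathcal{S})\}$, which implies the stated inequality with room to spare.
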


Then, we state and prove the following three-set isoperimetric inequality for the hypercube with the Hamming distance:

\begin{lemma}[A Three-Set Isoperimetric Inequality for the Hypercube]\label{lemma: 3-set}
Let $\mu_{s}$ be the uniform measure over $\{0,1\}^{N}$. Let $\mathcal{S}_1,\mathcal{S}_2,\mathcal{S}_3$ form a partition of $\{0,1\}^N$, and let $d_H(\mathcal{S}_1,\mathcal{S}_2)=\min\{d_{H}(\mathbf{S}_{1}, \mathbf{S}_{2}): \mathbf{S}_{1}\in \mathcal{S}_{1}, \mathbf{S}_{2}\in \mathcal{S}_{2} \}$ denote the Hamming distance between $\mathcal{S}_{1}$ and $\mathcal{S}_{2}$, where we have $d_H(\mathcal{S}_1,\mathcal{S}_2)\geq 1$. Then, we have
\begin{equation*}
    \mu_{s}(\mathcal{S}_3) \geq \frac{1}{N} \cdot\left\{ d_H(\mathcal{S}_1,\mathcal{S}_2)-1\right\} \cdot \mu_{s}(\mathcal{S}_1) \mu_{s}(\mathcal{S}_2).
\end{equation*}
\end{lemma}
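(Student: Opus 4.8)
The plan is to deduce the inequality from a shell decomposition of $\{0,1\}^{N}$ centered at $\mathcal{S}_{1}$, combined with the edge-isoperimetric bound of Lemma~\ref{lem: edge lower bound}. Write $d=d_{H}(\mathcal{S}_{1},\mathcal{S}_{2})\geq 1$. If $d=1$ (or if $\mathcal{S}_{1}$ or $\mathcal{S}_{2}$ is empty) the right-hand side is $0$ and there is nothing to prove, so assume $d\geq 2$ and $\mathcal{S}_{1},\mathcal{S}_{2}\neq\varnothing$. For $0\leq j\leq d-1$ set $T_{j}=\{x\in\{0,1\}^{N}:\,d_{H}(x,\mathcal{S}_{1})\leq j\}$, so $T_{0}=\mathcal{S}_{1}$, the sets are nested, and the ``shells'' $R_{j}:=T_{j}\setminus T_{j-1}$ for $1\leq j\leq d-1$ are pairwise disjoint and consist of the vertices at Hamming distance exactly $j$ from $\mathcal{S}_{1}$.

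The first step is to show $\bigcup_{j=1}^{d-1}R_{j}\subseteq\mathcal{S}_{3}$. A point $x\in R_{j}$ has $1\leq d_{H}(x,\mathcal{S}_{1})=j\leq d-1$, hence $x\notin\mathcal{S}_{1}$; and if $x\in\mathcal{S}_{2}$ then $d_{H}(x,\mathcal{S}_{1})\geq d_{H}(\mathcal{S}_{1},\mathcal{S}_{2})=d>j$, a contradiction, so $x\notin\mathcal{S}_{2}$. Since $\{\mathcal{S}_{1},\mathcal{S}_{2},\mathcal{S}_{3}\}$ partitions $\{0,1\}^{N}$, we get $x\in\mathcal{S}_{3}$, and therefore $\mu_{s}(\mathcal{S}_{3})\geq\sum_{j=1}^{d-1}\mu_{s}(R_{j})$. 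The second step lower-bounds each shell via the edge boundary of the preceding ball. If $\{u,v\}\in\partial_{e}T_{j-1}$ with $u\in T_{j-1}$ and $v\notin T_{j-1}$, then $d_{H}(v,\mathcal{S}_{1})\leq d_{H}(v,u)+d_{H}(u,\mathcal{S}_{1})\leq 1+(j-1)=j$ while $d_{H}(v,\mathcal{S}_{1})\geq j$, so $v\in R_{j}$; since every vertex of the hypercube has degree $N$, at most $N$ edges of $\partial_{e}T_{j-1}$ can share a given exterior endpoint, so $|R_{j}|\geq |\partial_{e}T_{j-1}|/N$. Invoking Lemma~\ref{lem: edge lower bound} for $T_{j-1}$ then yields
\begin{equation*}
\mu_{s}(R_{j})\;\geq\;\frac{1}{N}\,\mu_{s}(T_{j-1})\bigl\{1-\mu_{s}(T_{j-1})\bigr\},\qquad 1\leq j\leq d-1.
\end{equation*}

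The third step replaces the product $\mu_{s}(T_{j-1})\{1-\mu_{s}(T_{j-1})\}$ by the uniform lower bound $\mu_{s}(\mathcal{S}_{1})\mu_{s}(\mathcal{S}_{2})$. Since $\mathcal{S}_{1}=T_{0}\subseteq T_{j-1}\subseteq T_{d-1}$ and $T_{d-1}$ is disjoint from $\mathcal{S}_{2}$ (every point of $\mathcal{S}_{2}$ lies at distance $\geq d$ from $\mathcal{S}_{1}$), we have $\mu_{s}(T_{j-1})\in[\mu_{s}(\mathcal{S}_{1}),\,1-\mu_{s}(\mathcal{S}_{2})]$. The map $t\mapsto t(1-t)$ is concave, so on this interval its minimum is attained at an endpoint; at $t=\mu_{s}(\mathcal{S}_{1})$ the value $\mu_{s}(\mathcal{S}_{1})\{1-\mu_{s}(\mathcal{S}_{1})\}\geq\mu_{s}(\mathcal{S}_{1})\mu_{s}(\mathcal{S}_{2})$ because $1-\mu_{s}(\mathcal{S}_{1})\geq\mu_{s}(\mathcal{S}_{2})$, and symmetrically at $t=1-\mu_{s}(\mathcal{S}_{2})$. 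Hence $\mu_{s}(R_{j})\geq\frac{1}{N}\mu_{s}(\mathcal{S}_{1})\mu_{s}(\mathcal{S}_{2})$ for each $j$, and summing over $j=1,\dots,d-1$ gives $\mu_{s}(\mathcal{S}_{3})\geq\frac{d-1}{N}\mu_{s}(\mathcal{S}_{1})\mu_{s}(\mathcal{S}_{2})$, as claimed.

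None of the steps is genuinely difficult, but the part that needs care is the combinatorial bookkeeping in the second step: one must verify that each boundary edge of $T_{j-1}$ has its exterior endpoint in precisely the next shell $R_{j}$, so that the degree-$N$ counting produces the correct factor $1/N$ rather than a cruder constant. A secondary point requiring attention is the concavity argument of the third step, which is what allows the factor $\mu_{s}(\mathcal{S}_{1})\mu_{s}(\mathcal{S}_{2})$ to be extracted uniformly over all intermediate balls $T_{j-1}$ regardless of how $\mu_{s}(T_{j-1})$ varies.
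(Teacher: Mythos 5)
Your proposal is correct and follows essentially the same route as the paper: expand $\mathcal{S}_1$ by Hamming balls up to radius $d-1$, show the expansion stays inside $\mathcal{S}_1\cup\mathcal{S}_3$, convert the edge-isoperimetric bound of Lemma~\ref{lem: edge lower bound} into a per-step vertex growth of at least $\frac{1}{N}\mu_s(\mathcal{S}_1)\mu_s(\mathcal{S}_2)$ via the degree-$N$ counting, and sum over the $d-1$ steps. The only cosmetic difference is that you obtain the uniform lower bound on $\mu_s(T_{j-1})\{1-\mu_s(T_{j-1})\}$ by concavity, whereas the paper simply multiplies the two bounds $\mu_s(T_{j-1})\geq\mu_s(\mathcal{S}_1)$ and $1-\mu_s(T_{j-1})\geq\mu_s(\mathcal{S}_2)$.
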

\vspace{-0.5cm}

\begin{proof}: Note that when $d_H(\mathcal{S}_1,\mathcal{S}_2)=1$, the desired result naturally holds. Therefore, throughout the proof, we consider the case when $d_H(\mathcal{S}_1,\mathcal{S}_2)\geq 2$.

For the given $\mathcal{S}_{1}\subset \{0,1\}^{N}$, let $B_{k}(\mathcal{S}_{1})=\{\mathbf{S}\in \{0,1\}^{N}: d_{H}(\mathbf{S}, \mathcal{S}_{1})\leq k\}$ denote the closed Hamming ball expansion of $\mathcal{S}_{1}$ with radius $k\geq 0$. Let $d_{\Delta}= d_H(\mathcal{S}_1,\mathcal{S}_2)\geq 2$. Then, we have $B_{k}(\mathcal{S}_{1})\cap \mathcal{S}_{2}=\emptyset$ for $k=0,1,\dots,d_{\Delta}-1$. Therefore, we have
\begin{equation}\label{eq:S1t-upper}
\mu_{s}\left (B_{k}(\mathcal{S}_{1})\right )\leq 1-\mu_{s}(\mathcal{S}_2) \text{ for any $k=0,1,\dots,d_{\Delta}-1$.}
\end{equation}
Because each vertex of $\{0,1\}^{N}$ has degree $N$, for any $\mathcal{S}\subset \{0,1\}^{N}$, the vertex external boundary 
\begin{equation*}
    \partial_v \mathcal{S}:=B_{1}(\mathcal{S})\setminus \mathcal{S}=\left\{\mathbf{S}\notin \mathcal{S}: d_{H}(\mathbf{S}, \mathcal{S})=1    \right\}
\end{equation*}
satisfies
\begin{equation}\label{eq:edge-to-vertex}
|\partial_v \mathcal{S}|\ \geq \frac{|\partial_e \mathcal{S}|}{N}.
\end{equation}
Therefore, for any $k=0,1,\dots, d_{\Delta}-2$, we obtain
\begin{align}\label{eqn: bounding dif in Bk}
  \mu_{s}\left(B_{k+1}(\mathcal{S}_{1})\right)-\mu_{s}\left(B_{k}(\mathcal{S}_{1})\right)&= \mu_{s}\left(\partial_v B_{k}(\mathcal{S}_{1})\right)\nonumber\\
  &=\frac{|\partial_v B_{k}(\mathcal{S}_{1})|}{2^N} \nonumber \\
&\geq \frac{1}{N}\cdot\frac{|\partial_e B_{k}(\mathcal{S}_{1})|}{2^N} \quad \text{(by (\ref{eq:edge-to-vertex}))}\nonumber \\
&\geq \frac{1}{N}\cdot \mu_{s}\left(B_{k}(\mathcal{S}_{1})\right)\left\{1-\mu_{s}\left(B_{k}(\mathcal{S}_{1})\right)\right\}  \text{ (by Lemma~\ref{lem: edge lower bound}). }
\end{align}
Meanwhile, note that $\mu_{s}(B_{k}(\mathcal{S}_{1}))\geq \mu_{s}(\mathcal{S}_{1})$ for any $k=0,1,\dots, d_{\Delta}-1$. Combining this fact with (\ref{eq:S1t-upper}), the inequality (\ref{eqn: bounding dif in Bk}) further implies: for any $k=0,1,\dots, d_{\Delta}-2$,
\begin{equation}\label{eq:step-growth}
\mu_{s}\left(B_{k+1}(\mathcal{S}_{1})\right)-\mu_{s}\left(B_{k}(\mathcal{S}_{1})\right)\geq \frac{1}{N}\cdot \mu_{s}(\mathcal{S}_1) \mu_{s}(\mathcal{S}_2).
\end{equation}
Summing \eqref{eq:step-growth} over $k=0,1,\dots,d_{\Delta}-2$, we have
\begin{align}\label{eq:r-growth}
\mu_{s}\left(B_{d_{\Delta}-1}(\mathcal{S}_{1})\right)-\mu_{s}\left(B_{0}(\mathcal{S}_{1})\right)&=\mu_{s}\left(B_{d_{\Delta}-1}(\mathcal{S}_{1})\right)-\mu_{s}\left(\mathcal{S}_{1}\right)\nonumber\\
&\geq \frac{1}{N}\cdot (d_{\Delta}-1)\cdot \mu_{s}(\mathcal{S}_1) \mu_{s}(\mathcal{S}_2).\nonumber 
\end{align}
Finally, since $ B_{d_{\Delta}-1}(\mathcal{S}_{1})\cap \mathcal{S}_{2}=\emptyset$, we have $B_{d_{\Delta}-1}(\mathcal{S}_{1})\setminus \mathcal{S}_{1} \subseteq \mathcal{S}_3$. Therefore, we have
\begin{equation*}
    \mu_{s}(\mathcal{S}_{3})\geq \mu_{s}\left(B_{d_{\Delta}-1}(\mathcal{S}_{1})\right)-\mu_{s}\left(\mathcal{S}_{1}\right)\geq \frac{1}{N}\cdot (d_{\Delta}-1)\cdot \mu_{s}(\mathcal{S}_1) \mu_{s}(\mathcal{S}_2).
\end{equation*}
That is, the desired result has been established. 
\end{proof}

Conditional on the observed data, including the observed design variables $\mathbf{Z}$, we define the mapping $\eta:\{0,1\}^{N}\rightarrow \mathbbm{R}$, where $\eta(\mathbf{S}^{(m)})=\widehat{\theta}^{(m)}$ for $\mathbf{S}^{(m)}\in \{0,1\}^{N}$. That is, the mapping $\eta=\eta_{1}\circ \eta_{2}$ is the composition of the following two mappings: (i) $\eta_{1}: \{0,1\}^{N}\rightarrow \mathbbm{R}^{N}$, where $\eta_{1}(\mathbf{S}^{(m)})=\widehat{\mathcal{P}}^{(m)}=(\widehat{p}_{1}^{(m)}, \dots, \widehat{p}_{N}^{(m)})$; and (ii) $\eta_{2}: \mathbbm{R}^{N}\rightarrow \mathbbm{R}$, where $\eta_{2}(\widehat{\mathcal{P}}^{(m)})=\widehat{\theta}^{(m)}$. Let $\mu_{s}$ denote the uniform measure over $\{0,1\}^{N}$. That is, we have $\mu_{s}(\mathbf{S})=2^{-N}$ for any $\mathbf{S}=(S_{1},\dots. S_{N})\in \{0,1\}^{N}$. Let $\widehat{\theta}_{\mathbf{S}}=\eta(\mathbf{S})$ denote the corresponding regenerated estimator under $\mathbf{S}\in \{0,1\}^{N}$. For any fixed $\alpha_{*}\in (0,\alpha)$, define the event $A_{\alpha_{*}}=\big \{P(\widehat{\theta}^{(m)}< L_{1-\alpha_{*}} \mid \mathbf{Z})\geq \gamma_{\alpha_{*}, N}\big\}\cap \big \{P(\widehat{\theta}^{(m)}> U_{1-\alpha_{*}} \mid \mathbf{Z})\geq \gamma_{\alpha_{*}, N}\big\}$.

\begin{lemma}\label{lem: partition construction}

Suppose that Condition~5 in the main text holds. Let \(K\) be an integer satisfying $1\le K\le \left\lfloor \rho_{\alpha_{*}, N}\cdot 2^{N-1}\right\rfloor$. Then, conditional on the observed data, there exists a partition $\{G_0, G_1, \dots, G_K, G_{K+1}\}$ of the $N$-dimensional hypercube $\{0,1\}^{N}$ with the following properties: (i) $\cup_{k=0}^{K+1} G_{k}=\{0,1\}^{N}$; (ii) $G_{i}\cap G_{j}=\emptyset$;  (iii) $\gamma_{\alpha_{*}, N}-2^{-N}
    \le
    \mu_s(G_0)=\mu_s(G_{K+1})
    \le
    \gamma_{\alpha_{*}, N} $; (iv) for each \(k=1,\dots, K\), we have $\frac{\rho_{\alpha_{*}, N}}{K}-2^{-N}
    \le
    \mu_s(G_k)
    \le
    \frac{\rho_{\alpha_{*}, N}}{K}+3\cdot 2^{-N}$; and (v) over the event $A_{\alpha_{*}}$, we have $\max_{\mathbf{S}\in G_0}\widehat{\theta}_{\mathbf{S}}<L_{1-\alpha_{*}}$ and $\min_{\mathbf{S}\in G_{K+1}}\widehat{\theta}_{\mathbf{S}}>U_{1-\alpha_{*}}$.
\end{lemma}

\vspace{-0.5cm}

\begin{proof}: All arguments in the construction are conditional on the observed data, including the observed design variables $\mathbf{Z}$. Let $\prec$ denote lexicographical order on \(\{0,1\}^{N} \). Then, we reorder the $2^{N}$ vertices $\mathbf{S}_{1}, \mathbf{S}_{2}, \dots, \mathbf{S}_{2^{N}}$ as $\mathbf{S}_{(1)}, \mathbf{S}_{(2)}, \dots, \mathbf{S}_{(2^{N})}$ so that $\eta (\mathbf{S}_{(1)})\leq \eta (\mathbf{S}_{(2)})\le\cdots\le \eta(\mathbf{S}_{(2^N)})$, with ties broken by \(\prec\). Let $n_0=\lfloor \gamma_{\alpha_{*}, N}\cdot 2^{N}\rfloor$. We define $G_0=\big\{\mathbf{S}_{(1)},\ldots,\mathbf{S}_{(n_0)}\big\}$ and $G_{K+1}=\big\{\mathbf{S}_{(2^N-n_0+1)},\ldots,\mathbf{S}_{(2^N)}\big\}$, and partition the remaining ordered vertices $\mathbf{S}_{(n_0+1)},\ldots,\mathbf{S}_{(2^{N}-n_0)}$ into \(K\) consecutive and non-overlapping blocks \(G_1,\ldots,G_K\) whose cardinalities differ by at most one. Because \(\mu_s\) is the uniform measure (i.e., each vertex has mass \(2^{-N}\)), we have 
\[
    \gamma_{\alpha_{*}, N}-2^{-N}
    \le
    \mu_s(G_0)=\mu_s(G_{K+1})
    \le
    \gamma_{\alpha_{*}, N} .
\]
Then, we have 
\[
   \sum_{k=1}^{K}\mu_{s}(G_{k})= 1-2\mu_s(G_0)
    =
    \rho_{\alpha_{*}, N}+2\{\gamma_{\alpha_{*}, N}-\mu_s(G_0)\}\in [\rho_{\alpha_{*}, N}, \, \rho_{\alpha_{*}, N}+2^{-N+1}].
\]
Since $\{\mu_{s}(G_{k})\}_{k=1}^{K}$ differ by at most $2^{-N}$, we have
\[
    \frac{\rho_{\alpha_{*}, N}}{K}-2^{-N}
    \le
    \mu_s(G_k)
    \le
    \frac{\rho_{\alpha_{*}, N}}{K}+3\cdot 2^{-N}, \quad k=1,\dots, K.
\]
Over the event \(A_{\alpha_{*}}\), we have
\[
    \mu_s(\{\mathbf{S}:\widehat{\theta}_{\mathbf{S}}<L_{1-\alpha_{*}}\})=P(\widehat{\theta}^{(m)}< L_{1-\alpha_{*}} \mid \mathbf{Z})\geq \gamma_{\alpha_{*}, N}\ge \mu_s(G_0),
\]
which implies that $\max_{\mathbf{S}\in G_0}\widehat{\theta}_{\mathbf{S}}<L_{1-\alpha_{*}}$. Similarly, we can show that $\min_{\mathbf{S}\in G_{K+1}}\widehat{\theta}_{\mathbf{S}}>U_{1-\alpha_{*}}$. 

Putting all the above results together, we have finished the construction. 
\end{proof}

Now, we are closer to proving Theorem~5 (i.e., Theorem~\ref{thm: approximation nonparametric full}). The key idea of the remaining proof is to apply Lemma~\ref{lemma: 3-set}, a three-set isoperimetric inequality on the hypercube, as well as the above construction for ordered partition of the hypercube, to adapt the isoperimetric argument developed in \citet{zheng2025perturbed} from the Euclidean space $\mathbbm{R}^{N}$ to the discrete hypercube $\{0,1\}^{N}$, which corresponds to the space of partition indicators $\mathbf{S}\in \{0,1\}^{N}$. To establish the final proof of Theorem 5, we will first state and prove Lemma~\ref{lem: bounds on min dif}.

Let \(K\) be an integer
satisfying $1\le K\le \left\lfloor \rho_{\alpha_{*}, N}\cdot 2^{N-1}\right\rfloor$. Let the event
\[
    A_{G}(K)
    =
    \bigcap_{k=1}^{K}
    \bigcup_{m=1}^M
    \big\{\mathbf{S}^{(m)}\in G_k\big\}
\]
denote the event that every middle block $\{G_{k}\}_{k=1}^{K}$ is hit by at least one regenerated partition indicator $\mathbf{S}^{(m)}\in \{0,1\}^{N}$. In addition, we define the event \begin{equation*}
    A_{\theta}
    =
    \left\{
    \max_{\mathbf{S}_1\neq \mathbf{S}_2}
    \frac{
        |\widehat\theta_{\mathbf{S}_1}-\widehat\theta_{\mathbf{S}_2}|
    }{
        N^{-1}d_H(\mathbf{S}_1,\mathbf{S}_2)
    }
    \le
    L'_\theta(\delta')L_{s,N}
    \right\}
\end{equation*}
and the event $A(\widehat\theta_{\rm oracle})=\big\{\widehat{\theta}_{\rm oracle}\in \mathcal{I}_{1-\alpha_{*}}(\widehat{\theta}_{\text{oracle}})\big\}$.

\begin{lemma}\label{lem: bounds on min dif}
Consider the blocks $G_{0}, G_{1}, \dots, G_{K}, G_{K+1}$ constructed in Lemma~\ref{lem: partition construction}. Over the event $A_{G}(K)\cap A_{\theta}\cap A_{\alpha_{*}}
    \cap A(\widehat\theta_{\rm oracle})$, we have
\[
    \min_{m\in[M]}
    \left|\widehat\theta^{(m)}-\widehat\theta_{\rm oracle}\right|
    \le L'_\theta(\delta')L_{s,N}
    \left\{
        \frac{\rho_{\alpha_{*}, N}/K+3\cdot 2^{-N}}
             {(\gamma_{\alpha_{*}, N}-2^{-N})^2}
        +
        \frac1N
    \right\}.
\]
\end{lemma}

\vspace{-0.5cm}
\begin{proof}: For each \(k\in\{1,\ldots,K\}\), we define $G_{k}^{-}=\bigcup_{j=0}^{k-1}G_j$ and $G_{k}^{+}=\bigcup_{j=k+1}^{K+1}G_j$. For each \(k\in\{1,\ldots,K\}\), note that
\[
    \mu_s(G_{k}^{-})\ge \mu_s(G_0)\ge \gamma_{\alpha_{*}, N}-2^{-N},
    \qquad
    \mu_s(G_{k}^{+})\ge \mu_s(G_{K+1})\ge \gamma_{\alpha_{*}, N}-2^{-N},
\]
and
\[
    \mu_s(G_k)\le \frac{\rho_{\alpha_{*}, N}}{K}+3\cdot 2^{-N}.
\]
Define $\Delta_k
    =
    \min_{\mathbf{S}\in G_{k}^{+}}\widehat{\theta}_{\mathbf{S}}
    -
    \max_{\mathbf{S}\in G_{k}^{-}}\widehat{\theta}_{\mathbf{S}}=\min_{\mathbf{S}\in G_{k}^{+}}\eta(\mathbf{S})
    -
    \max_{\mathbf{S}\in G_{k}^{-}}\eta(\mathbf{S})$. The blocks are consecutive in the \(\eta\)-ordering, so \(\Delta_k\ge0\). Over the event
\(A_{\theta}\), for every \(\mathbf{S}_{1}\in G_{k}^{-}\) and \(\mathbf{S}_{2}\in G_{k}^{+}\), we have
\[
  \Delta_{k}\leq  \left|\widehat{\theta}_{\mathbf{S}_{1}}-\widehat{\theta}_{\mathbf{S}_{2}}\right|
    \le
    L'_\theta(\delta')L_{s,N}\,N^{-1}d_H(\mathbf{S}_{1},\mathbf{S}_{2}),
\]
which implies that
\[
    d_H(G_{k}^{-},G_{k}^{+})
    \ge
    \frac{N\Delta_k}
         {L'_\theta(\delta')L_{s,N}} .
\]
Applying Lemma~\ref{lemma: 3-set} to the blocks $G_{k}^{-}, G_k, G_{k}^{+}$
gives
\[
    \mu_s(G_k)
    \ge
    \frac1N
    \left\{
        d_H(G_{k}^{-},G_{k}^{+})-1
    \right\}
    \mu_s(G_{k}^{-})\mu_s(G_{k}^{+}) .
\]
Combining the last two inequalities, we have
\[
    \mu_s(G_k)
    \ge
    \frac1N
    \left\{
        \frac{N\Delta_k}
             {L'_\theta(\delta')L_{s,N}}
        -1
    \right\}
    \mu_s(G_{k}^{-})\mu_s(G_{k}^{+}).
\]
That is, over the event $A_{\theta}$, we have
\[
    \Delta_k
    \le
    L'_\theta(\delta')L_{s,N}
    \left\{
        \frac{\mu_s(G_k)}{\mu_s(G_{k}^{-})\mu_s(G_{k}^{+})}
        +
        \frac1N
    \right\}.
\]
Using the bounds on \(\mu_s(G_k),\mu_s(G_{k}^{-}),\mu_s(G_{k}^{+})\) established at the beginning of the proof, over the event $A_{\theta}$, we have
\begin{equation}\label{eqn: upper bounds on delta_k}
        \Delta_k
    \le
    L'_\theta(\delta')L_{s,N}
    \left\{
        \frac{\rho_{\alpha_{*}, N}/K+3\cdot 2^{-N}}
             {(\gamma_{\alpha_{*}, N}-2^{-N})^2}
        +
        \frac1N
    \right\}.
\end{equation}
For each \(k=1,\dots, K\), define the interval $J_k=
    \left[
        \max_{\mathbf{S}\in G_{k}^{-}}\eta(\mathbf{S}),\,
        \min_{\mathbf{S}\in G_{k}^{+}}\eta(\mathbf{S})
    \right]$. Then, we have
\[
    \left[
        \max_{\mathbf{S}\in G_0}\eta(\mathbf{S}),\,
        \min_{\mathbf{S}\in G_{K+1}}\eta(\mathbf{S})
    \right]\subseteq \bigcup_{k=1}^{K}J_{k}.
\]
Over the event $A_{\alpha_{*}}$, by Lemma~\ref{lem: partition construction}, we have
\[
    \big[L_{1-\alpha_{*}}, U_{1-\alpha_{*}}\big]
    \subset
     \left[
        \max_{\mathbf{S}\in G_0}\eta(\mathbf{S}),\,
        \min_{\mathbf{S}\in G_{K+1}}\eta(\mathbf{S})
    \right] \subseteq \bigcup_{k=1}^{K}J_{k}.
\]
Therefore, over the event $A_{\alpha_{*}}\cap A(\widehat{\theta}_{\text{oracle}})$,
there exists \(k_{0}\in\{1,\ldots,K\}\) such that $\widehat\theta_{\rm oracle}\in J_{k_{0}}$. Also, over the event $A_{G}(K)$, there exists some \(m_{0}\), such that $\mathbf{S}^{(m_{0})}\in G_{k_{0}}$, which implies that $\eta(\mathbf{S}^{(m_{0})})\in J_{k_{0}}$.
Since the length of \(J_k\) equals \(\Delta_k\), by (\ref{eqn: upper bounds on delta_k}), we have: over the event $A_{G}(K)\cap A_{\theta}\cap A_{\alpha_{*}}
    \cap A(\widehat\theta_{\rm oracle})$,
\[
    \left|\widehat\theta^{(m_{0})}-\widehat\theta_{\rm oracle}\right|
    =
    \left|\eta(\mathbf{S}^{(m_{0})})-\widehat\theta_{\rm oracle}\right|\leq \Delta_k
    \le
    L'_\theta(\delta')L_{s,N}
    \left\{
        \frac{\rho_{\alpha_{*}, N}/K+3\cdot 2^{-N}}
             {(\gamma_{\alpha_{*}, N}-2^{-N})^2}
        +
        \frac1N
    \right\}.
\]
Therefore, over the event $A_{G}(K)\cap A_{\theta}\cap A_{\alpha_{*}}
    \cap A(\widehat\theta_{\rm oracle})$, we have
\begin{equation*}
     \min_{m\in[M]}
    \left|\widehat\theta^{(m)}-\widehat\theta_{\rm oracle}\right|
    \le L'_\theta(\delta')L_{s,N}
    \left\{
        \frac{\rho_{\alpha_{*}, N}/K+3\cdot 2^{-N}}
             {(\gamma_{\alpha_{*}, N}-2^{-N})^2}
        +
        \frac1N
    \right\}.
\end{equation*}
\end{proof}

Putting all the above lemmas and constructions together, we are ready to finish the proof of Theorem 5.

\begin{proof}[Proof of Theorem 5:] By Conditions 2c and 4, we have $\lim_{N\rightarrow \infty} P(A_{\theta})=1$. By Conditions 2a and 2b, the oracle confidence interval $\mathcal{C}_{1-\alpha_{*}}^{\text{oracle}}(\theta)$ has asymptotic coverage at least \(1-\alpha_{*}\), and therefore $\liminf_{N\to\infty}P(A(\widehat\theta_{\rm oracle}))\ge 1-\alpha_{*}$. By Condition~5, we have $\lim_{N\rightarrow \infty} P(A_{\alpha_{*}})=1$. Therefore, we have 
\begin{equation}\label{eqn: lower bound probability for three sets}
\liminf_{N\rightarrow \infty}P(A_{\theta} \cap A(\widehat\theta_{\rm oracle}) \cap A_{\alpha_{*}})\geq 1-\alpha_{*}. 
\end{equation}
For the \(K_{N, M}\) defined in Theorem 5, consider the ordered blocks (i.e., a partition of $\{0, 1\}^{N}$) constructed in Lemma~\ref{lem: partition construction}: $G_0,\ldots,G_{K_{N,M}+1}$. Let
\[
   A_{G}(K_{N, M})
    =
    \bigcap_{k=1}^{K_{N,M}}
    \bigcup_{m=1}^M
    \big\{\mathbf{S}^{(m)}\in G_k\big\}.
\]
Then, we can obtain that, 
\begin{align*}
  P\left(A_{G}(K_{N, M})\mid \mathbf{Z} \right) &=P\left(\cap_{k=1}^{K_{N, M}}\cup_{m=1}^{M}\left\{\mathbf{S}^{(m)}\in G_k \right\}\mid \mathbf{Z} \right)\\
    &=1-P\left(\cup_{k=1}^{K_{N, M}}\cap_{m=1}^{M}\left\{\mathbf{S}^{(m)}\notin G_k \right\}\mid \mathbf{Z}   \right)\\
    &\geq 1-\sum_{k=1}^{K_{N, M}}P\left(\cap_{m=1}^{M}\left\{\mathbf{S}^{(m)}\notin G_k\right\}\mid \mathbf{Z}  \right)\\
    &=1-\sum_{k=1}^{K_{N, M}}\left\{ P\left(\mathbf{S}^{(m)}\notin G_k \mid \mathbf{Z}  \right)\right\}^{M}\\
    &=1-\sum_{k=1}^{K_{N, M}}\left\{1-P\left(\mathbf{S}^{(m)}\in G_k \mid \mathbf{Z}  \right)  \right\}^{M}\\
   &= 1-\sum_{k=1}^{K_{N, M}}\left\{1-\mu_{s} (G_{k})   \right\}^{M}\\
   &\geq 1-\sum_{k=1}^{K_{N, M}}\exp\left\{ -M\cdot \mu_{s} (G_{k}) \right\}.
\end{align*}
By Lemma~\ref{lem: partition construction} and the fact that $K_{N,M}\le \rho_{\alpha_{*}, N}\cdot 2^{N-1}$, we have
\[
    \mu_s(G_k)
    \ge
    \frac{\rho_{\alpha_{*}, N}}{K_{N,M}}-2^{-N}\geq  \frac{\rho_{\alpha_{*}, N}}{2K_{N,M}}.
\]
Therefore, we have
\[
P\left(A_{G}(K_{N, M})\mid \mathbf{Z} \right)
    \ge 1-
    K_{N,M}\exp\left\{
        -\frac{M\rho_{\alpha_{*}, N}}{2K_{N,M}}
    \right\}.
\]
Since $K\exp\big\{ -\frac{M\rho_{\alpha_{*}, N}}{2K} \big\}$ is strictly monotonically increasing in $K$ (for any fixed $M>0$) and strictly monotonically decreasing in $M$ (for any fixed $K>0$), we have
\[
    K_{N,M}
    \le
    \frac{\rho_{\alpha_{*}, N}M}
         {2\varphi^{-1}(\rho_{\alpha_{*}, N}M\sqrt{N}/2)} \Rightarrow  K_{N,M}\exp\left\{
        -\frac{M\rho_{\alpha_{*}, N}}{2K_{N,M}}
    \right\} \leq \frac{1}{\sqrt{N}}.
\]
That is, for sufficiently large $M$, we have 
\[
P\left(A_{G}(K_{N, M})\mid \mathbf{Z} \right)
    \ge 1-\frac{1}{\sqrt{N}},
\]
which implies that 
\begin{equation*}
  P\left(A_{G}(K_{N, M})\right)=E_{\mathbf{Z}}\left\{ P\left(A_{G}(K_{N, M})\mid \mathbf{Z} \right) \right\}\geq  1-\frac{1}{\sqrt{N}}.
\end{equation*}
Therefore, by (\ref{eqn: lower bound probability for three sets}), we have
\begin{align*}
  &\quad \ \liminf_{N\rightarrow \infty}\lim_{M\rightarrow \infty} P\left(\min_{m \in [M]} \left|\widehat{\theta}^{(m)}-\widehat{\theta}_{\text{oracle}}\right|\leq   \varepsilon_{N, M}\right)\\
  &\geq \liminf_{N\rightarrow \infty}\lim_{M\rightarrow \infty}  P\left(A_{G}(K_{N, M})\cap A_{\theta} \cap A(\widehat\theta_{\rm oracle}) \cap A_{\alpha_{*}}  \right)\\
  &\geq \liminf_{N\rightarrow \infty}\lim_{M\rightarrow \infty} \left\{P\left(A_{G}(K_{N, M})\right)+P\left(A_{\theta} \cap A(\widehat\theta_{\rm oracle}) \cap A_{\alpha_{*}}  \right) -1\right\}\\
  &\geq \liminf_{N\rightarrow \infty}\lim_{M\rightarrow \infty} P\left(A_{G}(K_{N, M})\right)-\alpha_{*}\\
  &\geq \liminf_{N\rightarrow \infty}\left\{ 1-\frac{1}{\sqrt{N}}\right\}-\alpha_{*}=1-\alpha_{*}.
\end{align*}
It remains only to show that $ \lim_{N\to\infty}\lim_{M\rightarrow \infty}
   \frac{\varepsilon_{N,M}}{N^{-1/2}}=0$. Since
\begin{equation*}
    \lim_{M\rightarrow \infty}\frac{\rho_{\alpha_{*}, N}M}{2\varphi^{-1}(\rho_{\alpha_{*}, N}M\sqrt N/2)}=\infty,
\end{equation*}
we have
\begin{align*}
  \lim_{M\rightarrow \infty}  K_{N,M}&= \lim_{M\rightarrow \infty} \left\lfloor\min\left\{\frac{\rho_{\alpha_{*}, N}M}{2\varphi^{-1}(\rho_{\alpha_{*}, N}M\sqrt N/2)},\, \rho_{\alpha_{*}, N}\cdot 2^{N-1}\right\}\right\rfloor\\
  &= \left\lfloor \rho_{\alpha_{*}, N}\cdot 2^{N-1}\right\rfloor\\
  &\geq \rho_{\alpha_{*}, N}\cdot 2^{N-2}.
\end{align*}
Therefore, we have
    \begin{align*}
        \lim_{M\to\infty}\frac{\varepsilon_{N,M}}{N^{-1/2}}&=\lim_{M\rightarrow \infty} \sqrt{N}L'_\theta(\delta')L_{s,N}\left\{\frac{\rho_{\alpha_{*}, N}/K_{N,M}+3\cdot 2^{-N}}{(\gamma_{\alpha_{*}, N}-2^{-N})^2}+\frac{1}{N}\right\}\\
        &\leq   \sqrt{N} L'_\theta(\delta')L_{s,N}\left\{\frac{7\cdot 2^{-N}}{(\gamma_{\alpha_{*}, N}-2^{-N})^2}+\frac{1}{N}\right\}.
    \end{align*}
Since \(L_{s,N}\) is bounded under Condition~4 and
\(\sqrt{N}\cdot 2^{-N}=o(\gamma_{\alpha_{*}, N}^2)\), the right-hand side of the above inequality converges to zero as
\(N\to\infty\). This proves the desired result.
\end{proof}

\vspace{-0.5cm}

Next, we prove Theorems 6 and 7 in the main text.

\begin{proof}[Proof of Theorem~6:] Let $m^{*}\in [M]$ such that $\min_{m \in [M]} |\widehat{\theta}^{(m)}-\widehat{\theta}_{\text{oracle}}|=|\widehat{\theta}^{(m^{*})}-\widehat{\theta}_{\text{oracle}}|$. Therefore, applying Theorem~5, we have 
\begin{equation}\label{eqn: bound tail prob of min dif in tau m and tau oracle scaled by sqrtN}
       \liminf_{N\rightarrow \infty} \lim_{M\rightarrow \infty} P\left(\min_{m \in [M]} \sqrt{N}\left|\widehat{\theta}^{(m)}-\widehat{\theta}_{\text{oracle}}\right|\leq \epsilon \right)\geq 1-\alpha_{*} \text{  for any $\epsilon>0$}. 
\end{equation}
By Condition~6, there exist sequences $c_{N}$ and $\epsilon_N$ such that $c_{N}\to 0$ and $\epsilon_N\to 0$ as $N\to\infty$, and
$ P\bigl(N^{-1/2}\|\widehat{\mathcal{P}}^{(m^{*})}-\mathcal{P}\|_2 \leq \epsilon_N\bigr)\geq 1-c_{N}$. Let $A_{12}=\{N^{-1/2}\|\widehat{\mathcal{P}}^{(m^{*})}- \mathcal{P} \|_{2}\leq \epsilon_{N} \}$, $A_{13}=\big\{N|\widehat{\mathcal{V}}^{(m^{*})}-\widehat{\mathcal{V}}_{\text{oracle}}|\leq L^{\prime}_{v}(\delta^{\prime})\cdot N^{-1/2}\|\widehat{\mathcal{P}}^{(m^{*})}-\mathcal{P}\|_{2}\big\}$, and $A_{14}=\{\sqrt{N}|\widehat{\theta}^{(m^{*})}-\widehat{\theta}_{\text{oracle}}|\leq \epsilon\}$. Therefore, when $A_{5}\cap A_{12}\cap A_{13}$ happens, under $\delta^{\prime}<\delta$ (recall that $\delta^{\prime}$ is defined in Section 3.3 of the main text), we have 
\begin{align*}
    \left|\{\widehat{\mathcal{V}}^{(m^{*})}\}^{1/2}-\widehat{\mathcal{V}}_{\text{oracle}}^{1/2}\right|&=\frac{ \left|\widehat{\mathcal{V}}^{(m^{*})}-\widehat{\mathcal{V}}_{\text{oracle}}\right|}{\{\widehat{\mathcal{V}}^{(m^{*})}\}^{1/2}+\widehat{\mathcal{V}}_{\text{oracle}}^{1/2}}\\
    &\leq \frac{\sqrt{N}}{\nu_{1}} \left|\widehat{\mathcal{V}}^{(m^{*})}-\widehat{\mathcal{V}}_{\text{oracle}}\right|\\
    &\leq \frac{L^{\prime}_{v}(\delta^{\prime}) }{\nu_{1} N}\left\|\widehat{\mathcal{P}}^{(m^{*})}-\mathcal{P}  \right \|_{2}\\
    &\leq \frac{L^{\prime}_{v}(\delta^{\prime})\epsilon_{N} }{\nu_{1} \sqrt{N}}.
\end{align*}
Let $L_{1-\alpha}^{(m^{*})}=\widehat{\theta}^{(m^{*})}-z_{1-\alpha/2}\cdot \{\widehat{\mathcal{V}}^{(m^{*})}\}^{1/2}$ and $U_{1-\alpha}^{(m^{*})}=\widehat{\theta}^{(m^{*})}+z_{1-\alpha/2}\cdot \{\widehat{\mathcal{V}}^{(m^{*})}\}^{1/2}$ denote the left and right end points of $\mathcal{C}_{1-\alpha}^{(m^{*})}(\theta)$, respectively. Let $L_{1-\alpha}^{\text{oracle}}=\widehat{\theta}_{\text{oracle}}-z_{1-\alpha/2}\cdot \widehat{\mathcal{V}}_{\text{oracle}}^{1/2}$ and $U_{1-\alpha}^{\text{oracle}}=\widehat{\theta}_{\text{oracle}}+z_{1-\alpha/2}\cdot \widehat{\mathcal{V}}_{\text{oracle}}^{1/2}$ denote the left and right end points of $\mathcal{C}_{1-\alpha}^{\text{oracle}}(\theta)$, respectively. Let $\Delta_{L, N}=|L_{1-\alpha}^{(m^{*})}-L_{1-\alpha}^{\text{oracle}} |$ and $\Delta_{U, N}=|U_{1-\alpha}^{(m^{*})}-U_{1-\alpha}^{\text{oracle}}|$. When $A_{5}\cap A_{12}\cap A_{13}\cap A_{14}$ happens, we have
\begin{align*}
   \frac{\Delta_{L, N}}{\widehat{\mathcal{V}}_{\text{oracle}}^{1/2}}&=\frac{\left|L_{1-\alpha}^{(m^{*})}-L_{1-\alpha}^{\text{oracle}} \right|}{\widehat{\mathcal{V}}_{\text{oracle}}^{1/2}}\\
   &\leq \frac{\sqrt{N}}{\nu_{1}} \left | \left(\widehat{\theta}^{(m^{*})}-z_{1-\alpha/2}\cdot \{\widehat{\mathcal{V}}^{(m^{*})}\}^{1/2} \right)-\left( \widehat{\theta}_{\text{oracle}}-z_{1-\alpha/2}\cdot \widehat{\mathcal{V}}_{\text{oracle}}^{1/2}\right)\right|\\
    &\leq \frac{\sqrt{N}}{\nu_{1}}\left | \widehat{\theta}^{(m^{*})}-\widehat{\theta}_{\text{oracle}}\right|+z_{1-\alpha/2}\cdot \frac{\sqrt{N}}{\nu_{1}} \left | \{\widehat{\mathcal{V}}^{(m^{*})}\}^{1/2}- \widehat{\mathcal{V}}_{\text{oracle}}^{1/2}\right|\\
    &\leq \frac{\epsilon}{\nu_{1}}+z_{1-\alpha/2}\cdot \frac{L^{\prime}_{v}(\delta^{\prime})\epsilon_{N} }{\nu^{2}_{1}},
\end{align*}
and 
\begin{align*}
   \frac{\Delta_{U, N}}{\widehat{\mathcal{V}}_{\text{oracle}}^{1/2}}&=\frac{\left|U_{1-\alpha}^{(m^{*})}-U_{1-\alpha}^{\text{oracle}} \right|}{\widehat{\mathcal{V}}_{\text{oracle}}^{1/2}}\\
   &\leq \frac{\sqrt{N}}{\nu_{1}} \left | \left(\widehat{\theta}^{(m^{*})}+z_{1-\alpha/2}\cdot \{\widehat{\mathcal{V}}^{(m^{*})}\}^{1/2} \right)-\left( \widehat{\theta}_{\text{oracle}}+z_{1-\alpha/2}\cdot \widehat{\mathcal{V}}_{\text{oracle}}^{1/2}\right)\right|\\
       &\leq \frac{\sqrt{N}}{\nu_{1}}\left | \widehat{\theta}^{(m^{*})}-\widehat{\theta}_{\text{oracle}}\right|+z_{1-\alpha/2}\cdot \frac{\sqrt{N}}{\nu_{1}} \left | \{\widehat{\mathcal{V}}^{(m^{*})}\}^{1/2}- \widehat{\mathcal{V}}_{\text{oracle}}^{1/2}\right|\\
    &\leq \frac{\epsilon}{\nu_{1}}+z_{1-\alpha/2}\cdot \frac{L^{\prime}_{v}(\delta^{\prime})\epsilon_{N} }{\nu^{2}_{1}}.
\end{align*}
Therefore, over $A_{5}\cap A_{12}\cap A_{13}\cap A_{14}$, we have $\max\left\{\frac{\Delta_{L, N}}{\widehat{\mathcal{V}}_{\text{oracle}}^{1/2}}, \frac{\Delta_{U, N}}{\widehat{\mathcal{V}}_{\text{oracle}}^{1/2}}\right\}\leq \frac{2\epsilon}{\nu_{1}}$ for any sufficiently large $N$ such that $z_{1-\alpha/2}\cdot \frac{L^{\prime}_{v}(\delta^{\prime})\epsilon_{N}}{\nu_{1}}\leq \epsilon$. Meanwhile, we have $P(A_{12})\geq 1-c_{N}$. Also, we have $\lim_{N\rightarrow \infty}P(A_{5})=1$ under $\widehat{\mathcal{V}}^{1/2}_{\text{oracle}}\asymp_{p}N^{-1/2}$ and $\lim_{N\rightarrow \infty}P(A_{13})=1$ under Conditions 1a, 1b, and 2c. Then, we have 
\begin{align*}
\liminf_{N\rightarrow\infty} P\left(\max\left\{\frac{\Delta_{L, N}}{\widehat{\mathcal{V}}_{\text{oracle}}^{1/2}}, \frac{\Delta_{U, N}}{\widehat{\mathcal{V}}_{\text{oracle}}^{1/2}}\right\}\leq \frac{2\epsilon}{\nu_{1}} \right)
&\geq \liminf_{N\rightarrow \infty}P\left(A_{5}\cap A_{12}\cap A_{13}\cap A_{14}\right)\\
&\geq \liminf_{N\rightarrow \infty}P\left(A_{12}\cap A_{14}\right)\\
&\geq \liminf_{N\rightarrow \infty}\left\{P\left(A_{12}\right)+P\left(A_{14}\right)-1\right\} \\
 &\geq \liminf_{N\rightarrow \infty}\left\{P\left(A_{14}\right)-c_{N}\right\} \\
  &\geq 1-\alpha_{*}\quad \text{(by (\ref{eqn: bound tail prob of min dif in tau m and tau oracle scaled by sqrtN}) and $\lim_{N\rightarrow \infty}c_{N}=0$)}.
\end{align*}
Note that $\left[L_{1-\alpha}^{\text{oracle}}+\Delta_{L, N}, U_{1-\alpha}^{\text{oracle}}-\Delta_{U, N}\right]\subseteq \mathcal{C}_{1-\alpha}^{(m^{*})}(\theta)\subseteq \mathcal{C}_{1-\alpha}(\theta)$, and $\theta\in \left[L_{1-\alpha}^{\text{oracle}}+\Delta_{L, N}, U_{1-\alpha}^{\text{oracle}}-\Delta_{U, N}\right]$ if and only if  $\frac{\widehat{\theta}_{\text{oracle}}-\theta}{\widehat{\mathcal{V}}_{\text{oracle}}^{1/2}}\leq z_{1-\alpha/2}-\frac{\Delta_{L, N}}{\widehat{\mathcal{V}}_{\text{oracle}}^{1/2}}$ and $\frac{\widehat{\theta}_{\text{oracle}}-\theta}{\widehat{\mathcal{V}}_{\text{oracle}}^{1/2}}\geq -z_{1-\alpha/2}+\frac{\Delta_{U, N}}{\widehat{\mathcal{V}}_{\text{oracle}}^{1/2}}$. Therefore, we have
\begin{align*}
&\quad \ P\left(\theta\in \mathcal{C}_{1-\alpha}(\theta) \right)\\
&\geq P\left(\theta\in \mathcal{C}_{1-\alpha}^{(m^{*})}(\theta) \right)\\
&\geq P\left(\theta\in \left[L_{1-\alpha}^{\text{oracle}}+\Delta_{L, N}, U_{1-\alpha}^{\text{oracle}}-\Delta_{U, N}\right] \right)\\
 &=P\left(-z_{1-\alpha/2}+\frac{\Delta_{U, N}}{\widehat{\mathcal{V}}_{\text{oracle}}^{1/2}}\leq \frac{\widehat{\theta}_{\text{oracle}}-\theta}{\widehat{\mathcal{V}}_{\text{oracle}}^{1/2}}\leq z_{1-\alpha/2}-\frac{\Delta_{L, N}}{\widehat{\mathcal{V}}_{\text{oracle}}^{1/2}} \right)\\
 &= P\left(-z_{1-\alpha/2}+\frac{\Delta_{U, N}}{\widehat{\mathcal{V}}_{\text{oracle}}^{1/2}}\leq \frac{\widehat{\theta}_{\text{oracle}}-\theta}{\widehat{\mathcal{V}}_{\text{oracle}}^{1/2}}\leq z_{1-\alpha/2}-\frac{\Delta_{L, N}}{\widehat{\mathcal{V}}_{\text{oracle}}^{1/2}}, \max\left\{\frac{\Delta_{L, N}}{\widehat{\mathcal{V}}_{\text{oracle}}^{1/2}}, \frac{\Delta_{U, N}}{\widehat{\mathcal{V}}_{\text{oracle}}^{1/2}}\right\}\leq \frac{2\epsilon}{\nu_{1}} \right)\\
 &\quad \quad +P\left(-z_{1-\alpha/2}+\frac{\Delta_{U, N}}{\widehat{\mathcal{V}}_{\text{oracle}}^{1/2}}\leq \frac{\widehat{\theta}_{\text{oracle}}-\theta}{\widehat{\mathcal{V}}_{\text{oracle}}^{1/2}}\leq z_{1-\alpha/2}-\frac{\Delta_{L, N}}{\widehat{\mathcal{V}}_{\text{oracle}}^{1/2}}, \max\left\{\frac{\Delta_{L, N}}{\widehat{\mathcal{V}}_{\text{oracle}}^{1/2}}, \frac{\Delta_{U, N}}{\widehat{\mathcal{V}}_{\text{oracle}}^{1/2}}\right\}>\frac{2\epsilon}{\nu_{1}}  \right)\\
 &\geq P\left(-z_{1-\alpha/2}+\frac{2\epsilon}{\nu_{1}} \leq \frac{\widehat{\theta}_{\text{oracle}}-\theta}{\widehat{\mathcal{V}}_{\text{oracle}}^{1/2}}\leq z_{1-\alpha/2}-\frac{2\epsilon}{\nu_{1}}, \max\left\{\frac{\Delta_{L, N}}{\widehat{\mathcal{V}}_{\text{oracle}}^{1/2}}, \frac{\Delta_{U, N}}{\widehat{\mathcal{V}}_{\text{oracle}}^{1/2}}\right\}\leq \frac{2\epsilon}{\nu_{1}} \right)\\
 &\geq  P\left(-z_{1-\alpha/2}+\frac{2\epsilon}{\nu_{1}} \leq \frac{\widehat{\theta}_{\text{oracle}}-\theta}{\widehat{\mathcal{V}}_{\text{oracle}}^{1/2}}\leq z_{1-\alpha/2}-\frac{2\epsilon}{\nu_{1}}\right)- P\left(\max\left\{\frac{\Delta_{L, N}}{\widehat{\mathcal{V}}_{\text{oracle}}^{1/2}}, \frac{\Delta_{U, N}}{\widehat{\mathcal{V}}_{\text{oracle}}^{1/2}}\right\}> \frac{2\epsilon}{\nu_{1}} \right).
\end{align*}
Therefore, under Conditions 2a and 2b, for any sufficiently small $\epsilon>0$, we have
\begin{align*}
&\quad \ \liminf_{N\rightarrow \infty} \lim_{M\rightarrow \infty} P\left(\theta\in \mathcal{C}_{1-\alpha}(\theta) \right)\\
  &\geq \liminf_{N\rightarrow \infty}  P\left(-z_{1-\alpha/2}+\frac{2\epsilon}{\nu_{1}} \leq \frac{\widehat{\theta}_{\text{oracle}}-\theta}{\widehat{\mathcal{V}}_{\text{oracle}}^{1/2}}\leq z_{1-\alpha/2}-\frac{2\epsilon}{\nu_{1}}\right)\\
  &\quad \quad \quad - \limsup_{N\rightarrow \infty}\lim_{M\rightarrow \infty} P\left(\max\left\{\frac{\Delta_{L, N}}{\widehat{\mathcal{V}}_{\text{oracle}}^{1/2}}, \frac{\Delta_{U, N}}{\widehat{\mathcal{V}}_{\text{oracle}}^{1/2}}\right\}> \frac{2\epsilon}{\nu_{1}} \right)\\
  &\geq \liminf_{N\rightarrow \infty} P\left(\frac{\widehat{\theta}_{\text{oracle}}-\theta}{\widehat{\mathcal{V}}_{\text{oracle}}^{1/2}}\leq z_{1-\alpha/2}-\frac{2\epsilon}{\nu_{1}} \right)-\limsup_{N\rightarrow \infty}P\left(\frac{\widehat{\theta}_{\text{oracle}}-\theta}{\widehat{\mathcal{V}}_{\text{oracle}}^{1/2}}< -z_{1-\alpha/2}+\frac{2\epsilon}{\nu_{1}} \right)\\
 &\quad \quad + \liminf_{N\rightarrow \infty}\lim_{M\rightarrow \infty} P\left(\max\left\{\frac{\Delta_{L, N}}{\widehat{\mathcal{V}}_{\text{oracle}}^{1/2}}, \frac{\Delta_{U, N}}{\widehat{\mathcal{V}}_{\text{oracle}}^{1/2}}\right\}\leq \frac{2\epsilon}{\nu_{1}} \right)-1\\
 &= \liminf_{N\rightarrow \infty} P\left(\frac{\widehat{\theta}_{\text{oracle}}-\theta}{\{\text{Var}(\widehat{\theta}_{\text{oracle}})\}^{1/2}}\frac{\{\text{Var}(\widehat{\theta}_{\text{oracle}})\}^{1/2}}{\{\widehat{\mathcal{V}}_{\text{oracle}}\}^{1/2}}\leq z_{1-\alpha/2}-\frac{2\epsilon}{\nu_{1}} \right)\\
 &\quad \quad -\limsup_{N\rightarrow \infty}P\left(\frac{\widehat{\theta}_{\text{oracle}}-\theta}{\{\text{Var}(\widehat{\theta}_{\text{oracle}})\}^{1/2}}\frac{\{\text{Var}(\widehat{\theta}_{\text{oracle}})\}^{1/2}}{\{\widehat{\mathcal{V}}_{\text{oracle}}\}^{1/2}}< -z_{1-\alpha/2}+\frac{2\epsilon}{\nu_{1}} \right) \\
 &\quad \quad + \liminf_{N\rightarrow \infty}\lim_{M\rightarrow \infty} P\left(\max\left\{\frac{\Delta_{L, N}}{\widehat{\mathcal{V}}_{\text{oracle}}^{1/2}}, \frac{\Delta_{U, N}}{\widehat{\mathcal{V}}_{\text{oracle}}^{1/2}}\right\}\leq \frac{2\epsilon}{\nu_{1}} \right)-1\\
 &\geq \Phi\left( z_{1-\alpha/2}-\frac{2\epsilon}{\nu_{1}}\right)-\Phi\left(-z_{1-\alpha/2}+\frac{2\epsilon}{\nu_{1}}\right)-\alpha_{*}.
\end{align*}
Setting $\epsilon\rightarrow 0$, by the continuity of $\Phi$, we have: for any $\alpha_{*}\in (0,\alpha)$, we have
\begin{align*}
  \liminf_{N\rightarrow \infty} \lim_{M\rightarrow \infty}P\left(\theta\in \mathcal{C}_{1-\alpha}(\theta) \right)\geq \Phi\left(z_{1-\alpha/2}\right)-\Phi\left(-z_{1-\alpha/2}\right)-\alpha_{*}=1-\alpha-\alpha_{*}.
\end{align*}
In other words, we have $\liminf_{N\rightarrow \infty}\lim_{M\rightarrow \infty} P(\theta\in \mathcal{C}_{1-\alpha}(\theta) )\geq 1-\alpha-\alpha_{*}$ for any $\alpha_{*}\in (0,\alpha)$. Therefore, setting $\alpha_{*} \rightarrow 0$, we have 
\begin{align*}
  \liminf_{N\rightarrow \infty}\lim_{M\rightarrow \infty} P\left(\theta\in \mathcal{C}_{1-\alpha}(\theta) \right)\geq 1-\alpha. 
\end{align*}\end{proof}

\vspace{-0.5cm}

\begin{proof}[Proof of Theorem~7:]

 Note that 
\begin{align}\label{eqn: bound the length nonparametric restricted}
 &\quad \  \mu_{L}\big\{\mathcal{C}_{1-\alpha}(\theta)\big\}\nonumber\\
 &\leq \max_{m_{1}, m_{2}\in [M_{N}] }\left |\left(\widehat{\theta}^{(m_{1})}-z_{1-\alpha/2}\cdot \{\widehat{\mathcal{V}}^{(m_{1})}\}^{1/2} \right)- \left(\widehat{\theta}^{(m_{2})}+z_{1-\alpha/2}\cdot \{\widehat{\mathcal{V}}^{(m_{2})}\}^{1/2}\right) \right |\nonumber\\
   &\leq \max_{m_{1}, m_{2}\in [M_{N}] }\left |\widehat{\theta}^{(m_{1})}-\widehat{\theta}^{(m_{2})}\right |+2\times z_{1-\alpha/2}\times \max_{m\in [M_{N}] }\{\widehat{\mathcal{V}}^{(m)}\}^{1/2}.
\end{align}
When the event $A_{\theta}$ happens, we have 
\begin{align}\label{eqn: bound on tau dif by uniform lip constant}
    \max_{m_{1}, m_{2}\in [M_{N}] }\left |\widehat{\theta}^{(m_{1})}-\widehat{\theta}^{(m_{2})}\right |
    &\leq \frac{L^{\prime}_{\theta}(\delta^{\prime})L_{s,N} }{N}\max_{m_{1}, m_{2}\in  [M_{N}]} d_{H}\left(\mathbf{S}^{(m_{1})}, \mathbf{S}^{(m_{2})} \right)\nonumber\\
    &\leq L^{\prime}_{\theta}(\delta^{\prime})L_{s,N}.
\end{align}
Recall that $\lim_{N\rightarrow \infty}P(A_{\theta})=1$ under the considered conditions. Combining (\ref{eqn: bound on tau dif by uniform lip constant}) and the assumption that the Lipschitz constants $L_{s, N}$ go to zero as $N\rightarrow \infty$, we have 
\begin{equation}\label{eqn: tau dif to zero nonparametric}
    \max_{m_{1}, m_{2}\in [M_{N}] }\left |\widehat{\theta}^{(m_{1})}-\widehat{\theta}^{(m_{2})}\right |\xrightarrow{p} 0 \text{ as $N\rightarrow \infty$}.
\end{equation}
Based on (\ref{eqn: bound the length nonparametric restricted}), (\ref{eqn: tau dif to zero nonparametric}), and the assumption that $\max_{m\in[M_N]} \widehat{\mathcal{V}}^{(m)} = o_p(1)$ (i.e., the uniform vanishing condition of regenerated variance estimators), we have $\mu_{L}\{\mathcal{C}_{1-\alpha}(\theta)\}\xrightarrow{p} 0$ as $N\rightarrow\infty$. \end{proof}

\section*{Appendix B: Additional Discussions and Remarks}

\subsection*{B.1: Limitations of Existing Approaches}

As discussed in the main text, existing approaches to design-based inference with unknown propensity scores can be broadly grouped into three classes: plug-in approaches, finite-population M-estimation approaches, and matching-based approaches. We now describe their limitations in more technical terms.

\begin{itemize}
    \item \textbf{Plug-in Approaches:} Given the established design-based inference methods under known propensity scores, a common and straightforward approach when the propensity scores $\mathcal{P}$ are unknown is to substitute an estimate $\widehat{\mathcal{P}}$, obtained from $(\mathbf{Z}, \mathbf{X})$, into the oracle confidence set mapping $\Lambda$, yielding the plug-in confidence set $\mathcal{C}_{1-\alpha}^{\text{plug-in}} = \Lambda(\widehat{\mathcal{P}}, \mathbf{Z}, \mathbf{Y})$. However, as shown in previous work (e.g., \citealp{pimentel2024covariate, zhu2025randomization}) and in the simulation studies of Section~6, this approach can suffer from severe under-coverage because it treats $\widehat{\mathcal{P}}$ as if it were fixed and known, thereby failing to account for the discrepancy between $\widehat{\mathcal{P}}$ and the oracle propensity scores $\mathcal{P}$ and, more fundamentally, the uncertainty introduced by propensity score estimation.

\item \textbf{Finite-Population M-Estimation Approaches:} A more principled alternative to the plug-in approach is finite-population M-estimation, which treats the estimation of the propensity score model parameters and the target estimand as a joint estimating equation problem and derives a corresponding design-based sandwich-type variance estimator that accounts for both sources of uncertainty \citep{abadie2020sampling, xu2021potential}. In principle, this avoids the main deficiency of the plug-in approach, namely, ignoring uncertainty from estimating $\mathcal{P}$. However, existing finite-population M-estimation approaches are largely restricted to parametric propensity score models, which limits their applicability when propensity scores are estimated using flexible nonparametric or machine-learning-based learners. In addition, the resulting variance formulas are often model-specific and technically cumbersome, making them difficult to transport across different design-based inference problems or to extend to settings such as sensitivity analysis; see \citet{zhao2019sens_boot} for a detailed discussion.

\item \textbf{Matching-Based Approaches:} In observational studies, a common nonparametric alternative to plug-in approaches is matching, which preprocesses the observed data into matched sets based on covariate similarity and then conducts design-based inference under an assumed post-matching randomization mechanism. For notational simplicity, we focus on matched pairs; the same issues arise for more general matched sets, as discussed in Remark~\ref{rem: general matching} in Appendix~B.3. Suppose that there are $I$ matched pairs, each containing one treated unit and one control unit. Following the classic framework for matched observational studies \citep{rosenbaum2002observational, rosenbaum2020design}, let $Z_{ij}\in\{0,1\}$ denote the treatment indicator and $\mathbf{x}_{ij}$ the observed covariates for unit $j=1,2$ in matched pair $i=1,\dots,I$, and let $\mathbf{Z}=(Z_{11},Z_{12},\dots,Z_{I1},Z_{I2})$ and $\mathbf{X}=(\mathbf{x}_{11},\mathbf{x}_{12},\dots,\mathbf{x}_{I1},\mathbf{x}_{I2})$. Let $\mathcal{Z}=\{(i1,i2):i=1,\dots,I\}$ denote the structure of the formed matched pairs, encoding which two units are paired to form each matched pair. If matching is exact on covariates, that is, $\mathbf{x}_{i1}=\mathbf{x}_{i2}$ for all $i$, then under the ignorability condition (i.e., no unobserved covariates), the post-matching treatment assignment is uniform within each pair: 
\begin{equation*}
    \text{$P(Z_{i1}=1,Z_{i2}=0\mid \mathcal{Z},\mathbf{X})=P(Z_{i1}=0,Z_{i2}=1\mid \mathcal{Z},\mathbf{X})=1/2$ for $i=1,\dots,I$. }
\end{equation*}
This can be viewed as a matched-pair version of the post-matching uniform propensity condition \citep{rosenbaum2002observational, rosenbaum2020design}. Existing work often treats matched observational studies as paired or stratified randomized experiments and applies established design-based inference methods under this post-matching uniform propensity condition. However, this approach faces several important limitations. First, exact matching is rarely achievable in practice, especially with continuous or multivariate covariates. As shown by recent work \citep{guo2023statistical,pimentel2024covariate,zhu2025randomization}, inexact matching can leave statistically meaningful residual bias and render downstream design-based inference based on the post-matching uniform propensity condition asymptotically invalid. More generally, \citet{savje2022inconsistency} proves that, when the fraction of units with propensity scores greater than $0.5$ is non-negligible, matching without replacement (the typical matching framework for design-based inference; \citealp{rosenbaum2002observational, rosenbaum2020design}) generally yields post-matching propensity score discrepancies that are bounded away from zero. As shown in the simulation studies in Section~6 (see Tables~2 and 3), the matching-based approach suffers from severe under-coverage across several canonical settings. Second, existing post-matching design-based analyses typically condition on the realized matched pairs, treating $\mathcal{Z}$ as fixed, even though the matching procedure is performed after observing the realized treatment assignments $\mathbf{Z}$ and therefore generally produces matched pairs that depend on $\mathbf{Z}$. In other words, who is matched with whom, as encoded by $\mathcal{Z}$, usually depends on the realized treatment assignments $\mathbf{Z}$. This fact, known as $Z$-dependence in the matching literature, introduces an additional source of randomness that is typically ignored by existing post-matching design-based analyses \citep{pashley2021conditional, pimentel2024covariate, pimentel2024re}. As shown by \citet{pimentel2024covariate} and \citet{pimentel2024re}, ignoring this dependence can further invalidate downstream design-based inference. Third, matching is primarily tailored to causal inference in observational studies. In other design-based settings, such as missing data problems or survey sampling, matching is more often used as a problem-specific imputation or preprocessing device than as a general foundation for rigorous design-based inference with unknown propensity scores. This substantially limits its applicability across the broader class of design-based inference problems considered in this paper.
\end{itemize}

\subsection*{B.2: Connections with and Fundamental Differences from the Super-Population Inference Literature}

As emphasized in the main text, the proposed propensity score propagation framework is intrinsically different from existing simulation-based inference approaches developed under super-population frameworks, including bootstrap, resampling, subsampling, and perturbation-based methods (e.g., \citealp{efron1994introduction, politis1994large, politis1999subsampling, chernozhukov2018double, zhao2019sens_boot, bodory2020finite, ImbensMenzel2021CausalBootstrap, guo2023causal, guo2024statistical, lin2024consistency, zheng2025perturbed}; among many others). Among these, the closest conceptual analogues are recent perturbation-based inference methods, including \citet{guo2023causal} and \citet{zheng2025perturbed}; our work is partially inspired by these developments, but differs fundamentally from both. Specifically, \citet{guo2023causal} develops more credible confidence intervals (CIs) for causal structural parameters in the presence of many possibly invalid instrumental variables (IVs) by perturbing IV regression coefficient estimates, constructing a collection of perturbed CIs based on these perturbed estimates, and then forming the final CI as a filtered union of the perturbed CIs. In contrast, \citet{zheng2025perturbed} seeks to relax the convergence-rate requirements on nuisance estimators in double/debiased machine learning \citep{chernozhukov2018double} for inference on low-dimensional functionals in the presence of infinite-dimensional nuisance parameters. Their approach perturbs the nuisance estimation procedure by injecting randomness into the estimation process, with each perturbation yielding a perturbed Wald CI for the target functional, and then constructs the final CI as a filtered union of these perturbed CIs.

At a highly conceptual level, the regeneration step in our proposed propensity score propagation framework may likewise be viewed as perturbing the estimated propensity scores, which are the key nuisance parameters in design-based inference, by injecting randomness into the propensity score estimation procedure. For example, in parametric propensity score propagation, this is done by sampling from a data-dependent asymptotic distribution of the propensity score model parameters, whereas in nonparametric propensity score propagation, it is achieved through subsampling or repeated cross-fitting with flexible propensity score learners. Each regeneration run then yields a corresponding perturbed CI, namely, the CI constructed from the regenerated propensity scores, and the final CI is obtained by taking the union of these perturbed CIs.

However, the proposed propensity score propagation framework is fundamentally different from recent perturbation-based inference approaches (e.g., \citealp{guo2023causal, zheng2025perturbed}) in terms of the target problem, the variables involved, the statistical construction, and the underlying mathematical justification:
\begin{itemize}
    \item \textbf{Target Problem:} As mentioned in Section 3.5, existing perturbation-based inference methods are developed for specific problems in super-population inference. For example, \citet{guo2023causal} studies inference in IV regression with many possibly invalid instruments, whereas \citet{zheng2025perturbed} studies inference for low-dimensional functionals in the presence of infinite-dimensional nuisance parameters within the double/debiased machine learning framework \citep{chernozhukov2018double}. In contrast, our work considers developing a general framework for credible design-based inference with unknown propensity scores.

    \item \textbf{Type of Variables Involved:} Existing perturbation-based inference methods (e.g., \citealp{guo2023causal, zheng2025perturbed}) are mainly designed for settings involving \textit{continuous} outcomes, responses, or treatment variables, together with nuisance parameters. By contrast, the regeneration step in our propensity score propagation framework is introduced to reflect uncertainty in estimating nuisance models for \textit{binary} design variables, such as treatment assignments, sampling inclusion indicators, or missingness indicators. This difference fundamentally changes both the statistical construction and the mathematical justification; see the next two points.

    \item \textbf{Statistical Construction:} Existing perturbation-based inference approaches typically perturb either finite-dimensional parameter estimates or nuisance estimation procedures in models for continuous outcomes, responses, or treatment variables, often by directly injecting randomness into nuisance models or fitted nuisance components associated with continuous variables \citep{guo2023causal, zheng2025perturbed}. By contrast, such direct perturbation is generally neither natural nor stable in our setting, because propensity score propagation targets nuisance models for \textit{binary} design variables. Instead, our framework uses regeneration schemes tailored to the propensity score setting, such as sampling from a data-dependent asymptotic distribution of GLM parameters in the parametric case, and subsampling or repeated cross-fitting with flexible learners in the nonparametric case. Moreover, these nonparametric regeneration schemes differ fundamentally from classical subsampling or re-cross-fitting in super-population inference \citep{politis1994large, politis1999subsampling, bickel2010subsampling, chernozhukov2018double}: they are used to generate multiple plausible propensity score vectors for the observed finite population, rather than to approximate a super-population sampling distribution, and they leave the outcome data untouched throughout.

    \item \textbf{Mathematical Justification:} The mathematical justification of our framework is also fundamentally different from that in the existing perturbation-based inference literature, especially in the nonparametric setting. Existing perturbation-based methods are developed under super-population frameworks and typically rely on arguments tailored to continuous parameter spaces (e.g., \citealp{guo2023causal, zheng2025perturbed}). By contrast, our nonparametric propensity score propagation framework operates under a design-based setting with a fixed finite population, where the randomness is induced by the binary design variable and the random partitioning used for regeneration, rather than by repeated sampling from a super-population. Accordingly, our proof strategy is built on a novel isoperimetric argument over the hypercube $\{0,1\}^{N}$, equivalently, the space of partition indicators, together with the induced push-forward measure on the regenerated estimators. This argument allows us to show that, with sufficiently many regeneration runs, at least one regenerated estimator is sufficiently close to the oracle estimator with high probability. In this sense, the mathematical tools underlying our framework are not a direct extension of existing perturbation-based inference arguments, but are instead specifically designed for the geometry and probabilistic structure of design-based inference with binary design variables and random partitions.
\end{itemize}

\subsection*{B.3: Additional Remarks}

\begin{remark}[The Finite-Population Version of the Ignorability Condition]
In the design-based, or finite-population, inference literature, a rigorous and unified way to express the ignorability condition, also referred to as the no unobserved covariates condition, is as follows. Let $\mathcal{F}_{N, i}$ denote the fixed, outcome-related information for unit $i$, and let $\mathcal{F}_{N}=\{\mathcal{F}_{N, i}:i\in[N]\}$ denote the corresponding collection for the finite population. For example, in observational studies (e.g., the example application in Section~3.4), $\mathcal{F}_{N, i}=(Y_{i}(1),Y_{i}(0))$, so that $\mathcal{F}_{N}=\{(Y_{i}(1),Y_{i}(0)):i\in[N]\}$. In descriptive missing-data problems (e.g., Example Application 2 in Section~5.1), $\mathcal{F}_{N, i}=y_{i}$, so that $\mathcal{F}_{N}=\{y_{i}:i\in[N]\}$ denotes the collection of missing and non-missing outcomes. In design-based DID analyses (e.g., Example Application 3 in Section~5.1), $\mathcal{F}_{N, i}=(\Delta_{i}^{Y}(1),\Delta_{i}^{Y}(0))$, so that $\mathcal{F}_{N}=\{(\Delta_{i}^{Y}(1),\Delta_{i}^{Y}(0)):i\in[N]\}$ denotes the collection of fixed potential values of outcome changes under treatment and control. Then, the ignorability condition (\citealp{rosenbaum1983central, rosenbaum2002observational, rosenbaum2020design}) can be expressed as
\begin{equation}\label{eqn: ignorability general}
    \mathbf{Z}\mid \mathbf{X}, \mathcal{F}_{N} \overset{d}{=}  \mathbf{Z}\mid \mathbf{X}.
\end{equation}
Equivalently, conditional on the observed covariates, the design mechanism for $\mathbf{Z}$ does not depend on the fixed outcome-related quantities in $\mathcal{F}_{N}$. If we further assume conditional independence across units under the design mechanism, the finite-population ignorability condition can be expressed at the individual level as
\begin{equation}\label{eqn: ignorability individual}
    P(Z_{i}=1\mid \mathbf{x}_{i}, \mathcal{F}_{N, i})=P(Z_{i}=1\mid \mathbf{x}_{i}).
\end{equation}
Importantly, the outcome-related quantities in $\mathcal{F}_{N}$ or $\mathcal{F}_{N, i}$ are fixed finite-population values; therefore, this formulation does not require distributional assumptions on the outcomes or potential outcomes (\citealp{rosenbaum2002observational, rosenbaum2020design}).
\end{remark}

\begin{remark}[Some Remarks on the Growth Rate of the Required $M$]
Throughout Section~4, for notational simplicity, we focus on the case $\mu_L\{\mathcal{C}_{1-\alpha}^{\text{oracle}}(\theta)\}\asymp_p N^{-1/2}$, that is, there exist constants $\nu_{1},\nu_{2}>0$ such that $\lim_{N\rightarrow\infty}P\big(\nu_{1}/\sqrt{N}\leq \mu_L\{\mathcal{C}_{1-\alpha}^{\text{oracle}}(\theta)\}\leq \nu_{2}/\sqrt{N}\big)=1$. This is the canonical rate in design-based inference problems with oracle or known propensity scores, including those reviewed in Sections~2.1 and~5.1. This rate holds, for example, when $\mathcal{C}_{1-\alpha}^{\text{oracle}}(\theta)$ is constructed from a regular asymptotically normal design-based statistic with nondegenerate variance \citep{rosenbaum2002observational, li2017general, fogarty2018mitigating}. The proposed methodology and accompanying theoretical arguments extend directly to the more general setting $\mu_L\{\mathcal{C}_{1-\alpha}^{\text{oracle}}(\theta)\}\asymp_p r_N$ for any sequence $r_N\to 0$. The main modification concerns the required growth rate of the number of regeneration runs. Specifically, let $M_N$ be a sequence of positive integers indexed by $N$. By Theorem~1, when the number of regeneration runs satisfies $M\gg \frac{\log N}{(r_{N}\sqrt{N})^{d}}$ (in particular, $M\gg \log N$ when $r_{N}=N^{-1/2}$), we have $\min_{m\in [M]} |\widehat{\theta}^{(m)}-\widehat{\theta}_{\text{oracle}}|=o_{p}(r_{N})$ and $\min_{m\in \mathcal{M}(\alpha^{\prime})} |\widehat{\theta}^{(m)}-\widehat{\theta}_{\text{oracle}}|=o_{p}(r_{N})$. Since this rate is faster than the rate governing the length of $\mathcal{C}^{\text{oracle}}_{1-\alpha}(\theta)$, it follows, roughly speaking, that $\mathcal{C}_{1-\alpha}^{(m^{*})}(\theta)$ has asymptotically the same coverage rate as $\mathcal{C}_{1-\alpha}^{\text{oracle}}(\theta)$, where $m^{*}$ is defined by $|\widehat{\theta}^{(m^{*})}-\widehat{\theta}_{\text{oracle}}|=\min_{m\in [M]} |\widehat{\theta}^{(m)}-\widehat{\theta}_{\text{oracle}}|$ or $|\widehat{\theta}^{(m^{*})}-\widehat{\theta}_{\text{oracle}}|=\min_{m\in \mathcal{M}(\alpha^{\prime})} |\widehat{\theta}^{(m)}-\widehat{\theta}_{\text{oracle}}|$.

\end{remark}

\begin{remark}[Some Remarks on the Matching Design]\label{rem: matching in simulation}
The matching-based approaches examined in our simulation studies, both in Table~2 in Section 2.2 and Table~3 in Section~6, use optimal full matching with a propensity score caliper \citep{rosenbaum1991characterization,hansen2004full,hansen2006optimal}, a widely used matching design that constructs matched sets while retaining all study units. As shown in the covariate balance table in Remark~\ref{rem: further remarks on matching} of Appendix D.3, the average post-matching standardized mean difference between the treated and control units is below 0.035 for all five covariates, substantially smaller than the 0.10 standardized-mean-difference benchmark commonly used in post-matching covariate balance diagnostics \citep{silber2013characteristics,small2024protocols}. Therefore, the matched datasets in our simulation studies would pass standard balance diagnostics, and researchers would typically regard them as adequate for applying the classical randomization inference procedures considered in the simulations. In many practical applications, researchers may further trim units that are difficult to match in order to improve post-matching balance \citep{rosenbaum2002observational,rosenbaum2020design,zubizarreta2012using}. However, such trimming generally changes the finite-population estimand, such as the sample average treatment effect, and therefore complicates comparisons across inference methods. To ensure a fair comparison, we use optimal full matching so that all methods under consideration target the same estimand defined on the full finite population. In practice, when researchers are willing to discard units and thereby change the target estimand, matching can be naturally combined with the proposed propensity score propagation framework by first using matching as a preprocessing step to trim the study population and then applying propensity score propagation to the resulting trimmed population.
\end{remark}

\begin{remark}[The Restricted-Union Strategy for Nonparametric Propensity Score Propagation] Similar to the restricted-union strategy for parametric propensity score propagation introduced in Section~3.2, we may also consider a restricted-union strategy for nonparametric propensity score propagation to further shorten the confidence set $\mathcal{C}_{1-\alpha}$ reported by Algorithm~2. Following the notation in Sections~3 and~4, let $\alpha$ be the prespecified significance level, and let $\alpha'\in(0,\alpha)$ denote the level allocated to screening outlying regeneration runs. Define $\widetilde{\mathcal{M}}(\alpha')=\big\{m\in[M]:|\widehat{\theta}^{(m)}-\widehat{\theta}_{\text{ave}}|\leq q_{1-\alpha'}\big\}$, where $\widehat{\theta}_{\text{ave}}=M^{-1}\sum_{m=1}^{M}\widehat{\theta}^{(m)}$ and $q_{1-\alpha'}$ is the $(1-\alpha')$-quantile of $\big\{|\widehat{\theta}^{(1)}-\widehat{\theta}_{\text{ave}}|,\dots,|\widehat{\theta}^{(M)}-\widehat{\theta}_{\text{ave}}|\big\}$. Under this strategy, the union step in Algorithm~2 is replaced by the restricted union $\widetilde{\mathcal{C}}_{1-\alpha}=\bigcup_{m\in\widetilde{\mathcal{M}}(\alpha')}\mathcal{C}_{1-\alpha_{0}}^{(m)}$, where $\alpha_{0}=\alpha-\alpha'$ is the adjusted significance level used for each retained regenerated confidence set.
\end{remark}

\begin{remark}[Inexact Matching and $Z$-dependence Under General Matching Designs]\label{rem: general matching}
In Appendix B.1, we use matched pairs to introduce the issues of inexact matching and $Z$-dependence. These issues are not specific to pair matching and also arise under more general matching designs. Following the classic framework for matched observational studies \citep{rosenbaum2002observational, rosenbaum2020design}, suppose that the matching procedure retains $I$ matched sets, where matched set $i$ contains $n_i$ units, among which $m_i$ are treated, and $n_i-m_i$ are controls, with $\min\{m_i,n_i-m_i\}=1$. Such matched sets are typically formed by minimizing total within-set covariate distance, such as Mahalanobis distance, subject to prespecified matching constraints on the set sizes and treatment-control ratios, often through optimal matching procedures \citep{rosenbaum1989optimal,rosenbaum1991characterization,rosenbaum2002observational,rosenbaum2020design,hansen2004full,hansen2006optimal}. For unit $j$ in matched set $i$, let $Z_{ij}\in\{0,1\}$ denote the treatment indicator and $\mathbf{x}_{ij}$ the observed covariates. Let $\mathbf{Z}=(Z_{11},\dots,Z_{In_I})$ denote the treatment assignment vector, $\mathbf{X}=(\mathbf{x}_{11},\dots,\mathbf{x}_{In_I})$ the collection of observed covariates, and let $\mathcal{Z}$ denote the realized matched-set structure, encoding both which units are grouped together and the numbers of treated and control units in each matched set. Under exact matching on covariates, that is, $\mathbf{x}_{ij}=\mathbf{x}_{ij'}$ for all $i$ and all $j,j'\in\{1,\dots,n_i\}$, together with ignorability, the post-matching treatment assignment is uniform within each matched set: conditional on the realized matched-set structure $\mathcal{Z}$ and the covariate information $\mathbf{X}$, each subset of $m_i$ treated units among the $n_i$ units in matched set $i$ is equally likely, which in particular implies 
\begin{equation}\label{eqn: post-matching randomization}
    \text{$P(Z_{ij}=1 \mid \mathcal{Z},\mathbf{X})=\frac{m_i}{n_i}$\quad  for $i=1,\dots,I$ and $j=1,\dots,n_i$.}
\end{equation}
In practical research, once a matched dataset satisfies commonly used covariate balance criteria (e.g., requiring the absolute standardized mean difference for each covariate to be below 0.1 or 0.2; \citealp{silber2013characteristics,rosenbaum2020design,small2024protocols}), researchers often proceed by assuming the post-matching uniform propensity condition in \eqref{eqn: post-matching randomization}, treating the matched observational study as a stratified randomized experiment, and applying established design-based inference methods under the post-matching uniform propensity condition; see Remark~\ref{rem: inference under exact matching} for further details. However, this post-matching design-based analysis faces two important limitations. First, exact matching is rarely achievable in practice, especially with continuous or multivariate covariates. Consequently, directly imposing the uniform propensity condition in \eqref{eqn: post-matching randomization} after inexact matching may leave statistically meaningful residual bias and invalidate downstream design-based inference \citep{guo2023statistical,pimentel2024covariate,zhu2025randomization}. Second, existing analyses typically condition on the realized matched-set structure $\mathcal{Z}$ and treat it as fixed. In practice, however, the matching procedure is usually performed after the realized treatment labels $\mathbf{Z}$ have been observed, so the resulting matched-set structure $\mathcal{Z}$ itself depends on $\mathbf{Z}$. This phenomenon, known as $Z$-dependence in the matching literature \citep{pashley2021conditional,pimentel2024covariate,pimentel2024re}, represents an additional source of randomness that is typically ignored in post-matching design-based analyses. As shown by \citet{pimentel2024covariate} and \citet{pimentel2024re}, ignoring such $Z$-dependence can further compromise the validity of downstream design-based inference, including Type-I error control and confidence interval coverage.

\end{remark}

\begin{remark}[Recent Work on Addressing $Z$-Dependence in Matched Observational Studies]
Recent work has begun to address $Z$-dependence in matched observational studies. In particular, \citet{pimentel2024re} propose a computationally efficient procedure, termed \textit{match-adaptive randomization inference}, that accounts for $Z$-dependence by characterizing and sampling from the appropriate conditional distribution of treatment assignments in randomization-based inference after optimal propensity score matching. However, their methods are primarily developed for pair matching and for testing Fisher's sharp nulls, such as constant treatment effects and related variants, and therefore do not directly provide inference for more general matching designs (e.g., full matching or matching with multiple controls) or for sample average treatment effects (i.e., Neyman's weak nulls). Moreover, their theoretical guarantees rely on oracle knowledge of the true propensity scores. Another recent contribution, \citet{li2024sensitivity}, proposes a permutation-based inference approach for addressing $Z$-dependence under flexible matching algorithms, including pair matching, full matching, and matching with multiple controls. However, the inference framework of \citet{li2024sensitivity} takes randomness to arise from random permutations of potential outcomes rather than from random treatment assignments. This differs fundamentally from the design-based perspective adopted in the present paper, where treatment assignment is the only source of randomness and potential outcomes are treated as fixed. In addition, \citet{li2024sensitivity} focuses on Fisher's sharp nulls and related extensions, whereas our framework is not tied to a specific null hypothesis or effect model and can be used for inference on finite-population estimands such as sample average treatment effects.
\end{remark}

\begin{remark}[More Details on Design-Based Difference-in-Differences]
Following arguments similar to those in \citet{heng2025design}, we can show that the weighted difference-in-differences (DID) estimator 
\begin{equation*}
    \widehat{\theta}_{\text{DID}, \text{oracle}}=\frac{1}{N}\sum_{i=1}^{N}\widehat{\theta}_{\text{DID}, \text{oracle}, i}
    =\frac{1}{N}\sum_{i=1}^{N}\left\{\frac{Z_{i}}{p_{i}}(Y_{i1}-Y_{i0})-\frac{1-Z_{i}}{1-p_{i}}(Y_{i1}-Y_{i0})\right\}
\end{equation*}
is an unbiased design-based estimator for \(\theta_{t=1}=N^{-1}\sum_{i=1}^{N}\{Y_{i1}(1)-Y_{i1}(0)\}\). Specifically, define the observed difference in outcome as \(\Delta^{Y}_{i}=Y_{i1}-Y_{i0}\). Under consistency, SUTVA, and no anticipatory effects, we have \(Y_{i0}=Y_{i0}(1)=Y_{i0}(0)\) and \(Y_{i1}=Y_{i1}(Z_i)\) \citep{athey2022design,rambachan2025design, heng2025nonbipartiteDID}. Therefore, \(\Delta^{Y}_{i}=Z_{i}\Delta^{Y}_{i}(1)+(1-Z_{i})\Delta^{Y}_{i}(0)\), where \(\Delta^{Y}_{i}(1)=Y_{i1}(1)-Y_{i0}\) and \(\Delta^{Y}_{i}(0)=Y_{i1}(0)-Y_{i0}\). Suppose further that, conditional on the observed covariates, treatment uptake is independent of the potential outcome differences, with
\begin{equation}\label{eqn: DID randomization}
    P\left(Z_{i}=1\mid \mathbf{x}_{i},\Delta^{Y}_{i}(1),\Delta^{Y}_{i}(0)\right)
    =
    P(Z_i=1\mid \mathbf{x}_i):=p_i,
\end{equation}
and that treatment uptake is independent across units. This condition can be interpreted as a design-based DID version of no time-varying unmeasured confounding: after differencing, time-invariant unmeasured components are removed from the relevant potential outcome differences, so treatment assignment probability may be regarded as depending only on the observed covariates $\mathbf{x}_{i}$. Let \(\mathcal{F}_{N, \Delta}=\{(\Delta_i^Y(1),\Delta_i^Y(0)):i=1,\dots,N\}\). Then, under \eqref{eqn: DID randomization},
\begin{align*}
&\quad \ E\left(\widehat{\theta}_{\text{DID}, \text{oracle}}\mid \mathbf{X},\mathcal{F}_{N, \Delta}\right)\\
&=\frac{1}{N}\sum_{i=1}^{N}E\left\{\frac{Z_i}{p_i}\Delta_i^Y(1)-\frac{1-Z_i}{1-p_i}\Delta_i^Y(0)\mid \mathbf{X},\mathcal{F}_{N, \Delta}\right\} \\
&=\frac{1}{N}\sum_{i=1}^{N}\left\{\frac{P(Z_i=1\mid \mathbf{X},\mathcal{F}_{N, \Delta})}{p_i}\Delta_i^Y(1)-\frac{1-P(Z_i=1\mid \mathbf{X},\mathcal{F}_{N, \Delta})}{1-p_i}\Delta_i^Y(0)\right\} \\
&=\frac{1}{N}\sum_{i=1}^{N}\left\{\Delta_i^Y(1)-\Delta_i^Y(0)\right\}\\
&=\frac{1}{N}\sum_{i=1}^{N}\left\{Y_{i1}(1)-Y_{i1}(0)\right\}=\theta_{t=1}.
\end{align*}
Therefore, \(\widehat{\theta}_{\text{DID}, \text{oracle}}\) is unbiased for the finite-population average treatment effect at time \(t=1\). Moreover, under the same design-based conditions, and following arguments similar to those in \citet{kang2016full} and \citet{zhu2025randomization}, the variance estimator $\widehat{\mathcal{V}}_{\text{DID}, \text{oracle}}=\frac{1}{N(N-1)}\sum_{i=1}^{N}(\widehat{\theta}_{\text{DID},\text{oracle},i}-\widehat{\theta}_{\text{DID}, \text{oracle}})^{2}$ is a valid (conservative) design-based variance estimator for \(\widehat{\theta}_{\text{DID}, \text{oracle}}\), in the sense that \(E(\widehat{\mathcal{V}}_{\text{DID}, \text{oracle}})\geq \text{Var}(\widehat{\theta}_{\text{DID}, \text{oracle}})\).
\end{remark}

\begin{remark}[The Classical Design-Based Inference After Matching]\label{rem: inference under exact matching}
In matched (or stratified) observational studies, researchers commonly use the Neyman (i.e., difference-in-means) estimator to construct design-based confidence intervals for the sample average treatment effect $\theta_{\tau}$ \citep{rosenbaum2002observational, imbens2015causal, fogarty2018mitigating, fogarty2020studentized, fogarty2023testing}. Specifically, suppose that there are $I$ matched sets, and each matched set $i$ has $n_{i}$ units. Therefore, there are $N=\sum_{i=1}^{I}n_{i}$ units in total. Suppose that $i$th matched set has $m_{i}$ treated units and $n_{i}-m_{i}$ control units, where $\min\{m_{i}, n_{i}-m_{i}\}=1$. Then, the Neyman estimator $\widehat{\lambda}$ is defined as
\begin{equation*}\label{eqn: Neyman estimator}
\widehat{\lambda}=\sum_{i=1}^{I}\frac{n_{i}}{N}\widehat{\lambda}_{i} =\frac{1}{I}\sum_{i=1}^{I}\frac{n_{i}I}{N}\widehat{\lambda}_{i}, \text{where $\widehat{\lambda}_{i}=\sum_{j=1}^{n_{i}}\left\{\frac{Z_{ij}Y_{ij}}{m_{i}}-\frac{(1-Z_{ij})Y_{ij}}{n_{i}-m_{i}}\right\}$.}
\end{equation*}
Following \citet{fogarty2018mitigating}, a valid design-based variance estimator for $\widehat{\lambda}$ under the post-matching uniform propensity condition and independence of treatment assignments across matched sets after matching, as well as ignoring $Z$-dependence, is 
\begin{equation*}\label{eqn: Neyman variance}
\widehat{\text{Var}}(\widehat{\lambda})=\frac{1}{I(I-1)}\sum_{i=1}^{I}\left(\frac{n_{i}I}{N}\widehat{\lambda}_{i}-\widehat{\lambda}\right)^{2}.
\end{equation*}
Then, a commonly used design-based confidence interval for $\theta_{\tau}$ in matched observational studies can be expressed as
\begin{equation*}
    \mathcal{C}^{\lambda}_{1-\alpha}(\theta_{\tau})=\left [\widehat{\lambda}-z_{1-\alpha/2}\cdot\sqrt{\widehat{\text{Var}}(\widehat{\lambda})},\  \widehat{\lambda}+z_{1-\alpha/2}\cdot\sqrt{\widehat{\text{Var}}(\widehat{\lambda})} \right].
\end{equation*}
\end{remark}

\begin{remark}[Vanishing Stability Constant $L_{s, N}$]
    In Theorem~7, we strengthen the boundedness requirement on the stability constants $L_{s,N}$ in Condition~4 by requiring $\lim_{N\to\infty} L_{s,N}=0$. A necessary condition for this requirement is that, for any partition indicators $\mathbf S_{1},\mathbf S_{2}\in\{0,1\}^{N}$, the corresponding propensity score estimates satisfy $N^{-1/2}\|\widehat{\mathcal P}_{1}-\widehat{\mathcal P}_{2}\|_{2}\xrightarrow{p}0$ as $N\to\infty$. This necessary condition is implied by the consistency of the propensity score learner: if $\widehat{\mathcal P}_{1}$ and $\widehat{\mathcal P}_{2}$ are both consistent for the true propensity score vector $\mathcal P$, then $N^{-1/2}\|\widehat{\mathcal P}_{1}-\widehat{\mathcal P}_{2}\|_{2} \leq N^{-1/2}\|\widehat{\mathcal P}_{1}-\mathcal P\|_{2}+N^{-1/2}\|\widehat{\mathcal P}_{2}-\mathcal P\|_{2}\xrightarrow{p}0$ as $N\rightarrow \infty$, even when the overlap between $\mathbf S_{1}$ and $\mathbf S_{2}$ is limited, that is, when $N^{-1}d_H(\mathbf S_{1},\mathbf S_{2})$ does not vanish.
\end{remark}

\section*{Appendix C: Some Additional Examples}

\subsection*{C.1: Fisher's Randomization Tests in Observational Studies}

In the main text, we applied our framework to infer the sample average treatment effect $\theta_{\tau}$ (i.e., Neyman's weak null) in observational studies. Here, we show that the same framework can be readily adapted to conduct design-based tests for Fisher's sharp null (i.e., Fisher's randomization tests). Existing Fisher's randomization tests in observational studies are typically based on matching or stratification \citep{rosenbaum2002observational, rosenbaum2020design}, possibly combined with plug-in approaches applied after matching \citep{pimentel2024covariate}. As a result, these methods inherit the limitations of matching- and plug-in-based inference discussed in Section~1.1 and Appendix B.1. In contrast, the proposed propensity score propagation framework provides an alternative approach that can mitigate these limitations. Specifically, consider the notations adopted in the main text. When the true propensity score vector $\mathcal{P}=(p_{1},\dots, p_{N})$ is known, a finite-population-exact $p$-value reported by Fisher's randomization test for Fisher's sharp null hypothesis of no effect, defined as $H_{0,F}: Y_{i}(1)=Y_{i}(0)$ for all $i$, can be expressed as 
\begin{align*}
    p_{F}^{\text{oracle}}&=P\left(T(\mathbf{Z}, \mathbf{Y})\geq t\mid H_{0,F}, \mathcal{P}\right)\\
    &=\sum_{\mathbf{z}=(z_{1},\dots, z_{N})\in \{0,1\}^{N} }\left[P\left(\mathbf{Z}=\mathbf{z}\mid \mathcal{P}\right)\cdot \mathbbm{1}\left\{T(\mathbf{Z}=\mathbf{z}, \mathbf{Y})\geq t\right\}\right]\\
    &=\sum_{\mathbf{z}=(z_{1},\dots, z_{N})\in \{0,1\}^{N} } \left[\prod_{i=1}^{N}p_{i}^{z_{i}}(1-p_{i})^{1-z_{i}}\cdot \mathbbm{1}\left\{T(\mathbf{Z}=\mathbf{z}, \mathbf{Y})\geq t\right\}\right]
\end{align*}
where $T(\mathbf{Z}, \mathbf{Y})$ can be any user-chosen test statistic (e.g., permutation $t$-test, rank-based tests, or $U$-statistics) and $t$ is its observed value. In practice, when the sample size $N$ is large, Fisher's exact $p$-value can be calculated via either the Monte Carlo method or finite-population central limit theorems \citep{rosenbaum2002observational, li2017general}.

Combining these established results under known propensity scores with the proposed propensity score propagation framework, we can conduct Fisher's randomization test in observational studies. Specifically, we can apply either the parametric propensity score propagation or nonparametric propensity score propagation to obtain an estimated propensity score vector $\widehat{\mathcal{P}}^{(m)}=(\widehat{p}^{(m)}_{1},\dots, \widehat{p}^{(m)}_{N})$ from each regeneration run $m \in [M]$, and obtain the corresponding regenerated Fisher's exact $p$-value: 
\begin{align*}
    p_{F}^{(m)}&=P\left(T(\mathbf{Z}, \mathbf{Y})\geq t\mid H_{0,F}, \widehat{\mathcal{P}}^{(m)}\right)\\
    &=\sum_{\mathbf{z}=(z_{1},\dots, z_{N})\in \{0,1\}^{N} }\left[P\left(\mathbf{Z}=\mathbf{z}\mid \widehat{\mathcal{P}}^{(m)}\right)\cdot \mathbbm{1}\left\{T(\mathbf{Z}=\mathbf{z}, \mathbf{Y})\geq t\right\}\right]\\
    &=\sum_{\mathbf{z}=(z_{1},\dots, z_{N})\in \{0,1\}^{N} } \left[\prod_{i=1}^{N}\left(\widehat{p}^{(m)}_{i}\right)^{z_{i}}\left(1-\widehat{p}^{(m)}_{i}\right)^{(1-z_{i})}\cdot \mathbbm{1}\left\{T(\mathbf{Z}=\mathbf{z}, \mathbf{Y})\geq t\right\}\right].
\end{align*}
Then, following a similar spirit to the union step in Algorithm~1 or 2, to combine $p_{F}^{(m)}$ for $m\in [M]$, we define the overall $p$-value reported by the propensity score propagation framework as $p_{F}=\max_{m\in [M]}p_{F}^{(m)}$. Moreover, by inverting the $p$-values $p_{F}$ under various hypothetical effect sizes under Fisher's sharp nulls, we can construct design-based confidence sets for structured effect mechanisms (e.g., constant effect and its extensions) in observational studies \citep{rosenbaum2002observational, rosenbaum2020design}. 

\subsection*{C.2: Design-Based Causal Inference for Randomized Experiments with Missing Data}

Beyond design-based inference for descriptive estimands (e.g., the finite-population mean $\theta_{\bar y}$) with missing data, the proposed framework can be readily extended to design-based inference for causal estimands in the presence of missing data. For example, consider a randomized experiment with outcome missingness. Existing design-based inference methods in this setting predominantly target Fisher's sharp null hypotheses, such as constant treatment effects and their extensions \citep{ivanova2022randomization, heussen2023randomization, heng2025design, li2025randomization}. In contrast, our framework enables design-based inference for Neyman's weak nulls with missing outcomes, thereby substantially broadening the scope of causal questions that can be addressed with missing data. For example, consider a Bernoulli randomized experiment, in which $Z_{i} \overset{\text{iid}}{\sim} \text{Bernoulli}(0.5)$. Let $M_{i}$ denote the missingness indicator for the outcome of unit $i$: $M_{i}=1$ if non-missing and $M_{i}=0$ if missing. That is, under the potential outcomes framework, the observed outcome $Y_{i}=Y_{i}(1)$ (or $Y_{i}(0)$) if $M_{i}=1$ and $Z_{i}=1$ (or $Z_{i}=0$), and $Y_{i}=\text{``NA''}$ if $M_{i}=0$. Under the missing-at-random condition (i.e., the ignorability of outcome missingness conditional on observed covariates), let $p_{i}=P(M_{i}=1\mid \mathbf{x}_{i})$ denote the non-missingness probability for unit $i$. Conditional on known $p_{i}$, it is easy to show that 
\begin{equation*}
    \widehat{\theta}_{*, \text{oracle}}=\frac{1}{N}\sum_{i=1}^{N}\widehat{\theta}_{*, \text{oracle}, i}=\frac{1}{N}\sum_{i=1}^{N}\left\{2p_{i}^{-1}M_{i}Z_{i}Y_{i}-2p_{i}^{-1}M_{i}(1-Z_{i})Y_{i}\right\}
\end{equation*}
is an unbiased estimator for sample average treatment effect $\theta_{\tau}=N^{-1}\sum_{i=1}^{N}\big\{Y_{i}(1)-Y_{i}(0)\big\}$ (where we define $0\times\text{``NA''}=0$), and 
\begin{equation*}
\widehat{\mathcal{V}}_{*, \text{oracle}}=\frac{1}{N(N-1)}\sum_{i=1}^{N}\left(\widehat{\theta}_{*, \text{oracle}, i}-\widehat{\theta}_{*, \text{oracle}}\right)^{2}    
\end{equation*}
is a valid design-based variance estimator for $\widehat{\theta}_{*, \text{oracle}}$. Then, the oracle confidence interval for $\theta_{\tau}$ in this setting is
\begin{equation*}
    \mathcal{C}_{1-\alpha}^{\text{oracle}}=\Lambda_{*}(\mathcal{P}, \mathbf{Z}, \mathbf{Y})=\left[\widehat{\theta}_{*, \text{oracle}}- z_{1-\alpha/2}\cdot \{\widehat{\mathcal{V}}_{*, \text{oracle}}\}^{1/2},\, \widehat{\theta}_{*, \text{oracle}}+z_{1-\alpha/2}\cdot \{\widehat{\mathcal{V}}_{*, \text{oracle}}\}^{1/2}\right].
\end{equation*}

By embedding this established oracle confidence set mapping $\Lambda_{*}$ derived under known $p_{i}$ into the propensity score propagation framework, we can construct a design-based confidence set for $\theta_{\tau}$ (Neyman's weak nulls) in randomized experiments with missing outcomes. Specifically, for each regenerated non-missingness propensity score vector $\widehat{\mathcal{P}}^{(m)}=(\widehat{p}^{(m)}_{1},\ldots,\widehat{p}^{(m)}_{N})$, we construct the regenerated estimator
\begin{equation*}
    \widehat{\theta}_{*}^{(m)}=\frac{1}{N}\sum_{i=1}^{N}\left\{2(\widehat{p}^{(m)}_{i})^{-1}M_{i}Z_{i}Y_{i}-2(\widehat{p}^{(m)}_{i})^{-1}M_{i}(1-Z_{i})Y_{i}\right\}
\end{equation*}
and the corresponding regenerated design-based variance estimator 
\begin{equation*}
    \widehat{\mathcal{V}}_{*}^{(m)}=\frac{1}{N(N-1)}\sum_{i=1}^{N}\left(\widehat{\theta}_{*, i}^{(m)}-\widehat{\theta}_{*}^{(m)}\right)^{2}.
\end{equation*}
This gives the regenerated confidence interval 
\begin{equation*}
    \mathcal{C}^{(m)}_{1-\alpha}=\Lambda_{*}(\widehat{\mathcal{P}}^{(m)}, \mathbf{Z}, \mathbf{Y})=\left[\widehat{\theta}^{(m)}_{*}- z_{1-\alpha/2}\cdot \{\widehat{\mathcal{V}}_{*}^{(m)}\}^{1/2},\, \widehat{\theta}^{(m)}_{*}+z_{1-\alpha/2}\cdot \{\widehat{\mathcal{V}}_{*}^{(m)}\}^{1/2}\right].
\end{equation*}
The final propagation-based confidence set is then $\mathcal{C}_{1-\alpha}=\bigcup_{m=1}^{M}\mathcal{C}^{(m)}_{1-\alpha}$, which propagates uncertainty from estimating the non-missingness probabilities while preserving the design-based, finite-population perspective.

\section*{Appendix D: Additional Simulation Studies and Implementation Details}

\subsection*{D.1: Simulation Studies Under Parametric Propensity Score Settings}

We further illustrate the performance of parametric propensity score propagation (Algorithm~1) under parametric propensity score models. The data-generating procedure is the same as that used in the simulation studies in Section~6 of the main text, except that the treatment assignment mechanism is generated from one of two parametric models. In Setting~1, we consider a linear selection model, where \(Z_{i}=\mathbbm{1}\{\phi(\mathbf{x}_{i})-0.4>\epsilon_{i}^{z}\}\), \(\phi(\mathbf{x}_{i})=-0.5 x_{i1} + 0.5x_{i2} + 0.6 x_{i3} + 0.4 x_{i4} + 0.3 x_{i5}\), and \(\epsilon_{i}^{z} \overset{\text{iid}}{\sim} N(0,1)\). In Setting~2, we consider a linear logistic model, where \(\operatorname{logit} P(Z_{i}=1\mid \mathbf{x}_{i})=-0.5 x_{i1} + 0.5x_{i2} + 0.6 x_{i3} + 0.6 x_{i4} + 0.2 x_{i5} + 0.5\). For each propensity score setting and each effect setting, we generate a finite population of $N=1000$ units using the same covariate and potential outcome generation procedures as in Section~6. Conditional on each fixed finite population $(\mathbf X,\mathbf Y(0),\mathbf Y(1))$, coverage is evaluated over the distribution of treatment assignments $\mathbf Z$. All reported simulation results are averaged over 1000 treatment assignment realizations.

\begin{table}[h]
\centering
\footnotesize
\caption{Simulation results for the parametric propensity score setting with parametric propensity score propagation. The length ratio is computed relative to the corresponding oracle confidence interval length.}
\begin{tabular}{lcccccc}
\toprule
\multirow{2}{*}{} 
& \multicolumn{3}{c}{Propensity Score Setting 1}
& \multicolumn{3}{c}{Propensity Score Setting 2} \\
\cmidrule(lr){2-4} \cmidrule(lr){5-7}
& Coverage & Length & Length Ratio
& Coverage & Length & Length Ratio \\
\midrule
\multicolumn{7}{l}{\textbf{Effect Setting 1}} \\
\midrule
Oracle & 0.954 & 0.633 & 1.000 & 0.958 & 0.422 & 1.000 \\
\midrule
Propagation (\(M=30\)) 
& 1.000 & 0.714 & 1.128 
& 1.000 & 0.624 & 1.479 \\
Propagation (\(M=100\)) 
& 1.000 & 0.783 & 1.237 
& 1.000 & 0.688 & 1.630 \\
\midrule
\multicolumn{7}{l}{\textbf{Effect Setting 2}} \\
\midrule
Oracle & 0.936 & 0.711 & 1.000 & 0.957 & 0.433 & 1.000 \\
\midrule
Propagation (\(M=30\)) 
& 1.000 & 0.744 & 1.046 
& 1.000 & 0.654 & 1.510 \\
Propagation (\(M=100\)) 
& 1.000 & 0.813 & 1.143 
& 1.000 & 0.724 & 1.672 \\
\bottomrule
\end{tabular}
\label{tab:parametric-simulation-n1000-regularized}
\end{table}

Table~\ref{tab:parametric-simulation-n1000-regularized} reports the empirical coverage rate and average confidence interval length for the oracle confidence interval based on the true propensity scores, as well as for parametric propensity score propagation with \(M=30\) and \(M=100\) regeneration runs. We also report the length ratio, defined as the ratio between the average length of the propagation-based confidence interval and that of the corresponding oracle confidence interval. Across the four combinations of propensity score and effect settings, both the oracle and propagation-based confidence intervals achieve nominal coverage. The propagation-based confidence intervals also remain informative: the length ratios range from 1.046 to 1.510 for \(M=30\), and from 1.143 to 1.672 for \(M=100\). Therefore, although the propagation-based confidence intervals are wider than the oracle confidence intervals, as expected because they incorporate additional uncertainty from propensity score estimation, their lengths remain within a moderate multiple of the oracle benchmark.

\subsection*{D.2: Simulation Results for Nonparametric Propensity Score Propagation Under a Small Sample Size} 

Table~\ref{tab:simulation_N500_appendix} reports additional simulation results for $N=500$, using the same propensity score and effect settings as those in Section~6 of the main text. This setting serves as a small-sample stress test for the asymptotic theoretical results for nonparametric propensity score propagation. Under cross-fitting, each propensity score model is trained on approximately half of the sample, so the effective training sample size is only about 250 observations in each fold. Consequently, first-stage propensity score estimation uncertainty is substantial, and ignoring this uncertainty can lead to poorly calibrated downstream design-based inference.

\begin{table}[ht]
\centering
\scriptsize
\caption{Additional simulation results under a small sample size, where the effective training sample size for the propensity score model is approximately $500/2=250$ in each cross-fitting fold.}
\resizebox{\textwidth}{!}{
\begin{tabular}{lcccccccc}
\toprule
& \multicolumn{4}{c}{Propensity Score Setting 1} 
& \multicolumn{4}{c}{Propensity Score Setting 2} \\
\cmidrule(lr){2-5} \cmidrule(lr){6-9}
Approach 
& Coverage & Bias & Length 
& $\frac{\text{Length}}{\text{Length (Oracle)}}$
& Coverage & Bias & Length 
& $\frac{\text{Length}}{\text{Length (Oracle)}}$ \\
\midrule

\multicolumn{9}{l}{\textbf{Effect Setting 1}} \\
Plug-in 
& 0.462 & 0.320 & 0.604 & 0.845
& 0.542 & 0.315 & 0.627 & 1.013 \\
Matching 
& 0.231 & 0.209 & 0.355 & 0.497
& 0.541 & 0.150 & 0.310 & 0.501 \\
Propagation (Our Proposal)
& 0.996 & -- & 1.121 & 1.568
& 0.993 & -- & 1.146 & 1.851 \\
Oracle 
& 0.940 & 0.014 & 0.715 & 1.000
& 0.949 & 0.005 & 0.619 & 1.000 \\

\midrule

\multicolumn{9}{l}{\textbf{Effect Setting 2}} \\
Plug-in 
& 0.584 & 0.329 & 0.678 & 0.919
& 0.603 & 0.344 & 0.701 & 1.075 \\
Matching 
& 0.589 & 0.192 & 0.403 & 0.546
& 0.787 & 0.137 & 0.344 & 0.528 \\
Propagation (Our Proposal)
& 1.000 & -- & 1.276 & 1.729
& 0.991 & -- & 1.300 & 1.994 \\
Oracle 
& 0.956 & 0.012 & 0.738 & 1.000
& 0.958 & 0.005 & 0.652 & 1.000 \\

\bottomrule
\end{tabular}
}
\label{tab:simulation_N500_appendix}
\end{table}

The results in Table~\ref{tab:simulation_N500_appendix} show that the conventional plug-in confidence intervals exhibit substantial under-coverage across all settings. The coverage rate of the plug-in approach ranges from 0.46 to 0.60, far below the nominal level of 0.95. This under-coverage occurs because the plug-in approach treats the estimated propensity scores as if they were known and therefore fails to account for the additional finite-sample impact of propensity score estimation, including both estimation uncertainty and the induced finite-sample bias. In this small-sample regime, such bias is especially non-negligible, as reflected in the substantial estimation bias and compromised coverage rate of the plug-in estimator. 

The matching-based approach also exhibits substantial under-coverage in this small-sample regime. The matching estimator generally has smaller reported bias than the plug-in estimator. This pattern is consistent with the intuition that, when the propensity score is estimated with substantial finite-sample uncertainty, the plug-in estimator can inherit non-negligible bias from first-stage propensity score estimation. This effect is more pronounced here than in the larger-sample simulations in Section~6, where the gap in reported bias between the plug-in and matching estimators becomes smaller. Nevertheless, relative to the oracle estimator, the remaining finite-sample bias of the matching estimator is still substantial. Moreover, its intervals are much shorter than both the oracle and plug-in intervals, with length ratios close to one-half in some settings. As a result, the matching intervals do not provide reliable coverage, with coverage rates ranging from 0.23 to 0.79. Together with the main simulation results in Section~6, these findings are consistent with the theoretical concerns raised in \citet{savje2022inconsistency} and \citet{guo2023statistical} regarding the asymptotic invalidity of matching-based approaches under matching without replacement, which is a canonical matching design for design-based inference.

By contrast, the nonparametric propensity score propagation approach maintains coverage across all simulation settings, even in this small-sample regime. Its coverage rate ranges from 0.99 to 1.00, indicating that it successfully protects against under-coverage in this challenging setting. At the same time, the propagation-based confidence intervals remain informative: their lengths are approximately 1.57 to 1.99 times those of the oracle confidence intervals, even under this small-sample regime. Overall, these results reinforce the main qualitative conclusion of the simulation study: when propensity scores are unknown and estimated flexibly, propagating propensity score uncertainty is crucial for reliable design-based inference.

\subsection*{D.3: Additional Implementation and Data Details}

\begin{remark}[More Details on the Matching Procedure and Results]\label{rem: further remarks on matching} The matching procedure was conducted using optimal full matching based on a rank-based Mahalanobis distance with a propensity score caliper \citep{rosenbaum1991characterization, hansen2004full, hansen2006optimal, rosenbaum2020design}. The caliper was imposed on the logit of the estimated propensity score, with the caliper width set to 0.2 standard deviations, a standard choice in practice. Full matching was implemented using the \textsf{fullmatch()} function from the widely used \textsf{R} package \textsf{optmatch} \citep{hansen2006optimal}. Table~\ref{tab: std} reports the average absolute standardized differences in means for the five covariates between the treatment and control groups, before and after matching, across the 1000 simulated datasets, under both propensity score settings. The average post-matching absolute standardized differences in means are all below 0.035, well below the commonly used threshold of 0.10 for assessing adequate covariate balance after matching \citep{silber2013characteristics, rosenbaum2020design, small2024protocols}. However, as shown in recent work by \citet{pimentel2024covariate} and \citet{zhu2025randomization}, passing these empirical post-matching covariate balance assessments does not guarantee valid design-based inference. In particular, directly applying classical design-based inference methods, such as Fisher's randomization tests or Neyman estimators, may still lead to substantially inflated Type-I error rates and poor confidence interval coverage.
\begin{table}[ht]
\small
\centering
\caption{The average absolute standardized differences in means of the five covariates between the treatment and control groups, before and after matching.}
\begin{tabular}{ccccc}
\toprule
 & \multicolumn{2}{c}{Propensity Score Setting 1} & \multicolumn{2}{c}{Propensity Score Setting 2}\\
 \cmidrule(rl){2-3} \cmidrule(rl){4-5}
 & Before Matching & After Matching & Before Matching & After Matching\\
 \midrule
 $x_{1}$ & 0.402 & 0.033 & 0.265 & 0.029\\
 $x_{2}$ & 0.370 & 0.017 & 0.238 & 0.017\\
 $x_{3}$ & 0.048 & 0.021 & 0.050 & 0.020\\
 $x_{4}$ & 0.077 & 0.019 & 0.095 & 0.020\\
 $x_{5}$ & 0.217 & 0.022 & 0.141 & 0.022\\
\bottomrule
\end{tabular}
\label{tab: std}
\end{table}
\end{remark}

\begin{remark}[Regularization for the IPW Estimator]
If the estimated propensity scores $\widehat{p}_{i}$ involved in the IPW estimator $N^{-1}\sum_{i=1}^{N}\left\{\frac{Z_{i}Y_{i}}{\widehat{p}_{i}}-\frac{(1-Z_{i})Y_{i}}{1-\widehat{p}_{i}}\right\}$ are very close to $0$ or $1$, the corresponding inverse weights $1/\widehat{p}_{i}$ or $1/(1-\widehat{p}_{i})$ can become excessively large, leading to numerical instability and non-informative confidence sets. To improve numerical robustness and mitigate the impact of extreme propensity score values for the IPW estimators used in both the plug-in and propensity score propagation approaches, we implement a commonly used regularization (stabilization) procedure in both the simulation studies in Section 6 and Appendix D and the data application in Section 7 \citep{crump2009dealing, ma2020robust, zhu2025randomization}. Specifically, for a prespecified small constant $\widetilde{\delta}>0$, we define the regularized estimated propensity scores as $\widehat{p}_{i}^{\text{reg}}
= \widehat{p}_{i}\,\mathbbm{1}\{\widetilde{\delta}\leq \widehat{p}_{i}\leq 1-\widetilde{\delta}\}
+ \widetilde{\delta}\,\mathbbm{1}\{\widehat{p}_{i}<\widetilde{\delta} \}
+ (1-\widetilde{\delta})\,\mathbbm{1}\{\widehat{p}_{i}>1-\widetilde{\delta}\}$. For fair comparison across methods, we set $\widetilde{\delta}=0.1$, a commonly used choice as suggested in \citet{crump2009dealing} and \citet{zhu2025randomization}, for the IPW estimators used in both the plug-in and propensity score propagation approaches. In practice, researchers may also consider alternative regularization strategies, such as stabilized IPW estimators that normalize the inverse probability weights \citep{lunceford2004stratification, zhao2019sens_boot}.

\end{remark}

\begin{remark}[Tuning Parameters for the Propensity Score Learner Based on XGBoost]
In practice, to improve the robustness of propensity score estimation and mitigate overfitting, we use cross-validation to select tuning parameters when implementing both the plug-in and propensity score propagation approaches. Table~\ref{tab:tuning-grid} reports the tuning grid used in our simulation studies; these parameter values are standard choices in XGBoost \citep{chen2016xgboost}. Specifically, we adopt Monte Carlo cross-validation with a 50/50 split, repeated 10 times. For each split, the XGBoost model is trained on the training subset, and out-of-sample propensity score predictions are obtained for the held-out subset. Predictive performance is evaluated using the area under the receiver operating characteristic curve (ROC-AUC) \citep{hanley1982meaning}, computed from the out-of-sample estimated propensity scores, which summarizes the ability of the fitted model to discriminate between treated and control units. We compute the ROC-AUC for each of the 10 random splits and select the tuning parameter configuration that maximizes the average ROC-AUC across splits. The same selected tuning parameters are then used for both the plug-in and propensity score propagation approaches, ensuring a fair comparison across methods that rely on estimated propensity scores.

\begin{table}[!htbp]
\centering
\caption{Tuning grid for XGBoost in the simulation settings.}
\label{tab:tuning-grid}
\begin{tabular}{ll}
\hline
Parameter & Candidate Values \\
\hline
Number of boosting rounds ($\texttt{nrounds}$)          & $\{100,\ 200,\ 500\}$ \\
Maximum tree depth ($\texttt{max\_depth}$)              & $\{1,\ 3,\ 5,\ 7\}$ \\
Learning rate (\texttt{eta})                    & $\{0.05,\ 0.10,\ 0.20\}$ \\
Minimum loss reduction to split ($\texttt{gamma}$)      & $\{0,\ 1,\ 5\}$ \\
Minimum sum of instance weight in a child ($\texttt{min\_child\_weight}$) & $\{1,\ 3,\ 5\}$ \\
\hline
\end{tabular}
\end{table}
\end{remark}

\begin{remark}[Variable Descriptions for the Data Application]

Table~\ref{tab:variables} provides detailed descriptions of the variables used in our data reanalysis of \citet{heller2010using}, including the outcome variable, treatment indicator, and 20 covariates. A closely related dataset was also considered in \citet{rouse1995democratization}.

\begin{table}[htbp]
\footnotesize
\centering
\caption{Descriptions of the variables used in the data reanalysis of \citet{heller2010using}.}
\label{tab:variables}
\begin{tabular}{lp{9cm}}
\hline
Variable & Definition \\
\hline
Years of education since 1986 & \texttt{educ86}, outcome variable. \\
Two-year college attendance & \texttt{twoyr}, equal to 1 if the individual attended a two-year college immediately after high school (treatment indicator). \\

West & \texttt{region\_west}, equal to 1 if the high school is located in the West. \\
South & \texttt{region\_south}, equal to 1 if the high school is located in the South. \\
Midwest & \texttt{region\_midwest}, equal to 1 if the high school is located in the Midwest. \\
Urban & \texttt{urban}, equal to 1 if the high school is located in an urban area. \\
Percent white at school (\%) & \texttt{perwhite}, percentage of White students at the individual's high school. \\
Home ownership & \texttt{ownhome}, equal to 1 if the family owns their home. \\
Family income missing & \texttt{fincmiss}, equal to 1 if family income is missing. \\
Family income & \texttt{fincome}, family income. \\

Mother's education missing & \texttt{mommiss}, equal to 1 if mother's education is missing. \\
Mother: college degree & \texttt{momcoll}, equal to 1 if the mother completed a college degree. \\
Mother: some college & \texttt{momsome}, equal to 1 if the mother attended some college. \\
Mother: vocational school & \texttt{momvoc}, equal to 1 if the mother completed vocational school. \\

Father's education missing & \texttt{dadmiss}, equal to 1 if father's education is missing. \\
Father: college degree & \texttt{dadcoll}, equal to 1 if the father completed a college degree. \\
Father: some college & \texttt{dadsome}, equal to 1 if the father attended some college. \\
Father: vocational school & \texttt{dadvoc}, equal to 1 if the father completed vocational school. \\

Baseline test score & \texttt{bytest}, baseline composite test score measured in high school. \\
Hispanic & \texttt{hispanic}, equal to 1 if the individual is Hispanic. \\
Black & \texttt{black}, equal to 1 if the individual is African American/Black. \\
Female & \texttt{female}, equal to 1 if the individual is female. \\
\hline
\end{tabular}
\end{table}

\end{remark}

\section*{Appendix E: Example Constructions of Propagation-Based Sensitivity Analysis}
\label{sec: extensions}

In Section~5.3, we introduce a general formulation of propagation-based sensitivity analysis for hidden bias arising from unobserved covariates (unmeasured confounders). Here, we provide concrete constructions of propagation-based sensitivity sets for several representative problems in design-based inference for observational studies.

Consider conducting a sensitivity analysis for a finite-population causal effect estimand, such as the sample average treatment effect $\theta_{\tau}$, in an observational study. In this setting, a central question is whether a hypothesized value $\tau_{0}$, such as $\tau_{0}=0$, is contained in the resulting sensitivity set. This corresponds to sensitivity analysis for Neyman's weak null hypothesis $H_{\tau_{0}}:\theta_{\tau}=\tau_{0}$ \citep{zhao2019sens_boot, fogarty2020studentized, fogarty2023testing}. Under the above sensitivity analysis procedure for propensity score propagation, it suffices to look at whether there exists any $m$ such that $\tau_{0}\in \mathcal{S}_{1-\alpha, \Gamma}^{(m)}$. To solve this, we first consider a transformation proposed in \citet{zhao2019sens_boot}, in which we let $g^{*}(\mathbf{x}, \mathbf{u})=\text{logit}\{e^{*}(\mathbf{x}, \mathbf{u})\}$, $g(\mathbf{x})=\text{logit}\{e(\mathbf{x})\}$, and $h(\mathbf{x}, \mathbf{u})=g(\mathbf{x})-g^{*}(\mathbf{x}, \mathbf{u})$. Then, we let $\widehat{e}^{(m)}(\mathbf{x}_{i})$ denote the estimated value of $e(\mathbf{x}_{i})$ in the $m$th regeneration run, and $\widehat{g}^{(m)}(\mathbf{x}_{i})=\text{logit}\{\widehat{e}^{(m)}(\mathbf{x}_{i})\}$. Next, we define $t_{i}=\exp\{h(\mathbf{x}_{i}, \mathbf{u}_{i})\}\mathbbm{1}\{Z_{i}=1\}+\exp\{-h(\mathbf{x}_{i}, \mathbf{u}_{i})\}\mathbbm{1}\{Z_{i}=0\} \in [\Gamma^{-1}, \Gamma]$. Under each fixed $\mathbf{t}=(t_{1},\dots, t_{N})\in [\Gamma^{-1}, \Gamma]^{N}$, in each regeneration run $m$, the corresponding estimated propensity score $\widehat{e}^{*(m)}(\mathbf{x}_{i}, \mathbf{u}_{i})=[1+\exp\{h(\mathbf{x}_{i}, \mathbf{u}_{i})-\widehat{g}^{(m)}(\mathbf{x}_{i})\}]^{-1}=[1+t_{i}\exp\{-\widehat{g}^{(m)}(\mathbf{x}_{i})\}]^{-1}$ if $Z_{i}=1$, and $\widehat{e}^{*(m)}(\mathbf{x}_{i}, \mathbf{u}_{i})=[1+t^{-1}_{i}\exp\{-\widehat{g}^{(m)}(\mathbf{x}_{i})\}]^{-1}$ if $Z_{i}=0$. In each regeneration run $m$, by plugging $\widehat{e}^{*(m)}(\mathbf{x}_{i}, \mathbf{u}_{i})$ into the oracle IPW estimator $\widehat{\theta}_{\tau, \text{oracle}}$ and its corresponding design-based variance estimator $\widehat{\mathcal{V}}_{\tau, \text{oracle}}$, we obtain the regenerated design-based confidence interval $\mathcal{C}^{(m)}_{1-\alpha, \mathbf{t}}$ under each fixed $\mathbf{t}=(t_{1},\dots, t_{N})\in [\Gamma^{-1}, \Gamma]^{N}$:
\begin{equation*}
 \mathcal{C}^{(m)}_{1-\alpha, \mathbf{t}}=\left [ \widehat{\theta}_{\tau, \mathbf{t}}^{(m)}- z_{1-\alpha/2}\cdot\left\{ \widehat{\mathcal{V}}^{(m)}_{\tau, \mathbf{t}}\right\}^{1/2}, \, \widehat{\theta}_{\tau, \mathbf{t}}^{(m)}+ z_{1-\alpha/2}\cdot\left\{ \widehat{\mathcal{V}}^{(m)}_{\tau, \mathbf{t}}\right\}^{1/2} \right ],
\end{equation*}
where 
\begin{align*}
    \widehat{\theta}_{\tau, \mathbf{t}}^{(m)}&=N^{-1}\sum_{i=1}^{N}\widehat{\theta}^{(m)}_{\tau, \mathbf{t}, i}\\
    &=N^{-1}\sum_{i=1}^{N}Z_{i}Y_{i}\left[1+t_{i}\exp\{-\widehat{g}^{(m)}(\mathbf{x}_{i})\}\right] - (1-Z_{i})Y_{i}\left[1+t_{i}\exp\{\widehat{g}^{(m)}(\mathbf{x}_{i})\}\right]
\end{align*}
and 
\begin{equation*}
    \widehat{\mathcal{V}}^{(m)}_{\tau, \textbf{t}}=\frac{1}{N(N-1)}\sum_{i=1}^{N}\left(\widehat{\theta}^{(m)}_{\tau, \mathbf{t}, i}- \widehat{\theta}^{(m)}_{\tau, \mathbf{t}}\right)^{2}.
\end{equation*}
Then, in each regeneration run $m$, the corresponding sensitivity set $\mathcal{S}^{(m)}_{1-\alpha, \Gamma}$ is
\begin{equation*}
    \mathcal{S}_{1-\alpha, \Gamma}^{(m)}=\left[\min_{\mathbf{t}\in [\Gamma^{-1}, \Gamma]^{N}}\widehat{\theta}_{\tau, \mathbf{t}}^{(m)}- z_{1-\alpha/2}\cdot \left\{\widehat{\mathcal{V}}^{(m)}_{\tau, \mathbf{t}}\right\}^{1/2}, \,\max_{\mathbf{t}\in [\Gamma^{-1}, \Gamma]^{N}}\widehat{\theta}_{\tau, \mathbf{t}}^{(m)}+ z_{1-\alpha/2}\cdot \left\{\widehat{\mathcal{V}}^{(m)}_{\tau, \mathbf{t}}\right\}^{1/2} \right],
\end{equation*}
and the final sensitivity set 
\begin{equation*}
    \mathcal{S}_{1-\alpha, \Gamma}=\bigcup_{m=1}^{M}\mathcal{S}_{1-\alpha, \Gamma}^{(m)}.
\end{equation*}
Therefore, to determine whether a hypothetical effect size $\tau_{0}\in \mathcal{S}_{1-\alpha, \Gamma}$ or not, we just need to look at whether there exists some $m\in [M]$ such that the optimal value (denoted as $d_{*}^{(m)}$) of the following quadratic program with box constraints is non-positive:
\begin{equation*}
     \begin{split}
        \underset{\mathbf{t}}{\text{minimize}} \quad & \big(\widehat{\theta}_{\tau, \mathbf{t}}^{(m)}-\tau_{0}\big)^{2}-z^{2}_{1-\alpha/2}\cdot \widehat{\mathcal{V}}^{(m)}_{\tau, \mathbf{t}}  \\
         \text{subject to}\quad & t_{i}\in [\Gamma^{-1},\, \Gamma], \quad \forall i.
     \end{split}
 \end{equation*}
If so, we have $\tau_{0}\in \mathcal{S}_{1-\alpha, \Gamma}$; otherwise, we have $\tau_{0}\notin \mathcal{S}_{1-\alpha, \Gamma}$. Then, researchers can repeat the above procedure for various values of $\Gamma$ and identify the largest $\Gamma\geq 1$ such that $\tau_{0}\notin \mathcal{S}_{1-\alpha, \Gamma}$; such $\Gamma$ is also referred to as the sensitivity value $\Gamma^{*}$ \citep{zhao2019sensitivityvalue}. That is, if $\tau_{0}\notin \mathcal{C}_{1-\alpha}=\mathcal{S}_{1-\alpha, \Gamma=1}$, the sensitivity value $\Gamma^{*}$ represents the largest $\Gamma>1$ (i.e., the largest magnitude of hidden bias) under which we can still reject the null hypothesis $H_{\tau_{0}}:\theta_{\tau}=\tau_{0}$ in a sensitivity analysis, which serves as a transparent summary measure of the robustness of the evidence against $H_{\tau_{0}}$ in the presence of potential hidden bias. 

In addition to sensitivity analysis for Neyman's weak nulls, another important sensitivity analysis setting in design-based inference concerns Fisher's sharp nulls. For example, consider testing Fisher's sharp null hypothesis of no treatment effect $H_{0,F}: Y_i(1)=Y_i(0)$ for all $i$, using a sum test statistic $T(\mathbf Z,\mathbf Y)=\sum_{i=1}^N Z_i q_{i}$, where each $q_i$ is a test score calculated from the observed outcomes $\mathbf Y$ and is therefore fixed under $H_{0,F}$ \citep{rosenbaum1984conditional, rosenbaum1987sensitivity, rosenbaum2002observational}. For instance, when $q_i=Y_i$, $T$ is the permutational $t$-test statistic. When $q_i=\operatorname{rank}(Y_i)$, $T$ is the Wilcoxon rank-sum statistic; see \citet{rosenbaum2002observational} for additional examples of sum test statistics. Invoking the finite-population central limit theorem \citep{rosenbaum2002observational, li2017general}, under $H_{0,F}$, we have $\{T-E(T)\}/\sqrt{\text{Var}(T)}\xrightarrow{d}N(0,1)$, where $E(T)=\sum_{i=1}^N q_{i}e_i^*$ and $\text{Var}(T)=\sum_{i=1}^N q_i^{2}e_i^*(1-e_{i}^{*})$, with $e_i^*=e^*(\mathbf x_i,\mathbf u_i)$.

Now consider the marginal sensitivity model described in Section 5.3 of the main text. For notational simplicity, let $e_i=e(\mathbf x_i)$ and $e^{*}_i=e^*(\mathbf x_i,\mathbf u_i)$. Under this sensitivity model, for some prespecified sensitivity parameter $\Gamma\geq 1$, we have
\[
    \Gamma^{-1}
    \leq
    \frac{e^{*}_i/(1-e^{*}_i)}{e_i/(1-e_i)}
    \leq
    \Gamma,
    \quad i=1,\ldots,N.
\]
Equivalently, each $e_i^{*}$ satisfies the following box constraint
\begin{equation}\label{eqn: equivalent sensitivity bounds}
    e^{*}_i\in
        \left[
        \frac{e_{i}}{(1-\Gamma)e_{i}+\Gamma },\,
        \frac{\Gamma e_{i}}{(\Gamma-1)e_{i}+1}
        \right],
    \quad i=1,\ldots,N.
\end{equation}
Given the observed value of the test statistic $T$ and a prespecified value of $\Gamma$, under the true values of $\mathcal E=\{e(\mathbf x_1),\ldots,e(\mathbf x_N)\}=\{e_{1},\dots, e_{N}\}$, a sensitivity analysis rejects $H_{0,F}$ if and only if the worst-case (maximal) two-sided or one-sided $p$-value is below the significance level $\alpha$. 

\textit{Two-sided Sensitivity Analysis:} For a sensitivity analysis based on the worst-case (maximal) two-sided $p$-value, it rejects $H_{0,F}$ if and only if the worst-case (minimal) squared deviate exceeds the rejection threshold $z^2_{1-\alpha/2}$ under the constraint (\ref{eqn: equivalent sensitivity bounds}). Specifically, this is equivalent to determining whether the following quadratic program with box constraints has a positive optimal value:
\begin{equation}\label{eqn: sharp null sensitivity analysis}
     \begin{split}
        \underset{\mathbf{e}^{*}}{\operatorname{minimize}} \quad
        & \big\{T-E_{\mathbf{e}^{*}}(T)\big)\}^2-z^2_{1-\alpha/2}\cdot \text{Var}_{\mathbf{e}^{*}}(T) \\
        \operatorname{subject\ to}\quad
        & e^{*}_i\in
        \left[
        \frac{e_{i}}{(1-\Gamma)e_{i}+\Gamma },\,
        \frac{\Gamma e_{i}}{(\Gamma-1)e_{i}+1}
        \right],
        \quad i=1,\ldots,N,
     \end{split}
 \end{equation}
where $\mathbf{e}^{*}=(e^{*}_1,\dots,e^{*}_N)$, $E_{\mathbf{e}^{*}}(T)=\sum_{i=1}^N q_ie^{*}_i$, and $\text{Var}_{\mathbf{e}^{*}}(T)=\sum_{i=1}^N q_i^2 e^{*}_i(1-e^{*}_{i})$. Let $c_{*}$ denote the optimal value of \eqref{eqn: sharp null sensitivity analysis}. We reject $H_{0,F}$ if $c_{*}>0$ and fail to reject otherwise. In practice, the true values of $\mathcal E=\{e(\mathbf x_1),\ldots,e(\mathbf x_N)\}=\{e_{1},\dots, e_{N}\}$ are unknown. Researchers can therefore use the proposed propensity score propagation framework to conduct a sensitivity analysis for $H_{0,F}$. Specifically, in each regeneration run, after obtaining the corresponding estimated values of $\mathcal E$, denoted by $\widehat{\mathcal E}^{(m)}=\{\widehat e^{(m)}(\mathbf x_1),\ldots,\widehat e^{(m)}(\mathbf x_N)\}=\{\widehat e^{(m)}_1,\ldots,\widehat e^{(m)}_N\}$, we replace $e_{i}$ in \eqref{eqn: sharp null sensitivity analysis} with $\widehat e_i^{(m)}$ and obtain the corresponding optimal value $c_{*}^{(m)}$. Finally, we reject $H_{0,F}$ if and only if $\min_{m\in[M]}c_{*}^{(m)}>0$.

\textit{One-Sided Sensitivity Analysis:} For a sensitivity analysis based on the worst-case (maximal) one-sided $p$-value, instead of solving the corresponding optimization problem, we can directly obtain an analytical solution. Specifically, consider finding the one-sided greater-than $p$-value under the constraint (\ref{eqn: equivalent sensitivity bounds}). Consider a sum test statistic $T=\sum_{i=1}^{N}T_{i}$ with each $T_{i}=q_{i}Z_{i}$. Then, we define $T^{+}=\sum_{i=1}^{N}T_{i}^{+}$, where each $T_{i}^{+}=q_{i}B_{i}^{+}$ and $\{B_{i}^{+}\}_{i=1}^{N}$ are independent Bernoulli variables such that, if $q_{i}\geq 0$, then $P(B_{i}^{+}=1)=\frac{\Gamma e_{i}}{(\Gamma-1)e_{i}+1}$ and $P(B_{i}^{+}=0)=\frac{1-e_{i}}{(\Gamma-1)e_{i}+1}$; and if $q_{i}<0$, then $P(B_{i}^{+}=1)=\frac{e_{i}}{(1-\Gamma)e_{i}+\Gamma }$ and $P(B_{i}^{+}=0)=\frac{\Gamma(1-e_{i})}{(1-\Gamma)e_{i}+\Gamma}$. Following an argument similar to Proposition 13 in \citet{rosenbaum2002observational}, for any $t\in \mathbbm{R}$, we have $P(T^{+}\geq t)\geq P(T\geq t)$ under the constraint (\ref{eqn: equivalent sensitivity bounds}), and the inequality is sharp. Under the mild regularity conditions required by finite-population central limit theorems \citep{rosenbaum2002observational, li2017general, zhao2019sensitivityvalue}, we have 
\begin{equation*}
   \frac{T^{+}-E(T^{+})}{\sqrt{\text{Var}(T^{+})}}\xrightarrow{d} N(0,1). 
\end{equation*}
Therefore, given the observed value $t$ of $T$, the worst-case (maximal) one-sided greater-than $p$-value under $H_{0,F}$ is
\begin{equation}\label{eqn: worst-case greater than p-value}
    P(T^{+}\geq t)= P\left(\frac{T^{+}-E(T^{+})}{\sqrt{\text{Var}(T^{+})}}\geq  \frac{t-E(T^{+})}{\sqrt{\text{Var}(T^{+})}}\right)\simeq 1- \Phi\left( \frac{t-E(T^{+})}{\sqrt{\text{Var}(T^{+})}}\right),
\end{equation}
where
\begin{equation*}
    E(T^{+})=\sum_{i=1}^{N}\left(q_{i} \cdot \frac{\Gamma e_{i}}{(\Gamma-1)e_{i}+1}\cdot \mathbbm{1}\{q_{i}\geq 0\}+ q_{i} \cdot \frac{e_{i}}{(1-\Gamma)e_{i}+\Gamma }\cdot \mathbbm{1}\{q_{i}< 0\}  \right),
\end{equation*}
and 
\begin{equation*}
    \text{Var}(T^{+})=\sum_{i=1}^{N}\left(q^{2}_{i} \cdot \frac{\Gamma e_{i}(1-e_{i})}{\{(\Gamma-1)e_{i}+1\}^{2}}\cdot \mathbbm{1}\{q_{i}\geq 0\}+ q^{2}_{i} \cdot \frac{\Gamma e_{i}(1-e_{i})}{\{(1-\Gamma)e_{i}+\Gamma\}^{2} }\cdot \mathbbm{1}\{q_{i}< 0\}  \right).
\end{equation*}

Similarly, consider finding the one-sided less-than $p$-value under the constraint (\ref{eqn: equivalent sensitivity bounds}), based on a sum test statistic $T=\sum_{i=1}^{N}T_{i}$ with each $T_{i}=q_{i}Z_{i}$. Then, we define $T^{-}=\sum_{i=1}^{N}T_{i}^{-}$, where each $T_{i}^{-}=q_{i}B_{i}^{-}$ and $\{B_{i}^{-}\}_{i=1}^{N}$ are independent Bernoulli variables such that, if $q_{i}\geq 0$, then $P(B_{i}^{-}=1)=\frac{e_{i}}{(1-\Gamma)e_{i}+\Gamma }$ and $P(B_{i}^{-}=0)=\frac{\Gamma(1-e_{i})}{(1-\Gamma)e_{i}+\Gamma }$; and if $q_{i}<0$, then $P(B_{i}^{-}=1)=\frac{\Gamma e_{i}}{(\Gamma-1)e_{i}+1}$ and $P(B_{i}^{-}=0)=\frac{1-e_{i}}{(\Gamma-1)e_{i}+1}$. Following an argument similar to Proposition 13 in \citet{rosenbaum2002observational}, for any $t\in \mathbbm{R}$, we have $P(T^{-}\leq t)\geq P(T\leq t)$ under the constraint (\ref{eqn: equivalent sensitivity bounds}), and the inequality is sharp. Under the mild regularity conditions required by finite-population central limit theorems \citep{rosenbaum2002observational, li2017general, zhao2019sensitivityvalue}, we have 
\begin{equation*}
   \frac{T^{-}-E(T^{-})}{\sqrt{\text{Var}(T^{-})}}\xrightarrow{d} N(0,1). 
\end{equation*}
Therefore, given the observed value $t$ of $T$, the worst-case (maximal) one-sided less-than $p$-value under $H_{0,F}$ is
\begin{equation}\label{eqn: worst-case less than p-value}
    P(T^{-}\leq t)= P\left(\frac{T^{-}-E(T^{-})}{\sqrt{\text{Var}(T^{-})}}\leq  \frac{t-E(T^{-})}{\sqrt{\text{Var}(T^{-})}}\right)\simeq \Phi\left( \frac{t-E(T^{-})}{\sqrt{\text{Var}(T^{-})}}\right),
\end{equation}
where
\begin{equation*}
    E(T^{-})=\sum_{i=1}^{N}\left(q_{i} \cdot \frac{e_{i}}{(1-\Gamma)e_{i}+\Gamma }\cdot \mathbbm{1}\{q_{i}\geq 0\}+ q_{i} \cdot \frac{\Gamma e_{i}}{(\Gamma-1)e_{i}+1} \cdot \mathbbm{1}\{q_{i}< 0\}  \right),
\end{equation*}
and 
\begin{equation*}
    \text{Var}(T^{-})=\sum_{i=1}^{N}\left(q^{2}_{i} \cdot \frac{\Gamma e_{i}(1-e_{i})}{\{(1-\Gamma)e_{i}+\Gamma\}^{2} }  \cdot \mathbbm{1}\{q_{i}\geq 0\}+ q^{2}_{i} \cdot \frac{\Gamma e_{i}(1-e_{i})}{\{(\Gamma-1)e_{i}+1\}^{2}}\cdot \mathbbm{1}\{q_{i}< 0\}  \right).
\end{equation*}

After establishing the oracle formulas for the worst-case one-sided greater-than and less-than \(p\)-values in (\ref{eqn: worst-case greater than p-value}) and (\ref{eqn: worst-case less than p-value}) under oracle knowledge of \(\mathcal E=\{e(\mathbf x_1),\ldots,e(\mathbf x_N)\}=\{e_1,\ldots,e_N\}\), we can apply the propensity score propagation framework to conduct valid one-sided sensitivity analysis for \(H_{0,F}\) when \(\mathcal E\) is unknown and must be estimated from the observed data. Specifically, in each regeneration run, after obtaining the regenerated estimates \(\widehat{\mathcal E}^{(m)}=\{\widehat e^{(m)}(\mathbf x_1),\ldots,\widehat e^{(m)}(\mathbf x_N)\}=\{\widehat e^{(m)}_1,\ldots,\widehat e^{(m)}_N\}\), we replace each \(e_i\) in (\ref{eqn: worst-case greater than p-value}) and (\ref{eqn: worst-case less than p-value}) by \(\widehat e_i^{(m)}\), yielding the regenerated worst-case one-sided greater-than and less-than \(p\)-values \(p_g^{(m)}\) and \(p_l^{(m)}\), respectively. Finally, we reject \(H_{0,F}\) if and only if \(\max_{m\in[M]} p_g^{(m)}<\alpha\) for the one-sided greater-than test, or \(\max_{m\in[M]} p_l^{(m)}<\alpha\) for the one-sided less-than test.

\endgroup

\end{document}